\newtheorem{thm}{\protect\theoremname}
\theoremstyle{plain}
\newtheorem{lem}{\protect\lemmaname}
\theoremstyle{plain}
\newtheorem{rem}{\protect\remarkname}
\theoremstyle{plain}
\newtheorem*{lem*}{\protect\lemmaname}
\theoremstyle{plain}
\newtheorem*{thm*}{\protect\theoremname}
\theoremstyle{plain}
\newtheorem{prop}{\protect\propositionname}
\theoremstyle{plain}
\newtheorem{cor}{\protect\corollaryname}
\theoremstyle{plain}
\newtheorem*{cor*}{\protect\corollaryname}
\newtheorem{defn}{Definition}
  \providecommand{\corollaryname}{Corollary}
  \providecommand{\lemmaname}{Lemma}
  \providecommand{\propositionname}{Proposition}
  \providecommand{\remarkname}{Remark}
\providecommand{\theoremname}{Theorem}
\newcommand{\RR}{\mathbb{R}}
\newcommand{\CC}{\mathbb{C}}
\newcommand{\ZZ}{\mathbb{Z}}
\newcommand{\dd}{\mathrm{d}}
\newcommand{\Tr}{\mathrm{Tr}}
\newcommand{\YZ}[1]{\textcolor{green}{[YZ: #1]}}
\title{Fast mixing of weakly interacting fermionic systems \\ at any temperature}
 \author[1,2,3]{Yu Tong}
    \affil[1]{Department of Mathematics, Duke University, Durham, NC 27708, USA}
    \affil[2]{Department of Electrical and Computer Engineering, Duke University, Durham, NC 27708, USA}
    \affil[3]{Duke Quantum Center, Duke University, Durham, NC 27701, USA}
\author[4,5]{Yongtao Zhan}
    \affil[4]{Institute for Quantum Information and Matter, California Institute of Technology, CA 91125, USA}
    \affil[5]{Department of Physics, California Institute of Technology, CA 91125, USA}
\date{\today}
\begin{document}

\maketitle

\begin{abstract}
   We study the mixing time of a recently proposed efficiently implementable Lindbladian designed to prepare the Gibbs states in the setting of weakly interacting fermionic systems. We show that at any temperature, the Lindbladian spectral gap for even parity observables is lower bounded by a constant that is independent of the system size, when the interaction strength (e.g., the on-site interaction strength for the Fermi-Hubbard model) is below a constant threshold, which is also independent of the system size. This leads to a mixing time estimate that is at most linear in the system size, thus showing that the corresponding Gibbs states can be prepared efficiently on quantum computers.
\end{abstract}

\tableofcontents
\clearpage

\section{Introduction}


Simulating non-integrable quantum systems is a notoriously difficult task. 
Numerical methods designed for this purpose generally suffer from the curse of dimensionality, the sign problem, or are not efficiently scalable either in the system size or achievable precision.
Non-integrable quantum systems that are efficiently simulable either on a classical or quantum computer are hard to find. Some examples include: ground states of 1D gapped local Hamiltonians \cite{Landau2015polynomial}, Gibbs states that have exponentially decaying correlation and are approximately Markov states \cite{BrandaoKastoryano2019finite}, Gibbs states of certain commuting Hamiltonians \cite{HwangJiang2024gibbs,KastoryanoBrandao2016quantum,KochanowskiAlhambraCapelRouze2024rapid,BardetCapelLiEtAl2023rapid}, ground states of quantum impurity models \cite{BravyiGosset2017complexity}, and time evolution of 1D many-body localized systems \cite{EhrenbergDeshpandeEtAl2022simulation}. For the last two only quasipolynomial time algorithms are known in some parameter regimes.


The recently proposed efficiently implementable Lindbladian (Eq.~\eqref{eq:QMC_lindbladian}) that satisfies the Kubo–-Martin–-Schwinger (KMS) detailed balance condition (a quantum generalization of the detailed balance condition for Markov semigroups) in \cite{ChenKastoryanoBrandaoGilyen2023quantum,chen2023efficient} offers a new and promising approach to prepare the Gibbs states of a wider class of non-integrable quantum systems. Similar to classical Markov chain Monte Carlo, the efficiency of the method crucially depends on the \emph{mixing time} of the Lindbladian dynamics, i.e., the time needed for the system to evolve close to the stationary state. Due to its importance in characterizing natural quantum systems and algorithmic performance, the mixing time has been the central topic of many studies \cite{temme2014hypercontractivity,ding2024polynomial,bardet2024entropy,alicki2009thermalization,ramkumar2024mixing,KochanowskiAlhambraCapelRouze2024rapid,BardetCapelLiEtAl2023rapid,RouzeFrancaAlhambra2024optimal,capel2024quasi,kastoryano2013quantum,temme2015fast,ChenLiLuYing2024randomized,Rakovszky2024bottlenecks,LiLu2024quantum,Barthel2022solving,Fang2024mixing,DingLiLin2024efficient,DingChenLin2024single}.


As an example of the above approach, in Ref.~\cite{rouze2024efficient}, the authors analyzed the mixing time of the above Lindbladian for preparing high-temperature Gibbs states of quantum spin systems, and thereby showing that these Gibbs states can be efficiently prepared on quantum computers. By considering the parent Hamiltonian of the Lindbladian, the authors were able to use the stability result for the spectral gap of frustration-free Hamiltonians \cite{MichalakisZwolak2013stability} to show that the spectral gap of the Lindbladian is lower bounded by a constant when the temperature is above a constant threshold that is independent of the system size.
This mixing time bound was later improved in \cite{RouzeFrancaAlhambra2024optimal}.
We should note, however, that this is not the only approach to prepare such high-temperature Gibbs states. In a subsequent work, Ref.~\cite{BakshiLiuMoitraTang2024high}, the authors showed that for temperature beyond a certain threshold, the Gibbs state of a quantum spin system becomes unentangled, and they propose an efficient classical algorithm to sample from the Gibbs state. It is at this moment unclear how the temperature thresholds in these works compare with each other.


From the above discussion we can see that, to study Gibbs states that have intrinsically quantum properties such as entanglement, one may need to go to lower temperature. Moreover, many quantum systems of interest are not spin systems, but are fermionic in nature, which requires us to deal with a different notion of locality.
We therefore ask the following question:
\begin{quote}
    \textit{Can we prove fast mixing of fermionic systems at any temperature, given sufficiently weak interaction?}
\end{quote}
In this work, we will give an affirmative answer to this question.

\paragraph{Main result.}

The quantum system we consider is a $D$-dimensional cubic lattice $\Lambda$ consisting of $n$ sites. 
On each site we have one Dirac fermion, which is equivalent to $2$ Majorana modes. This gives rise to $2n$ Majorana operators $\gamma_{j}$, $j=1,2,\cdots,2n$, satisfying the anti-commutation relation $\{\gamma_{j},\gamma_{k}\} = 2\delta_{jk}$.
The Hamiltonian we study takes the following form
\begin{equation}
\label{eq:general_hamiltonian}
    H = H_0 + V,
\end{equation}
where $H_0 = \sum_{jk} h_{jk} \gamma_j \gamma_k$ is the non-interacting (quadratic) part,
and $V$ is a general fermionic operator describing the interaction. $V$ here preserves parity, i.e., it only contain terms that are products of an even number of Majorana operators. 
Moreover, we assume that $H_0$ and $V$ are $(1,r_0)$-geometrically local and $(U,r_0)$-geometrically local respectively, following the definition below:
\begin{defn}[$(\xi,r_0)$-geometrically local operators]
\label{defn:geometrically_local}
    We say an operator $W$ is $(\xi,r_0)$-geometrically local if 
    \[
    W = \sum_{B\in \mathcal{B}(r_0)} W_{B},
    \]
    where $\mathcal{B}(r)$ denotes the set of balls of radius $r$ and $\|W_B\|\leq \xi$. 
\end{defn}
For $H_0$ this means that $h_{jk}=0$ if the distance between $j$ and $k$ is greater than $r_0$. $U$ here is the \emph{interaction strength}. One can see that the Hamiltonian is quadratic when $U=0$. Our main result is as follows:
\begin{thm}[Main result]
    \label{thm:main_result}
    Let $H=H_0+V$ be a Hamiltonian on a $D$-dimensional cubic lattice $\Lambda$ consisting of $n$ sites. $H_0$ is quadratic in Majorana operators and is $(1,r_0)$-geometrically local, and $V$ is parity-preserving and is $(U,r_0)$-geometrically local. Then for any inverse temperature $\beta>0$, there exists $U_\beta>0$ such that when $U<U_\beta$, the Lindbladian defined in \eqref{eq:QMC_lindbladian} has a unique stationary state satisfying the fermionic superselection rule, and the $\epsilon$-mixing time (Definition~\ref{defn:mixing_time}) of the Lindbladian is at most
    \[
    C(n+\log(1/\epsilon)).
    \]
    where $C,U_\beta>0$ are constants that depend only on $r_0,D,\beta$.
\end{thm}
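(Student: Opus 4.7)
The plan is to lower-bound the Lindbladian spectral gap on the even-parity (physical) sector by a constant independent of $n$, and then convert this gap into the claimed mixing-time bound via a standard $\chi^2$-divergence argument. I would start from the exactly solvable non-interacting case $U=0$. There $H_0$ is a free fermion Hamiltonian, and after a Bogoliubov transformation the Heisenberg dynamics decouples into independent normal modes. The Chen--Kastoryano--Brand\~ao--Gily\'en jump operators (built from filtered Heisenberg evolutions of local Majoranas under $H_0$) then decompose, up to a controlled error, into commuting single-mode dissipators. Since $H_0$ is $(1,r_0)$-geometrically local the single-particle spectrum is contained in a bounded interval depending only on $r_0$ and $D$, so each single-mode dissipator has a gap depending only on $\beta,r_0,D$, and taking the minimum over modes yields a uniform lower bound $\Delta_0=\Delta_0(\beta,r_0,D)>0$ on the spectral gap of $\mathcal{L}_0$ on the even-parity subspace.

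For the interacting case I would perturb around this free-fermion baseline. First I would establish quasi-locality of the full Lindbladian: because the jump operators are defined via Heisenberg evolutions of local operators under $H=H_0+V$, the fermionic Lieb--Robinson bound guarantees that each jump operator is well approximated by a strictly local operator with an exponentially decaying tail. A Duhamel expansion of the Heisenberg evolution in $V$ then expresses $\mathcal{L}_U-\mathcal{L}_0$ as a sum of quasi-local terms whose individual norms are $O(U)$, supported in balls of $r_0,D,\beta$-dependent radius.

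The main obstacle is to show that the gap of $\mathcal{L}_U$ remains stable in the thermodynamic limit despite the perturbation being extensive: naive operator perturbation fails because $\|\mathcal{L}_U-\mathcal{L}_0\|=O(Un)$. I would address this by passing to the parent Hamiltonian of the detailed-balance Lindbladian, which in the $U=0$ case is a frustration-free geometrically local Hamiltonian with a constant gap $\Delta_0$; the quasi-local structure of $\mathcal{L}_U-\mathcal{L}_0$ then places it exactly in the regime covered by a Michalakis--Zwolak-type stability theorem for frustration-free gaps, yielding stability of the gap against perturbations whose local terms are summably small in $U$. An alternative would be a polymer/cluster expansion of the resolvent $(\mathcal{L}_U-z)^{-1}$ for $z$ near zero, which converges uniformly in $n$ provided $U<U_\beta$ for some $U_\beta=U_\beta(\beta,r_0,D)$. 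Either route yields a gap of at least $\Delta_0/2$ for all $n$. Uniqueness of the stationary state on the even-parity sector then follows from the strict gap together with the fact that the Gibbs state is a fixed point by construction of the KMS Lindbladian.

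Finally, converting the constant gap to the stated mixing time is standard. Under the KMS detailed-balance condition the Lindbladian is self-adjoint in the KMS inner product, so the gap implies exponential $\chi^2$-contraction at rate $\Delta_0/2$, which together with Pinsker's inequality gives a trace-distance mixing time bounded by $(2/\Delta_0)\log(\pi_{\min}^{-1/2}/\epsilon)$, where $\pi_{\min}$ is the smallest eigenvalue of the Gibbs state restricted to the even-parity sector. For a Gibbs state of a bounded-norm $(\xi,r_0)$-geometrically local Hamiltonian on $n$ sites one has $\log(\pi_{\min}^{-1})=O(\beta n)$, which produces the claimed $C(n+\log(1/\epsilon))$ bound and completes the argument.
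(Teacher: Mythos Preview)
Your overall strategy---compute the gap at $U=0$, show the perturbation is quasi-local of strength $O(U)$, invoke a stability theorem, then convert gap to mixing time via KMS self-adjointness and $\log(1/\sigma_{\min})=O(\beta n)$---matches the paper's, and your final step is essentially Corollary~\ref{cor:convergence_of_state}.

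There is, however, a genuine gap in the stability step. You assert that the parent Hamiltonian at $U=0$ is ``frustration-free geometrically local'' and propose Michalakis--Zwolak. But $H_0^{\mathrm{parent}}$ is a \emph{quadratic} free-fermion Hamiltonian; it decouples into commuting single-mode pieces only after a \emph{non-local} Bogoliubov rotation (Lemma~\ref{lem:decouling_free_part}), and in the geometrically local basis it is merely a quadratic operator with exponentially decaying coefficients (Proposition~\ref{prop:non_interacting_part_decay}). There is no reason for it to be frustration-free as a sum of local terms, so Michalakis--Zwolak does not apply. The paper instead invokes Hastings' stability theorem for gapped free-fermion Hamiltonians \cite{Hastings2019stability} (Theorem~\ref{thm:stability_free_fermion}), whose hypotheses are exactly a quadratic Hamiltonian with $[K,\nu]$-decaying coefficient matrix plus a perturbation with $(CU,\mu)$-decay---no frustration-freeness required. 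Your cluster-expansion alternative is not pursued in the paper and would need substantial independent work.

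A second issue you skip over: building the parent Hamiltonian in the fermionic setting is itself nontrivial. The bosonic vectorization $AXB\mapsto A\otimes B^\top$ used in \cite{chen2023efficient,rouze2024efficient} fails to yield a local fermionic operator (Appendix~\ref{sec:how_not_to_vectorize}). The paper uses third quantization (Definition~\ref{defn:Phi_thrid_quantization}) and must then separately prove that the resulting map preserves locality, preserves the even-parity spectrum (Lemma~\ref{lem:Phi_tilde_preserves_spectrum}), gives a Hermitian $H^{\mathrm{parent}}$ on the full space (Lemma~\ref{lem:parent_ham_hermitian}), and that $0$ stays the nondegenerate top eigenvalue along the interpolation $H_0+uV$ (Theorem~\ref{thm:spectrum_parent_ham}). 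None of these are automatic for fermions.
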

The proof can be found in Section~\ref{sec:fast_mixing}. Note that here we only focus on quantum states that satisfy the fermionic superselection rule, i.e., there is no superposition between even- and odd-parity states. This affects the way we define the mixing time (see Definition~\ref{defn:mixing_time}).

\paragraph{Proof ideas.} We adopt the approach in \cite{rouze2024efficient} of mapping to a parent Hamiltonian and then lower bounding the spectral gap of that parent Hamiltonian.\footnote{\label{fn:spectral_gap} In this work, whenever we talk about the spectral gap of a Hamiltonian $K$, we mean the gap between the top eigenvalue and the second largest eigenvalue, unless otherwise stated. This is because the Lindbladian spectrum lies on the negative part of the real axis, and the stationary state corresponds to the 0 eigenvalue, which is the top eigenvalue. However, all results about the gap between the ground state and the first excited state that we need still apply because one can simply consider $-K$ instead of $K$.}
The hope is that for a fermionic Hamiltonian with weak interaction, it can be mapped to a parent Hamiltonian that consists of a large non-interacting part plus small and quasi-local interaction terms (though these terms can add up to be as large as linear in $n$). With such a parent Hamiltonian, we then use the stability result about the gap of free-fermion Hamiltonian \cite{Hastings2019stability} to obtain a gap lower bound for the parent Hamiltonian, which implies a gap lower bound for the Lindbladian. A mixing time bound then follows. However, we will see that we need to overcome several difficulties to use this approach .

The first difficulty is that the parent Hamiltonian construction in \cite{chen2023efficient} does not work for fermions. In \cite{rouze2024efficient}, for bosonic spin systems, this construction was sufficient because it is based on a ``vectorization'' mapping from super-operators to operators on an enlarged Hilbert space that preserves both \emph{locality} and the \emph{spectrum} using the tensor product of two algebras. The mapping preserves locality because the tensor product preserves locality in bosonic spin systems. The mapping preserves the spectrum because it is a $^*$-isomorphism between two $C^*$-algebras. For fermions, we have the additional requirement that the parent Hamiltonian needs to be generated by a set of Majorana operators satisfying the canonical anti-commutation relations. This immediately calls into question the use of the tensor product: one would need to redefine it as a native fermionic operation. In Appendix~\ref{sec:how_not_to_vectorize} we will use an example to show that the most obvious attempt to redefine the tensor product will fail to make the mapping an algebra isomorphism. While we cannot rule out the possibility that with sufficient effort one can get a usable redefinition, it seems a highly non-trivial task, and may likely end up yielding something resembling the third quantization \cite{prosen2008third}, which is what we are going to use to partially solve the problem.


The third quantization is a method to map a quasi-free Lindbladian to a quadratic operator in an enlarged fermionic system, which we extend to arbitrary Lindbladians. 
However, it does not offer as clean a solution as in the bosonic scenario, because in order to preserve locality, the third quantization mapping fails to be an algebra isomorphism, and thereby does not preserve the whole spectrum. In fact, one needs to decide to preserve the spectrum of either the even or the odd-parity sector, which leads to two different mappings.
Fortunately, the fermionic superselection rule, which prohibits the superposition of even and odd particle numbers, allows us to focus on observables that are even polynomials of Majorana operators, which correspond to the even-parity sector. 
This is sufficient for establishing the mixing time.
Several other questions that are obvious for the vectorization mapping in \cite{chen2023efficient} also require careful study in the new setting, such as 
\begin{enumerate}
    \item Do errors in the Lindbladian result in amplified errors in the parent Hamiltonian through the third quantization?
    \item Is the parent Hamiltonian self-adjoint? 
    \item Can the parent Hamiltonian have a odd-parity eigenstate with non-negative eigenvalue thus making the top eigenstate no longer uniquely correspond to the Gibbs state?
\end{enumerate}
We answer the first question in the negative in Appendix~\ref{sec:norm_preserve_third_quantization}, the second question in the affirmative in Section~\ref{sec:constructing_the_parent_hamiltonian} (note that it does not follow directly from the quantum detailed balance condition which only ensures self-adjointness in the even-parity sector), and the third question in the negative in Theorem~\ref{thm:spectrum_parent_ham}.


With the parent Hamiltonian constructed form third quantization, we then need to show that the non-interacting parent Hamiltonian, denoted by $H_0^{\mathrm{parent}}$, has a constant gap lower bound when the jump operators are generated from the full set of single Majorana operators. 
This is suggested by the numerical results in \cite{LiZhanLin2024dissipative}. We compute the spectral gap of $H_0^{\mathrm{parent}}$ as a function of the inverse temperature $\beta$ and the single-particle eigenenergies of $H_0$, thus obtaining the needed lower bound (Proposition~\ref{prop:non_interacting_part_gap}). 

Next, we show that the interacting part introduces a quasi-local perturbation to the parent Hamiltonian. More precisely, we show that the parent Hamiltonian $H^{\mathrm{parent}}$ can be written as $H_0^{\mathrm{parent}}+V^{\mathrm{parent}}$, where $H_0^{\mathrm{parent}}$ is quasi-local (Proposition~\ref{prop:non_interacting_part_decay}), and the interacting part $V^{\mathrm{parent}}$ consists of quasi-local terms, each of which scales linearly in the interaction strength $U$ (Proposition~\ref{prop:interacting_part_decay}). The proof makes extensive use of the Lieb-Robinson bound \cite{LiebRobinson1972finite}, which as discussed in \cite{Hastings2004decay} works equally well for bosonic and fermionic systems. This structure of $H^{\mathrm{parent}}$ then enables us to use the spectral gap stability result in \cite{Hastings2019stability} to lower bound the spectral gap of $H^{\mathrm{parent}}$. The spectral gap bound then yields a mixing time bound following standard analysis as done in Corollary~\ref{cor:convergence_of_state}. We note that carefully studying the structure of $V^{\mathrm{parent}}$ is necessary because our result does not follow from an eigenvalue perturbation theory: even though the perturbation is locally small in $U$, the entire interacting part $V^{\mathrm{parent}}$ has spectral norm on the order of $nU\gg 1$.

\paragraph{Applications and implications.}

Our result covers a wide range of quantum systems. As an example, we can consider the Fermi-Hubbard model on a $D$-dimensional cubic lattice $\Lambda$:
\begin{equation}
    \label{eq:fermi_hubbard_ham}
    H = -\sum_{\braket{i,j},\varsigma}(c_{i,\varsigma}^\dag c_{j,\varsigma}+c_{j,\varsigma}^\dag c_{i,\varsigma}) + U\sum_i \left(n_{i,\uparrow}-\frac{1}{2}\right)\left(n_{i,\downarrow}-\frac{1}{2}\right) - \mu\sum_i (n_{i,\uparrow}+n_{i,\downarrow}),
\end{equation}
where $c_{i,\varsigma}$ is the fermionic annihilation operator on site $i$ with spin $\varsigma\in\{\uparrow,\downarrow\}$, $n_{i,\varsigma}=c_{i,\varsigma}^\dag c_{i,\varsigma}$ is the corresponding number operator, $U$ is the interaction strength, and $\mu$ is the chemical potential. Theorem~\ref{thm:main_result} immediately implies that
\begin{cor}
    \label{cor:application_to_fermi_hubbard}
    For $H$ given in \eqref{eq:fermi_hubbard_ham}, for any $\beta>0$, there exists $U_\beta>0$ such that if $|U|\leq U_\beta$, then the mixing time of the Lindbladian given in \eqref{eq:QMC_lindbladian} is at most $C(n+\log(1/\epsilon))$, where $C,U_\beta>0$ are constants that depend only on $D,\beta,\mu$.
\end{cor}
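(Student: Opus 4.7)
The plan is a direct reduction of \eqref{eq:fermi_hubbard_ham} to the hypotheses of Theorem~\ref{thm:main_result}. First I would split the Hamiltonian as $H = H_0 + V$ with
\[
H_0 = -\sum_{\braket{i,j},\varsigma}\bigl(c_{i,\varsigma}^\dag c_{j,\varsigma}+c_{j,\varsigma}^\dag c_{i,\varsigma}\bigr) - \mu\sum_i (n_{i,\uparrow}+n_{i,\downarrow}), \qquad V = U\sum_i \bigl(n_{i,\uparrow}-\tfrac{1}{2}\bigr)\bigl(n_{i,\downarrow}-\tfrac{1}{2}\bigr),
\]
and view the spin index as a length-$2$ extension of one coordinate axis of $\Lambda$, so that the spinful lattice becomes a $D$-dimensional cubic lattice $\tilde\Lambda$ with $2n$ sites, each carrying a single Dirac fermion $c_{i,\varsigma}=(\gamma_{(i,\varsigma,1)}+i\gamma_{(i,\varsigma,2)})/2$. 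Under this identification the hopping and chemical-potential terms become nearest-neighbor (or on-site) quadratic polynomials in the Majoranas, and each on-site Hubbard term becomes a product of four Majoranas localized at a single site of $\tilde\Lambda$; in particular $V$ is even in the $\gamma$s and therefore parity-preserving, as required.

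Next I would check locality. There is a fixed radius $r_0 = O(1)$, depending only on $D$, such that every term of $H_0$ and of $V$ is supported in a ball of radius $r_0$ in $\tilde\Lambda$. Each Hubbard term has norm $|U|/4$, so $V$ is $(|U|/4,r_0)$-geometrically local. The quadratic terms have norm bounded by a constant $c_0 = O(1+|\mu|)$; after globally rescaling $H \mapsto H/c_0$, which only rescales time in the Lindbladian \eqref{eq:QMC_lindbladian} and therefore changes the $\epsilon$-mixing time by at most a constant factor, $H_0$ becomes $(1,r_0)$-geometrically local, $V$ becomes $(|U|/c_0,r_0)$-geometrically local, and the effective inverse temperature of the Gibbs state is $c_0\beta$.

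Applying Theorem~\ref{thm:main_result} at inverse temperature $c_0\beta$ then yields a threshold $\tilde U_{c_0\beta}>0$ depending only on $r_0,D,c_0\beta$, hence only on $D,\beta,\mu$. Whenever $|U|/c_0 \le \tilde U_{c_0\beta}$, i.e.\ $|U| \le U_\beta := c_0\tilde U_{c_0\beta}$, the mixing time is bounded by $C(n+\log(1/\epsilon))$ with $C$ depending only on $D,\beta,\mu$. I do not anticipate any genuine obstacle: the argument is essentially bookkeeping, with the only minor care needed being that the spin-doubling embedding preserves the ambient dimension $D$ and enlarges $r_0$ only by a constant, both of which follow immediately from realizing the spin index as a length-$2$ axis.
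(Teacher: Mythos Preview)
Your proposal is correct and is exactly the reduction the paper has in mind; the paper itself just states that the corollary follows immediately from Theorem~\ref{thm:main_result} without spelling out the spin-doubling or the normalization. One minor sharpening: under the simultaneous rescaling $(H,\beta)\mapsto(H/c_0,c_0\beta)$ the Lindbladian \eqref{eq:QMC_lindbladian} is in fact invariant (a change of variables $t\mapsto c_0 t$ in $A_j(\omega)$ and the identity $\gamma_j^{H/c_0}(c_0\beta s)=\gamma_j^{H}(\beta s)$ in $B_j$ show this), so the mixing time is literally unchanged rather than merely rescaled by a constant.
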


The mixing time estimate implies an efficient quantum algorithm to prepare the Gibbs state. 
\begin{cor}
    \label{cor:efficient_prep_gibbs_state}
    Under the same assumption as Theorem~\ref{thm:main_result}, we can prepare the Gibbs state $\sigma=(1/Z)e^{-\beta H}$ where $Z=\Tr[e^{-\beta H}]$ to trace distance $\epsilon$ on a quantum computer with runtime 
    \[
    \mathcal{O}(n^3\mathrm{polylog}(n/\epsilon))
    \]
    for any inverse temperature $\beta=\mathcal{O}(1)$ if $U<U_\beta=\mathcal{O}(1)$.
\end{cor}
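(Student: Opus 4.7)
The plan is to combine the linear-in-$n$ mixing time established in Theorem~\ref{thm:main_result} with an efficient quantum implementation of the Lindbladian defined in \eqref{eq:QMC_lindbladian}. The total runtime will be the cost of implementing $e^{t\mathcal{L}}$ multiplied by the required mixing time $t$. As the initial state I would take $\rho_0 = \mathbb{I}/2^n$, the maximally mixed state, which is block diagonal in the parity basis and thus satisfies the fermionic superselection rule. Applying Theorem~\ref{thm:main_result} with target error $\epsilon/2$ gives a mixing time $t = C(n + \log(2/\epsilon))$, so that the ideal continuous-time evolution $e^{t\mathcal{L}}(\rho_0)$ is within trace distance $\epsilon/2$ of the target Gibbs state $\sigma$.

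Next I would invoke the efficient quantum simulation of the KMS-detailed-balanced Lindbladian \eqref{eq:QMC_lindbladian} developed in \cite{ChenKastoryanoBrandaoGilyen2023quantum, chen2023efficient}. For the Fermi-Hubbard setting there are $\mathcal{O}(n)$ single-Majorana jump operators and the Hamiltonian $H$ is $\mathcal{O}(1)$-local with $\mathcal{O}(n)$ terms, so $H$ admits an efficient block encoding. Combining these ingredients with the Fourier-weighted operator Fourier transform from \cite{chen2023efficient}, one can simulate $e^{t\mathcal{L}}$ to diamond-norm error $\delta$ at cost $\mathcal{O}(n^2 t \cdot \mathrm{polylog}(n/\delta))$: one factor of $n$ comes from the number of jumps, the other from the block-encoding of $H$ and the Fourier filter. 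Setting $\delta = \epsilon/2$ and substituting the mixing time $t = \mathcal{O}(n + \log(1/\epsilon))$ yields a total cost of
\[
\mathcal{O}\bigl(n^2\,(n+\log(1/\epsilon))\,\mathrm{polylog}(n/\epsilon)\bigr) = \mathcal{O}\bigl(n^3\,\mathrm{polylog}(n/\epsilon)\bigr),
\]
and by the triangle inequality the output is within trace distance $\delta + \epsilon/2 = \epsilon$ of $\sigma$, as claimed.

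The main subtlety, rather than a genuine analytical obstacle, is the bookkeeping: (i) splitting the $\epsilon$ budget between mixing error and simulation error (which only costs an extra $\mathrm{polylog}(1/\epsilon)$ factor), (ii) confirming that the discretized simulation of $e^{t\mathcal{L}}$ inherits the parity symmetry of the Hamiltonian and jump operators, so the produced state still satisfies the superselection rule and lies in the sector in which Theorem~\ref{thm:main_result} guarantees convergence, and (iii) specializing the generic Lindbladian-simulation complexity bound of \cite{chen2023efficient} to the geometrically local fermionic Hamiltonian considered here, using the block-encoding cost $\mathcal{O}(n)$ and the $\mathcal{O}(n)$ jump-operator count. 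None of these require new analytical content beyond Theorem~\ref{thm:main_result}, so the corollary follows by a standard reduction to the established mixing time bound.
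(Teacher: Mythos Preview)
Your proposal is correct and follows essentially the same approach as the paper: combine the mixing time bound of Theorem~\ref{thm:main_result} with the Lindbladian simulation algorithm of \cite{chen2023efficient}, picking up one factor of $n$ from the $\mathcal{O}(n)$ jump operators (jump-operator normalization) and another factor of $n$ from simulating the geometrically local Hamiltonian dynamics, and then substituting $t_{\mathrm{mix}}=\mathcal{O}(n)$ to obtain $\mathcal{O}(n^3\mathrm{polylog}(n/\epsilon))$. The paper makes the second $n$-factor explicit by invoking the lattice Hamiltonian simulation algorithm of \cite{HaahHastingsKothariLow2021quantum} rather than a generic block-encoding argument, but this is the same accounting you give.
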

We will explain how we arrive at the $\mathcal{O}(n^3\mathrm{polylog}(n/\epsilon))$ runtime: using the algorithm in \cite[Theorem~I.2]{chen2023efficient} we can see that preparing the Gibbs state requires simulating the dynamics $e^{-iHt}$ for total time $\mathcal{O}(nt_{\mathrm{mix}}\mathrm{polylog}(n t_{\mathrm{mix}}/\epsilon))$ (the factor $n$ comes from normalizing the jump operators as is required in \cite{chen2023efficient}),\footnote{We thank {\v S}t{\v e}p{\' a}n {\v S}m{\' i}d for pointing it out.} where $t_{\mathrm{mix}}$ is the mixing time, and this is the dominant part of the cost. Using the mixing time bound in Theorem~\ref{thm:main_result} we can see that the require evolution time is $\mathcal{O}(n^2\mathrm{polylog}(n/\epsilon))$. Using the Hamiltonian simulation algorithm for lattice Hamiltonians in \cite{HaahHastingsKothariLow2021quantum}, we can simulate the dynamics with a $\mathcal{O}(n)$ overhead up to a poly-logarithmic factor in terms of the gate complexity. We then arrive at the total runtime of $\mathcal{O}(n^3\mathrm{polylog}(n/\epsilon))$ measured in gate complexity. The circuit depth is $\mathcal{O}(n^2\mathrm{polylog}(n/\epsilon))$.

Besides algorithmic applications, our result also reveals important information about properties of the Gibbs state itself using the correspondence between the top eigenstate of the parent Hamiltonian and the Gibbs state as discussed in Proposition~\ref{prop:correspondence_top_eigenstate_gibbs_state}. As an example, we have the following corollary about correlation decay:
\begin{cor}
\label{cor:exp_decay_correlation}
    Under the same assumptions as Theorem~\ref{thm:main_result}, there exists $C,\xi>0$ that depends only only on $\beta,r_0,D$, such that for an observable $X\in B(\mathcal{H})$ supported on a constant number of adjacent sites, and any observable $Y\in B(\mathcal{H})$ with support $S$, 
    \[
    |\Tr[\sigma X Y]-\Tr[\sigma X]\Tr[\sigma Y]|\leq C\|X\|\|Y\||S|e^{-d/\xi},
    \]
    where $d$ is the distance between the supports of $X$ and $Y$.
\end{cor}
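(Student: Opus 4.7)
The plan is to reduce the correlation-decay question for the Gibbs state $\sigma$ to a clustering-of-correlations statement for the top eigenstate $\ket{\Omega}$ of the parent Hamiltonian $H^{\mathrm{parent}}$ on the doubled fermionic system (via the correspondence in Proposition~\ref{prop:correspondence_top_eigenstate_gibbs_state}), and then invoke a Hastings--Koma-type exponential clustering theorem for gapped quasi-local Hamiltonians.

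First, because $H$ is parity-preserving, so is $\sigma$, and hence $\Tr[\sigma X]$, $\Tr[\sigma Y]$, $\Tr[\sigma X Y]$ depend only on the parity-even parts of $X$ and $Y$; I may therefore assume without loss that $X$ and $Y$ are parity-even, preserving the norm and support bounds. Via the third quantization, each parity-even observable $A$ on the original system corresponds to an operator $\hat{A}$ on the doubled lattice with $|\mathrm{supp}(\hat{A})| \le 2|\mathrm{supp}(A)|$ and, by the norm-preservation property established in Appendix~\ref{sec:norm_preserve_third_quantization}, $\|\hat{A}\| \le c\|A\|$, such that
\[
\Tr[\sigma A] = \braket{\Omega | \hat{A} | \Omega}, \qquad \Tr[\sigma A B] = \braket{\Omega | \hat{A} \hat{B} | \Omega}.
\]
Under this embedding the distance between $\mathrm{supp}(\hat{X})$ and $\mathrm{supp}(\hat{Y})$ in the doubled lattice is bounded below by $d$ up to a constant factor coming from the site-doubling.

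Second, the proof of Theorem~\ref{thm:main_result} supplies the two hypotheses needed for an exponential-clustering argument on $\ket{\Omega}$: a spectral gap $\Delta \ge \Delta_0 > 0$ above the top eigenvalue of $H^{\mathrm{parent}}$ (from Proposition~\ref{prop:non_interacting_part_gap} combined with the stability result of \cite{Hastings2019stability}), and a quasi-local decomposition $H^{\mathrm{parent}} = H_0^{\mathrm{parent}} + V^{\mathrm{parent}}$ whose terms decay exponentially in range (Propositions~\ref{prop:non_interacting_part_decay} and \ref{prop:interacting_part_decay}). Applying the Hastings--Koma exponential clustering theorem to $\ket{\Omega}$ with observables $\hat{X}, \hat{Y}$ separated by distance $\ge d$ yields
\[
|\braket{\Omega | \hat{X} \hat{Y} | \Omega} - \braket{\Omega | \hat{X} | \Omega}\braket{\Omega | \hat{Y} | \Omega}| \le C' \|\hat{X}\| \|\hat{Y}\| |\mathrm{supp}(\hat{Y})| e^{-d/\xi'},
\]
for constants $C', \xi' > 0$ depending only on $\Delta_0, r_0, D$. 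Combining this with the norm and support bounds from the first step and with $|\mathrm{supp}(\hat{Y})| \le 2|S|$ gives the claimed inequality with constants depending only on $\beta, r_0, D$.

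The main obstacle is justifying the identity $\Tr[\sigma AB] = \braket{\Omega | \hat{A} \hat{B} | \Omega}$ rigorously for products of even-parity observables, since the third quantization is not an algebra isomorphism on the full fermionic algebra; one must verify that multiplicativity holds on the even-parity subalgebra and that the images $\hat{X}, \hat{Y}$ are themselves parity-preserving on the doubled system, which they are essentially by construction. A secondary subtlety is that the Hastings--Koma theorem is usually stated for strictly local Hamiltonians; here the quasi-local extension is required. This extension follows by the same Lieb--Robinson-based argument, using the fermionic Lieb--Robinson bound of \cite{Hastings2004decay} together with the exponential decay of interaction strengths from Propositions~\ref{prop:non_interacting_part_decay} and \ref{prop:interacting_part_decay}, but one must carefully track how the decay parameters of $H^{\mathrm{parent}}$ combine with the spectral gap $\Delta_0$ to produce a single exponential in $d$ with rate $\xi^{-1}$ independent of $n$.
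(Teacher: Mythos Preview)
Your proposal is correct and follows essentially the same route as the paper: map Gibbs-state expectations to expectations in the top eigenstate of $H^{\mathrm{parent}}$ via Proposition~\ref{prop:correspondence_top_eigenstate_gibbs_state}, then invoke the exponential clustering theorem for gapped (quasi-)local Hamiltonians. One simplification worth noting: the ``main obstacle'' you flag about multiplicativity is a non-issue in the paper's setup, because the observable map is $X\mapsto X^{\sharp}=\Phi(L_X)$, and left multiplications satisfy $L_{XY}=L_X L_Y$ tautologically, so $(XY)^\sharp=X^\sharp Y^\sharp$ holds for \emph{all} $X,Y\in B(\mathcal{H})$ without any parity restriction (this is exactly Remark~\ref{rem:observable_mapping_homomorphism_preserve_norm}, which also gives $\|X^\sharp\|=\|X\|$). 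Your preliminary reduction to even-parity observables is therefore unnecessary.
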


A proof is provided in Section~\ref{sec:recovering_gibbs_state_from_top_eigenstate}. We note that decay of correlation for Gibbs states were previously only proved for a limited number of settings, typically for 1D systems or for high temperatures \cite{CapelMoscolariTeufelWessel2023decay,BluhmCapelEtAl2022exponential,KuwaharaKatoBrandao2020clustering}, whereas our result is not constrained by the geometry of the lattice or by the temperature, but is instead constrained by the non-quadratic interaction strength.


From the above we can see that, at any constant temperature, our mixing time result allows us to efficiently prepare the Gibbs states of a wide range of non-integrable fermionic Hamiltonians on a quantum computer. On the classical side, we are not aware of any such rigorous result. This then raises the question: is this a place where we can demonstrate significant quantum advantage? There are many reasons to be cautious in this regard. First of all, the weak interaction regime we are studying is generally considered easy for classical heuristic algorithms. Commonly used algorithms such as Hartree-Fock achieve good empirical performance even though they generally cannot achieve arbitrarily high accuracy. The recent result on the lack of entanglement for high-temperature Gibbs states of quantum spin systems in \cite{BakshiLiuMoitraTang2024high} also raises concern that in the weak interaction regime the Gibbs states may have structures that enable efficient classical simulation. All of these lead to an intriguing question: can we design a provably efficient classical algorithm to simulate Gibbs state properties in this regime, or can we prove that no classical algorithm can do so? Either outcome would significantly advance our understanding of the simulability of such systems and quantum advantage in general.

\paragraph{Notations.} We write $A\succcurlyeq B$ to indicate that $A-B$ is positive semi-definite.
We use the following convention for Fourier transforms for a function $f(t)$:
\begin{equation}
\label{eq:fourier_transform}
    \hat{f}(\omega) = \frac{1}{\sqrt{2\pi}}\int_{-\infty}^{\infty} e^{-i\omega t}f(t)\dd t.
\end{equation}
This leads to the corresponding definition of the inverse Fourier transform for a function $h(\omega)$:
\begin{equation}
\label{eq:inverse_fourier_transform}
    \check{h}(t) = \frac{1}{\sqrt{2\pi}}\int_{-\infty}^{\infty} e^{i\omega t}h(\omega)\dd \omega.
\end{equation}
For an operator $W$, we use $W(t)$ to denote its time evolution in the Heisenberg picture: $W(t)= e^{iHt}W e^{-iHt}$.

\paragraph{Organization.} The rest of the paper is organized as follows. In Section~\ref{sec:problem_setup} we introduce the basic setup, and in particular the Lindbladian for which we prove the mixing time bound. In Section~\ref{sec:background} we introduce several theoretical tools that will be needed in the proof: third quantization, the stability of free-fermion Hamiltonians, and the Lieb-Robinson bound. In Section~\ref{sec:constructing_the_parent_hamiltonian} we construct the parent Hamiltonian using third quantization, and study its properties (e.g., locality and how its spectrum reflects that of the Lindbladian). In Section~\ref{sec:structure_parent_ham} we state the results about the quasi-locality and the spectral gap of the parent Hamiltonian, which imply fast mixing. Sections~\ref{sec:spectral_gap_non_interacting}, \ref{sec:dissipative}, and \ref{sec:coherent_term} are devoted to proving these results. In Section~\ref{sec:spectral_gap_non_interacting} we compute the spectral gap for the non-interacting part of the parent Hamiltonian. In Section~\ref{sec:dissipative} we prove the quasi-locality of the part of the parent Hamiltonian corresponding to the dissipative part of the Lindbladian, and in Section~\ref{sec:coherent_term} we deal with the coherent part. The appendices contain a table of symbols used in this work (Appendix~\ref{sec:table_of_symbols}), and proofs of several technical lemmas.

\section{Problem setup}
\label{sec:problem_setup}

The quantum system we consider is a $D$-dimensional cubic lattice $\Lambda$ consisting of $n$ sites. 
We use $d(x,y)$ to denote the Euclidean distance between sites $x$ and $y$.
The Hilbert space we denote by $\mathcal{H}$, and the operator algebra by $B(\mathcal{H})$. The operators on $B(\mathcal{H})$, which we call super-operators, also form an algebra which we denote by $B(B(\mathcal{H}))$.
The set of quantum states we denote by $D(\mathcal{H})$, i.e.,
\begin{equation}
    \label{eq:set_of_quantum_states}
    D(\mathcal{H}) = \{\rho\in B(\mathcal{H}):\rho \succcurlyeq 0, \Tr[\rho]=1\}.
\end{equation}

On each site we have one Dirac fermion corresponding to $2$ Majorana modes. This gives rise to $2n$ Majorana operators $\gamma_{j}$, $j=1,2,\cdots,2n$, satisfying the anti-commutation relation
\begin{equation}
    \label{eq:anticommutation_majorana}
    \{\gamma_{j},\gamma_{k}\} = 2\delta_{jk}.
\end{equation}
These Majorana operators can be combined to define the fermionic creation and annihilation operators through
\begin{equation}
    \label{eq:fermion_creation_annihilation}
    c_j = \frac{1}{2}(\gamma_{2j-1}+i\gamma_{2j}),\quad c_j^\dag = \frac{1}{2}(\gamma_{2j-1}-i\gamma_{2j}).
\end{equation}
For a Majorana mode $j$, we denote by $\mathfrak{s}(j)$ its corresponding lattice site. Therefore the distance between two Majorana modes $j,k$ is $d(\mathfrak{s}(j),\mathfrak{s}(k))$.

\subsection{The Hamiltonian}

The Hamiltonian we focus on takes the form
\begin{equation}
    H = H_0 + V,
\end{equation}
where
\begin{equation}
    H_0 = \sum_{jk} h_{jk} \gamma_j \gamma_k,
\end{equation}
and $V$ is a general fermionic operator that preserves parity. In other words $V$ only contain terms that are products of an even number of Majorana operators. 
Moreover, we assume that $H_0$ and $V$ are $(1,r_0)$-geometrically local and $(U,r_0)$-geometrically local respectively, as defined in Definition~\ref{defn:geometrically_local}
For $H_0$ this means that $h_{jk}=0$ if $d(\mathfrak{s}(j),\mathfrak{s}(k))>r_0$.

\subsection{The Lindbladian}

We use the Lindbladian $\mathcal{L}:B(\mathcal{H})\to B(\mathcal{H})$ constructed in \cite{chen2023efficient} that was also used in \cite{rouze2024efficient}.
\begin{equation}
\label{eq:QMC_lindbladian}
    \mathcal{L}(\rho) = -i[B,\rho] + \sum_{j}\int_{-\infty}^{\infty}\dd \omega\eta(\omega)\left(A_j(\omega)\rho A_j(\omega)^\dag - \frac{1}{2}\{A_j(\omega)^\dag A_j(\omega),\rho\}\right),
\end{equation}
where
\begin{equation}
\label{eq:jump_operators}
    A_j(\omega) = \frac{1}{\sqrt{2\pi}}\int_{-\infty}^{\infty} \gamma_j(t) e^{-i\omega t} f(t) \dd t,
\end{equation}
and $B=\sum_j B_j$,
\begin{equation}
\label{eq:coherent_terms}
    B_j =  \int_{-\infty}^\infty \dd t b_1(t) \int_{-\infty}^\infty \dd t' b_2(t') \gamma_j(\beta (t'-t))^\dag \gamma_j(-\beta(t+t')).
\end{equation}
In the above $\gamma_j(t)=e^{iHt}\gamma_j e^{-iHt}$, and\begin{equation}
    \label{eq:f(t),b1(t),b2(t)}
    \begin{aligned}
        &f(t) = \frac{1}{\sqrt{\beta\sqrt{\pi/2}}}e^{-\frac{t^2}{\beta^2}},\quad b_1(t) = 2\sqrt{\pi}e^{\frac{1}{8}}\left(\frac{1}{\cosh(2\pi t)}*_t \sin(-t)e^{-2t^2}\right), \\
        &b_2(t) = \frac{1}{2\pi^{3/2}}e^{-4t^2-2it},\quad  \eta(\omega)=e^{-\frac{(\beta\omega+1)^2}{2}}.
    \end{aligned}
\end{equation}
Here $*_t$ denotes convolution in the variable $t$.

This Lindbladian is designed to satisfy the KMS-detailed balance condition (KMS-DBC), which we will introduce below. The KMS-inner product corresponding to a full-rank quantum state $\sigma$ is defined as
\begin{equation}
\label{eq:defn_KMS_inner_product}
    \braket{X,Y}_{\mathrm{KMS}}=\Tr[\sigma^{1/2}X^\dag \sigma^{1/2} Y],
\end{equation}
for any $X,Y\in B(\mathcal{H})$. A super-operator $\mathcal{X}$ satisfies the KMS-DBC if
\begin{equation}
    \label{eq:defn_KMS_DBC}
    \braket{\mathcal{X}(X),Y}_{\mathrm{KMS}} = \braket{X,\mathcal{X}(Y)}_{\mathrm{KMS}},
\end{equation}
i.e., it is self-adjoint under the KMS-inner product. For the Lindbladian $\mathcal{L}$ in \eqref{eq:QMC_lindbladian}, $\mathcal{L}^\dag$ is self-adjoint under the KMS-inner product and therefore satisfies the KMS-DBC.

\section{Background}
\label{sec:background}

In this section we will introduce some tools that are needed in the proof of our main result.

\subsection{Third quantization}
\label{sec:third_quantization}

Third quantization is a useful method to map a Lindbladian super-operator to an operator in an enlarged Hilbert space \cite{prosen2008third}. It was originally used to study quasi-free Lindbladians, i.e., those with jump operators that are linear in the Majorana operators, and coherent parts that are quadratic. Here we extend it to more general Lindbladians.

We define a $2^{2n}$-dimensional Hilbert space $\mathcal{H}'$ spanned by vectors $v_\alpha$, where $\alpha\in\{0,1\}^{2n}$. Define map $\varphi:B(\mathcal{H})\to \mathcal{H}'$,
\begin{equation}
    \label{eq:third_quantization_map}
    \varphi(\gamma_1^{\alpha_1}\gamma_2^{\alpha_2}\cdots \gamma_{2n}^{\alpha_{2n}}) = v_\alpha.
\end{equation}
We let $v_{\alpha}$ form an orthonormal basis by defining the inner product $\braket{\cdot,\cdot}$ through
\begin{equation}
    \label{eq:defn_innter_product_H_prime}
    \braket{v_\alpha,v_{\alpha'}}=\delta_{\alpha\alpha'}.
\end{equation}
This ensures that
\begin{equation}
\label{eq:third_quantization_preserves_frobenius_norm}
    \braket{\varphi(X),\varphi(Y)} = \frac{1}{2^{n}}\Tr[X^\dag Y],
\end{equation}
for any $X,Y\in B(\mathcal{H})$.
Define a new set of creation and annihilation operators through
\begin{equation}
    \label{eq:third_quantization_creation_annihilation}
    \hat{c}_j v_{\alpha} = \delta_{1,\alpha_j}\varphi(\gamma_j \gamma_1^{\alpha_1}\gamma_2^{\alpha_2}\cdots \gamma_{2n}^{\alpha_{2n}}),\quad \hat{c}_j^\dag v_{\alpha} = \delta_{0,\alpha_j}\varphi(\gamma_j \gamma_1^{\alpha_1}\gamma_2^{\alpha_2}\cdots \gamma_{2n}^{\alpha_{2n}}).
\end{equation}
One can check that they satisfy the canonical anti-commutation relation. 
Following \cite{prosen2008third}, we call the fermionic particles defined through these creation and annihilation operators $a$-fermions.
From these operators we then define the corresponding Majorana operators:
\begin{equation}
\label{eq:third_quantization_majorana_ops}
    \hat{\gamma}_{2j-1} = \hat{c}_j + \hat{c}_j^\dag,\quad \hat{\gamma}_{2j} = i(\hat{c}_j^\dag - \hat{c}_j).
\end{equation}
 
\begin{rem}[$a$-fermions on the lattice $\Lambda$]
\label{rem:a_fermion_geometry}
    There is an underlying geometry in this new set of fermionic operators.
    For each site $x$ in the original lattice $\Lambda$ associated with Majorana operators $\gamma_{j}$ and $\gamma_{k}$, we associate $\hat{c}_{j}$ and $\hat{c}_{k}$ with the same site. Therefore each site contains $2$ $a$-Dirac fermion modes and $4$ $a$-Majorana modes. We use $\mathfrak{s}'(j)$ to indicate the site in $\Lambda$ that corresponds to the $a$-Majorana mode $j$.
\end{rem}

$\varphi$ induces a representation of the super-operators on $B(\mathcal{H})$, where each super-operator $\mathcal{X}$ is represented by $\varphi\circ \mathcal{X}\circ \varphi^{-1}$, which is an operator on $\mathcal{H}'$. 
In particular, we consider the left- and right-multiplication with an operator $A$ defined as
\begin{equation}
\label{eq:left_and_right_multiplication}
    L_{A}(X) = A X,\quad R_{A}(X) = XA.
\end{equation}
We have the following rules for the their multiplication
\begin{equation}
\label{eq:left_right_multiply_repr_rules}
    L_{AB} = L_A L_B,\quad R_{AB} = R_B R_A.
\end{equation}
From \cite{prosen2008third}, and more directly from \eqref{eq:third_quantization_creation_annihilation} and \eqref{eq:third_quantization_majorana_ops}, we have
\begin{equation}
    \label{eq:repr_left_right_multiplication_single_majorana}
    \varphi\circ L_{\gamma_j}\circ \varphi^{-1} = \hat{\gamma}_{2j-1},\quad \varphi\circ R_{\gamma_j}\circ \varphi^{-1} = -i\hat{\gamma}_{2j}(-1)^{\hat{n}},
\end{equation}
where $\hat{n}=\sum_{j=1}^{2n} \hat{c}_j^\dag \hat{c}_j$ is the $a$-fermion number operator, which implies that $(-1)^{\hat{n}}$ is the $a$-fermion parity operator. From this we can get the corresponding representations of more general operators:
\begin{equation}
    \label{eq:repr_left_right_multiplication_more_majorana_ops}
    \begin{aligned}
        \varphi\circ L_{\gamma_{j_1}\cdots \gamma_{j_k}}\circ \varphi^{-1} &= \hat{\gamma}_{2j_1-1}\cdots \hat{\gamma}_{2j_k-1},\\
        \varphi\circ R_{\gamma_{j_1}\cdots \gamma_{j_k}}\circ \varphi^{-1} &= -i^k \hat{\gamma}_{2j_k}\cdots\hat{\gamma}_{2j_1} (-1)^{k\hat{n}}.
    \end{aligned}
\end{equation}

The locality of the operators obtained through the above mapping depends crucially on their parities, for which we give a precise definition.
\begin{defn}
\label{defn:parity_of_states}
    We define $\mathcal{H}_{\mathrm{even}}$ and $\mathcal{H}_{\mathrm{odd}}$ to be the $1$-eigenspace and $(-1)$-eigenspace of the parity operator $i^n\gamma_1\gamma_2\cdots\gamma_{2n}$ respectively. We say that the states in $\mathcal{H}_{\mathrm{even}}$ have even parity while those in $\mathcal{H}_{\mathrm{odd}}$ have odd parity.
\end{defn}
We can similarly define $\mathcal{H}'_{\mathrm{even}}$ and $\mathcal{H}'_{\mathrm{odd}}$ for the enlarged Hilbert space as well. For operators, we can also define
\begin{defn}
    \label{defn:parity_of_operators}
    We define $B(\mathcal{H})_{\mathrm{even}}$ to be the subspace of $B(\mathcal{H})$ spanned by $\gamma_1^{\alpha_1}\gamma_2^{\alpha_2}\cdots \gamma_{2n}^{\alpha_{2n}}$ for even $|\alpha|$. Similarly we define $B(\mathcal{H})_{\mathrm{odd}}$ to be the subspace spanned by $\gamma_1^{\alpha_1}\gamma_2^{\alpha_2}\cdots \gamma_{2n}^{\alpha_{2n}}$ for odd $|\alpha|$.
\end{defn}

\begin{rem}
    \label{rem:third_quantization_preserves_parity}
    These two notions of parity are related, as $v\in\mathcal{H}'$ has even/odd parity if and only if $\varphi^{-1}(v)\in B(\mathcal{H})$ has even/odd parity. One can see this from the mapping $\varphi$ defined in \eqref{eq:third_quantization_map} and by observing that $\mathcal{H}'_{\mathrm{even}}$ and $\mathcal{H}'_{\mathrm{odd}}$ are spanned by $v_{\alpha}$ for even $|\alpha|$ and odd $|\alpha|$ respectively.
\end{rem}

We will focus on super-operators that are parity-preserving, as defined below
\begin{defn}
    \label{defn:super_operator_parity_preserving}
    We say $\mathcal{X}\in B(B(\mathcal{H}))$ is parity-preserving if $\mathcal{X}(B(\mathcal{H})_{\mathrm{even}})\subset B(\mathcal{H})_{\mathrm{even}}$ and $\mathcal{X}(B(\mathcal{H})_{\mathrm{odd}})\subset B(\mathcal{H})_{\mathrm{odd}}$.
\end{defn}

We then use third quantization to build a map that preserves locality and also preserves the spectrum for the even sector.
\begin{defn}
    \label{defn:Phi_thrid_quantization}
    We define a linear map $\Phi:B(B(\mathcal{H}))\to B(\mathcal{H}')$ such that, for any $W,W'\in B(\mathcal{H})$,
    \[
    \begin{aligned}
        \Phi(L_W) &= \varphi \circ L_W \circ \varphi^{-1}, \\
        \Phi(R_W) &= \begin{cases}
            \varphi \circ R_W \circ \varphi^{-1}\ &\text{if } W \text{ is even,}  \\
            (\varphi \circ R_W \circ \varphi^{-1})(-1)^{\hat{n}}\ &\text{if } W \text{ is odd,} 
        \end{cases} \\
        \Phi(L_W R_{W'}) &= \Phi(L_W)\Phi(R_{W'}).
    \end{aligned}
    \]
\end{defn}

\begin{rem}
    \label{rem:Phi_is_uniquely_well_defined}
    Because $L_W R_{W'}$ with even/odd $W$ and even/odd $W'$ span disjoint subspaces of $B(B(\mathcal{H}))$, the above map is well-defined.
    Because $L_W R_{W'}$ for $W$ and $W'$ with even/odd parities span the entire $B(B(\mathcal{H}))$, the above definition uniquely defines the map $\Phi(\mathcal{X})$ for all $\mathcal{X}\in B(B(\mathcal{H}))$.
\end{rem}

\begin{rem}
\label{rem:Phi_preserves_locality}
    $\Phi$ defined in Definition~\ref{defn:Phi_thrid_quantization} strictly preserves locality. We can see this by computing
    \[
    \begin{aligned}
        \Phi(L_{\gamma_{j_1}\gamma_{j_2}\cdots \gamma_{j_k}}) &= \hat{\gamma}_{2j_1} \hat{\gamma}_{2j_2}\cdots \hat{\gamma}_{2j_k}, \\
        \Phi(R_{\gamma_{j_1}\gamma_{j_2}\cdots \gamma_{j_k}}) &= -i^k\hat{\gamma}_{2j_k} \hat{\gamma}_{2j_{k-1}}\cdots \hat{\gamma}_{2j_1}.
    \end{aligned}
    \]
\end{rem}

Note that $\Phi$ is not a homomorphism between algebras because even though $L_W$ and $R_{W'}$ commute with each other, $\Phi(L_W)$ and $\Phi(R_{W'})$ do not necessarily commute with each other. However, it still preserves part of the spectrum, as stated in the following lemma:

\begin{lem}
    \label{lem:Phi_preserves_spectrum}
    $\Phi$ satisfies
    \[
    \Phi(\mathcal{X})|_{\mathcal{H}'_{\mathrm{even}}} = \varphi\circ \mathcal{X}\circ \varphi^{-1}|_{\mathcal{H}'_{\mathrm{even}}}.
    \]
    Moreover, when $\mathcal{X}$ is parity-preserving, $\Phi(\mathcal{X})|_{\mathcal{H}'_{\mathrm{even}}}$ and $\mathcal{X}|_{B(\mathcal{H})_{\mathrm{even}}}$ share the same spectrum.
\end{lem}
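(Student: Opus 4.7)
The plan is to reduce both claims to a check on the generating super-operators $L_W R_{W'}$ with $W, W'$ of definite parity, and then exploit the fact that $(-1)^{\hat{n}}$ acts as the identity on $\mathcal{H}'_{\mathrm{even}}$. By the discussion in Remark~\ref{rem:Phi_is_uniquely_well_defined}, any $\mathcal{X} \in B(B(\mathcal{H}))$ admits a unique expansion as a linear combination of super-operators $L_W R_{W'}$ where each $W, W'$ ranges over Majorana monomials, hence has definite parity. Since both $\Phi$ and the map $\mathcal{Y} \mapsto \varphi \circ \mathcal{Y} \circ \varphi^{-1}$ are linear, it suffices to verify the first equality on these basis elements.

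For the first claim, fix $W, W'$ of definite parity and $v \in \mathcal{H}'_{\mathrm{even}}$. Since $(-1)^{\hat{n}} v = v$, inspecting the two branches of Definition~\ref{defn:Phi_thrid_quantization} gives
\[
\Phi(R_{W'})\, v \;=\; (\varphi \circ R_{W'} \circ \varphi^{-1})(v)
\]
regardless of the parity of $W'$. Applying $\Phi(L_W) = \varphi \circ L_W \circ \varphi^{-1}$ and using that $L_W R_{W'} = L_W \circ R_{W'}$ as super-operators then yields
\[
\Phi(L_W R_{W'})\, v \;=\; \varphi \circ L_W \circ R_{W'} \circ \varphi^{-1}(v) \;=\; (\varphi \circ L_W R_{W'} \circ \varphi^{-1})(v).
\]
Summing over the basis expansion of $\mathcal{X}$ gives $\Phi(\mathcal{X})|_{\mathcal{H}'_{\mathrm{even}}} = \varphi \circ \mathcal{X} \circ \varphi^{-1}|_{\mathcal{H}'_{\mathrm{even}}}$. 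Note that the output of $\Phi(L_W R_{W'})$ applied to an even $v$ may land in $\mathcal{H}'_{\mathrm{odd}}$ (when $W W'$ has odd total parity), but this causes no issue since $\Phi(L_W)$ and $\varphi \circ L_W \circ \varphi^{-1}$ are defined on all of $\mathcal{H}'$.

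For the second claim, observe that by Remark~\ref{rem:third_quantization_preserves_parity} the map $\varphi$ restricts to a vector-space isomorphism $B(\mathcal{H})_{\mathrm{even}} \to \mathcal{H}'_{\mathrm{even}}$. When $\mathcal{X}$ is parity-preserving in the sense of Definition~\ref{defn:super_operator_parity_preserving}, the operator $\varphi \circ \mathcal{X} \circ \varphi^{-1}$ leaves $\mathcal{H}'_{\mathrm{even}}$ invariant, and the conjugation by the isomorphism $\varphi|_{B(\mathcal{H})_{\mathrm{even}}}$ shows that $\mathcal{X}|_{B(\mathcal{H})_{\mathrm{even}}}$ and $\varphi \circ \mathcal{X} \circ \varphi^{-1}|_{\mathcal{H}'_{\mathrm{even}}}$ are similar as linear maps and hence share their spectra. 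Combining with the first claim identifies the latter with $\Phi(\mathcal{X})|_{\mathcal{H}'_{\mathrm{even}}}$, completing the argument. The only subtlety worth watching is the bookkeeping in the first step — ensuring that the $(-1)^{\hat{n}}$ factor in the odd branch of $\Phi(R_{W'})$ is absorbed \emph{before} applying $\Phi(L_W)$, which is precisely what restricting the input to $\mathcal{H}'_{\mathrm{even}}$ allows.
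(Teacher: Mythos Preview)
Your proof is correct and follows essentially the same approach as the paper's own proof: both reduce to the generators $L_W R_{W'}$, use $(-1)^{\hat n}v=v$ on $\mathcal{H}'_{\mathrm{even}}$ to collapse the two branches of Definition~\ref{defn:Phi_thrid_quantization}, and then invoke the parity-preservation of $\varphi$ (Remark~\ref{rem:third_quantization_preserves_parity}) to conclude spectral equality via similarity. Your write-up is simply more explicit about the intermediate step where $\Phi(R_{W'})v$ may land in $\mathcal{H}'_{\mathrm{odd}}$ and why this causes no trouble, which is a nice clarification the paper leaves implicit.
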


\begin{proof}
    Note that for $v\in\mathcal{H}'$ with even parity, $(-1)^{\hat{n}}v = v$. Therefore $\Phi(L_W R_{W'})v =\Phi(L_W)\Phi(R_{W'})v =(\varphi\circ L_W R_{W'} \circ \varphi^{-1})v$. From this the equality follows by linearity. Because $\varphi$ preserves parity as discussed in Remark~\ref{rem:third_quantization_preserves_parity}, we can then see that $\Phi(\mathcal{X})|_{\mathcal{H}'_{\mathrm{even}}}$ and $\mathcal{X}|_{B(\mathcal{H})_{\mathrm{even}}}$ must share the same spectrum if $\mathcal{X}$ preserves parity. 
\end{proof}


\subsection{Stability of free-fermion Hamiltonians}
\label{sec:stability_of_free_fermion_hamiltonians}

We will introduce the stability result for the gap of free-fermion Hamiltonians proved in \cite{Hastings2019stability}. It relies crucially on notions of locality, which we will discuss below.

\begin{defn}[Quasi-locality around $x$]
\label{defn:quasi_local}
    Let $x\in\Lambda$ be a lattice site. We say an operator $W$ is $(C,\mu)$-quasi-local around $x$ if $\|W\|\leq C$, and for every integer $r\geq 1$, there exists $W_r$ supported on the ball of radius $r$ centered at $x$ such that
    \[
    \|W-W_r\|\leq C e^{-\mu r},
    \]
    for some constants $C,r>0$.
\end{defn}

We adopt the same definition in \cite{Hastings2019stability} of the decay property of an operator, which characterize the quasi-locality of an extensive operator in the following way 
\begin{defn}[Decay of operator]
\label{defn:decay_operator}
    We say an operator $W$ has $(C,\mu)$-decay if $W$ can be decomposed as
    \[
    W = \sum_{r\geq 1}\sum_{B\in \mathcal{B}(r)} W_{r,B},
    \]
    where $\mathcal{B}(r)$ denotes the set of balls of radius $r$ and $W_{r,B}$ is an operator supported on a ball $B$ and where
    \[
    \max_{B\in\mathcal{B}(r)}\|W_{r,B}\|\leq C e^{-\mu r},
    \]
    for some constants $C,r>0$.
\end{defn}

Naturally, an operator that is the sum of quasi-local terms has nice decay property:
\begin{lem}
    \label{lem:quasi_local_to_decay}
    For any $C,\mu>0$, if an operator $W$ can be written as $W=\sum_{x\in\Lambda} W_x$ where each $W_x$ is $(C,\mu)$-quasi-local around $x$, then $W$ has $(2Ce^\mu,\mu)$-decay.
\end{lem}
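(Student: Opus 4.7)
The plan is to construct the required decay decomposition of $W$ by telescoping the quasi-local approximations supplied for each summand $W_x$. For each site $x\in\Lambda$, Definition~\ref{defn:quasi_local} gives operators $W_{x,r}$ ($r\geq 1$) supported on the ball $B(x,r)$ of radius $r$ centered at $x$, with $\|W_x-W_{x,r}\|\leq Ce^{-\mu r}$; in particular $W_{x,r}\to W_x$ in operator norm. Setting $W_{x,0}:=0$ yields the telescoping identity
\[ W_x \;=\; \sum_{r\geq 1}\bigl(W_{x,r}-W_{x,r-1}\bigr), \]
each of whose summands is supported on $B(x,r)$.

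Next I would bound the norm of each telescoped piece using the triangle inequality. For $r\geq 2$,
\[ \|W_{x,r}-W_{x,r-1}\|\;\leq\;Ce^{-\mu r}+Ce^{-\mu(r-1)}\;=\;Ce^{-\mu r}(1+e^\mu)\;\leq\; 2Ce^\mu e^{-\mu r}, \]
using $\mu>0$, and for $r=1$ one has $\|W_{x,1}\|\leq \|W_x\|+\|W_x-W_{x,1}\|\leq C+Ce^{-\mu}\leq 2C = 2Ce^\mu e^{-\mu}$. Both cases therefore fit the target bound $2Ce^\mu e^{-\mu r}$, matching the decay constants $(2Ce^\mu,\mu)$ claimed in the lemma.

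Finally I would aggregate by ball. Parametrizing each $B\in\mathcal{B}(r)$ by its (unique) lattice center $y_B$, define $W_{r,B}:=W_{y_B,r}-W_{y_B,r-1}$, taking $0$ if no quasi-local term is attached to $y_B$. Then $W_{r,B}$ is supported on $B$, satisfies $\max_{B\in\mathcal{B}(r)}\|W_{r,B}\|\leq 2Ce^\mu e^{-\mu r}$, and interchanging the two sums (justified term-by-term since $\sum_{r\geq 1} 2Ce^\mu e^{-\mu r}<\infty$ uniformly in $x$) gives $W=\sum_x W_x =\sum_{r\geq 1}\sum_{B\in\mathcal{B}(r)} W_{r,B}$, as required. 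There is no substantive obstacle here beyond bookkeeping; the one point that warrants care is fixing a center convention for $\mathcal{B}(r)$ so that the aggregation does not double-count balls, and checking that the telescoping is norm-convergent, both of which are handled by the estimates above.
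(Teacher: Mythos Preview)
Your proof is correct and follows essentially the same telescoping approach as the paper: write $W_x=\sum_{r}(W_{x,r}-W_{x,r-1})$ with the convention that the zeroth approximant vanishes, bound each increment by the triangle inequality to get $2Ce^{-\mu(r-1)}=2Ce^\mu e^{-\mu r}$, and read off $(2Ce^\mu,\mu)$-decay. Your treatment is if anything slightly more careful than the paper's, since you handle the $r=1$ case explicitly using $\|W_x\|\leq C$ and spell out the ball-center bookkeeping and convergence.
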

\begin{proof}
    Since $W=\sum_{x\in\Lambda} W_x$, and $W_x$ is $(C,\mu)$-quasi-local around $x$, we have
    \[
    W = \sum_{r\geq 0}\sum_{x\in\Lambda} (W_{x,r}-W_{x,r-1}),
    \]
    where $\|W_x-W_{x,r}\|\leq Ce^{-\mu r}$, and we let $W_{x,-1}=0$. Note that $W_{x,r}-W_{x,r-1}$ is supported on the ball with radius $r$ centered at $x$, and $\|W_{x,r}-W_{x,r-1}\|\leq 2Ce^{-\mu (r-1)}$ by the triangle inequality. This tells us that $W$ has $(2Ce^\mu,\mu)$-decay.
\end{proof}

For quadratic operators Ref.~\cite{Hastings2019stability} defined a somewhat different notion of decay, which we state here
\begin{defn}[Decay of coefficient matrix]
\label{defn:decay_coef_matrix}
    For a matrix $B=(B_{jk})$ where each $j$ and $k$ is associated with a lattice site $x\in\Lambda$, and no more than $n_0$ $j$ or $k$ is associated with the same site, then we say $B$ has $[K,\nu]$-decay if 
    \[
    \|B_{xy}\|\leq K e^{-\nu d(x,y)},
    \]
    where $x,y\in\Lambda$, and $B_{xy}$ is the submatrix of $B$ whose rows and columns are associated with $x$ and $y$ respectively. Here $n_0$ is a constant that is independent of $|\Lambda|$.
\end{defn}
We remark that in this work we only need to have $n_0\leq 4$.
With the above notions of locality, we will state the main result in \cite{Hastings2019stability}:
\begin{thm}[Corollary~1 of \cite{Hastings2019stability}]
\label{thm:stability_free_fermion}
    Let $H_{FF}=\sum_{jk} A_{jk} \gamma_j \gamma_k$ be a Hamiltonian on lattice $\Lambda$ with spectral gap $\Delta$. We assume that $A=(A_{jk})$ has $[K,\nu]$-decay. Let $H=H_{FF}+V$, where we assume $V$ has $(C,\mu)$-decay. Then there exists constants $C_0,c_1>0$ depending only on $K,\nu,\mu,\Delta,D$ such that when $C\leq C_0$, the spectral gap of $H$ is at least $\Delta-c_1 C$.
\end{thm}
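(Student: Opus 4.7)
The plan is to use quasi-adiabatic continuation along the one-parameter family $H(s) = H_{FF} + sV$, $s\in[0,1]$, to track individual eigenvalues as the perturbation is switched on. The core challenge is that $\|V\|$ may be of order $|\Lambda|$, so Weyl-type perturbation bounds are useless; the argument must convert the globally large perturbation into a sum of \emph{local} shifts, each bounded by $C e^{-\mu r}$ via the decomposition in Definition~\ref{defn:decay_operator}, and then use the lattice geometry to bound the total shift by a constant multiple of $C$.

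First I would construct the quasi-adiabatic generator following Hastings' filter-function recipe, $D(s) = \int F(t)\, e^{iH(s)t} V e^{-iH(s)t}\,\dd t$, where $F$ is chosen so that its Fourier transform equals $i/E$ outside the spectral gap and is smoothly cut off inside, so that $F(t)$ decays faster than any polynomial in $|t|$. To show $D(s)$ decomposes as $\sum_{x\in\Lambda} D_x(s)$ with each $D_x(s)$ quasi-local around $x$ in the sense of Definition~\ref{defn:quasi_local}, I would establish a Lieb-Robinson bound for $H(s)$: the quadratic $H_{FF}$ has a single-particle propagator whose matrix elements inherit exponential decay directly from the $[K,\nu]$-decay of $A$ (Definition~\ref{defn:decay_coef_matrix}), and the perturbation $sV$, being $(sC,\mu)$-decay, can be absorbed by a Dyson-type expansion that converges uniformly in $s\in[0,1]$ provided $C$ is below a threshold depending only on $K,\nu,\mu,D$. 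Combining this with the super-polynomial decay of $F$ produces the required quasi-local decomposition of $D(s)$, with a quasi-locality constant proportional to $C$.

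Next I would use the unitary $U(s)$ generated by $D(s)$ to parallel-transport the ground and first-excited eigenvectors of $H(s)$ back to fixed states in the eigenspaces of $H_{FF}$. By the Hellmann-Feynman identity, $\partial_s(\lambda_1(s) - \lambda_0(s))$ equals the difference of expectation values of the conjugated $V$ in these two free-fermion eigenstates. Since the free-fermion ground and single-quasiparticle-excited states have exponentially clustering correlations at a scale set by $\Delta$ and $\nu$, and since conjugation by $U(s)$ preserves quasi-locality, each quasi-local piece of $V$ contributes at most $C e^{-\mu' r}$ to this derivative for some $\mu' > 0$ depending on $\mu,\nu,\Delta$. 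Summing over anchor points and radii on the $D$-dimensional lattice yields $|\partial_s(\lambda_1 - \lambda_0)| \leq c_1 C$ uniformly in $s$, and integrating from $s=0$ to $s=1$ gives the stated gap bound.

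The main obstacle is closing the self-consistency in the first step: the Lieb-Robinson bound required to prove quasi-locality of $D(s)$ itself rests on quasi-locality of $V$ being preserved under the unknown evolution $e^{-iH(s)t}$, uniformly in $s$. The radius of convergence of this bootstrap is exactly the constant $C_0$ appearing in the statement, and controlling it as a function only of $K,\nu,\mu,\Delta,D$ (and not of $|\Lambda|$) is where the bulk of the technical work lies. Once this is in place, the Hellmann-Feynman step and the final integration are relatively standard consequences of the quasi-adiabatic machinery.
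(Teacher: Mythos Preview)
The paper does not prove this theorem; it is quoted verbatim as Corollary~1 of \cite{Hastings2019stability} and used as a black box. So there is no ``paper's own proof'' to compare against here.

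That said, your sketch has a genuine gap in the Hellmann--Feynman step, and it is worth naming because it is exactly the difficulty that forces Hastings' argument to use the free-fermion structure in a more essential way than you do. You write that summing the contributions of the quasi-local pieces of $V$ over anchor points and radii yields $|\partial_s(\lambda_1-\lambda_0)|\le c_1 C$. But with $V=\sum_{r,B} V_{r,B}$ and $\|V_{r,B}\|\le C e^{-\mu r}$, the trivial bound on each difference $\langle\psi_1|V_{r,B}|\psi_1\rangle-\langle\psi_0|V_{r,B}|\psi_0\rangle$ is $2Ce^{-\mu r}$, and summing over the $\mathcal O(|\Lambda|)$ balls of each radius gives an \emph{extensive} bound $\mathcal O(|\Lambda|\cdot C)$, not $c_1 C$. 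Exponential clustering of the free-fermion states does not help: both expectation values cluster, but their difference on a local observable is generically $\mathcal O(\|V_{r,B}\|)$, not smaller. What actually makes the sum $\mathcal O(C)$ at $s=0$ is that $|\psi_1(0)\rangle=d_k^\dagger|\psi_0(0)\rangle$ for a normalized quasiparticle mode $\phi_k$, so the difference localizes as $|\phi_k(x)|^2\|V_x\|$ and $\sum_x|\phi_k(x)|^2=1$ supplies the missing $1/|\Lambda|$.

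This is where the second problem enters: the quasi-adiabatic unitary $U(s)$ is designed to track the ground-state \emph{projector}, not the first excited eigenvector. There is no gap above $\lambda_1(s)$ in the hypotheses, so $U(s)^\dagger|\psi_1(s)\rangle$ is merely some state orthogonal to $|\psi_0(0)\rangle$, not $|\psi_1(0)\rangle$, and in particular need not be a single-quasiparticle state. Your Hellmann--Feynman computation therefore cannot invoke the quasiparticle normalization argument for $s>0$. The repair in \cite{Hastings2019stability} is to work with the free-fermion quasiparticle operators themselves: dress $d_k$ by the spectral flow to obtain quasi-local operators $d_k(s)=U(s)d_k U(s)^\dagger$, show they approximately create single excitations of $H(s)$ with energy within $\mathcal O(C)$ of $\epsilon_k$, and bound the gap variationally from those trial states. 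That step genuinely uses the free-fermion structure of $H_{FF}$ and is the content you are missing.
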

Note that in this work we focus on the spectral gap between the top eigenvalue and the second largest eigenvalue as explained in Fn.~\ref{fn:spectral_gap}, while \cite{Hastings2019stability} was about the gap between the ground state and the first excited state. These two are however equivalent as one can simply consider $-H$ rather than $H$.

\subsection{The Lieb-Robinson bound}
\label{sec:lieb_robinson_bound}

The Lieb-Robinson bound was used extensively in \cite{Hastings2019stability,rouze2024efficient} and is also needed in our proof. We state the version used in \cite[Lemma~5]{HaahHastingsKothariLow2021quantum}, which follows \cite{LiebRobinson1972finite,Hastings2004lieb,nachtergaele2006lieb,Hastings2006spectral,Hastings2010locality}, with modification:
\begin{lem}
    \label{lem:LR_bound}
    Let $H$ be a fermionic Hamiltonian on $\Lambda$ with even parity that is $(\xi,r_0)$-geometrically local (Definition~\ref{defn:geometrically_local}). For an operator $O_X$ supported on $X\subset\Lambda$, and $\Omega\subset X$, we let  
    \[
    \ell = \lfloor\min\{d(x,y):x\in X, y\in \Lambda\setminus \Omega\} \rfloor.
    \]
    Then
    \[
    \|e^{iHt}O_X e^{-iHt}-e^{iH' t}O_X e^{-iH' t}\|\leq |X|\|O_X\|\frac{(2J|t|)^\ell}{\ell !},
    \]
    where $H'$ consists of the terms in $H$ that are supported within $\Omega$, $J\leq C\xi r_0^{2D}$ for some constant $C>0$ that depends only on $D$.
\end{lem}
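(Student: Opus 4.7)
The plan is to prove this via the standard iterated Duhamel identity, which is the canonical route to Lieb--Robinson bounds with a factorial-in-distance tail (as opposed to merely an exponential one). The fermionic setting causes no essential change here: since $H$ is parity-preserving, each geometrically local piece $h_B$ may be taken to be an even polynomial in Majorana operators, and even operators on disjoint supports commute, so all nested commutators behave exactly as in the bosonic setting (this is the observation of Hastings \cite{Hastings2004decay} used throughout the paper). I will proceed under the natural reading $X \subset \Omega$, so that $\ell$ really is the distance from the support of $O_X$ to the complement of $\Omega$; under the displayed containment $\Omega \subset X$, the integer $\ell$ vanishes as soon as $\Omega \subsetneq X$ and the bound is trivial, so no content is lost.

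First I would apply the Duhamel identity. Consider
\[
G(s) \;=\; e^{iH(t-s)}\, e^{iH's}\, O_X\, e^{-iH's}\, e^{-iH(t-s)},
\]
which interpolates between $G(0)=e^{iHt}O_X e^{-iHt}$ and $G(t)=e^{iH't}O_X e^{-iH't}$. Differentiating yields
\[
\tfrac{d}{ds}G(s)\;=\;-i\,e^{iH(t-s)}\bigl[H-H',\,e^{iH's}O_X e^{-iH's}\bigr]e^{-iH(t-s)},
\]
and, after integrating and applying unitary invariance of the norm,
\[
\bigl\|e^{iHt}O_X e^{-iHt}-e^{iH't}O_X e^{-iH't}\bigr\|\;\leq\;\int_0^t\bigl\|[H-H',\,e^{iH's}O_X e^{-iH's}]\bigr\|\,ds.
\]
Since $H-H'=\sum_{B\in\mathcal B(r_0),\,B\not\subset\Omega}h_B$, every nonvanishing summand requires $B$ to overlap the time-evolved support of $O_X^{H'}(s)=e^{iH's}O_X e^{-iH's}$. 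I would then iterate the identity inside the $H'$-evolution, replacing $H'$ in each successive step by the restriction of $H'$ to a slightly smaller region that still contains the support of $O_X$, so that the commutator on the next level is again forced to involve a boundary ball.

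After $\ell$ such iterations, the bound reduces to a sum over chains $B_1,B_2,\dots,B_\ell\in\mathcal B(r_0)$ with $B_1$ meeting $X$, each $B_{i+1}$ meeting $B_1\cup\cdots\cup B_i$, and $B_\ell$ reaching $\Lambda\setminus\Omega$ --- any shorter chain cannot propagate $O_X$ across distance $\ell$. Each chain contributes at most $\prod_i 2\|h_{B_i}\|\leq(2\xi)^\ell$ from the iterated commutator bound $\|[h_B,\cdot]\|\leq 2\|h_B\|\|\cdot\|$, the nested time integrals $\int_{0\leq s_\ell\leq\cdots\leq s_1\leq t}ds_\ell\cdots ds_1$ supply $|t|^\ell/\ell!$, and the number of radius-$r_0$ balls meeting a fixed radius-$r_0$ ball is $O(r_0^{2D})$ by the volume of a ball of radius $2r_0$ in $D$ dimensions. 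Hence the number of chains starting at any fixed site is at most $(Cr_0^{2D})^\ell$; multiplying by the $|X|$ choices of starting site gives $|X|\,\|O_X\|\,(2J|t|)^\ell/\ell!$ with $J\leq C\xi r_0^{2D}$.

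Main obstacle: the most delicate part is making the iteration rigorous to all orders, since after $\ell$ Duhamel peelings one is left with a remainder term in which the innermost operator is a full time-evolved object whose norm must still be bounded by $\|O_X\|$ via unitarity, without reintroducing a $t$-dependent prefactor that would spoil the factorial. The standard resolution is to stop the iteration at depth $\ell$ and bound the innermost nested commutator by $2^\ell\|O_X\|\prod_i\|h_{B_i}\|$ using only the triangle inequality and unitarity of the conjugating evolutions, which is precisely the step that produces the $(2J)^\ell$ factor. A secondary subtlety is the combinatorial counting of overlapping (rather than disjoint) balls of radius $r_0$, which is where the specific dependence $J\leq C\xi r_0^{2D}$ comes from; this is the only point at which the precise geometric-locality parameter $r_0$ enters.
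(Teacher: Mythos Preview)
Your approach is correct and is essentially the standard iterated-Duhamel/nested-commutator argument that yields the factorial tail. The paper does not actually prove this lemma: it simply cites \cite[Lemma~5]{HaahHastingsKothariLow2021quantum} and remarks (following \cite[Section~4]{HaahHastingsKothariLow2021quantum} and \cite{Hastings2004decay}) that the fermionic case goes through unchanged because even local terms commute on disjoint supports---exactly the observation you make. Your sketch therefore matches the cited reference's method; the only minor difference is bookkeeping: the paper defines $J=\max_{x\in\Lambda}\sum_{B\ni x}|B|\|H_B\|$ and bounds it by $\xi\cdot C_1 r_0^{D}\cdot C_2 r_0^{D}$ (volume of a ball times number of balls through a site), whereas you count overlapping balls directly, but both give the same $C\xi r_0^{2D}$. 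Your reading $X\subset\Omega$ rather than the displayed $\Omega\subset X$ is also the intended one, as the paper's subsequent applications confirm.
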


We remark that we adapt \cite[Lemma~5]{HaahHastingsKothariLow2021quantum} to the fermionic setting, which, as commented in \cite[Section~4]{HaahHastingsKothariLow2021quantum}, works without needing any modification. $J$ is sometimes known as the Lieb-Robinson velocity. When $H$ is $(\xi,r_0)$-geometrically local, then by Definition~\ref{defn:geometrically_local} we can write it as $H=\sum_{B\in\mathcal{B}(r_0)} H_B$ with $\|H_B\|\leq \xi$. Then
\begin{equation}
\label{eq:LR_velocity_full_ham}
    J = \max_{x\in\Lambda}\sum_{B\ni x}|B|\|H_B\|\leq \xi\times C_1 r_0^D\times C_2 r_0^D=C\xi r_0^{2D}.
\end{equation}
We have used the fact that each term involves at most $|B|\leq C_1 r_0^D$ sites and each site is involved in at most $C_2 r_0^D$ terms, for constants $C_1,C_2>0$ that depend only on $D$.
We will also use the Lieb-Robinson velocity for the non-interacting part $H_0$, for which we can get a tighter bound:
\begin{equation}
\label{eq:LR_velocity_non_interacting}
    J_0 = \max_{x\in\Lambda}\sum_{B\ni x}|B|\|(H_0)_B\|\leq 1\times 2\times C_2' r_0^D=C'\xi r_0^{D}.
\end{equation}
In the above because $H_0$ only contains terms that act on two Majorana modes, $B$ for which the corresponding term $(H_0)_B$ is non-zero can have at most $|B|=2$, and $\|(H_0)_B\|\leq 1$.  

\section{Constructing the parent Hamiltonian}
\label{sec:constructing_the_parent_hamiltonian}

Even though third quantization provides a way to map the Lindbladian to a operator, we still do not have a Hamiltonian because the resulting operator may not be Hermitian. In this section we will use the stationary state $\sigma$ to construct a Hermitian parent Hamiltonian.
We define super-operator $\Gamma\in B(B(\mathcal{H}))$ as follows:
\begin{equation}
    \label{eq:Gamma}
    \Gamma(X) = \sigma^{1/4}X\sigma^{1/4}.
\end{equation}
We then define the map
\begin{equation}
    \label{eq:isometry_phi_tilde}
    \Tilde{\varphi} = \varphi\circ \Gamma,
\end{equation}
This map is an isometry, i.e. it satisfies
\begin{equation}
\begin{aligned}
    \braket{\Tilde{\varphi}(X),\Tilde{\varphi}(Y)} &= \frac{1}{2^n}\Tr[\Gamma(X)^\dag \Gamma(Y)] \\
    &= \frac{1}{2^n}\Tr[\sigma^{1/4}X^\dag \sigma^{1/2}Y\sigma^{1/4}]\\
    &=\frac{1}{2^n}\braket{X,Y}_{\mathrm{KMS}},
\end{aligned}
\end{equation}
where the first equality comes from \eqref{eq:third_quantization_preserves_frobenius_norm}.
We then define, for any $\mathcal{X}\in B(B(\mathcal{H}))$,
\begin{equation}
    \label{eq:defn_Phi_tilde_parent_ham_map}
    \tilde{\Phi}(\mathcal{X}) = \Phi({\Gamma}\circ \mathcal{X} \circ {\Gamma}^{-1}).
\end{equation}
By the definition of $\Phi$ in Definition~\ref{defn:Phi_thrid_quantization}, we have
\begin{equation}
\label{eq:similarity_structure_Phi_restricted_to_even}
\begin{aligned}
    \tilde{\Phi}(\mathcal{X})|_{\mathcal{H}'_{\mathrm{even}}} = \tilde{\varphi}\circ \mathcal{X} \circ \tilde{\varphi}^{-1}|_{\mathcal{H}'_{\mathrm{even}}}.
\end{aligned}
\end{equation}
This immediate implies the following:
\begin{lem}
    \label{lem:Phi_tilde_preserves_spectrum}
    Let $\mathcal{X}\in B(B(\mathcal{H}))$ be a super-operator that preserves parity according to Definition~\ref{defn:super_operator_parity_preserving}, then $\tilde{\Phi}(\mathcal{X})|_{\mathcal{H}'_{\mathrm{even}}}$ and $\mathcal{X}|_{B(\mathcal{H})_{\mathrm{even}}}$ share the same spectrum. 
\end{lem}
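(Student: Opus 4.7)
The plan is to prove this as an essentially immediate consequence of equation \eqref{eq:similarity_structure_Phi_restricted_to_even}, by showing that the right-hand side, when viewed on the even sector alone, is a similarity transform of $\mathcal{X}|_{B(\mathcal{H})_{\mathrm{even}}}$. Two things need to be checked for this to make rigorous sense: first, that $\tilde{\varphi}$ sends $B(\mathcal{H})_{\mathrm{even}}$ bijectively onto $\mathcal{H}'_{\mathrm{even}}$; and second, that the parity-preservation of $\mathcal{X}$ together with this bijection allows one to restrict cleanly.

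First I would verify the parity-preservation of $\tilde{\varphi}=\varphi\circ\Gamma$. By Remark~\ref{rem:third_quantization_preserves_parity}, $\varphi$ already intertwines the parity decompositions of $B(\mathcal{H})$ and $\mathcal{H}'$. For $\Gamma(X)=\sigma^{1/4}X\sigma^{1/4}$, the stationary state $\sigma$ of the KMS-detailed balanced Lindbladian is a function of the even-parity Hamiltonian $H$, hence $\sigma^{1/4}\in B(\mathcal{H})_{\mathrm{even}}$, and conjugation by an even-parity operator does not change parity; so $\Gamma$, and therefore $\tilde{\varphi}$, is parity-preserving. Since $\tilde{\varphi}$ is an isometry on all of $B(\mathcal{H})$ (as shown just before \eqref{eq:defn_Phi_tilde_parent_ham_map}) and restricts to a parity-preserving map between spaces of matching dimension, $\tilde{\varphi}|_{B(\mathcal{H})_{\mathrm{even}}}:B(\mathcal{H})_{\mathrm{even}}\to\mathcal{H}'_{\mathrm{even}}$ is a bijective isometry.

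With this in hand, I would combine it with the assumption that $\mathcal{X}$ is parity-preserving to conclude that, for any $v\in\mathcal{H}'_{\mathrm{even}}$, $\tilde{\varphi}^{-1}(v)\in B(\mathcal{H})_{\mathrm{even}}$, hence $\mathcal{X}(\tilde{\varphi}^{-1}(v))\in B(\mathcal{H})_{\mathrm{even}}$, and finally $\tilde{\varphi}\circ\mathcal{X}\circ\tilde{\varphi}^{-1}(v)\in\mathcal{H}'_{\mathrm{even}}$. Together with \eqref{eq:similarity_structure_Phi_restricted_to_even} this yields
\[
\tilde{\Phi}(\mathcal{X})|_{\mathcal{H}'_{\mathrm{even}}} \;=\; \bigl(\tilde{\varphi}|_{B(\mathcal{H})_{\mathrm{even}}}\bigr)\circ \bigl(\mathcal{X}|_{B(\mathcal{H})_{\mathrm{even}}}\bigr)\circ \bigl(\tilde{\varphi}|_{B(\mathcal{H})_{\mathrm{even}}}\bigr)^{-1},
\]
which is a genuine similarity transformation and therefore preserves the spectrum.

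There is no real obstacle here: the lemma is a bookkeeping statement once equation \eqref{eq:similarity_structure_Phi_restricted_to_even} is available. The only subtle point is ensuring the parity-preservation of $\Gamma$, which rests on the observation that the Gibbs state $\sigma$ is even-parity because $H$ is—a fact that I would state explicitly to make clear why this proof does not go through for a generic $\sigma$.
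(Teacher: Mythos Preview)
Your proposal is correct and follows essentially the same route as the paper: the paper derives \eqref{eq:similarity_structure_Phi_restricted_to_even} and then simply states that the lemma ``immediately'' follows, relying implicitly on the same similarity-transformation reasoning you spell out. Your write-up is in fact more careful than the paper's, since you explicitly verify that $\Gamma$ (and hence $\tilde{\varphi}$) preserves parity via the evenness of $\sigma^{1/4}$, a point the paper leaves tacit.
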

In particular for left- and right-multiplication we have
\begin{equation}
\label{eq:Phi_effect_on_left_right_multiply}
    \tilde{\Phi}(L_A) = \Phi(L_{\sigma^{1/4}A\sigma^{-1/4}}),\quad  \tilde{\Phi}(R_A) = \Phi(R_{\sigma^{-1/4}A\sigma^{1/4}}).
\end{equation}

We are now ready to define the parent Hamiltonian through
\begin{equation}
\label{eq:parent_hamiltonian_defn}
    H^{\mathrm{parent}} = \tilde{\Phi}(\mathcal{L}^\dag),
\end{equation}
where $\mathcal{L}$ is the Lindbladian defined in \eqref{eq:QMC_lindbladian} that satisfies the KMS-DBC as defined in \eqref{eq:defn_KMS_DBC}.
\begin{lem}
\label{lem:parent_ham_hermitian}
    The parent Hamiltonian $H^{\mathrm{parent}}$ has the following properties:
    \begin{itemize}
        \item[(i)] $H^{\mathrm{parent}}$ is self-adjoint in $\mathcal{H}'$.
        \item[(ii)] $0$ is an eigenvalue of both $H^{\mathrm{parent}}$ and $\mathcal{L}^\dag|_{B(\mathcal{H})_{\mathrm{even}}}$. 
        Let $\Delta$ and $\Delta'$ be the gap between 0 and the largest non-positive eigenvalue for $H^{\mathrm{parent}}$ and $\mathcal{L}^\dag|_{B(\mathcal{H})_{\mathrm{even}}}$ respectively.
        Then $\Delta\leq \Delta'$. 
    \end{itemize}
\end{lem}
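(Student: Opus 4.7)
For part (i), I handle the two parity sectors of $\mathcal H'$ separately. On $\mathcal H'_{\mathrm{even}}$, self-adjointness is immediate: by \eqref{eq:similarity_structure_Phi_restricted_to_even}, $H^{\mathrm{parent}}$ agrees with $\tilde\varphi\,\mathcal L^\dag\,\tilde\varphi^{-1}$ there, and since $\tilde\varphi$ is (up to the scalar $2^n$) an isometry between $(B(\mathcal H),\braket{\cdot,\cdot}_{\mathrm{KMS}})$ and $(\mathcal H',\braket{\cdot,\cdot})$, the KMS-DBC of $\mathcal L^\dag$ transfers directly to self-adjointness of $H^{\mathrm{parent}}|_{\mathcal H'_{\mathrm{even}}}$. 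The genuine work is on $\mathcal H'_{\mathrm{odd}}$, where (as the introduction warns) self-adjointness does not follow from KMS-DBC alone.

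On $\mathcal H'_{\mathrm{odd}}$ I decompose $\mathcal L^\dag=\mathcal L^\dag_{ee}+\mathcal L^\dag_{oo}$ by parity structure of the left/right-multiplication expansion: the coherent $i[B,\cdot]$ piece and the anticommutator $-\tfrac12\{A_j(\omega)^\dag A_j(\omega),\cdot\}$ piece (in which both left and right operators are even) sit in $\mathcal L^\dag_{ee}$, while the jump piece $A_j(\omega)^\dag(\cdot)A_j(\omega)$ (with odd $A_j(\omega)$) sits in $\mathcal L^\dag_{oo}$. Tracking the $(-1)^{\hat n}$ factor attached to $\Phi(R_W)$ for odd $W$ in Definition~\ref{defn:Phi_thrid_quantization} shows that the (odd, odd) block acquires a minus sign when restricted to $\mathcal H'_{\mathrm{odd}}$, so
\[
H^{\mathrm{parent}}\big|_{\mathcal H'_{\mathrm{odd}}} \;=\; \tilde\varphi\,(\mathcal L^\dag_{ee}-\mathcal L^\dag_{oo})\,\tilde\varphi^{-1}\big|_{\mathcal H'_{\mathrm{odd}}}.
\]

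The key step is then to introduce the super-operator $\mathcal F(X)=PX$, where $P=i^n\gamma_1\cdots\gamma_{2n}$ is the global fermion parity. Since $P$ is Hermitian with $P^2=I$, has even parity, and commutes with $\sigma$ (because $H$ preserves parity), $\mathcal F$ is a KMS-unitary involution. A parity count shows that $\mathcal F$ commutes with $\mathcal L^\dag_{ee}$ and anti-commutes with $\mathcal L^\dag_{oo}$ --- the sign coming from $PA_j(\omega)=-A_j(\omega)P$ --- so $\mathcal F\,\mathcal L^\dag\,\mathcal F=\mathcal L^\dag_{ee}-\mathcal L^\dag_{oo}$. Conjugation of a KMS-self-adjoint map by a KMS-unitary remains KMS-self-adjoint, hence $\mathcal L^\dag_{ee}-\mathcal L^\dag_{oo}$ is KMS-self-adjoint, and applying the isometry $\tilde\varphi$ gives self-adjointness of $H^{\mathrm{parent}}|_{\mathcal H'_{\mathrm{odd}}}$. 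I expect finding and exploiting $\mathcal F$ to be the main obstacle: without it there is no obvious route from KMS-DBC (which controls only the sum $\mathcal L^\dag_{ee}+\mathcal L^\dag_{oo}$) to self-adjointness of the ``difference'' super-operator that governs the odd sector.

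For part (ii), $\mathcal L^\dag(I)=0$ (immediate from $\mathcal L$ being trace-preserving) exhibits $0$ as an eigenvalue of $\mathcal L^\dag|_{B(\mathcal H)_{\mathrm{even}}}$, and Lemma~\ref{lem:Phi_tilde_preserves_spectrum} transports it to an eigenvalue $0$ of $H^{\mathrm{parent}}|_{\mathcal H'_{\mathrm{even}}}$, hence of $H^{\mathrm{parent}}$. For the gap inequality, the spectrum of $H^{\mathrm{parent}}$ is the disjoint union of its spectra on the two invariant subspaces $\mathcal H'_{\mathrm{even}}$ and $\mathcal H'_{\mathrm{odd}}$, and by Lemma~\ref{lem:Phi_tilde_preserves_spectrum} the even-sector part equals the spectrum of $\mathcal L^\dag|_{B(\mathcal H)_{\mathrm{even}}}$. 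In particular the eigenvalue $-\Delta'$ (the strictly negative eigenvalue of $\mathcal L^\dag|_{B(\mathcal H)_{\mathrm{even}}}$ closest to $0$) is also an eigenvalue of $H^{\mathrm{parent}}$; any additional strictly negative eigenvalues contributed by the odd sector can only be closer to $0$, yielding $\Delta\le\Delta'$.
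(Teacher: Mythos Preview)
Your argument is correct and, for part~(i), takes a genuinely different route from the paper.

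For the odd sector in part~(i), the paper proceeds by \emph{enlarging} the system with one extra Dirac mode (leaving $H$ unchanged), so that odd-parity vectors $v,w\in\mathcal H'_{\mathrm{odd}}$ can be promoted to even-parity vectors $\hat c_{2n+1}^\dag\bar v,\hat c_{2n+1}^\dag\bar w$ in the enlarged $\bar{\mathcal H}'_{\mathrm{even}}$; even-sector self-adjointness of $\bar H^{\mathrm{parent}}$ then transfers back because $\bar H^{\mathrm{parent}}$ and $H^{\mathrm{parent}}$ are the same Majorana polynomial. Your approach instead stays inside the original space: you identify that on $\mathcal H'_{\mathrm{odd}}$ the parent Hamiltonian represents $\mathcal L^\dag_{ee}-\mathcal L^\dag_{oo}$, and you realise this ``twisted'' super-operator as the conjugate $\mathcal F\,\mathcal L^\dag\,\mathcal F$ of $\mathcal L^\dag$ by the KMS-unitary involution $\mathcal F(X)=PX$. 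This is arguably more transparent---it pinpoints the parity operator $P$ as the structural reason why KMS-DBC propagates to the odd sector---whereas the paper's ancilla trick is more agnostic about the form of $\mathcal L^\dag$ (it never needs the $ee/oo$ split). Both are short once the right idea is found.

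One minor wording issue in part~(ii): the phrase ``any additional strictly negative eigenvalues contributed by the odd sector can only be closer to $0$'' reads as a claim that all odd-sector negative eigenvalues lie in $(-\Delta',0)$, which is neither true nor needed. What you actually use is simply that $-\Delta'$ remains in the spectrum of $H^{\mathrm{parent}}$, so the largest strictly negative eigenvalue of $H^{\mathrm{parent}}$ is at least $-\Delta'$, whence $\Delta\le\Delta'$. The paper makes the same argument via an intermediate $\Delta''$ for $H^{\mathrm{parent}}|_{\mathcal H'_{\mathrm{even}}}$.
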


\begin{proof}
    For (i), note that by \eqref{eq:similarity_structure_Phi_restricted_to_even} and the fact that $\mathcal{L}$ satisfies KMS-DBC (as defined in \eqref{eq:defn_KMS_DBC}), we can see that $H^{\mathrm{parent}}|_{\mathcal{H}'_{\mathrm{even}}}$ is self-adjoint. This implies
    \begin{equation}
    \label{eq:ham_self_adjoint_even}
        H^{\mathrm{parent}}|_{\mathcal{H}'_{\mathrm{even}}}=(H^{\mathrm{parent}})^\dag|_{\mathcal{H}'_{\mathrm{even}}}.
    \end{equation}
    We consider enlarging the Hilbert space $\mathcal{H}$ by one Dirac mode, but not adding anything to the Hamiltonian $H$. Then going through the same mapping from $H$ to $\mathcal{L}$ (through \eqref{eq:QMC_lindbladian}) and then $\tilde{\Phi}(\mathcal{L}^\dag)$, we obtain a new parent Hamiltonian $\bar{H}^{\mathrm{parent}}$ in an enlarged Hilbert space $\bar{\mathcal{H}}'$ involving $2n+2$ Dirac modes. By \eqref{eq:repr_left_right_multiplication_more_majorana_ops} and the fact that $\sigma$ does not involve the modes $2n+1$ and $2n+2$, we can see that $\bar{\mathcal{H}}'$ only acts on the Dirac modes $1$ to $2n$, and is of exactly the same form as $H^{\mathrm{parent}}$ when written as a multivariate polynomial of Majorana operators, even though they act on different Hilbert spaces. Moreover, by the same argument as for $H^{\mathrm{parent}}$ we have
    \begin{equation}
        \label{eq:enlarged_ham_self_adjoint_even}
        \bar{H}^{\mathrm{parent}}|_{\bar{\mathcal{H}}'_{\mathrm{even}}} = (\bar{H}^{\mathrm{parent}})^\dag|_{\bar{\mathcal{H}}'_{\mathrm{even}}}.
    \end{equation}

    We revert back to the creation and annihilation operators in \eqref{eq:third_quantization_majorana_ops}, where $\hat{c}_j = (\hat{\gamma}_{2j-1}+i\hat{\gamma}_{2j})/2$. This also gives us a vacuum state $\ket{\mathrm{vac}}$ for $\mathcal{H}'$ that is in the kernel of every $\hat{c}_j$ for $j=1,2,\cdots,2n$. For $\bar{\mathcal{H}}'$, we can likewise get a vacuum state $\ket{\overline{\mathrm{vac}}}$ that is also in the kernel of $\hat{c}_{2n+1}$ and $\hat{c}_{2n+2}$. For any state
    \[
    v = \sum_{\alpha\in \{0,1\}^{2n}} v_\alpha (\hat{c}_1^\dag)^{\alpha_1}\cdots (\hat{c}_{2n}^\dag)^{\alpha_{2n}}\ket{\mathrm{vac}}\in\mathcal{H}',
    \]
    we define
    \[
    \bar{v} = \sum_{\alpha\in \{0,1\}^{2n}} v_\alpha (\hat{c}_1^\dag)^{\alpha_1}\cdots (\hat{c}_{2n}^\dag)^{\alpha_{2n}}\ket{\overline{\mathrm{vac}}}\in\bar{\mathcal{H}}'.
    \]
    Because $H^{\mathrm{parent}}$ and $\bar{H}^{\mathrm{parent}}$ are represented by the same polynomial of Majorana operators, we have
    \begin{equation}
    \label{eq:H_and_H_bar_parent_same_expect}
        \braket{v,H^{\mathrm{parent}}w} = \braket{\bar{v},\bar{H}^{\mathrm{parent}}\bar{w}},
    \end{equation}
    for any $v,w\in\mathcal{H}'$.

    Then we consider $v,w\in\mathcal{H}'_{\mathrm{odd}}$. Note that $\hat{c}_{2n+1}^\dag \bar{v}, \hat{c}_{2n+1}^\dag \bar{w} \in \bar{\mathcal{H}}'_{\mathrm{even}}$. We then have
    \[
    \begin{aligned}
        &\braket{v,H^{\mathrm{parent}}w} = \braket{\bar{v},\bar{H}^{\mathrm{parent}}\bar{w}} = \braket{\hat{c}_{2n+1}^\dag\bar{v},\bar{H}^{\mathrm{parent}}\hat{c}_{2n+1}^\dag\bar{w}}\\
        &=\braket{\hat{c}_{2n+1}^\dag\bar{v},(\bar{H}^{\mathrm{parent}})^\dag\hat{c}_{2n+1}^\dag\bar{w}}=\braket{\bar{v},(\bar{H}^{\mathrm{parent}})^\dag\bar{w}}=\braket{v,(H^{\mathrm{parent}})^\dag w},
    \end{aligned}
    \]
    where in the third equality we have used \eqref{eq:H_and_H_bar_parent_same_expect}, and in the second and fourth equalities we have used the fact that $\bar{H}^{\mathrm{parent}}$ and $\bar{v},\bar{w}$ do not involve any operator on Dirac mode $2n+1$. This then implies
    \[
    H^{\mathrm{parent}}|_{\mathcal{H}'_{\mathrm{odd}}}=(H^{\mathrm{parent}})^\dag|_{\mathcal{H}'_{\mathrm{odd}}}.
    \]
    Because $\mathcal{H}'_{\mathrm{even}}$ and $\mathcal{H}'_{\mathrm{odd}}$ are both invariant subspaces of $H^{\mathrm{parent}}$, the above equation combined with \eqref{eq:ham_self_adjoint_even} yields $H^{\mathrm{parent}}=(H^{\mathrm{parent}})^\dag$.

    For (ii), let $\Delta''$ be the gap between 0 and the largest non-positive eigenvalue of $H^{\mathrm{parent}}|_{\mathcal{H}'_{\mathrm{even}}}$.
    Since the spectrum of $H^{\mathrm{parent}}|_{\mathcal{H}'_{\mathrm{even}}}$ is a subset of the spectrum of $H^{\mathrm{parent}}$, and 0 is an eigenvalue of both of them (corresponding to eigenvector $v=\varphi(\sigma^{1/2})\in\mathcal{H}'_{\mathrm{even}}$), we have $\Delta\leq \Delta''$. 
    By Lemma~\ref{lem:Phi_tilde_preserves_spectrum} we have $\Delta''=\Delta'$.
    Therefore $\Delta'\geq\Delta$.
\end{proof}

The spectral gap of $\mathcal{L}^\dag|_{B(\mathcal{H})_{\mathrm{even}}}$ is sufficient for establishing the convergence towards to stationary state.
\begin{lem}
    \label{lem:convergence_observable_mean_zero}
    We assume that $\mathcal{L}^\dag$ satisfies the KMS-DBC \eqref{eq:defn_KMS_DBC}, the spectral gap between 0 and the second largest eigenvalue of $\mathcal{L}^\dag|_{B(\mathcal{H})_{\mathrm{even}}}$ is $g>0$, and $\mathrm{ker}(\mathcal{L}^\dag|_{B(\mathcal{H})_{\mathrm{even}}})=\mathrm{span}(I)$.
    Let $Y\in B(\mathcal{H})$ be self-adjoint with even parity (Definition~\ref{defn:parity_of_operators}) satisfying $\Tr[\sigma Y]=0$, then, denoting $Y(t)=e^{t\mathcal{L}^\dag}(Y)$,
    \[
    \|Y(t)\|_{\mathrm{KMS}} \leq e^{-gt} \|Y\|_{\mathrm{KMS}},
    \]
    where for any $X\in B(\mathcal{H})$, $\|X\|_{\mathrm{KMS}}=\braket{X,X}_{\mathrm{KMS}}^{1/2}$.
\end{lem}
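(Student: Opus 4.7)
The plan is a standard Hilbert-space spectral argument in the KMS inner product, resting on three observations. First, $B(\mathcal{H})_{\mathrm{even}}$ is invariant under $\mathcal{L}^\dag$, so the orbit $\{Y(t)\}$ stays inside it and only the restricted operator matters. Second, KMS-DBC makes $T:=\mathcal{L}^\dag|_{B(\mathcal{H})_{\mathrm{even}}}$ self-adjoint in a genuine inner product (since $\sigma$ is full rank), hence diagonalizable with a KMS-orthonormal eigenbasis and real eigenvalues. Third, the hypothesis $\Tr[\sigma Y]=0$ is precisely KMS-orthogonality to the one-dimensional kernel $\mathrm{span}(I)$, so $Y$ expands over eigenvectors with eigenvalues bounded above by $-g$.

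To verify parity invariance, I would note that because $H$ is parity-preserving, each Heisenberg evolution $\gamma_j(t)=e^{iHt}\gamma_j e^{-iHt}$ is odd, hence so is each jump operator $A_j(\omega)$ from \eqref{eq:jump_operators}. The coherent term $B_j$ in \eqref{eq:coherent_terms} is a product of two odd operators, so $B$ is even. Consequently, the super-operators $X\mapsto A_j(\omega)X A_j(\omega)^\dag$, $X\mapsto\{A_j(\omega)^\dag A_j(\omega),X\}$, and $X\mapsto[B,X]$ all send $B(\mathcal{H})_{\mathrm{even}}$ into itself, and the same holds for their Hilbert-Schmidt adjoints; thus $\mathcal{L}^\dag$ preserves parity and $Y(t)\in B(\mathcal{H})_{\mathrm{even}}$ for all $t\geq 0$.

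Given these structural facts the rest is a short calculation. By the assumed KMS-DBC, $T$ admits a spectral decomposition $T=\sum_\lambda \lambda P_\lambda$ with $\lambda\in\RR$ and $\{P_\lambda\}$ KMS-orthogonal projectors summing to the identity on $B(\mathcal{H})_{\mathrm{even}}$. By hypothesis $\ker(T)=\mathrm{span}(I)$ and every other eigenvalue satisfies $\lambda\leq -g$. The mean-zero hypothesis reads
\[
\braket{I,Y}_{\mathrm{KMS}}=\Tr[\sigma^{1/2}I\sigma^{1/2}Y]=\Tr[\sigma Y]=0,
\]
so $P_0 Y=0$ and $Y=\sum_{\lambda\leq -g}P_\lambda Y$. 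Applying $e^{tT}$ and using KMS-Pythagoras,
\[
\|Y(t)\|_{\mathrm{KMS}}^2=\sum_{\lambda\leq -g}e^{2\lambda t}\|P_\lambda Y\|_{\mathrm{KMS}}^2\leq e^{-2gt}\sum_{\lambda\leq -g}\|P_\lambda Y\|_{\mathrm{KMS}}^2=e^{-2gt}\|Y\|_{\mathrm{KMS}}^2,
\]
and taking square roots gives the stated contraction.

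I do not anticipate a real obstacle: parity invariance is a one-line check from the form of $\mathcal{L}$, and the remainder is essentially functional calculus for a self-adjoint operator on a finite-dimensional inner-product space. The only subtlety worth flagging is that one must be using $\braket{\cdot,\cdot}_{\mathrm{KMS}}$ as a bona fide inner product, which relies on $\sigma\succ 0$; self-adjointness of $Y$ is not actually used, only its even parity and the mean-zero condition.
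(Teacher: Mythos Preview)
Your proposal is correct and follows essentially the same route as the paper: both rely on KMS self-adjointness of $\mathcal{L}^\dag|_{B(\mathcal{H})_{\mathrm{even}}}$, the identification $\braket{I,Y}_{\mathrm{KMS}}=\Tr[\sigma Y]=0$, and the spectral gap hypothesis. The only cosmetic difference is that the paper phrases the contraction via the Rayleigh-quotient bound $\braket{X,\mathcal{L}^\dag X}_{\mathrm{KMS}}\leq -g\|X\|_{\mathrm{KMS}}^2$ and a Gr\"onwall-type differential inequality for $\|Y(t)\|_{\mathrm{KMS}}^2$, whereas you invoke the full spectral decomposition directly; in finite dimensions these are interchangeable.
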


\begin{proof}
    Because $\mathcal{L}^\dag$ satisfies the KMS-DBC \eqref{eq:defn_KMS_DBC}, and that $\mathrm{ker}(\mathcal{L}^\dag|_{B(\mathcal{H})_{\mathrm{even}}})=\mathrm{span}(I)$, for the spectral gap we have the variational form
    \[
    -g=\sup_{\substack{X\in B(\mathcal{H})_{\mathrm{even}} \\ \braket{X,I}_{\mathrm{KMS}}=0}} \frac{\braket{X,\mathcal{L}^\dag|_{B(\mathcal{H})_{\mathrm{even}}} X}_{\mathrm{KMS}}}{\braket{X,X}_{\mathrm{KMS}}}.
    \]
    Therefore when $\braket{X,I}_{\mathrm{KMS}}=0$, we have 
    \[
    \braket{X,\mathcal{L}^\dag|_{B(\mathcal{H})_{\mathrm{even}}} X}_{\mathrm{KMS}}\leq -g \braket{X,X}_{\mathrm{KMS}}.
    \]
    The inequality we want to prove follows immediately from
    \[
    \begin{aligned}
        \frac{\dd}{\dd t}\braket{Y(t),Y(t)}_{\mathrm{KMS}} &= 2\braket{Y(t),\mathcal{L}^\dag Y(t)}_{\mathrm{KMS}} \\
        &= 2\braket{Y(t),\mathcal{L}^\dag|_{B(\mathcal{H})_{\mathrm{even}}} Y(t)}_{\mathrm{KMS}} \\
        &\leq -2g\braket{Y(t),Y(t)}_{\mathrm{KMS}},
    \end{aligned}
    \]
    where we have used the fact that $\braket{Y(t),I}_{\mathrm{KMS}}=\Tr[\sigma Y(t)]=\Tr[\sigma Y]=0$ for all $t$.
\end{proof}

\begin{cor}
\label{cor:convergence_of_expect_val}
   We assume that $\mathcal{L}^\dag$ satisfies the KMS-DBC \eqref{eq:defn_KMS_DBC}, the spectral gap between 0 and the second largest eigenvalue of $\mathcal{L}^\dag|_{B(\mathcal{H})_{\mathrm{even}}}$ is $g>0$, and $\mathrm{ker}(\mathcal{L}^\dag|_{B(\mathcal{H})_{\mathrm{even}}})=\mathrm{span}(I)$.
    Let $X\in B(\mathcal{H})$ be self-adjoint with even parity (Definition~\ref{defn:parity_of_operators}). For an initial state $\rho\in B(\mathcal{H})_{\mathrm{even}}$ (i.e., satisfying the fermion superselection rule), we have
    \[
    |\Tr[X(t)\rho]-\Tr[X\sigma]|\leq \frac{e^{-gt}}{\sqrt{\sigma_{\min}}}\|X\|,
    \]
    where $\sigma_{\min}$ is the minimum eigenvalue of $\sigma$.
\end{cor}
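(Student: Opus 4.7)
The plan is to reduce the claim to the KMS decay already proved in Lemma~\ref{lem:convergence_observable_mean_zero}. First I would subtract the stationary expectation, setting $Y = X - \Tr[X\sigma]\,I$. Then $Y$ is self-adjoint, has even parity (since $X$ and $I$ do), and satisfies $\Tr[\sigma Y]=0$. Because $I\in\ker(\mathcal{L}^\dag)$, the Heisenberg flow gives $e^{t\mathcal{L}^\dag}(Y) = X(t)-\Tr[X\sigma]\,I$, so
\[
\Tr[X(t)\rho] - \Tr[X\sigma] = \Tr[\rho\, Y(t)].
\]

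Next I would rewrite the right-hand side as a KMS inner product in order to expose $\|Y(t)\|_{\mathrm{KMS}}$. Defining $Z = \sigma^{-1/2}\rho\sigma^{-1/2}$, which is self-adjoint since $\rho,\sigma\succcurlyeq 0$, a direct cyclic manipulation gives $\braket{Z,Y(t)}_{\mathrm{KMS}} = \Tr[\sigma^{1/2}Z\sigma^{1/2}Y(t)] = \Tr[\rho\, Y(t)]$. Cauchy--Schwarz in the KMS inner product together with Lemma~\ref{lem:convergence_observable_mean_zero} (which applies since $Y$ is even, self-adjoint and $\sigma$-centred, and since $\mathcal{L}^\dag$ preserves the even sector) yields
\[
|\Tr[\rho\, Y(t)]| \leq \|Z\|_{\mathrm{KMS}}\,\|Y(t)\|_{\mathrm{KMS}} \leq e^{-gt}\,\|Z\|_{\mathrm{KMS}}\,\|Y\|_{\mathrm{KMS}}.
\]

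It only remains to bound the two KMS norms by the quantities appearing in the statement. For $\|Y\|_{\mathrm{KMS}}$, expanding the definition and using $\Tr[\sigma Y]=0$ gives $\|Y\|_{\mathrm{KMS}}^2 = \Tr[\sigma^{1/2}X\sigma^{1/2}X] - \Tr[X\sigma]^2 \leq \Tr[\sigma^{1/2}X\sigma^{1/2}X]$, and then the Hilbert--Schmidt Cauchy--Schwarz inequality $|\Tr[AB]|\leq \|A\|_{\mathrm{HS}}\|B\|_{\mathrm{HS}}$ applied to $A=B=\sigma^{1/2}X$ gives $\Tr[\sigma^{1/2}X\sigma^{1/2}X]\leq \Tr[\sigma X^2]\leq \|X\|^2$. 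For $\|Z\|_{\mathrm{KMS}}$, a cyclic computation gives $\|Z\|_{\mathrm{KMS}}^2 = \Tr[\rho\,\sigma^{-1/2}\rho\,\sigma^{-1/2}]$; since $\sigma^{-1/2}\rho\sigma^{-1/2}\preccurlyeq \sigma_{\min}^{-1}I$ and $\Tr[\rho]=1$, this is at most $\sigma_{\min}^{-1}$. Combining the three bounds yields exactly the claimed inequality.

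The proof has no serious obstacle and amounts to routine KMS-inner-product bookkeeping. The only point requiring a brief justification is that Lemma~\ref{lem:convergence_observable_mean_zero} is stated on the even sector, so one must note that $\mathcal{L}^\dag$ preserves parity: the single-Majorana jump operators $A_j(\omega)$ are odd, so both $A_j(\omega)^\dag\rho A_j(\omega)$ and $\{A_j(\omega)^\dag A_j(\omega),\rho\}$ preserve the parity of $\rho$, and the coherent term $B_j$ is a product of two Majoranas and hence even.
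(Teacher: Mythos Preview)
Your proof is correct and follows essentially the same route as the paper: subtract the stationary mean to form $Y$, express the deviation as a KMS inner product with $Z=\sigma^{-1/2}\rho\sigma^{-1/2}$, apply Cauchy--Schwarz, invoke Lemma~\ref{lem:convergence_observable_mean_zero}, and bound $\|Y\|_{\mathrm{KMS}}\leq\|X\|$ and $\|Z\|_{\mathrm{KMS}}\leq\sigma_{\min}^{-1/2}$. The only cosmetic difference is that the paper bounds $\|Y\|_{\mathrm{KMS}}\leq\|X\|_{\mathrm{KMS}}$ directly (since $Y$ is the KMS-orthogonal projection of $X$ off $\mathrm{span}(I)$), whereas you expand the square first; both arrive at the same place.
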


\begin{proof}
    We let $Y=X-I\braket{I,X}_{\mathrm{KMS}}$. Then $\Tr[Y\sigma]=0$. By Lemma~\ref{lem:convergence_observable_mean_zero} we have
    \[
    \|X(t)-I\braket{I,X}_{\mathrm{KMS}}\|=\|Y(t)\|_{\mathrm{KMS}}\leq e^{-gt}\|X\|,
    \]
    where we have used $\|Y\|_{\mathrm{KMS}}\leq \|X\|_{\mathrm{KMS}}\leq \|X\|$.
    Note that
    \[
    \Tr[X(t)\rho]-\Tr[X\sigma] = \braket{X(t)-I\braket{I,X}_{\mathrm{KMS}},\sigma^{-1/2}\rho\sigma^{-1/2}}_{\mathrm{KMS}}, 
    \]
    by the Cauchy-Schwarz inequality we have
    \[
    |\Tr[X(t)\rho]-\Tr[X\sigma]|\leq \|\sigma^{-1/2}\rho\sigma^{-1/2}\|_{\mathrm{KMS}}e^{-gt}\|X\|.
    \]
    The result follows by 
    \[
    \|\sigma^{-1/2}\rho\sigma^{-1/2}\|_{\mathrm{KMS}}^2 = \Tr[\rho\sigma^{-1/2}\rho\sigma^{-1/2}]\leq \|\sigma^{-1/2}\rho\sigma^{-1/2}\|\Tr[\rho]\leq \frac{1}{\sigma_{\min}}.
    \]
\end{proof}

\begin{cor}
\label{cor:convergence_of_state}
    We assume that $\mathcal{L}^\dag$ satisfies the KMS-DBC \eqref{eq:defn_KMS_DBC}, the spectral gap between 0 and the second largest eigenvalue of $\mathcal{L}^\dag|_{B(\mathcal{H})_{\mathrm{even}}}$ is $g>0$, and $\mathrm{ker}(\mathcal{L}^\dag|_{B(\mathcal{H})_{\mathrm{even}}})=\mathrm{span}(I)$.
    For an initial state $\rho\in B(\mathcal{H})_{\mathrm{even}}$ (i.e., satisfying the fermion superselection rule), we have
    \[
    \|\rho(t)-\sigma\|_{\mathrm{1}}\leq \frac{e^{-gt}}{\sqrt{\sigma_{\min}}},
    \]
    where $\rho(t)=e^{t\mathcal{L}}(\rho)$, $\sigma_{\min}$ is the minimum eigenvalue of $\sigma$, and $\|\cdot\|_1$ denotes the Schatten-1 norm.
\end{cor}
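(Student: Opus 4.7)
The plan is to dualize the Schatten-$1$ norm against test observables and reduce everything to the expectation-value bound of Corollary~\ref{cor:convergence_of_expect_val}. I would start from the variational characterization
\[
\|A\|_1 \;=\; \sup_{\substack{X = X^\dag \\ \|X\|\leq 1}} \big|\Tr[AX]\big|
\]
applied to $A = \rho(t)-\sigma$, which is self-adjoint since $\rho(t)$ and $\sigma$ are.

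Next I would argue that the supremum can be restricted to even-parity self-adjoint $X$. For this I would check that $\mathcal{L}$ is parity-preserving in the sense of Definition~\ref{defn:super_operator_parity_preserving}: each jump operator $A_j(\omega)$ is odd, because $\gamma_j(t)=e^{iHt}\gamma_j e^{-iHt}$ is odd ($H$ is even-parity, hence its conjugation preserves parity of $\gamma_j$), so the dissipator $A_j(\omega)\rho A_j(\omega)^\dag - \tfrac{1}{2}\{A_j(\omega)^\dag A_j(\omega),\rho\}$ sends even to even; the coherent generator $B$ is a sum of products of pairs of time-evolved Majoranas and is even, so $[B,\rho]$ also preserves parity. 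Consequently $\rho(t)\in B(\mathcal{H})_{\mathrm{even}}$, and $\sigma=e^{-\beta H}/Z$ is even as well, so $\rho(t)-\sigma\in B(\mathcal{H})_{\mathrm{even}}$. Writing $X=X_{\mathrm{even}}+X_{\mathrm{odd}}$ with both parts self-adjoint, the odd part contributes zero to $\Tr[(\rho(t)-\sigma)X]$ while $\|X_{\mathrm{even}}\|\leq \|X\|$, so we may assume $X$ is even.

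For such $X$, by the Heisenberg-Schr\"odinger duality $\Tr[\rho(t)X]=\Tr[\rho\, X(t)]$ with $X(t)=e^{t\mathcal{L}^\dag}(X)$, together with stationarity of $\sigma$, we have
\[
\Tr[(\rho(t)-\sigma)X]=\Tr[\rho\, X(t)]-\Tr[\sigma X].
\]
Applying Corollary~\ref{cor:convergence_of_expect_val} to this $X$ (which is even-parity, self-adjoint, and satisfies $\|X\|\leq 1$, acting on $\rho\in B(\mathcal{H})_{\mathrm{even}}$) gives $|\Tr[\rho X(t)]-\Tr[\sigma X]|\leq e^{-gt}/\sqrt{\sigma_{\min}}$. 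Taking the supremum over admissible $X$ yields the desired bound. The only step that is not completely routine is the parity-preservation check for $\mathcal{L}$ that enables the restriction to even-parity test operators; once that is in place, the rest is a direct duality argument built on the already-proved Corollary~\ref{cor:convergence_of_expect_val}.
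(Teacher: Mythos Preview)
Your proposal is correct and follows essentially the same route as the paper: reduce to even-parity self-adjoint test operators via the parity decomposition $X=X_{\mathrm{even}}+X_{\mathrm{odd}}$, invoke Corollary~\ref{cor:convergence_of_expect_val} on $X_{\mathrm{even}}$, and conclude by Schatten-$1$/Schatten-$\infty$ duality. The only difference is cosmetic: you spell out the parity-preservation of $\mathcal{L}$ explicitly (which the paper leaves implicit when it writes ``because $\rho\in B(\mathcal{H})_{\mathrm{even}}$'' to justify $\Tr[\rho(t)X]=\Tr[\rho(t)X_{\mathrm{even}}]$), and the paper cites the eigenvalue interlacing theorem for $\|X_{\mathrm{even}}\|\le\|X\|$ whereas you state this without a named justification.
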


\begin{proof}
    For any observable $X\in B(\mathcal{H})$, we can decompose $X=X_{\mathrm{even}}+X_{\mathrm{odd}}$ where $X_{\mathrm{even}}$ and $X_{\mathrm{odd}}$ have even and odd parities as defined in Definition~\ref{defn:parity_of_operators}. We can readily see that $\Tr[\rho(t) X]=\Tr[\rho(t)X_{\mathrm{even}}]$ because $\rho\in B(\mathcal{H})_{\mathrm{even}}$, and the same is true for $\sigma$. Also, by the eigenvalue interlacing theorem, $\|X_{\mathrm{even}}\|\leq \|X\|$. Therefore by Corollary~\ref{cor:convergence_of_expect_val} we have
    \[
    |\Tr[(\rho(t)-\sigma)X]|\leq \frac{e^{-gt}}{\sqrt{\sigma_{\min}}}\|X_{\mathrm{even}}\|\leq \frac{e^{-gt}}{\sqrt{\sigma_{\min}}}\|X\|.
    \]
    Since this is true for any $X\in B(\mathcal{H})$, by the duality between the Schatten 1-norm and the Schatten $\infty$-norm we have the desired inequality.
\end{proof}

\section{Structure of the parent Hamiltonian}
\label{sec:structure_parent_ham}

We can write $\mathcal{L}^\dag$ in the super-operator form
\begin{equation}
    \label{eq:lindbladian_super_op_form}
    \mathcal{L}^\dag = i\sum_j (L_{B_j}-R_{B_j}) + \sum_j\int_{-\infty}^{\infty}\dd\omega\eta(\omega)\left(L_{A_j(\omega)^\dag}R_{A_j(\omega)}-\frac{1}{2}L_{A_j(\omega)^\dag A_j(\omega)}-\frac{1}{2}R_{A_j(\omega)^\dag A_j(\omega)}\right),
\end{equation}
This gives us a corresponding expression for the parent Hamiltonian through \eqref{eq:parent_hamiltonian_defn}:
\begin{equation}
    \label{eq:parent_hamiltonian_explicit}
    \begin{aligned}
        &H^{\mathrm{parent}} = \Phi\Bigg( i\sum_j(L_{\tilde{B}_j}-R_{\tilde{B}_j^\dag}) \\
        &+\sum_j\int\dd\omega \eta(\omega) \left( L_{\tilde{A}_j(-\omega)}R_{\tilde{A}_j(-\omega)^\dag}-\frac{1}{2}L_{\tilde{A}_j(-\omega) \tilde{A}_j(\omega)}-\frac{1}{2}R_{\tilde{A}_j(\omega)^\dag \tilde{A}_j(-\omega)^\dag}\right) \Bigg),
    \end{aligned}
\end{equation}
where
\begin{equation}
    \label{eq:A_tilde_and_B_tilde}
    \tilde{A}_j(\omega) = \sigma^{1/4}A_j(\omega)\sigma^{-1/4},\quad \tilde{B}_j=\sigma^{1/4}B_j\sigma^{-1/4},
\end{equation}
and we have used the fact that $A_j(\omega)^\dag=A_j(-\omega)$, which is a consequence of the hermitianity of $\gamma_j$.

We then separate the quadratic part from the rest. To do this we first define
\begin{equation}
    \label{eq:defn_gamma_free}
    \gamma^{\mathrm{free}}_j(t) = e^{iH_0 t}\gamma_j e^{-iH_0 t}.
\end{equation}
Then corresponding to $\tilde{A}_j(\omega)$ we define
\begin{equation}
    \label{eq:A_tilde_free}
    \tilde{A}_j^{\mathrm{free}}(\omega) = e^{-\beta H_0/4}\left(\frac{1}{\sqrt{2\pi}}\int  \gamma^{\mathrm{free}}_j(t) f(t) e^{-i\omega t} \dd t\right) e^{\beta H_0/4}.
\end{equation}
In other words $\tilde{A}^{\mathrm{free}}_j(\omega)$ is the jump operator we would have if we use $H_0$ as the Hamiltonian instead of $H$. Similarly, we define
\begin{equation}
    \label{eq:B_tilde_free}
    \tilde{B}^{\mathrm{free}}_j =  e^{-\beta H_0/4}\int_{-\infty}^\infty \dd t b_1(t) \int_{-\infty}^\infty \dd t' b_2(t') \gamma^{\mathrm{free}}_j(\beta (t'-t))  \gamma^{\mathrm{free}}_j(-\beta(t+t'))  e^{\beta H_0/4}.
\end{equation}
We now have the quadratic parts of both $\tilde{A}_j(\omega)$ and $\tilde{B}_j$, and this allows us to define the interacting parts:
\begin{equation}
\label{eq:A_tilde_B_tilde_int}
    \tilde{A}^{\mathrm{int}}_j(\omega) = \tilde{A}_j(\omega) - \tilde{A}^{\mathrm{free}}_j(\omega), \quad \tilde{B}^{\mathrm{int}}_j = \tilde{B}_j-\tilde{B}^{\mathrm{free}}_j.
\end{equation}

We will decompose the parent Hamiltonian into 2 parts, the coherent part and the dissipative part:
\begin{equation}
\label{eq:parent_ham_decompose_2_parts}
    H^{\mathrm{parent}} = H^{\mathrm{parent}}_{\mathrm{C}} + H^{\mathrm{parent}}_{\mathrm{D}}
\end{equation}
where 
\begin{equation}
\label{eq:parent_ham_coherent_part}
    H^{\mathrm{parent}}_{\mathrm{C}} = \Phi\Bigg( i\sum_j(L_{\tilde{B}_j}-R_{\tilde{B}_j^\dag})\Bigg),
\end{equation}
and
\begin{equation}
    \label{eq:parent_ham_dissipative_part}
    H^{\mathrm{parent}}_{\mathrm{D}} = \Phi\Bigg(\sum_j\int\dd\omega \eta(\omega) \left( L_{\tilde{A}_j(-\omega)}R_{\tilde{A}_j(-\omega)^\dag}-\frac{1}{2}L_{\tilde{A}_j(-\omega) \tilde{A}_j(\omega)}-\frac{1}{2}R_{\tilde{A}_j(\omega)^\dag \tilde{A}_j(-\omega)^\dag}\right) \Bigg)
\end{equation}
Each part is in turn decomposed into 2 parts, the free part and the interacting part:
\begin{equation}
    \label{eq:parent_ham_decompose_4_parts}
    H^{\mathrm{parent}}_{\mathrm{C}}=H^{\mathrm{parent}}_{\mathrm{C},\mathrm{free}} + H^{\mathrm{parent}}_{\mathrm{C},\mathrm{int}}, \quad H^{\mathrm{parent}}_{\mathrm{D}}=H^{\mathrm{parent}}_{\mathrm{D},\mathrm{free}}+ H^{\mathrm{parent}}_{\mathrm{D},\mathrm{int}},
\end{equation}
where
\begin{equation}
    \label{eq:parent_ham_coherent_free}
    H^{\mathrm{parent}}_{\mathrm{C},\mathrm{free}} = \Phi\Bigg( i\sum_j(L_{\tilde{B}^{\mathrm{free}}_j}-R_{(\tilde{B}^{\mathrm{free}}_j)^\dag})\Bigg),
\end{equation}
\begin{equation}
\label{eq:parent_ham_coherent_int}
    H^{\mathrm{parent}}_{\mathrm{C},\mathrm{int}} = H^{\mathrm{parent}}_{\mathrm{C}}-H^{\mathrm{parent}}_{\mathrm{C},\mathrm{free}},
\end{equation}
\begin{equation}
    \label{eq:parent_ham_dissipative_free}
    H^{\mathrm{parent}}_{\mathrm{D},\mathrm{free}} =\Phi\Bigg(\sum_j\int\dd\omega \eta(\omega) \left( L_{\tilde{A}^{\mathrm{free}}_j(-\omega)}R_{\tilde{A}^{\mathrm{free}}_j(-\omega)^\dag}-\frac{1}{2}L_{\tilde{A}^{\mathrm{free}}_j(-\omega) \tilde{A}^{\mathrm{free}}_j(\omega)}-\frac{1}{2}R_{\tilde{A}^{\mathrm{free}}_j(\omega)^\dag \tilde{A}^{\mathrm{free}}_j(-\omega)^\dag}\right) \Bigg),
\end{equation}
and
\begin{equation}
\label{eq:parent_ham_dissipative_int}
    H^{\mathrm{parent}}_{\mathrm{D},\mathrm{int}} = H^{\mathrm{parent}}_{\mathrm{D}}-H^{\mathrm{parent}}_{\mathrm{D},\mathrm{free}}.
\end{equation}
We then reassemble the parts to get
\begin{equation}
\label{eq:H_0_parent_and_V_defn}
    H_0^{\mathrm{parent}} = H^{\mathrm{parent}}_{\mathrm{C},\mathrm{free}} + H^{\mathrm{parent}}_{\mathrm{D},\mathrm{free}},\quad V^{\mathrm{parent}} = H^{\mathrm{parent}}_{\mathrm{C},\mathrm{int}} + H^{\mathrm{parent}}_{\mathrm{D},\mathrm{int}}.
\end{equation}
We can that $H_0^{\mathrm{parent}}$ is quadratic. Both $H_0^{\mathrm{parent}}$ and $H^{\mathrm{parent}}$ are Hermitian by Lemma~\ref{lem:parent_ham_hermitian}, which implies that $V^{\mathrm{parent}}$ is also Hermitian.
For $H_0^{\mathrm{parent}}$ we also have the following explicit expression:
\begin{equation}
    \label{eq:parent_hamiltonian_explicit_free}
    \begin{aligned}
        &H^{\mathrm{parent}}_{\mathrm{free}} = \Phi\Bigg( i\sum_j(L_{\tilde{B}^{\mathrm{free}}_j}-R_{(\tilde{B}^{\mathrm{free}}_j)^\dag}) \\
        &+\sum_j\int\dd\omega \eta(\omega) \left( L_{\tilde{A}^{\mathrm{free}}_j(-\omega)}R_{\tilde{A}^{\mathrm{free}}_j(-\omega)^\dag}-\frac{1}{2}L_{\tilde{A}^{\mathrm{free}}_j(-\omega) \tilde{A}^{\mathrm{free}}_j(\omega)}-\frac{1}{2}R_{\tilde{A}^{\mathrm{free}}_j(\omega)^\dag \tilde{A}^{\mathrm{free}}_j(-\omega)^\dag}\right) \Bigg).
    \end{aligned}
\end{equation}

\subsection{Quasi-locality and the spectral gap}
\label{quasi_locality_and_spectral_gap}

We will next prove the following: First, the non-interacting part is gapped.
\begin{prop}
    \label{prop:non_interacting_part_gap}
    For $H_0^{\mathrm{parent}}$ defined in \eqref{eq:H_0_parent_and_V_defn} corresponding to a $(1,r_0)$-geometrically local $H_0$, $0$ is its non-degenerate top eigenvalue, and is separated from the rest of its spectrum by a constant gap $\Delta>0$ that depends only on $\beta,r_0,D$.
\end{prop}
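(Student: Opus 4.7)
The plan is to exploit that, when $H_0$ is quadratic, the free parent Hamiltonian $H_0^{\mathrm{parent}}$ decomposes into a sum of commuting single-mode blocks after diagonalizing $H_0$. First I would apply a real orthogonal transformation $O$ of the Majorana operators, $\gamma_j \mapsto \tilde{\gamma}_j = \sum_k O_{jk}\gamma_k$, chosen so that $H_0 = \sum_k \epsilon_k(2\tilde{c}_k^\dag \tilde{c}_k - 1)$ with $\tilde{c}_k = (\tilde{\gamma}_{2k-1}+i\tilde{\gamma}_{2k})/2$ and single-particle energies $\epsilon_k \geq 0$. In this basis $\tilde{\gamma}_j^{\mathrm{free}}(t)$ rotates within the two-Majorana block of mode $\lceil j/2\rceil$, e.g.\ $\tilde{\gamma}_{2k-1}^{\mathrm{free}}(t)=\cos(2\epsilon_k t)\tilde{\gamma}_{2k-1}+\sin(2\epsilon_k t)\tilde{\gamma}_{2k}$, so $\tilde{A}_j^{\mathrm{free}}(\omega)$ and $\tilde{B}_j^{\mathrm{free}}$ from (\ref{eq:A_tilde_free}) and (\ref{eq:B_tilde_free}) are each supported on a single physical mode, with coefficients that are explicit functions of $\epsilon_k$, $\beta$, and the filter data $\hat f(\omega\pm 2\epsilon_k)$, $b_1$, $b_2$, $\eta$.

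It follows that the free Lindbladian splits as $\mathcal{L}_0^\dag = \sum_k \mathcal{L}_{0,k}^\dag$ into mode-local, mutually commuting, parity-preserving summands. Because $\Phi$ is linear and strictly preserves locality (Remark~\ref{rem:Phi_preserves_locality}), this decomposition passes through third quantization, giving $H_0^{\mathrm{parent}}=\sum_k H_{0,k}^{\mathrm{parent}}$ where each $H_{0,k}^{\mathrm{parent}}$ is a Hermitian quadratic operator on the four $a$-Majoranas $\hat{\gamma}_{4k-3},\hat{\gamma}_{4k-2},\hat{\gamma}_{4k-1},\hat{\gamma}_{4k}$ attached to mode $k$, and all summands commute (they are mode-disjoint even polynomials in $\hat\gamma$). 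The spectrum of $H_0^{\mathrm{parent}}$ is therefore the Minkowski sum of the per-mode spectra on the corresponding four-dimensional blocks.

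For each mode I would then compute $H_{0,k}^{\mathrm{parent}}$ explicitly. Using (\ref{eq:Phi_effect_on_left_right_multiply}), conjugation by $e^{\pm\beta H_0/4}$ acts on $\tilde c_k,\tilde c_k^\dag$ as a rescaling by $e^{\mp\beta\epsilon_k/2}$ and $e^{\pm\beta\epsilon_k/2}$ respectively, so $\tilde{A}_{2k-1}^{\mathrm{free}}(\omega)$ and $\tilde{A}_{2k}^{\mathrm{free}}(\omega)$ become linear combinations of the mode-$k$ creation/annihilation operators weighted by $\hat f(\omega\pm 2\epsilon_k)$ and these exponential factors, and the $\omega$-integral against $\eta(\omega)$ yields closed-form Gaussian overlaps. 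A similar but more elaborate evaluation of $\tilde B_j^{\mathrm{free}}$ produces a mode-local quadratic coherent part. Assembling the contributions and diagonalizing the resulting coefficient matrix on the four-dimensional mode-$k$ block gives the full spectrum. I would then verify that (i) $0$ is an eigenvalue, lying in the even $a$-sector with eigenvector $\varphi(\sigma_k^{1/2})$, which is already guaranteed by Lemma~\ref{lem:parent_ham_hermitian}(ii), and (ii) the remaining three eigenvalues are strictly negative, bounded away from $0$ by an explicit gap $\Delta(\beta,\epsilon_k)>0$. The one negative eigenvalue in the even sector is the usual dissipative gap of the single-mode detailed-balance Lindbladian, while the two negative eigenvalues in the odd sector need to be checked directly — this is exactly question (3) flagged in the introduction.

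Finally, to convert the per-mode gap into a uniform lower bound I would use that $H_0$ is $(1,r_0)$-geometrically local, which gives the Gershgorin-type bound $|\epsilon_k|\leq C(r_0,D)$. Since $\Delta(\beta,\epsilon)$ depends continuously on $\epsilon$ and is strictly positive on the compact interval $[0,C(r_0,D)]$, it attains a positive minimum $\Delta=\Delta(\beta,r_0,D)$. Combined with the commuting decomposition above, this yields that $0$ is the non-degenerate top eigenvalue of $H_0^{\mathrm{parent}}$, separated from the rest by at least $\Delta$. \textbf{The main obstacle} is step (ii): even though the single-mode problem is four-dimensional and explicit, one must push the $\omega$-integral through the specific Gaussian profiles of $f,\eta$ and the convolutions defining $b_1,b_2$, and crucially show that the odd-parity $a$-sector contributes no zero eigenvalue; without this, $0$ would be degenerate on $\mathcal{H}'$ and the subsequent stability argument could not be applied cleanly.
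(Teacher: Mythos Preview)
Your approach matches the paper's: orthogonally diagonalize $H_0$, use the orthogonality of the rotation to decouple the free parent Hamiltonian into commuting single-mode $4\times 4$ blocks on disjoint $a$-Majoranas, compute each block explicitly, and then bound the gap uniformly via $|\epsilon_k|\leq C(r_0,D)$. The paper streamlines the step you flag as the main obstacle by first proving $H^{\mathrm{parent}}_{\mathrm{C,free}}=0$ (this follows from $\int b_1(t)\,dt=0$), so only the dissipative part survives, and then computing the single-mode block to obtain the explicit gap $Ce^{-4\beta^2\epsilon^2}\cosh(2\beta\epsilon)$, which is strictly positive for all $\epsilon$ and settles both parity sectors simultaneously.
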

A proof is given at the end of Section~\ref{sec:spectral_gap_non_interacting}.
The non-interacting part also has nice locality:
\begin{prop}
    \label{prop:non_interacting_part_decay}
    For $H_0^{\mathrm{parent}}=\sum_{j,k=1}^{4n} m_{jk}\hat{\gamma}_j\hat{\gamma}_k$ defined in \eqref{eq:H_0_parent_and_V_defn} corresponding to a $(1,r_0)$-geometrically local $H_0$, its coefficient matrix $(m_{jk})$ has $[K,\nu]$-decay as defined in Definition~\ref{defn:decay_coef_matrix}. Here $K,\mu>0$ are constants that depend only on $\beta,r_0,D$.
\end{prop}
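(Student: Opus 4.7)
The plan is to reduce everything to a single-particle analysis. Since $H_0=\sum_{jk}h_{jk}\gamma_j\gamma_k$ is quadratic and $(1,r_0)$-geometrically local, there is an antisymmetric single-particle matrix $M$ of size $2n\times 2n$, with $M_{jk}=0$ whenever $d(\mathfrak{s}(j),\mathfrak{s}(k))>r_0$, such that $e^{zH_0}\gamma_j e^{-zH_0}=\sum_k (e^{zM})_{jk}\gamma_k$ for every $z\in\CC$. Consequently both $\gamma^{\mathrm{free}}_j(t)$ and the imaginary-time conjugation $e^{\mp\beta H_0/4}(\cdot)e^{\pm\beta H_0/4}$ stay within the span of single Majoranas, and one obtains explicit expansions
\[
\tilde{A}^{\mathrm{free}}_j(\omega)=\sum_k \alpha_{jk}(\omega)\,\gamma_k,\qquad \tilde{B}^{\mathrm{free}}_j=\sum_{k,l}\beta_{jkl}\,\gamma_k\gamma_l,
\]
in which $\alpha_{jk}$ and $\beta_{jkl}$ are integrals of products of entries of $e^{zM}$ against the Gaussian envelopes $f,b_1,b_2$ and the frequency weight $\eta$.

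The first step is a single-particle decay estimate of the form
\[
\bigl|(e^{zM})_{jk}\bigr|\le C_1\,e^{C_2|z|-\mu\, d(\mathfrak{s}(j),\mathfrak{s}(k))},\qquad z\in\CC,
\]
with $C_1,C_2,\mu>0$ depending only on $r_0,D$. This is standard: expanding $e^{zM}=\sum_n (zM)^n/n!$ and using that $M$ is $r_0$-banded in the lattice metric forces $(M^n)_{jk}=0$ whenever $n<d(\mathfrak{s}(j),\mathfrak{s}(k))/r_0$; combining this with $\|M\|_{\infty\to\infty}=\mathcal{O}(r_0^D)$ and a Stirling-type tail estimate on the exponential series yields the claim. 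One may view this as the single-particle version of Lemma~\ref{lem:LR_bound} extended to complex times, legitimate because at the single-particle level $M$ is bounded.

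Substituting this estimate into the integrals defining $\alpha_{jk}(\omega)$ and $\beta_{jkl}$, the super-exponential decay of $f(t),b_1(t),b_2(t)$ in $t$ absorbs the prefactor $e^{C_2|z|}$ coming from the imaginary shift $z=it\pm\beta/4$, while the $\omega$-integral against $\eta(\omega)$ contributes only constants. I therefore obtain
\[
|\alpha_{jk}(\omega)|\le K_1\, e^{-\mu\, d(\mathfrak{s}(j),\mathfrak{s}(k))},\qquad |\beta_{jkl}|\le K_2\, e^{-\mu(d(\mathfrak{s}(j),\mathfrak{s}(k))+d(\mathfrak{s}(j),\mathfrak{s}(l)))},
\]
with constants $K_1,K_2$ depending only on $\beta,r_0,D$; the triangle inequality then gives decay of $|\beta_{jkl}|$ in $d(\mathfrak{s}(k),\mathfrak{s}(l))$ at half the rate.

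Finally, I plug these expansions into \eqref{eq:parent_hamiltonian_explicit_free} and use that $\Phi$ maps products of $\gamma$'s to products of $\hat{\gamma}$'s supported on the same lattice sites (Remarks~\ref{rem:Phi_preserves_locality} and~\ref{rem:a_fermion_geometry}), so the coefficients $m_{jk}$ in $H_0^{\mathrm{parent}}=\sum m_{jk}\hat{\gamma}_j\hat{\gamma}_k$ can be read off directly. Each pair $(j,k)$ receives only $\mathcal{O}(1)$ contributions whose magnitude is exponentially small in $d(\mathfrak{s}'(j),\mathfrak{s}'(k))$, which is the desired $[K,\nu]$-decay. The main obstacle is keeping all constants independent of $n$ through the combined real-and-imaginary-time evolution: one must verify that the prefactor $e^{C_2|z|}$ in the single-particle bound is indeed dominated by the Gaussian envelopes $f,b_1,b_2$ on the relevant shifts $|z|\le|t|+\beta/4$, and then set $\nu$ to any fixed fraction of $\mu$ depending only on $\beta,r_0,D$.
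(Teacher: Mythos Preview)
Your approach is sound and in some ways more direct than the paper's. The paper first proves (Lemma~\ref{lem:free_parent_zero}) that the coherent free part $H^{\mathrm{parent}}_{\mathrm{C,free}}$ vanishes identically, so only $H^{\mathrm{parent}}_{\mathrm{D,free}}$ needs to be analyzed; it then absorbs the imaginary-time conjugation into a modified Gaussian kernel $\check F_1(t,\omega)$ and applies the Lieb--Robinson bound to the purely real-time evolution $e^{iH_0 t}\gamma_j e^{-iH_0 t}$ (Lemma~\ref{lem:quasi_locality_jump_free}, Corollary~\ref{cor:coef_jump_op_free}, Lemma~\ref{lem:quasi_locality_free_dissipative}). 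You instead exploit directly that $H_0$ is quadratic and bound the single-particle propagator $(e^{zM})_{jk}$ for complex $z$, which handles the imaginary shift and the real-time evolution in one stroke and makes Lieb--Robinson unnecessary. Treating $\tilde B^{\mathrm{free}}_j$ on the same footing rather than proving it vanishes is extra work but not wrong.

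There is one genuine gap in your last paragraph. The assertion that ``each pair $(j,k)$ receives only $\mathcal{O}(1)$ contributions'' is false: the formula \eqref{eq:parent_hamiltonian_explicit_free} carries an outer sum over the jump-operator index (also denoted $j$ there), so the coefficient of, say, $\hat\gamma_{2k-1}\hat\gamma_{2k'}$ arising from $\Phi(L_{\tilde A^{\mathrm{free}}_j(-\omega)}R_{\tilde A^{\mathrm{free}}_j(-\omega)^\dag})$ is $\sum_{j=1}^{2n}\int\dd\omega\,\eta(\omega)\,\alpha_{jk}(-\omega)\overline{\alpha_{jk'}(-\omega)}$, a sum with $2n$ terms. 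To finish you must control this sum uniformly in $n$. With your bound $|\alpha_{jk}|\le K_1 e^{-\mu\, d(\mathfrak{s}(j),\mathfrak{s}(k))}$, the triangle inequality $d(j,k)+d(j,k')\ge d(k,k')$, and the convergence of $\sum_j e^{-\mu' d(j,k)}$ on a $D$-dimensional lattice, one gets
\[
\sum_j|\alpha_{jk}\,\alpha_{jk'}|\;\le\;K_1^2\,e^{-\frac{\mu}{2}d(k,k')}\sum_j e^{-\frac{\mu}{2}(d(j,k)+d(j,k'))}\;\le\;C\,e^{-\frac{\mu}{2}d(k,k')},
\]
which is exactly the summation step the paper performs in the proof of Lemma~\ref{lem:quasi_locality_free_dissipative}. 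The analogous sum $\sum_j\beta_{jkl}$ for the coherent part needs the same treatment. Once this step is inserted, your argument goes through with $\nu=\mu/2$.
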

A proof is given at the end of Section~\ref{sec:quasi_locality_free_part}.
Next, we focus on the interacting part $V^{\mathrm{parent}}$, and show that it can be treated as quasi-local perturbation.
\begin{prop}
    \label{prop:interacting_part_decay}
    For $V^{\mathrm{parent}}$ defined defined in \eqref{eq:H_0_parent_and_V_defn} corresponding to a $(1,r_0)$-geometrically local $H_0$ and $(U,r_0)$-geometrically local $V$ for $U<U_0$, it has $(CU,\mu)$-decay as defined in Definition~\ref{defn:decay_operator}. Here $C,\mu>0$ are constants that depend only on $\beta,r_0,D,U_0$.
\end{prop}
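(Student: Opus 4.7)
The plan is to decompose $V^{\mathrm{parent}} = \sum_{j=1}^{2n} V^{\mathrm{parent}}_j$ into contributions indexed by the Majorana modes, show that each $V^{\mathrm{parent}}_j$ is $(C_1U,\mu)$-quasi-local around the site $\mathfrak{s}(j)$ with norm of order $U$, and then apply Lemma~\ref{lem:quasi_local_to_decay}. The decomposition starts from the explicit expressions \eqref{eq:parent_ham_coherent_int} and \eqref{eq:parent_ham_dissipative_int}, combined with the algebraic identities
\[
L_{\tilde{A}_j(-\omega)}R_{\tilde{A}_j(-\omega)^\dag} - L_{\tilde{A}^{\mathrm{free}}_j(-\omega)}R_{\tilde{A}^{\mathrm{free}}_j(-\omega)^\dag} = L_{\tilde{A}_j(-\omega)}R_{\tilde{A}^{\mathrm{int}}_j(-\omega)^\dag} + L_{\tilde{A}^{\mathrm{int}}_j(-\omega)}R_{\tilde{A}^{\mathrm{free}}_j(-\omega)^\dag},
\]
and analogous ones for the $L_{\tilde{A}\tilde{A}}$, $R_{\tilde{A}^\dag\tilde{A}^\dag}$ and coherent pieces. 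Every resulting summand contains at least one interaction-correction factor $\tilde{A}^{\mathrm{int}}_j(\omega)$ or $\tilde{B}^{\mathrm{int}}_j$, multiplied by uniformly bounded ``free'' factors analyzed in Section~\ref{sec:spectral_gap_non_interacting}.

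The main technical step is to show that each correction $\tilde{A}^{\mathrm{int}}_j(\omega)$ and $\tilde{B}^{\mathrm{int}}_j$ is itself $\mathcal{O}(U)$ and quasi-local around $\mathfrak{s}(j)$. Two sources of interaction appear when comparing $\tilde{A}_j(\omega)$ with $\tilde{A}^{\mathrm{free}}_j(\omega)$: (i) the real-time Heisenberg evolution, controlled by the Duhamel formula
\[
\gamma_j(t)-\gamma^{\mathrm{free}}_j(t) = i\int_0^t e^{iHs}[V,\gamma^{\mathrm{free}}_j(t-s)]e^{-iHs}\,\dd s,
\]
and (ii) the similarity transformation by $\sigma^{1/4}\propto e^{-\beta H/4}$ rather than by $e^{-\beta H_0/4}$, controlled by the imaginary-time Duhamel formula
\[
e^{-\beta H/4} - e^{-\beta H_0/4} = -\int_0^{\beta/4} e^{-(\beta/4 - s)H}\,V\,e^{-sH_0}\,\dd s.
\]
In both expressions each insertion of $V$ contributes a factor of $U$ because $V$ is $(U,r_0)$-geometrically local: only terms of $V$ overlapping with the support of $\gamma^{\mathrm{free}}_j(t-s)$ (respectively the evolved resolvents) survive in the commutator or can affect a local observable near $\mathfrak{s}(j)$. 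The Lieb-Robinson bound (Lemma~\ref{lem:LR_bound}) then lets us approximate $\gamma^{\mathrm{free}}_j(t-s)$ by its restriction to a ball of radius $r$ around $\mathfrak{s}(j)$ with error $(2J_0|t-s|)^r/r!$; the Gaussian weights $f(t)$, $b_1(t)$, $b_2(t)$, and $\eta(\omega)$ confine the $t$- and $\omega$-integrals to scales of order $\beta$, so that the polynomial-in-$t$ growth of the Lieb-Robinson tails is suppressed and yields exponential decay in $r$. A parallel argument using an imaginary-time version of the Lieb-Robinson bound (cf.\ Hastings, applied as in Section~\ref{sec:dissipative}) handles the conjugation by $e^{-\beta H/4}$.

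Having established that $\tilde{A}^{\mathrm{int}}_j(\omega)$ and $\tilde{B}^{\mathrm{int}}_j$ are $(C_0U,\mu)$-quasi-local around $\mathfrak{s}(j)$ in $B(\mathcal{H})$, we transfer the quasi-locality to $B(\mathcal{H}')$ via Remark~\ref{rem:Phi_preserves_locality}: an operator supported on a ball of radius $r$ around $\mathfrak{s}(j)$ is mapped by $\Phi\circ L_{(\cdot)}$ or $\Phi\circ R_{(\cdot)}$ to a polynomial in $\hat{\gamma}$-operators supported in a ball of the same radius around $\mathfrak{s}'(j)$. Since the ``free'' factors $\tilde{A}^{\mathrm{free}}_j(\omega)$ and $\tilde{B}^{\mathrm{free}}_j$ are themselves uniformly bounded and quasi-local around $\mathfrak{s}(j)$ (the same Lieb-Robinson analysis applied to $H_0$), each product summand is $(C_1U,\mu)$-quasi-local around $\mathfrak{s}'(j)$. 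Regrouping by lattice site and invoking Lemma~\ref{lem:quasi_local_to_decay} (each site carries only a constant number of Majorana modes and thus a constant number of $a$-Majorana modes) yields the claimed $(CU,\mu)$-decay.

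The main obstacle is ensuring the dependence on $U$ really stays linear rather than extensive: a naive bound on $V^{\mathrm{parent}}$ scales as $nU$, so the analysis must exploit locality term-by-term, and the nested $\omega$-, $t$-, $t'$-integrals must be bounded while keeping an exponential tail in the spatial radius $r$. The Gaussian envelopes of $f,b_1,b_2,\eta$ together with the Lieb-Robinson bound supply all the needed control, but threading the constants $C$ and $\mu$ through the Duhamel expansions (in both real and imaginary time) is where the bulk of the careful bookkeeping will lie.
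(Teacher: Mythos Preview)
Your overall architecture --- decompose by $j$, isolate at least one factor of $\tilde A^{\mathrm{int}}_j$ or $\tilde B^{\mathrm{int}}_j$ in each piece, establish $\mathcal{O}(U)$ quasi-locality of that factor, transfer locality through $\Phi$, then invoke Lemma~\ref{lem:quasi_local_to_decay} --- matches the paper, as does your real-time Duhamel treatment of what you call source (i).

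The gap is in your source (ii). The imaginary-time Duhamel identity you write down is correct, but the factor $e^{-(\beta/4-s)H}$ has norm $e^{\mathcal{O}(n)}$, and so does the $e^{\beta H/4}$ hidden in $\sigma^{-1/4}$; a generic ``imaginary-time Lieb-Robinson bound'' on the map $A\mapsto e^{\tau H}A e^{-\tau H}$ with real $\tau$ does not hold with system-size-independent constants at arbitrary $\beta$, and Section~\ref{sec:dissipative} does \emph{not} invoke one. The paper's key observation is that source (ii) can be eliminated entirely. Writing $\gamma_j=\sum_\nu(\gamma_j)_\nu$ in Bohr frequencies of $H$ and using $\sigma^{1/4}(\gamma_j)_\nu\sigma^{-1/4}=e^{-\beta\nu/4}(\gamma_j)_\nu$, one computes
\[
\tilde A_j(\omega)=\sum_\nu e^{-\beta\nu/4}\hat f(\omega-\nu)(\gamma_j)_\nu
=\frac{1}{\sqrt{2\pi}}\int \check F_1(t,\omega)\,e^{iHt}\gamma_j e^{-iHt}\,\dd t,
\]
where $\check F_1(t,\omega)$ is the explicit Gaussian in $t$ of \eqref{eq:F1_inverse_fourier_transform}: multiplying the Gaussian $\hat f$ by $e^{-\beta\nu/4}$ merely shifts it. The \emph{same} calculation with $H_0$ in place of $H$ yields $\tilde A^{\mathrm{free}}_j(\omega)=\frac{1}{\sqrt{2\pi}}\int \check F_1(t,\omega)\,\gamma^{\mathrm{free}}_j(t)\,\dd t$ with the identical kernel. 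Hence the entire difference $\tilde A^{\mathrm{int}}_j(\omega)$ is carried by the real-time quantity $\gamma^{\mathrm{int}}_j(t)=\gamma_j(t)-\gamma^{\mathrm{free}}_j(t)$, and only your source (i) survives; an analogous two-variable kernel $\check F_2(t,t')$ (Section~\ref{sec:coherent_term}) handles $\tilde B_j$. After this reduction the proof proceeds exactly as you sketch --- real-time Lieb-Robinson plus the rapidly decaying tails of $\check F_1,\check F_2,\eta$ --- but without it your imaginary-time step does not close with uniform-in-$n$ constants.
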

A proof is provided at the end of Section~\ref{sec:coherent_term}.
Combining these three propositions, we then have the following result about the spectrum of the parent Hamiltonian
\begin{thm}
    \label{thm:spectrum_parent_ham}
    For any $\beta,r_0>0$, and integer $D>0$, there exists $U_\beta>0$ that depends only on $\beta,r_0,D$ such that when $H_0$ and $V$ in \eqref{eq:general_hamiltonian} are $(1,r_0)$-geometrically local and $(U,r_0)$-geometrically local respectively, for $U<U_\beta$, then the parent Hamiltonian $H^{\mathrm{parent}}$ given in \eqref{eq:parent_hamiltonian_defn} has $0$ as its non-degenerate top eigenvalue, and the spectral gap separating $0$ from the rest of its spectrum is lower bounded by a constant $\Delta>0$ that depends only on $\beta,r_0,D>0$.
\end{thm}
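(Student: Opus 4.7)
The plan is to combine the three preceding propositions with the free-fermion stability theorem (Theorem~\ref{thm:stability_free_fermion}). By Proposition~\ref{prop:non_interacting_part_gap}, $H_0^{\mathrm{parent}}$ has $0$ as its non-degenerate top eigenvalue with spectral gap $\Delta>0$ depending only on $\beta,r_0,D$. By Proposition~\ref{prop:non_interacting_part_decay}, the coefficient matrix of $H_0^{\mathrm{parent}}$ (viewed as a quadratic operator in the $\hat{\gamma}_j$'s on the enlarged lattice described in Remark~\ref{rem:a_fermion_geometry}) has $[K,\nu]$-decay with constants depending only on $\beta,r_0,D$. By Proposition~\ref{prop:interacting_part_decay}, $V^{\mathrm{parent}}$ has $(CU,\mu)$-decay with constants depending only on $\beta,r_0,D,U_0$.

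To apply Theorem~\ref{thm:stability_free_fermion}, which is phrased in terms of the gap above the ground state, I would work with $-H^{\mathrm{parent}}=-H_0^{\mathrm{parent}}-V^{\mathrm{parent}}$; negation preserves the $[K,\nu]$- and $(CU,\mu)$-decay properties and turns the top-of-spectrum gap of $H_0^{\mathrm{parent}}$ into a ground-state gap of $-H_0^{\mathrm{parent}}$. Theorem~\ref{thm:stability_free_fermion} then yields that the gap of $-H^{\mathrm{parent}}$ is at least $\Delta-c_1 CU$, equivalently that the gap of $H^{\mathrm{parent}}$ between its top eigenvalue and second-largest eigenvalue is at least $\Delta-c_1 CU$. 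I would then choose $U_\beta\leq \min\{C_0/C,\ \Delta/(2c_1 C)\}$ so that for all $U<U_\beta$ this gap is at least $\Delta/2$.

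It remains to identify the top eigenvalue as $0$ and show non-degeneracy. Since $\mathcal{L}^\dag(I)=0$, Lemma~\ref{lem:Phi_tilde_preserves_spectrum} (together with the eigenvector $\varphi(\sigma^{1/2})\in \mathcal{H}'_{\mathrm{even}}$) guarantees that $0$ is always an eigenvalue of $H^{\mathrm{parent}}$, at every value of $U\in[0,U_\beta)$. Let $\lambda_{\max}(U)$ denote the top eigenvalue of $H^{\mathrm{parent}}$; since $H^{\mathrm{parent}}$ depends continuously on $U$ (through $\sigma$, $A_j(\omega)$, and $B_j$) and is self-adjoint by Lemma~\ref{lem:parent_ham_hermitian}, $\lambda_{\max}(U)$ is continuous in $U$ with $\lambda_{\max}(0)=0$. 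The uniform gap bound forces $\lambda_{\max}(U)\in\{0\}\cup[\Delta/2,\infty)$ for every $U<U_\beta$, because if $\lambda_{\max}(U)>0$ then $0$ would be a non-top eigenvalue and the gap from $\lambda_{\max}(U)$ to $0$ would be at least $\Delta/2$. Continuity and connectedness of $[0,U_\beta)$ then force $\lambda_{\max}(U)=0$ throughout. Non-degeneracy at each such $U$ then follows from the standard Riesz-projection argument: the projector onto the eigenspace of $\lambda_{\max}(U)$ is obtained by contour integration around $0$ in a disk of radius $\Delta/4$, which depends analytically on $U$ since the gap does not close; its rank is therefore constant along the connected parameter interval, and equals $1$ at $U=0$.

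The main obstacle in the overall program is not this assembly step but rather the three underlying propositions, particularly the quasi-locality estimates for $H_0^{\mathrm{parent}}$ and $V^{\mathrm{parent}}$ (Propositions~\ref{prop:non_interacting_part_decay} and \ref{prop:interacting_part_decay}), where one has to control jump operators defined by time-integrals of Heisenberg-evolved Majoranas against the slowly decaying envelopes $f$, $b_1$, $b_2$ via Lieb-Robinson bounds, and the explicit free-fermion gap computation (Proposition~\ref{prop:non_interacting_part_gap}). Once those are in hand, the present theorem follows as a relatively short consequence of Hastings' stability together with the continuity/connectedness argument above.
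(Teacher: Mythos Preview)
Your proposal is correct and follows essentially the same route as the paper's proof: invoke the three propositions together with Theorem~\ref{thm:stability_free_fermion} to obtain a uniform gap lower bound for $H^{\mathrm{parent}}$, then run a continuity/homotopy argument along the family of interaction strengths (the paper introduces an explicit parameter $u\in[0,1]$ scaling $V$, which is equivalent to your ``varying $U$'') to pin the top eigenvalue at $0$ and establish its simplicity. The only cosmetic difference is in the non-degeneracy step: you use constancy of the rank of the Riesz projector along the path, whereas the paper argues directly by contradiction via Weyl's inequality that degeneracy of $0$ would force the gap to close somewhere along the path; both are standard and valid.
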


\begin{proof}
    In stead of directly considering $H=H_0+V$ as in \eqref{eq:general_hamiltonian}, we consider $H(u) = H_0 + uV$, $u\in[0,1]$. We then write $H^{\mathrm{parent}}$ and $V^{\mathrm{parent}}$ in \eqref{eq:H_0_parent_and_V_defn} as functions of $u$.
    
    First, Lemma~\ref{lem:parent_ham_hermitian} ensures that $H_0^{\mathrm{parent}}$ in \eqref{eq:H_0_parent_and_V_defn} and $V^{\mathrm{parent}}(u)$ are both Hermitian.
    By Proposition~\ref{prop:non_interacting_part_gap}, we know that the spectral gap of $H_0^{\mathrm{parent}}$ is lower bounded by a constant $\Delta_0$. By Proposition~\ref{prop:non_interacting_part_decay}, we know that the coefficient matrix of $H_0^{\mathrm{parent}}$ as a quadratic fermionic Hamiltonian has $[K,\nu]$-decay. By Proposition~\ref{prop:interacting_part_decay}, setting $U_0=1$, we know that $V^{\mathrm{parent}}(u)$ has $(CU u,\mu)$-decay for $U<1$. Combining all these results, by Theorem~\ref{thm:stability_free_fermion}, we can then see that there exists $U'_\beta>0$, such that when $U\leq U'_\beta$, the spectral gap between the top eigenvalue and the second largest eigenvalue of $H^{\mathrm{parent}}(u)=H_0^{\mathrm{parent}}+V^{\mathrm{parent}}(u)$ is lower bounded by $\Delta_0-c_1C U u $, for any $u\in[0,1]$.

    We will next assume $U<\min\{U'_\beta,\Delta_0/(c_1 C)\}$, and show that the top eigenvalue of $H^{\mathrm{parent}}(u)$ is $0$ and non-degenerate for $u\in [0,1]$. First we observe that $0$ is always an eigenvalue of $H^{\mathrm{parent}}(u)$ for $u\in [0,1]$ (corresponding to $\varphi(\sigma(u)^{1/2})$). If it becomes degenerate, then because the eigenvalues can only change continuously as a function of $u$ by Weyl's inequality, there must exist $u_\epsilon\in[0,1]$ such that the spectral gap is smaller than $\epsilon$ for any $\epsilon>0$, which contradicts the gap lower bound $\Delta_0-c_1C U u \geq \Delta_0-c_1C U >0$. This shows that $0$ must be a non-degenerate eigenvalue. If there is another eigenvalue of $H^{\mathrm{parent}}(u)$ that is positive for any $u\in[0,1]$, then because $0$ is the non-degenerate top eigenvalue of $H^{\mathrm{parent}}(0)=H^{\mathrm{parent}}_0$ by Proposition~\ref{prop:non_interacting_part_gap}, there must exists $u^*\in[0,1]$ such that $0$ is a degenerate eigenvalue of $H^{\mathrm{parent}}(u^*)$, which contradicts the non-degeneracy of $0$ that we just proved. Therefore $0$ is the non-degenerate top eigenvalue of $H^{\mathrm{parent}}(u)$ for all $u\in[0,1]$.

    Letting $u=1$, $U_\beta=0.99\min\{U'_\beta,\Delta_0/(c_1 C)\}$, we then conclude that when $U\leq U_\beta$, the spectral gap of $H^{\mathrm{parent}}$ defined in \eqref{eq:parent_hamiltonian_defn} is lower bounded by $\Delta=\Delta_0-c_1C U>0$, and that $0$ is the non-degenerate top eigenvalue of $H^{\mathrm{parent}}$. In the above $C,K,U'_\beta,\nu,\mu,c_1,\Delta_0>0$ are all constants that depend only on $\beta,D,r_0$. This implies that $U_\beta,\Delta>0$ are also constants that only depend on $\beta,D,r_0$.
\end{proof}

\subsection{Fast mixing}
\label{sec:fast_mixing}

The mixing time characterizes the maximal time for a quantum state to converge to the stationary state in time evolution. 
In the fermionic setting, however, the superselection rule places a constraint on what quantum states should be considered. Since no superposition between even- and odd-parity sectors are allowed, we only need to consider quantum states that can be expressed as linear combinations of Majorana operators of even parity. They make up the set
\begin{equation}
    \label{eq:quantum_states_satisfying_superselection_rule}
    D(\mathcal{H})_{\mathrm{even}} = D(\mathcal{H}) \cap B(\mathcal{H})_{\mathrm{even}},
\end{equation}
$D(\mathcal{H})$ is the set of all positive semi-definite and trace-1 operators defined in \eqref{eq:set_of_quantum_states}.
This requires us to modify the definition of the mixing time in the following way:
\begin{defn}[Mixing time for fermions]
\label{defn:mixing_time}
    For a Lindbladian operator $\mathcal{L}$ with unique stationary state in $D(\mathcal{H})_{\mathrm{even}}$, we define its $\epsilon$-mixing time to be
    \begin{equation}
        t_{\mathrm{mix}}(\epsilon) = \inf\{t \geq 0:\|e^{s \mathcal{L}}(\rho)-\sigma\|_1\leq \epsilon,\forall s\geq t, \rho\in D(\mathcal{H})_{\mathrm{even}}\}.
    \end{equation}
    Here $\sigma$ is the stationary state of $\mathcal{L}$.
\end{defn}
We are then ready to prove Theorem~\ref{thm:main_result}, which we restate here: 
\begin{thm*}[Main result]
    Let $H=H_0+V$ be a Hamiltonian on a $D$-dimensional cubic lattice $\Lambda$ consisting of $n$ sites. $H_0$ is quadratic in Majorana operators and is $(1,r_0)$-geometrically local, and $V$ is parity-preserving and is $(U,r_0)$-geometrically local. Then for any inverse temperature $\beta>0$, there exists $U_\beta>0$ such that when $U<U_\beta$, the Lindbladian defined in \eqref{eq:QMC_lindbladian} has a unique stationary state satisfying the fermionic superselection rule, and the $\epsilon$-mixing time (Definition~\ref{defn:mixing_time}) of the Lindbladian is at most
    \[
    C(n+\log(1/\epsilon)).
    \]
    where $C,U_\beta>0$ are constants that depend only on $r_0,D,\beta$.
\end{thm*}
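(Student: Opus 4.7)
The plan is to deduce Theorem~\ref{thm:main_result} as a direct consequence of Theorem~\ref{thm:spectrum_parent_ham} combined with the convergence estimate in Corollary~\ref{cor:convergence_of_state}, with the only new ingredient being a crude bound on $\sigma_{\min}$ to extract the linear-in-$n$ prefactor.

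First, I would take $U_\beta$ to be the constant furnished by Theorem~\ref{thm:spectrum_parent_ham}. For $U<U_\beta$ this yields a constant lower bound $\Delta>0$, depending only on $\beta,r_0,D$, on the gap between $0$ and the rest of the spectrum of $H^{\mathrm{parent}}$, with $0$ non-degenerate. By Lemma~\ref{lem:parent_ham_hermitian}(ii) the corresponding gap $g$ of $\mathcal{L}^\dag|_{B(\mathcal{H})_{\mathrm{even}}}$ satisfies $g\geq \Delta$, and Lemma~\ref{lem:Phi_tilde_preserves_spectrum} together with the non-degeneracy of $0$ in $H^{\mathrm{parent}}|_{\mathcal{H}'_{\mathrm{even}}}$ gives $\ker(\mathcal{L}^\dag|_{B(\mathcal{H})_{\mathrm{even}}})=\mathrm{span}(I)$. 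Since KMS-DBC guarantees $\mathcal{L}(\sigma)=0$ for $\sigma=e^{-\beta H}/Z$, and $\dim\ker(\mathcal{L}|_{B(\mathcal{H})_{\mathrm{even}}})=\dim\ker(\mathcal{L}^\dag|_{B(\mathcal{H})_{\mathrm{even}}})=1$ by adjointness on this invariant finite-dimensional subspace, $\sigma$ is the unique stationary state in $D(\mathcal{H})_{\mathrm{even}}$.

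Second, Corollary~\ref{cor:convergence_of_state} applied with gap $g\geq\Delta$ yields
\[
\|e^{t\mathcal{L}}(\rho)-\sigma\|_1\leq \frac{e^{-\Delta t}}{\sqrt{\sigma_{\min}}}
\]
for every $\rho\in D(\mathcal{H})_{\mathrm{even}}$. To convert this into the claimed $O(n+\log(1/\epsilon))$ bound, I would establish $\sigma_{\min}\geq e^{-C'n}$ with $C'$ depending only on $\beta,r_0,D,U_\beta$. This is a short computation: since $H=H_0+V$ is $(\max\{1,U_\beta\},r_0)$-geometrically local on a lattice of $n$ sites, the triangle inequality gives $\|H\|\leq C''n$ for a constant $C''$ depending only on $r_0,D,U_\beta$, whence $\sigma_{\min}\geq e^{-2\beta\|H\|}/2^n\geq e^{-C'n}$. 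Requiring the right-hand side of the convergence bound to be at most $\epsilon$ then gives
\[
t_{\mathrm{mix}}(\epsilon)\leq \Delta^{-1}\bigl(\log(1/\epsilon)+\tfrac{1}{2}C'n\bigr) = C(n+\log(1/\epsilon)),
\]
as claimed.

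All the substantive difficulty is absorbed into Theorem~\ref{thm:spectrum_parent_ham} and its supporting ingredients (the constant gap of $H_0^{\mathrm{parent}}$ in Proposition~\ref{prop:non_interacting_part_gap}, the quasi-locality results in Propositions~\ref{prop:non_interacting_part_decay} and~\ref{prop:interacting_part_decay}, and the free-fermion stability theorem of \cite{Hastings2019stability}). The derivation of the mixing-time bound at this stage amounts to assembling these pieces with a one-line bound on $\sigma_{\min}$, so I do not anticipate any further obstacles; the only potential subtlety is making sure that the uniqueness of the stationary state within $D(\mathcal{H})_{\mathrm{even}}$ (rather than within all of $D(\mathcal{H})$) is what one actually needs, which is precisely the point of the fermionic superselection restriction built into Definition~\ref{defn:mixing_time}.
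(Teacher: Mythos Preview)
Your proposal is correct and follows essentially the same route as the paper: invoke Theorem~\ref{thm:spectrum_parent_ham} to get the constant gap $\Delta$ and non-degeneracy of $0$, transfer this via Lemma~\ref{lem:parent_ham_hermitian}(ii)/Lemma~\ref{lem:Phi_tilde_preserves_spectrum} to $\mathcal{L}^\dag|_{B(\mathcal{H})_{\mathrm{even}}}$, apply Corollary~\ref{cor:convergence_of_state}, and finish with the elementary bound $\sigma_{\min}\geq 2^{-n}e^{-2\beta\|H\|}\geq e^{-C'n}$. The paper's proof is the same argument, differing only in that it cites \eqref{eq:similarity_structure_Phi_restricted_to_even} directly rather than Lemma~\ref{lem:parent_ham_hermitian}(ii) for the gap transfer.
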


\begin{proof}
    By Theorem~\ref{thm:spectrum_parent_ham}, we know that there exists $U_\beta>0$ such that when $U<U_\beta$, the spectral gap of $H^{\mathrm{parent}}$ is lower bounded by $\Delta>0$, and that $0$ is a non-degenerate eigenvalue. This implies that $0$ is also a non-degenerate eigenvalue of $H^{\mathrm{parent}}|_{\mathcal{H}'_{\mathrm{even}}}$ (since $\varphi(\sigma^{1/2})$, which spans the kernel, is in $\mathcal{H}'_{\mathrm{even}}$), and that the spectral gap of $H^{\mathrm{parent}}|_{\mathcal{H}'_{\mathrm{even}}}$ is also at least $\Delta$. By \eqref{eq:similarity_structure_Phi_restricted_to_even} and the definition of the parent Hamiltonian in \eqref{eq:parent_hamiltonian_defn}, we can then see that $\mathcal{L}^\dag|_{B(\mathcal{H})_{\mathrm{even}}}$ has spectral gap at least $\Delta$, and $0$ is its non-degenerate eigenvalue, implying that $\ker(\mathcal{L}^\dag|_{B(\mathcal{H})_{\mathrm{even}}})=\mathrm{span}(I)$.
    This shows that $\sigma$ is the unique fixed point of $\mathcal{L}$ that satisfies the fermionic superselection rule.
    Therefore all conditions of Corollary~\ref{cor:convergence_of_state} are satisfied, and it tells us that
    \[
    \|e^{t\mathcal{L}}(\rho)-\sigma\|_1\leq \frac{e^{-\Delta t}}{\sqrt{\sigma_{\min}}},
    \]
    for all $\rho\in D(\mathcal{H})_{\mathrm{even}}$. The $\epsilon$-mixing time is then upper bounded by
    \[
    t_{\mathrm{mix}}(\epsilon)\leq \frac{1}{\Delta}\left(\frac{1}{2}\log(1/\sigma_{\min})+\log(1/\epsilon)\right).
    \]
    Note that 
    \[
    \sigma_{\min} = \frac{1}{Z}e^{-\beta \|H\|} \geq \frac{1}{2^n e^{\beta\|H\|}}e^{-\beta \|H\|}=\frac{1}{2^n}e^{-2\beta \|H\|},
    \]
    and $\|H\|\leq C'n$ for some $C'>0$ that only depends on $D,r_0$, we then have $\log(1/\sigma_{\min})\leq n\log(2)+2\beta C'n$, which yields the mixing time bound in the theorem. In the above $\Delta,U_\beta>0$ are both constants that depend only on $\beta,r_0,D$. 
\end{proof}

\subsection{Recovering the Gibbs state from the top eigenstate}
\label{sec:recovering_gibbs_state_from_top_eigenstate}

In this section we will discuss recovering information about the Gibbs state from the top eigenstate of the parent Hamiltonian. Suppose we want to obtain the expectation value of an observable 
\begin{equation}
\label{eq:arbitrary_observable}
    X=\sum_{\alpha\in\{0,1\}^{2n}}X_{\alpha}\gamma_1^{\alpha_1}\gamma_2^{\alpha_2}\cdots\gamma_{2n}^{\alpha_{2n}}\in B(\mathcal{H}),
\end{equation}
then we will show that we can get it from an observable
\begin{equation}
\label{eq:observable_enlarged_hilbert_space}
    X^{\sharp}=\Phi(L_{X})=\sum_{\alpha\in\{0,1\}^{2n}}X_{\alpha}\gamma_1^{\alpha_1}\gamma_3^{\alpha_3}\cdots\gamma_{4n-1}^{\alpha_{4n-1}}\in B(\mathcal{H}').
\end{equation}
\begin{prop}
    \label{prop:correspondence_top_eigenstate_gibbs_state}
    Let $\sigma$ be the Gibbs state of $\mathcal{L}$ in \eqref{eq:QMC_lindbladian} and let $v$ be the (normalized) top eigenstate of the parent Hamiltonian $H^{\mathrm{parent}}$ defined in \eqref{eq:parent_hamiltonian_defn}. Then for any observable $X\in B(\mathcal{H})$ as expressed in \eqref{eq:arbitrary_observable}, we have a corresponding $X^{\sharp}\in B(\mathcal{H}')$ as defined in \eqref{eq:observable_enlarged_hilbert_space} such that $\Tr[\sigma X]=\braket{v, X^{\sharp}v}$.
\end{prop}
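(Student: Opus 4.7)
The plan is to identify the normalized top eigenstate $v$ of $H^{\mathrm{parent}}$ explicitly as a multiple of $\varphi(\sigma^{1/2})$ and then to compute $\braket{v, X^{\sharp} v}$ directly from the definition $X^{\sharp} = \Phi(L_X) = \varphi \circ L_X \circ \varphi^{-1}$ together with the inner product identity \eqref{eq:third_quantization_preserves_frobenius_norm}.

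First I would pin down $v$. By Theorem~\ref{thm:spectrum_parent_ham}, the top eigenvalue of $H^{\mathrm{parent}}$ is $0$ and non-degenerate, so it suffices to produce a single nonzero kernel vector in $\mathcal{H}'_{\mathrm{even}}$. A natural candidate is $\varphi(\sigma^{1/2})$: since $\sigma$ is a function of the even-parity Hamiltonian $H$, $\sigma^{1/2}$ has even parity, so $\varphi(\sigma^{1/2}) \in \mathcal{H}'_{\mathrm{even}}$ by Remark~\ref{rem:third_quantization_preserves_parity}. Writing $\tilde{\varphi}(I) = \varphi(\Gamma(I)) = \varphi(\sigma^{1/2})$ and invoking \eqref{eq:similarity_structure_Phi_restricted_to_even} together with the trivial fact $\mathcal{L}^{\dagger}(I) = 0$, I obtain
\[
H^{\mathrm{parent}}\,\varphi(\sigma^{1/2}) \;=\; \tilde{\Phi}(\mathcal{L}^{\dagger})\,\tilde{\varphi}(I) \;=\; \tilde{\varphi}(\mathcal{L}^{\dagger}(I)) \;=\; 0.
\]
The norm $\|\varphi(\sigma^{1/2})\|^2 = 2^{-n}\Tr[\sigma] = 2^{-n}$ follows from \eqref{eq:third_quantization_preserves_frobenius_norm}, so the (uniquely determined, up to phase) normalized top eigenstate is
\[
v \;=\; 2^{n/2}\,\varphi(\sigma^{1/2}).
\]

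Second, the computation is a direct unpacking. Since $X^{\sharp} = \Phi(L_X) = \varphi \circ L_X \circ \varphi^{-1}$, applying it to $v$ gives $X^{\sharp} v = 2^{n/2}\,\varphi(X\sigma^{1/2})$. The identity \eqref{eq:third_quantization_preserves_frobenius_norm} then yields
\[
\braket{v, X^{\sharp} v} \;=\; 2^n \braket{\varphi(\sigma^{1/2}),\,\varphi(X\sigma^{1/2})} \;=\; \Tr\!\bigl[\sigma^{1/2} X \sigma^{1/2}\bigr] \;=\; \Tr[\sigma X],
\]
by cyclicity of the trace. The explicit formula in \eqref{eq:observable_enlarged_hilbert_space} is then just the rewriting of $\Phi(L_X)$ in terms of the $a$-Majorana operators using Remark~\ref{rem:Phi_preserves_locality}.

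There is no substantive obstacle here; the statement is essentially a bookkeeping corollary of the construction in Section~\ref{sec:constructing_the_parent_hamiltonian}. The two points that do require care are (i) ensuring $\varphi(\sigma^{1/2}) \in \mathcal{H}'_{\mathrm{even}}$ so that \eqref{eq:similarity_structure_Phi_restricted_to_even} applies as stated, and (ii) using the non-degeneracy guaranteed by Theorem~\ref{thm:spectrum_parent_ham} to identify $v$ uniquely up to phase with $2^{n/2}\varphi(\sigma^{1/2})$---without non-degeneracy, the identity would hold only for a particular choice of top eigenvector rather than being intrinsic to the top eigenspace.
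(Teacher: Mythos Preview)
Your proposal is correct and follows essentially the same approach as the paper: identify $v = 2^{n/2}\varphi(\sigma^{1/2})$ via the non-degeneracy from Theorem~\ref{thm:spectrum_parent_ham}, then unwind $X^{\sharp} = \varphi\circ L_X\circ\varphi^{-1}$ and use \eqref{eq:third_quantization_preserves_frobenius_norm}. Your explicit verification that $\varphi(\sigma^{1/2})$ lies in the kernel (via $\tilde\varphi(I)=\varphi(\sigma^{1/2})$, $\mathcal{L}^\dagger(I)=0$, and \eqref{eq:similarity_structure_Phi_restricted_to_even}) is a nice touch that the paper leaves implicit in its citation of Theorem~\ref{thm:spectrum_parent_ham}.
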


\begin{proof}
    By Theorem~\ref{thm:spectrum_parent_ham} we know that $v\propto \varphi(\sigma^{1/2})$. We can compute that, by \eqref{eq:third_quantization_preserves_frobenius_norm},
    \[
    \braket{\varphi(\sigma^{1/2}),\varphi(\sigma^{1/2})}=\frac{1}{2^n}\Tr[\sigma]=\frac{1}{2^n}.
    \]
    Therefore we have $v=2^{n/2}\varphi(\sigma^{1/2})$ (fixing an unimportant global phase). Then we have
    \[
    \braket{v, X^{\sharp}v} = \braket{v, \varphi\circ L_X \circ \varphi^{-1}(v)} = 2^n\braket{\varphi(\sigma^{1/2}),\varphi(X\sigma^{1/2})} = \Tr[\sigma^{1/2}X\sigma^{1/2}]=\Tr[\sigma X].
    \]
\end{proof}

\begin{rem}
    \label{rem:observable_mapping_homomorphism_preserve_norm}
    This mapping  $X\in B(\mathcal{H})\mapsto X^{\sharp}\in B(\mathcal{H}')$ is an algebra homomorphism. One can easily check that for $X,Y\in B(\mathcal{H})$ we have $(XY)^\sharp = X^\sharp Y^\sharp$. Moreover, we have $\|X^\sharp\|=\|X\|$ by Lemma~\ref{lem:norm_preservation}.
\end{rem}

As a result of this correspondence we can readily prove correlation decay in the Gibbs state, as stated in Corollary~\ref{cor:exp_decay_correlation}, which we restate here:
\begin{cor*}
    Under the same assumptions as Theorem~\ref{thm:main_result}, there exists $C,\xi>0$ that depends only only on $\beta,r_0,D$, such that for an observable $X\in B(\mathcal{H})$ supported on a constant number of adjacent sites, and any observable $Y\in B(\mathcal{H})$ with support $S$, 
    \[
    |\Tr[\sigma X Y]-\Tr[\sigma X]\Tr[\sigma Y]|\leq C\|X\|\|Y\||S|e^{-d/\xi},
    \]
    where $d$ is the distance between the supports of $X$ and $Y$.
\end{cor*}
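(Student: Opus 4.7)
My plan is to transfer the two-point correlation function to the $a$-fermion side via Proposition~\ref{prop:correspondence_top_eigenstate_gibbs_state} and then invoke an exponential clustering theorem for gapped quasi-local Hamiltonians on that side. Because $X\mapsto X^{\sharp}$ is an algebra homomorphism with $\|X^{\sharp}\|=\|X\|$ (Remark~\ref{rem:observable_mapping_homomorphism_preserve_norm}), Proposition~\ref{prop:correspondence_top_eigenstate_gibbs_state} gives
\[
\Tr[\sigma XY]-\Tr[\sigma X]\Tr[\sigma Y]=\braket{v,X^{\sharp} Y^{\sharp} v}-\braket{v,X^{\sharp} v}\braket{v,Y^{\sharp} v},
\]
where $v$ is the normalized top eigenstate of $H^{\mathrm{parent}}$. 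By the explicit action of $\Phi$ on Majorana monomials (Remark~\ref{rem:Phi_preserves_locality}) together with the $a$-fermion geometry of Remark~\ref{rem:a_fermion_geometry}, $X^{\sharp}$ and $Y^{\sharp}$ are supported on the $a$-Majorana modes of the lattice sites supporting $X$ and $Y$ respectively, so their supports remain separated by distance $d$.

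Next I will combine Theorem~\ref{thm:spectrum_parent_ham} (a spectral gap $\Delta>0$ above the non-degenerate top eigenvalue $0$, depending only on $\beta,r_0,D$) with Propositions~\ref{prop:non_interacting_part_decay} and~\ref{prop:interacting_part_decay} (quasi-locality of $H^{\mathrm{parent}}=H_0^{\mathrm{parent}}+V^{\mathrm{parent}}$ on the $a$-fermion lattice). Applying the Hastings--Koma exponential clustering theorem to $-H^{\mathrm{parent}}$, whose unique ground state is $v$, yields, for any two operators $A,B$ with disjoint supports at distance $d$ in the $a$-fermion geometry,
\[
\bigl|\braket{v,AB v}-\braket{v,A v}\braket{v,B v}\bigr|\leq C_0\,\|A\|\|B\|\min(|S_A|,|S_B|)\,e^{-d/\xi},
\]
with $C_0,\xi>0$ depending only on $\Delta$ and the Lieb--Robinson velocity of $H^{\mathrm{parent}}$, hence only on $\beta,r_0,D$. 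Setting $A=X^{\sharp}$, $B=Y^{\sharp}$ and using $\min(|S_{X^{\sharp}}|,|S_{Y^{\sharp}}|)\leq |S|$ (or, since $|S_X|=O(1)$, absorbing the constant into $C_0$) gives exactly the claimed bound.

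The main obstacle is justifying Hastings--Koma clustering for a Hamiltonian that is only quasi-local rather than strictly local. This extension is by now standard: the fermionic Lieb--Robinson bound (Section~\ref{sec:lieb_robinson_bound} and \cite{Hastings2004decay}) applies to Hamiltonians with $[K,\nu]$-decay and $(C,\mu)$-decay, and the spectral-filter argument behind clustering uses Lieb--Robinson as a black box: one writes $A_{\mathrm{filt}}=\int f(t)\,A(t)\,dt$ with a kernel $f$ whose Fourier transform vanishes on $[-\Delta,\Delta]$, so that $A_{\mathrm{filt}}v$ is exponentially small in $\Delta$, and then Lieb--Robinson bounds $\|[A_{\mathrm{filt}},B]\|$ by an exponential in $d$ after truncating the time integral at a scale $t\sim d/(2J)$. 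The desired clustering estimate follows by balancing these two errors.
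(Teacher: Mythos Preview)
Your proposal is correct and follows essentially the same route as the paper: transfer the correlator to the $a$-fermion side via Proposition~\ref{prop:correspondence_top_eigenstate_gibbs_state} and Remark~\ref{rem:observable_mapping_homomorphism_preserve_norm}, then apply the exponential clustering theorem for gapped ground states (the paper cites \cite[Theorem~2]{nachtergaele2006lieb} and \cite{Hastings2006spectral}) to $H^{\mathrm{parent}}$ using the gap from Theorem~\ref{thm:spectrum_parent_ham}. You are in fact more explicit than the paper about the quasi-locality of $H^{\mathrm{parent}}$ needed for Lieb--Robinson and hence for clustering; the paper simply invokes the clustering theorem without spelling this out.
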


\begin{proof}[Proof of Corollary~\ref{cor:exp_decay_correlation}]
    Let $v$ be the normalized top eigenstate of $H^{\mathrm{parent}}$. By Proposition~\ref{prop:correspondence_top_eigenstate_gibbs_state} and Remark~\ref{rem:observable_mapping_homomorphism_preserve_norm}, we have 
    \[
    \Tr[\sigma XY] = \braket{v, X^\sharp Y^\sharp v},\quad \Tr[\sigma X]=\braket{v, X^\sharp v},\quad \Tr[\sigma Y]=\braket{v, Y^\sharp v}.
    \]
    The exponential clustering theorem for gapped ground states (as stated in \cite[Theorem~2]{nachtergaele2006lieb} and also proved in \cite{Hastings2006spectral}), tells us that
    \[
    |\braket{v, X^\sharp Y^\sharp v}-\braket{v, X^\sharp v}\braket{v, Y^\sharp v}|\leq C\|X\|\|Y\||S|e^{-d/\xi}.
    \]
    We therefore have the desired inequality stated in the corollary.
\end{proof}


\section{Spectral gap of the non-interacting part}
\label{sec:spectral_gap_non_interacting}

We will first show that the non-interacting, coherent part of the parent Hamiltonian is zero.
\begin{lem}
\label{lem:free_parent_zero} 
$H^{\mathrm{parent}}_{\mathrm{C},\mathrm{free}}$ as defined in \eqref{eq:parent_ham_coherent_free} is $0$.
\end{lem}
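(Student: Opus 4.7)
The plan is to show that $\sum_j \tilde B_j^{\mathrm{free}} = 0$ as an operator in $B(\mathcal{H})$. Once that holds, linearity gives
\[
i\sum_j\bigl(L_{\tilde B_j^{\mathrm{free}}}-R_{(\tilde B_j^{\mathrm{free}})^\dag}\bigr) = iL_{\sum_j \tilde B_j^{\mathrm{free}}}-iR_{(\sum_j \tilde B_j^{\mathrm{free}})^\dag} = 0,
\]
and applying $\Phi$ to the zero super-operator yields $H^{\mathrm{parent}}_{\mathrm{C},\mathrm{free}}=0$. Since the similarity by $e^{\pm\beta H_0/4}$ is invertible, it suffices to prove the equivalent statement $\sum_j B_j^{\mathrm{free}} = 0$, where $B_j^{\mathrm{free}}$ denotes the double integral inside \eqref{eq:B_tilde_free} with the conjugations stripped off.

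The key observation is that because $H_0$ is quadratic in Majoranas, the Heisenberg evolution acts linearly on them: $\gamma_j^{\mathrm{free}}(\tau)=\sum_k M_{jk}(\tau)\gamma_k$ with $M(\tau)$ a real orthogonal matrix obeying the one-parameter group law $M(\tau_1)M(\tau_2)=M(\tau_1+\tau_2)$ and $M(\tau)^T=M(-\tau)$. Summing the product over $j$ gives
\[
\sum_j \gamma_j^{\mathrm{free}}(\tau_1)\gamma_j^{\mathrm{free}}(\tau_2)
= \sum_{k,l}[M(\tau_1)^T M(\tau_2)]_{kl}\gamma_k\gamma_l
= \sum_{k,l} M(\tau_2-\tau_1)_{kl}\gamma_k\gamma_l,
\]
which depends on $\tau_1,\tau_2$ only through their difference. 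Substituting $\tau_1=\beta(t'-t)$ and $\tau_2=-\beta(t+t')$ gives $\tau_2-\tau_1=-2\beta t'$, with no $t$-dependence. Hence the integrand of $\sum_j B_j^{\mathrm{free}}$ depends on $t$ only through the scalar $b_1(t)$, and the double integral factorizes as
\[
\left(\int b_1(t)\,\dd t\right)\cdot\left(\int b_2(t')\sum_{k,l}M(-2\beta t')_{kl}\gamma_k\gamma_l\,\dd t'\right).
\]

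The proof then reduces to the scalar identity $\int b_1(t)\,\dd t = 0$. By \eqref{eq:f(t),b1(t),b2(t)}, $b_1$ is a constant multiple of the convolution $\tfrac{1}{\cosh(2\pi\,\cdot\,)}*_t \sin(-\cdot)e^{-2(\cdot)^2}$. The integral of a convolution factors as a product of integrals of the factors, and $\sin(-t)e^{-2t^2}$ is odd in $t$ so its integral vanishes; equivalently, $\hat b_1(0)=0$. I do not foresee any substantive obstacle: the lemma is essentially a direct consequence of the time-translation invariance built into the free two-point function together with the oddness-based cancellation that was evidently designed into $b_1$.
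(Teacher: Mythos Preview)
Your proposal is correct and follows essentially the same route as the paper: reduce to $\sum_j B_j^{\mathrm{free}}=0$, use the linear action of the free Heisenberg evolution on Majoranas to see that the summed two-point product depends only on the time difference $-2\beta t'$, factor out $\int b_1(t)\,\dd t$, and show this vanishes by the oddness of $\sin(-t)e^{-2t^2}$. The only cosmetic difference is that you phrase the key step via abstract properties of the orthogonal one-parameter group $M(\tau)$, whereas the paper writes out the explicit matrix exponential $e^{4ih\tau}$ from Lemma~\ref{lem:fermion_basis_transform}.
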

\begin{proof}
Since 
\[
H^{\mathrm{parent}}_{\mathrm{C},\mathrm{free}} = \Phi\Bigg( i\sum_j(L_{\tilde{B}^{\mathrm{free}}_j}-R_{(\tilde{B}^{\mathrm{free}}_j)^\dag})\Bigg),
\]
it suffices to show that $\sum_j\tilde{B}^{\mathrm{free}}_j=\sum_j\sigma^{1/4}{B}^{\mathrm{free}}_j\sigma^{-1/2}=0$, which is equivalent to showing $B^{\mathrm{free}}=\sum_j B^{\mathrm{free}}_j=0$. For $B^{\mathrm{free}}$ we have
\begin{equation*}
\begin{aligned}
B^{\mathrm{free}}&= \sum_{k} \int_{-\infty}^{\infty} b_1(t) e^{-i\beta H_0 t} \left( \int_{-\infty}^{\infty} b_2(t') e^{i\beta H_0 t'} \gamma_k e^{-2i\beta H_0 t'} \gamma_k e^{i\beta H_0 t'} dt' \right) e^{i\beta H_0 t} dt\\
&=\sum_{k} \int_{-\infty}^{\infty} dt \int_{-\infty}^{\infty} dt' b_1(t) b_2(t') e^{i\beta H_0 (t'-t)} \gamma_k e^{-i\beta H_0 (t'-t)} e^{-i\beta H_0 (t'+t)} \gamma_k e^{i\beta H_0 (t'+t)}.
\end{aligned}
\end{equation*}
Using Lemma~\ref{lem:fermion_basis_transform} we then have
\begin{equation*}
    \begin{aligned}
        B^{\mathrm{free}}&=\sum_{k} \int_{-\infty}^{\infty} dt \int_{-\infty}^{\infty} dt' b_1(t) b_2(t') \sum_p \gamma_p \left(e^{2i\beta h (t'-t)}\right)_{p,k} \sum_q \gamma_q \left(e^{-2i\beta h (t'+t)}\right)_{q,k}\\
        &=\sum_{k} \int_{-\infty}^{\infty} dt \int_{-\infty}^{\infty} dt' b_1(t) b_2(t') \sum_p \gamma_p \left(e^{2i\beta h (t'-t)}\right)_{p,k} \sum_q \gamma_q \left(e^{2i\beta h (t'+t)}\right)_{k,q}\\
        &=\int_{-\infty}^{\infty} b_1(t) dt \int_{-\infty}^{\infty} dt'  b_2(t') \sum_p \sum_q  \left(e^{4i\beta h t'}\right)_{p,q}  \gamma_p \gamma_q \\
        &=C \sum_{p,q}\check{b}_2(4i\beta h)_{p,q} \gamma_p \gamma_q
    \end{aligned}
\end{equation*}
where $C=\int_{-\infty}^{\infty} b_1(t) dt$ and $\check{b}_2(x)=\int_{-\infty}^{\infty} b_2(t') e^{i x t'}dt'$. Then we compute the constant $C$.
\begin{equation}
\begin{aligned}
C&=\int_{-\infty}^{\infty} 2\sqrt{\pi} e^{1/8} \left( \frac{1}{\cosh(2\pi t)} \right) * \sin(-t) \exp(-2t^2) dt\\
&=\int_{-\infty}^{\infty}dt \int_{-\infty}^{\infty} ds 2\sqrt{\pi} e^{1/8} \left( \frac{1}{\cosh(2\pi s)} \right) \sin(-(t-s)) \exp(-2(t-s)^2) \\
&=0,
\end{aligned}
\end{equation}
where we have used the fact that $\int_{-\infty}^{\infty} \sin(-(t-s)) \exp(-2(t-s)^2)dt=0$. This then implies $B^{\mathrm{free}}=0$. 
\end{proof}

This allow us to focus on the dissipative part in $H^{\mathrm{parent}}_{\mathrm{D},\mathrm{free}}$ \eqref{eq:parent_ham_dissipative_free}. We will first analyze it for the simplest case, supposing that the Hamiltonian acts only on one fermionic mode.
\begin{lem}
    \label{lem:single_mode_parent_ham}
    We consider the free parent Hamiltonian $H^{\mathrm{parent}}_{\mathrm{free}}$ defined in \eqref{eq:parent_hamiltonian_explicit_free} generated by 
    \[
    H_0 = i\epsilon(\zeta_1\zeta_2-\zeta_2\zeta_1),
    \]
    where $\zeta_1$ and $\zeta_2$ are two Majorana operators satisfying $\{\zeta_j,\zeta_k\}=2\delta_{jk}$. Then defining the third quantized $a$-Majorana modes $\hat{\zeta}_j$, $j=1,2,3,4$, following \eqref{eq:third_quantization_majorana_ops}, we have
    \begin{equation}
    \label{eq:single_mode_parent_ham}
        H^{\mathrm{parent}}_{\mathrm{free}}=Ce^{-4\beta^2\epsilon^2}\left(-i\hat{d}_1^\dag\hat{d}_2+i\hat{d}_2^\dag\hat{d}_1-\sinh(2\beta\epsilon)\hat{d}_1^\dag\hat{d}_1 + \sinh(2\beta\epsilon)\hat{d}_2^\dag\hat{d}_2-\cosh(2\beta\epsilon)\right),
    \end{equation}
    where $\hat{d}_1=(\hat{\zeta}_1+i\hat{\zeta}_3)/2$, $\hat{d}_2=(\hat{\zeta}_2+i\hat{\zeta}_4)/2$, and $C$ is a universal constant.
\end{lem}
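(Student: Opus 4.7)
The plan is to carry out a direct computation, exploiting that the single-mode $H_0=2i\epsilon\zeta_1\zeta_2$ makes every object in sight either a rotation (in real time) or a hyperbolic rotation (in imaginary time). By Lemma~\ref{lem:free_parent_zero} the coherent piece $H^{\mathrm{parent}}_{\mathrm{C},\mathrm{free}}$ vanishes, so we only have to compute the dissipator \eqref{eq:parent_ham_dissipative_free} and simplify.

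First I would solve the Heisenberg equation $\dot\zeta_j = i[H_0,\zeta_j]$. The canonical anticommutation relations give a closed system mixing $\zeta_1$ and $\zeta_2$, whose solution is an $SO(2)$ rotation at frequency $4\epsilon$:
\[
\zeta_1(t)=\cos(4\epsilon t)\zeta_1+\sin(4\epsilon t)\zeta_2,\qquad \zeta_2(t)=-\sin(4\epsilon t)\zeta_1+\cos(4\epsilon t)\zeta_2.
\]
Substituting into \eqref{eq:A_tilde_free} and \eqref{eq:B_tilde_free}, the $t$-integral becomes a Gaussian Fourier transform of $f(t)$ shifted by $\pm 4\epsilon$, producing explicit Gaussians in $\omega$. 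The conjugation by $e^{\mp\beta H_0/4}$ is an analytic continuation of the same rotation to $t = i\beta/4$, which trades sines and cosines for $i\sinh(\beta\epsilon)$ and $\cosh(\beta\epsilon)$. The net result is a closed-form expression $\tilde A^{\mathrm{free}}_j(\omega)=\sum_{k=1,2}M_{jk}(\omega)\zeta_k$ with $M_{jk}(\omega)$ a $2\times 2$ matrix whose entries are Gaussians in $\omega$ with complex prefactors built from $\cosh(\beta\epsilon),\sinh(\beta\epsilon)$.

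Next I would feed this into \eqref{eq:parent_ham_dissipative_free} and apply $\Phi$ term by term using Remark~\ref{rem:Phi_preserves_locality} (so that $\Phi(L_{\zeta_k})=\hat\zeta_{2k-1}$ and $\Phi(R_{\zeta_k})=-i\hat\zeta_{2k}$ up to the $(-1)^{\hat n}$ factor dictated by Definition~\ref{defn:Phi_thrid_quantization}). Because each $\tilde A$ is odd, the products $L_{\tilde A}R_{\tilde A^\dagger}$, $L_{\tilde A\tilde A}$, $R_{\tilde A^\dagger\tilde A^\dagger}$ that appear in the dissipator are all parity-even, so the $(-1)^{\hat n}$ factors combine into the identity on $\mathcal H'_{\mathrm{even}}$ and the result is genuinely quadratic in $\hat\zeta_1,\hat\zeta_2,\hat\zeta_3,\hat\zeta_4$. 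I would then perform the remaining $\omega$-integral against $\eta(\omega)=e^{-(\beta\omega+1)^2/2}$; it is a Gaussian integral of shifted Gaussians, and completing the square yields the overall factor $Ce^{-4\beta^2\epsilon^2}$ together with coefficients $\sinh(2\beta\epsilon)$, $\cosh(2\beta\epsilon)$ and $\pm i$ arising from the cross terms. Finally, rewriting the quadratic form in $\hat\zeta$ in terms of the combinations $\hat d_1=(\hat\zeta_1+i\hat\zeta_3)/2$ and $\hat d_2=(\hat\zeta_2+i\hat\zeta_4)/2$ should give exactly \eqref{eq:single_mode_parent_ham}.

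The main obstacle is pure bookkeeping in the second half: there are four dissipator combinations, each producing several $\omega$-dependent bilinears in $\hat\zeta$, and each has to be tracked through the sign conventions in Definition~\ref{defn:Phi_thrid_quantization} (including the ordering reversal in $\Phi(R_{\zeta_i\zeta_j})=-\hat\zeta_{2j}\hat\zeta_{2i}$) before doing the Gaussian integral. The constant $-\cosh(2\beta\epsilon)$ in \eqref{eq:single_mode_parent_ham} will come specifically from the $-\tfrac12(L_{\tilde A\tilde A}+R_{\tilde A^\dagger \tilde A^\dagger})$ pieces, where the scalar part of $\tilde A(-\omega)\tilde A(\omega)$ produced by $\zeta_k^2=1$ contributes after the $\omega$-integration. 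A useful sanity check before carrying out the full algebra is to verify that the resulting operator is Hermitian (consistent with Lemma~\ref{lem:parent_ham_hermitian}) and that its top eigenvalue is $0$ with eigenvector $\varphi(\sigma^{1/2})$, which fixes some of the coefficients uniquely.
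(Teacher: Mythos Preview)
Your proposal is correct and follows the same approach as the paper: invoke Lemma~\ref{lem:free_parent_zero} to discard the coherent piece, solve the Heisenberg equations for $\zeta_j(t)$ explicitly, feed the result into the dissipator \eqref{eq:parent_ham_dissipative_free}, apply $\Phi$ via the rules in Remark~\ref{rem:Phi_preserves_locality}, carry out the Gaussian $\omega$-integral, and regroup in terms of $\hat d_1,\hat d_2$. The paper's own proof is literally the one-line sketch ``straightforward calculation,'' so your more detailed outline is precisely what is needed to fill it in (just double-check the sign in your parenthetical $\Phi(R_{\zeta_i\zeta_j})=-\hat\zeta_{2j}\hat\zeta_{2i}$ against Remark~\ref{rem:Phi_preserves_locality}, since for $k=2$ the prefactor $-i^k$ works out to $+1$).
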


\begin{proof}
    Since $H_0$ is non-interacting and only involves one fermionic mode, we can directly compute $\zeta_1^{\mathrm{free}}(t)$ and $\zeta_2^{\mathrm{free}}(t)$ corresponding to \eqref{eq:defn_gamma_free}. This then allows us to explicitly compute $\tilde{A}_j^{\mathrm{free}}(\omega)$ in \eqref{eq:A_tilde_free}. The coherent part can be neglected since we know it is 0 from Lemma~\ref{lem:free_parent_zero}. The explicit expression for $H^{\mathrm{parent}}_{\mathrm{free}}=H^{\mathrm{parent}}_{\mathrm{D,free}}$ then follows from straightforward calculation.
\end{proof}

\begin{cor}
    \label{cor:gap_single_mode}
    Under the same setup as in Lemma~\ref{lem:single_mode_parent_ham}, the top eigenvalue of $H^{\mathrm{parent}}_{\mathrm{free}}$ is $0$, and the spectral gap separating it from the second largest eigenvalue is  $Ce^{-4\beta^2\epsilon^2}\cosh(2\beta\epsilon)$, where $C$ is the same universal constant as in Lemma~\ref{lem:single_mode_parent_ham}.
\end{cor}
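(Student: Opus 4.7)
The plan is to diagonalize the quadratic Hamiltonian given by Lemma~\ref{lem:single_mode_parent_ham} by reducing it to a single-particle problem on a two-dimensional space.

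First, I would note that the expression inside the parentheses of \eqref{eq:single_mode_parent_ham} has the form $\sum_{i,j=1,2} M_{ij}\hat{d}_i^\dag \hat{d}_j - \cosh(2\beta\epsilon)$, where $M$ is the $2\times 2$ Hermitian single-particle matrix
\[
M = \begin{pmatrix} -\sinh(2\beta\epsilon) & -i \\ i & \sinh(2\beta\epsilon) \end{pmatrix}.
\]
A direct calculation gives $\det(M-\lambda I) = \lambda^2 - \sinh^2(2\beta\epsilon) - 1 = \lambda^2 - \cosh^2(2\beta\epsilon)$, so the eigenvalues of $M$ are $\pm \cosh(2\beta\epsilon)$.

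Next, I would apply the standard particle-number-preserving Bogoliubov transformation: since $M$ is Hermitian, there is a $2\times 2$ unitary $U$ with $U^\dag M U = \mathrm{diag}(-\cosh(2\beta\epsilon), \cosh(2\beta\epsilon))$, and defining new fermionic modes $\hat{e}_i = \sum_j U_{ji}^* \hat{d}_j$ (which preserves the canonical anticommutation relations), the Hamiltonian becomes
\[
H^{\mathrm{parent}}_{\mathrm{free}} = C e^{-4\beta^2\epsilon^2}\Bigl[\cosh(2\beta\epsilon)\bigl(\hat{e}_2^\dag \hat{e}_2 - \hat{e}_1^\dag \hat{e}_1\bigr) - \cosh(2\beta\epsilon)\Bigr].
\]
In the four-dimensional Fock space associated with $\hat{e}_1,\hat{e}_2$, the four common eigenstates $|n_1 n_2\rangle$ yield eigenvalues
\[
E_{00}=-C e^{-4\beta^2\epsilon^2}\cosh(2\beta\epsilon),\ \  E_{10}=-2C e^{-4\beta^2\epsilon^2}\cosh(2\beta\epsilon),\ \ E_{01}=0,\ \ E_{11}=-C e^{-4\beta^2\epsilon^2}\cosh(2\beta\epsilon).
\]
Assuming $C>0$ (which is implicit in the convention of Lemma~\ref{lem:single_mode_parent_ham} and follows from the positivity of the dissipative part of the Lindbladian in \eqref{eq:parent_ham_dissipative_free}, together with the fact that $H^{\mathrm{parent}}|_{\mathcal{H}'_{\mathrm{even}}}$ must have $0$ as its top eigenvalue by Lemma~\ref{lem:parent_ham_hermitian}), we read off that the top eigenvalue is $E_{01}=0$, and the next-largest eigenvalue is $-C e^{-4\beta^2\epsilon^2}\cosh(2\beta\epsilon)$ (attained by both $|00\rangle$ and $|11\rangle$), giving the claimed spectral gap $C e^{-4\beta^2\epsilon^2}\cosh(2\beta\epsilon)$.

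There is no real obstacle here beyond careful bookkeeping; the only subtle point is to fix the sign of $C$. The cleanest way to do so is to invoke Lemma~\ref{lem:parent_ham_hermitian}, which forces $0$ to be the top of the spectrum on the even sector, thereby pinning down $C>0$ (if instead $C<0$, the eigenvalue $-2Ce^{-4\beta^2\epsilon^2}\cosh(2\beta\epsilon)$ would exceed $0$, contradicting that lemma).
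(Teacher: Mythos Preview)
Your proof is correct and follows essentially the same route as the paper: both identify the single-particle eigenenergies $\pm\cosh(2\beta\epsilon)$ of the quadratic part and read off the resulting Fock-space spectrum, with you spelling out the four eigenvalues explicitly. Your added discussion of the sign of $C$ is a welcome piece of care that the paper leaves implicit; just note that Lemma~\ref{lem:parent_ham_hermitian} by itself does not state that $0$ is the \emph{top} eigenvalue on the even sector---the cleaner justification is the one you mention first, namely that the dissipative generator $\mathcal{L}^\dag$ is nonpositive under the KMS inner product.
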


\begin{proof}
    We note that in \eqref{eq:single_mode_parent_ham}, the part involving fermionic creation and annihilation operators 
    \[
    -i\hat{d}_1^\dag\hat{d}_2+i\hat{d}_2^\dag\hat{d}_1-\sinh(2\beta\epsilon)\hat{d}_1^\dag\hat{d}_1 + \sinh(2\beta\epsilon)\hat{d}_2^\dag\hat{d}_2
    \]
    has single-particle eigenenergies $\pm \cosh(2\beta\epsilon)$. Therefore the top eigenvalue of $H^{\mathrm{parent}}_{\mathrm{free}}$ is
    \[
    Ce^{-4\beta^2\epsilon^2}(\cosh(2\beta\epsilon)-\cosh(2\beta\epsilon))=0,
    \]
    and the second largest eigenvalue of it is
    \[
    Ce^{-4\beta^2\epsilon^2}(0-\cosh(2\beta\epsilon))=-Ce^{-4\beta^2\epsilon^2}\cosh(2\beta\epsilon).
    \]
    This then gives us the expression for the spectral gap.
\end{proof}

Next, we will deal with the more general case where the system consists of more than one fermionic modes. However, we will see that the general case can be reduced to this simple case.
We first rotate the Majorana operators $\gamma_j$ to turn the non-interacting part $H_0=\sum_{jk}h_{jk}\gamma_j\gamma_k$ into a standard form. We choose a $2n\times 2n$ real orthogonal matrix $Q=(Q_{jk})$ and perform the basis transform
\begin{equation}
    \label{eq:majorana_basis_transform}
    \gamma_j = \sum_{k} Q_{jk} \zeta_k.
\end{equation}
It is easy to see that $\zeta_k=\sum_j Q_{jk}\gamma_j$, and that $\braket{\zeta_k,\zeta_{k'}}=2\delta_{kk'}$ and therefore $\{\zeta_k\}$ for a valid set of Majorana operators. The orthogonal matrix $Q$ is chosen so that
\begin{equation}
\label{eq:free_canonical_form}
    H_0 = i\sum_{j=1}^n \epsilon_j \zeta_{2j-1}\zeta_{2j},
\end{equation}
for $\epsilon_j\geq 0$.
Such a $Q$ always exist because the coefficient matrix $(h_{jk})$ is Hermitian and pure imaginary.
We can see that the free Hamiltonian gets decoupled into terms with non-overlapping supports.
Here $\pm \epsilon_j$ are eigenvalues of the coefficient matrix $(h_{jk})$. Because $H_0$ is $(1,r_0)$-geometrically local as defined in Definition~\ref{defn:geometrically_local}, there are at most $Cr_0^D$ non-zero elements on each row of $(h_{jk})$, and therefore
\begin{equation}
    \label{eq:eigenvalue_H0_bound}
    |\epsilon_j|\leq \|H\|\leq Cr_0^D,
\end{equation}
where the constant $C>0$ depends only on $r_0$.

We will show that this basis transform decouples the free parent Hamiltonian as well. Essentially, this is true because the free parent Hamiltonian is a quadratic function of $\gamma_j$.

Recall that the free parent Hamiltonian $H^{\mathrm{parent}}_{\mathrm{free}}$ is of the form \eqref{eq:parent_hamiltonian_explicit_free}, and because $H^{\mathrm{parent}}_{\mathrm{C,free}}=0$, we have
\begin{equation}
    \begin{aligned}
        &H^{\mathrm{parent}}_{\mathrm{free}} = \Phi\Bigg( \sum_j\int\dd\omega \eta(\omega) \left( L_{\tilde{A}^{\mathrm{free}}_j(-\omega)}R_{\tilde{A}^{\mathrm{free}}_j(-\omega)^\dag}-\frac{1}{2}L_{\tilde{A}^{\mathrm{free}}_j(-\omega) \tilde{A}^{\mathrm{free}}_j(\omega)}-\frac{1}{2}R_{\tilde{A}^{\mathrm{free}}_j(\omega)^\dag \tilde{A}^{\mathrm{free}}_j(-\omega)^\dag}\right) \Bigg),
    \end{aligned}
\end{equation}
We define
\begin{equation}
    \label{eq:rotated_majorana_heisenberg}
    \zeta^{\mathrm{free}}_j(t) = e^{iH_0 t}\zeta_j e^{-iH_0 t}.
\end{equation}
We then have
\begin{equation}
    \gamma^{\mathrm{free}}_j(t) = \sum_k Q_{jk}\zeta^{\mathrm{free}}_k(t).
\end{equation}
With this, corresponding to $\tilde{A}^{\mathrm{free}}_j(\omega)$ in \eqref{eq:A_tilde_free} we define 
\begin{equation}
    \label{eq:xi_j}
    \xi_j(\omega) = e^{-\beta H_0/4}\left(\frac{1}{\sqrt{2\pi}}\int  \zeta^{\mathrm{free}}_j(t)  f(t) \mathrm{e}^{-i\omega t} \dd t\right) e^{\beta H_0/4},
\end{equation}
which is related to $\tilde{A}^{\mathrm{free}}_j(\omega)$ through
\begin{equation}
\label{eq:transform_A_tilde_j_free}
    \tilde{A}^{\mathrm{free}}_j(\omega) = \sum_{k} Q_{jk} \xi_k(\omega),
\end{equation}
Now we can consider how these transformations affect terms in \eqref{eq:parent_hamiltonian_explicit_free}. We have
\[
\sum_j L_{\tilde{A}^{\mathrm{free}}_j(-\omega)}R_{\tilde{A}^{\mathrm{free}}_j(-\omega)^\dag} = \sum_j \sum_{k,k'} Q_{jk} Q_{jk'} L_{\xi_k(-\omega)}R_{\xi_{k'}(-\omega)^\dag} = \sum_k L_{\xi_k(-\omega)}R_{\xi_{k}(-\omega)^\dag},
\]
where we have used $\sum_j Q_{jk}Q_{jk'}=\delta_{kk'}$.
Similarly
\[
\sum_j L_{\tilde{A}^{\mathrm{free}}_j(-\omega)\tilde{A}^{\mathrm{free}}_j(\omega)} = \sum_k L_{\xi_k(-\omega)\xi_k(\omega)},\quad \sum_j R_{\tilde{A}^{\mathrm{free}}_j(\omega)^\dag\tilde{A}^{\mathrm{free}}_j(-\omega)^\dag} = \sum_k R_{\xi_k(\omega)^\dag\xi_k(-\omega)^\dag}.
\]
From the above we can see that
\begin{equation}
    H^{\mathrm{parent}}_{\mathrm{free}} = \sum_{k=1}^{n} H^{\mathrm{free}}_k,
\end{equation}
where
\begin{equation}
    H^{\mathrm{free}}_k = \Phi\left(L_{\Xi_{kk}}+R_{\Xi_{kk}^\dag} + \int\dd\omega\eta(\omega) \left(L_{\xi_k(-\omega)}R_{\xi_{k}(-\omega)^\dag}-\frac{1}{2}L_{\xi_k(-\omega)\xi_k(\omega)}-\frac{1}{2}R_{\xi_k(\omega)^\dag\xi_k(-\omega)^\dag}\right)\right),
\end{equation}
only involves Majorana modes $\zeta_{2k-1}$ and $\zeta_{2k}$.

In the third quantization, we define
\begin{equation}
    \label{eq:rotated_third_quantized_majorana_modes}
    \hat{\zeta}_{2j-1} = \sum_k Q_{jk} \hat{\gamma}_{2k-1},\quad \hat{\zeta}_{2j} = \sum_k Q_{jk} \hat{\gamma}_{2k}.
\end{equation}
Then by \eqref{eq:repr_left_right_multiplication_more_majorana_ops} and linearity, we have
\[
    \begin{aligned}
        \varphi\circ L_{\zeta_{j_1}\cdots \zeta_{j_k}}\circ \varphi^{-1} &= \hat{\zeta}_{2j_1-1}\cdots \hat{\zeta}_{2j_k-1},\\
        \varphi\circ R_{\zeta_{j_1}\cdots \zeta_{j_k}}\circ \varphi^{-1} &= -i^k \hat{\zeta}_{2j_k}\cdots\hat{\zeta}_{2j_1} (-1)^{k\hat{n}}.
    \end{aligned}
\]
As a result Remark~\ref{rem:Phi_preserves_locality} corresponds to
\[
\Phi(L_{\zeta_{j_1}\cdots \zeta_{j_k}}) = \hat{\zeta}_{2j_1-1}\cdots \hat{\zeta}_{2j_k-1},\quad \Phi(R_{\zeta_{j_1}\cdots \zeta_{j_k}})=i^k \hat{\zeta}_{2j_k}\cdots\hat{\zeta}_{2j_1}.
\]
$H_k^{\mathrm{free}}$ is therefore supported only on $a$-Majorana modes $\hat{\gamma}_{4k-3}$, $\hat{\gamma}_{4k-2}$, $\hat{\gamma}_{4k-1}$, $\hat{\gamma}_{4k}$. Moreover, $H_k^{\mathrm{free}}$ is the parent Hamiltonian one get through \eqref{eq:parent_hamiltonian_explicit_free} by replacing $H_0$ with $i\epsilon_j(\zeta_1\zeta_2-\zeta_2\zeta_1)$. Therefore by Lemma~\ref{lem:single_mode_parent_ham} we have
\begin{equation}
\label{eq:H_free_k_explicit}
\begin{aligned}
    H_k^{\mathrm{free}} = Ce^{-4\beta^2\epsilon_k^2}&\big(-i\hat{d}_{2k-1}^\dag\hat{d}_{2k}+i\hat{d}_{2k}^\dag\hat{d}_{2k-1} \\
    &-\sinh(2\beta\epsilon_k)\hat{d}_{2k-1}^\dag\hat{d}_{2k-1} + \sinh(2\beta\epsilon_k)\hat{d}_{2k}^\dag\hat{d}_{2k}-\cosh(2\beta\epsilon_k)\big),
\end{aligned}
\end{equation}
where $\hat{d}_{2k-1}=(\hat{\zeta}_{4k-3}+i\hat{\zeta}_{4k-1})/2$, $\hat{d}_{2k}=(\hat{\zeta}_{4k-2}+i\hat{\zeta}_{4k})/2$, $\hat{\zeta}_j$, $j=1,2,\cdots,4n$, are third quantized $a$-Majorana operators defined through \eqref{eq:third_quantization_majorana_ops}, and $C$ is the same universal constant as in Lemma~\ref{lem:single_mode_parent_ham}.

We therefore have the following lemma:
\begin{lem}
    \label{lem:decouling_free_part}
    Let Majorana operators $\zeta_k$, $k=1,2,\cdots,2n$, be as defined in \eqref{eq:majorana_basis_transform} so that \eqref{eq:free_canonical_form} is satisfied. Then the parent Hamiltonian of the free part can be written as
    \[
    H^{\mathrm{parent}}_{\mathrm{free}} = \sum_k H^{\mathrm{free}}_k,
    \]
    where each $H^{\mathrm{free}}_k$ is given in \eqref{eq:H_free_k_explicit}.
\end{lem}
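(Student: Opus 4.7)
The plan is essentially to execute the decomposition that has been set up in the preceding discussion, checking the two nontrivial points carefully: (i) that the orthogonal rotation $Q$ really does decouple the free parent Hamiltonian across the pairs of Majorana modes $(\zeta_{2k-1},\zeta_{2k})$, and (ii) that each decoupled block coincides with the single-mode parent Hamiltonian to which Lemma~\ref{lem:single_mode_parent_ham} applies.

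First I would use Lemma~\ref{lem:free_parent_zero} to discard $H^{\mathrm{parent}}_{\mathrm{C,free}}$, so that $H^{\mathrm{parent}}_{\mathrm{free}}=H^{\mathrm{parent}}_{\mathrm{D,free}}$ as given by \eqref{eq:parent_ham_dissipative_free}. Then I would substitute the transformation $\gamma_j=\sum_k Q_{jk}\zeta_k$, which via \eqref{eq:transform_A_tilde_j_free} gives $\tilde A^{\mathrm{free}}_j(\omega)=\sum_k Q_{jk}\xi_k(\omega)$. Plugging this into the three bilinear super-operator terms and using orthogonality $\sum_j Q_{jk}Q_{jk'}=\delta_{kk'}$, the sums collapse to $\sum_k L_{\xi_k(-\omega)}R_{\xi_k(-\omega)^\dagger}$, $\sum_k L_{\xi_k(-\omega)\xi_k(\omega)}$, and $\sum_k R_{\xi_k(\omega)^\dagger\xi_k(-\omega)^\dagger}$. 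This is a purely algebraic rearrangement that uses linearity of $L$ and $R$ in each argument together with the real orthogonality of $Q$.

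Next I would argue that each $\xi_k(\omega)$ is supported only on the pair $\{\zeta_{2k-1},\zeta_{2k}\}$. Since $H_0=i\sum_j\epsilon_j\zeta_{2j-1}\zeta_{2j}$ is a sum of mutually commuting quadratic terms with disjoint supports, the Heisenberg evolution $\zeta^{\mathrm{free}}_k(t)=e^{iH_0t}\zeta_ke^{-iH_0t}$ only rotates within its pair, i.e.\ $\zeta^{\mathrm{free}}_{2k-1}(t)$ and $\zeta^{\mathrm{free}}_{2k}(t)$ are linear combinations of $\zeta_{2k-1}$ and $\zeta_{2k}$ with scalar coefficients. Likewise $e^{\pm\beta H_0/4}$ acts on the pair $(\zeta_{2k-1},\zeta_{2k})$ as an operator depending only on $\epsilon_k$, so conjugation in \eqref{eq:xi_j} preserves the support. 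Therefore $\xi_k(\omega)$ depends only on the $k$-th mode pair, and the super-operator inside $\Phi$ splits as a sum over $k$ of mutually disjoint-support terms, giving $H^{\mathrm{parent}}_{\mathrm{free}}=\sum_k H^{\mathrm{free}}_k$ with $H^{\mathrm{free}}_k$ as displayed in the excerpt just before the lemma statement.

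Finally I would identify each $H^{\mathrm{free}}_k$ with the single-mode parent Hamiltonian of Lemma~\ref{lem:single_mode_parent_ham}. Replacing the two-mode Hamiltonian there by $i\epsilon_k(\zeta_{2k-1}\zeta_{2k}-\zeta_{2k}\zeta_{2k-1})$, the construction that produces $H^{\mathrm{parent}}_{\mathrm{free}}$ from $H_0$ via \eqref{eq:parent_hamiltonian_explicit_free} produces exactly $H^{\mathrm{free}}_k$ (after rewriting the third-quantized Majorana operators associated with $\zeta_{2k-1},\zeta_{2k}$ as $\hat\zeta_{4k-3},\hat\zeta_{4k-2},\hat\zeta_{4k-1},\hat\zeta_{4k}$, which matches the convention of \eqref{eq:rotated_third_quantized_majorana_modes}). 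Applying \eqref{eq:single_mode_parent_ham} with $\epsilon\mapsto\epsilon_k$ then yields the explicit formula \eqref{eq:H_free_k_explicit} for each block. The only step requiring genuine care, and which I anticipate as the main obstacle, is the bookkeeping that the third quantization map $\Phi$ respects this mode-pair decomposition: concretely, that $\Phi$ applied to the $k$-th block produces an operator on the four $a$-Majorana modes attached to $(\zeta_{2k-1},\zeta_{2k})$ alone, with no residual parity phase from the $(-1)^{\hat n}$ in Definition~\ref{defn:Phi_thrid_quantization}. This follows because each term in the $k$-th block is of the form $L_W R_{W'}$ with $W,W'$ both even in the original Majoranas (products of two rotated Majoranas or two jump operators, each of which is odd), so the odd-parity dressing never enters, and the locality remark (Remark~\ref{rem:Phi_preserves_locality}) then guarantees the claimed disjoint supports.
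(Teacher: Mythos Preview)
Your approach is essentially the paper's: discard the coherent free part via Lemma~\ref{lem:free_parent_zero}, use the orthogonality of $Q$ to collapse the $j$-sums in the dissipative part to sums over single mode-pairs, observe that $\xi_k(\omega)$ is supported on $\{\zeta_{2k-1},\zeta_{2k}\}$ because $H_0$ is block-diagonal in that basis, and then identify each block with the single-mode computation of Lemma~\ref{lem:single_mode_parent_ham}.

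One correction is needed in your last paragraph. In the term $L_{\xi_k(-\omega)}R_{\xi_k(-\omega)^\dagger}$ the factors $W=\xi_k(-\omega)$ and $W'=\xi_k(-\omega)^\dagger$ are each \emph{odd} (linear in Majoranas), not even; so the odd-parity branch of Definition~\ref{defn:Phi_thrid_quantization} \emph{is} invoked. This does not spoil the argument: the extra $(-1)^{\hat n}$ inserted by $\Phi$ for odd $R_{W'}$ is exactly what cancels the $(-1)^{\hat n}$ appearing in \eqref{eq:repr_left_right_multiplication_more_majorana_ops}, which is the whole point of the construction and the content of Remark~\ref{rem:Phi_preserves_locality}. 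The paper makes this explicit by checking that the locality formulas carry over to the rotated basis, $\Phi(L_{\zeta_{j_1}\cdots\zeta_{j_m}})=\hat\zeta_{2j_1-1}\cdots\hat\zeta_{2j_m-1}$ and $\Phi(R_{\zeta_{j_1}\cdots\zeta_{j_m}})=-i^m\hat\zeta_{2j_m}\cdots\hat\zeta_{2j_1}$, whence each $H^{\mathrm{free}}_k$ is supported on the four rotated $a$-Majorana modes $\hat\zeta_{4k-3},\hat\zeta_{4k-2},\hat\zeta_{4k-1},\hat\zeta_{4k}$ regardless of the parity of the individual factors.
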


We are then ready to prove Proposition~\ref{prop:non_interacting_part_gap}.
\begin{proof}[Proof of Proposition~\ref{prop:non_interacting_part_gap}]
    Because $H^{\mathrm{parent}}_{\mathrm{free}} = \sum_k H^{\mathrm{free}}_k$ and each $H^{\mathrm{free}}_k$ are supported on a non-overlapping set of Majorana operators in the $\{\hat{\zeta}_j\}$ basis, the top eigenvalue of $H^{\mathrm{parent}}_{\mathrm{free}}$ is then the sum of the top eigenvalues of each $H^{\mathrm{free}}_k$, which is $0$. 
    The spectral gap $g$ of $H^{\mathrm{parent}}_{\mathrm{free}}$ is the minimum among the spectral gaps of each $H^{\mathrm{free}}_k$, and by Corollary~\ref{cor:gap_single_mode},
    \[
    g = \min_k Ce^{-4\beta^2\epsilon_k^2}\cosh(2\beta\epsilon_k),
    \]
    which because of \eqref{eq:eigenvalue_H0_bound} is lower bounded by a constant $\Delta>0$ that depends only on $\beta,D,r_0$.
\end{proof}

\section{Structure of the dissipative part}
\label{sec:dissipative}

In this section we will show that both $\tilde{A}^{\mathrm{free}}_j(\omega)$ and 
$\tilde{A}^{\mathrm{int}}_j(\omega)$ defined in \eqref{eq:A_tilde_free} and \eqref{eq:A_tilde_B_tilde_int} are quasi-local, and that the norm of $\tilde{A}^{\mathrm{int}}_j(\omega)$ is small (on the order $U$). To do this we will first express the jump operator $\tilde{A}^{\mathrm{free}}_j(\omega)$ solely in terms of $e^{iHt}\gamma_j e^{-iHt}$ and avoid having imaginary time evolution $e^{-\beta H}$ showing up in the expression.

For an operator $A\in B(\mathcal{H})$, we introduce the following notation: let $\{E_i\}$ be the eigenvalues of $H$, and $\{\Pi_i\}$ be the corresponding eigenspace projection operators, then we denote
\begin{equation}
    A_\nu = \sum_{ij:E_i-E_j=\nu} \Pi_i A \Pi_j.
\end{equation}
There is only a finite set of $\nu$ for which $A_\nu\neq 0$, they are called \emph{Bohr frequencies}.

Using the decomposition $\gamma_j = \sum_{\nu\in B_H} (\gamma_j)_\nu$, we have
\begin{equation}
    \tilde{A}_j(\omega) = \frac{1}{\sqrt{2\pi}}\sum_{\nu} \int e^{-\beta\nu/4}e^{i\nu t}(\gamma_j)_\nu e^{-i\omega t}f(t)\dd t.
\end{equation}
Now we define
\begin{equation}
\label{eq:F1_defn}
    F_1(\nu,\omega) = \frac{1}{\sqrt{2\pi}}\int e^{-\beta \nu/4}e^{i(\nu-\omega)t}f(t)\dd t,
\end{equation}
then we have
\begin{equation}
\label{eq:A_tilde_LC_time_evolution}
\begin{aligned}
    \tilde{A}_j(\omega) &= \sum_{\nu} F_1(\nu,\omega) (\gamma_j)_\nu \\
    &= \frac{1}{\sqrt{2\pi}}\sum_{\nu} \int \check{F}_1(t,\omega)e^{i\nu t}(\gamma_j)_\nu \dd t \\ 
    &= \frac{1}{\sqrt{2\pi}}\int \check{F}_1(t,\omega) e^{iHt}\gamma_j e^{-iHt}\dd t,
\end{aligned}
\end{equation}
where
\begin{equation}
\label{eq:F1_inverse_fourier_transform}
    \begin{aligned}
        \check{F}_1(t,\omega) &= \frac{1}{\sqrt{2\pi}}\int F_1(\nu,\omega)e^{-i\nu t}\dd \nu \\
        &=\frac{1}{2\pi}\int\dd\nu\int\dd t' e^{-\beta\nu/4}e^{i(\nu-\omega)t'}f(t')e^{-i\nu t} \\
        &=\frac{e^{1/16}}{\sqrt{\beta\sqrt{\pi/2}}}e^{-t^2/\beta^2}e^{-it(\omega-1/(2\beta))}e^{-\beta\omega/4}.
    \end{aligned}
\end{equation}
Similarly, for $\tilde{A}^{\mathrm{free}}_j(\omega)$ we have
\begin{equation}
    \label{eq:A_tilde_LC_time_evolution_free}
    \tilde{A}^{\mathrm{free}}_j(\omega) =  \frac{1}{\sqrt{2\pi}}\int \check{F}_1(t,\omega) e^{iH_0t}\gamma_j e^{-iH_0t}\dd t.
\end{equation}

\subsection{Quasi-locality of the jump operator}
\label{sec:quasi_locality_jump}

We will first prove that $\tilde{A}_j(\omega)$ (defined in \eqref{eq:A_tilde_and_B_tilde}) can be well-approximated by local operators.
\begin{lem}
    \label{lem:quasi_locality_jump}
    There exists universal constants $C,\mu_1,\mu_2>0$ such that for any $r\geq 1$, there exists an operator $A^{\mathrm{loc}}$ supported on the ball with radius $r$ centered at $j$ such that
    \begin{equation}
    \label{eq:jump_op_local_approx_err}
        \|\tilde{A}_j(\omega)-A^{\mathrm{loc}}\|\leq C\left(\sqrt{\frac{\beta}{r}}e^{-\mu_1 r}+\frac{1}{\sqrt{\beta}}e^{-\mu_2 r^2/(J\beta)^2}\right)e^{-\beta\omega/4},
    \end{equation}
    where $J$ is the Lieb-Robinson velocity of $H$ on $\Lambda$ for which an upper bound is given in \eqref{eq:LR_velocity_full_ham}. 
    This implies that there exists constants $C',\mu>0$ that depend only on $\beta$ and $J$ such that
    \begin{equation}
    \label{eq:jump_op_local_approx_err_simple}
        \|\tilde{A}_j(\omega)-A^{\mathrm{loc}}\|\leq C'e^{-\mu r}e^{-\beta\omega/4}.
    \end{equation}
\end{lem}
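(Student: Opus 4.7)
The plan is to start from the integral representation in \eqref{eq:A_tilde_LC_time_evolution},
\[
\tilde{A}_j(\omega) = \frac{1}{\sqrt{2\pi}}\int_{-\infty}^\infty \check{F}_1(t,\omega)\,e^{iHt}\gamma_j e^{-iHt}\,\dd t,
\]
and to produce a localized approximant by simultaneously truncating the time integral to $|t|\leq T$ and replacing the full Heisenberg evolution by the evolution generated by $H_B$, the sum of those terms in $H$ whose support lies inside the ball $B$ of radius $r$ around $\mathfrak{s}(j)$. Concretely, I take
\[
A^{\mathrm{loc}} = \frac{1}{\sqrt{2\pi}}\int_{-T}^{T}\check{F}_1(t,\omega)\,e^{iH_B t}\gamma_j e^{-iH_B t}\,\dd t,
\]
which is supported on $B$ by construction. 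The error $\tilde{A}_j(\omega)-A^{\mathrm{loc}}$ then splits into a time-tail piece on $|t|>T$ and a Lieb-Robinson piece on $|t|\leq T$, which I estimate separately and then balance by choosing $T$ appropriately.

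For the tail piece I use $\|e^{iHt}\gamma_j e^{-iHt}\|=1$ and the explicit form of $|\check{F}_1(t,\omega)|$ from \eqref{eq:F1_inverse_fourier_transform}, which is the Gaussian $\frac{e^{1/16}}{\sqrt{\beta\sqrt{\pi/2}}}e^{-t^2/\beta^2}e^{-\beta\omega/4}$, so that a standard tail estimate $\int_T^\infty e^{-t^2/\beta^2}\dd t\leq (\beta^2/(2T))e^{-T^2/\beta^2}$ delivers a contribution decaying like $e^{-T^2/\beta^2}$ with only mild polynomial prefactors in $\beta$, $J$, and $T$. For the Lieb-Robinson piece I apply Lemma~\ref{lem:LR_bound} with $X=\{\mathfrak{s}(j)\}$ and $\Omega=B$, giving
\[
\bigl\|e^{iHt}\gamma_j e^{-iHt}-e^{iH_B t}\gamma_j e^{-iH_B t}\bigr\|\leq \frac{(2J|t|)^r}{r!},
\]
and I integrate this against $|\check{F}_1(t,\omega)|$ on $|t|\leq T$. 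Using Stirling's bound $r!\geq \sqrt{2\pi r}(r/e)^r$, the quantity $(2JT)^r/r!$ is controlled by $(2eJT/r)^r/\sqrt{2\pi r}$, while the remaining Gaussian integral is at most $\|\check{F}_1(\cdot,\omega)\|_{L^1}$, which is $O(\sqrt{\beta})\,e^{-\beta\omega/4}$.

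The balancing step is to take $T=r/(4eJ)$, which forces $(2eJT/r)^r\leq 2^{-r}$ and so yields a Lieb-Robinson contribution $\lesssim \sqrt{\beta/r}\,e^{-\mu_1 r}e^{-\beta\omega/4}$ for a universal $\mu_1>0$; the same $T$ turns the Gaussian tail into a bound of the form $\lesssim \beta^{-1/2}\,e^{-\mu_2 r^2/(J\beta)^2}e^{-\beta\omega/4}$ after absorbing the polynomial prefactors into a slightly smaller $\mu_2>0$, using that $r^2/(J\beta)^2$ grows strictly faster than any polynomial in $r$, $\beta$, $J$. Summing the two estimates gives \eqref{eq:jump_op_local_approx_err}, and \eqref{eq:jump_op_local_approx_err_simple} follows by coarsening constants once $\beta$ and $J$ are held fixed. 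The main technical care is in the Lieb-Robinson step: one must translate the factorial into a clean exponential-in-$r$ decay via Stirling and pick $T$ so that the LR and tail errors are of comparable size; otherwise the argument is a routine splitting estimate.
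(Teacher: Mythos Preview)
Your approach is the same as the paper's: truncate the time integral at $|t|\le T$, replace the full evolution by the one generated by the Hamiltonian restricted to the ball, bound the tail via the Gaussian decay of $\check F_1$ and the truncated piece via Lieb--Robinson plus Stirling, then pick $T\propto r/J$. The one flaw is the absorption step for the tail: the claim that $r^2/(J\beta)^2$ outgrows any polynomial in $r,\beta,J$ is false (take $r\sim J\beta$, so the exponent is $O(1)$ while $\beta$ is arbitrary), hence you cannot shrink $\mu_2$ to swallow the $J\beta^2/r$ prefactor uniformly in all parameters; instead use the uniform Gaussian-tail bound $\int_T^\infty e^{-t^2/\beta^2}\,\dd t\le C\beta\,e^{-T^2/\beta^2}$ (valid for every $T>0$ with a universal $C$, by splitting on $T\gtrless\beta$), which combined with the $\beta^{-1/2}$ already present in $|\check F_1|$ produces the required prefactor directly with no absorption needed.
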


\begin{proof}
    We first decompose $\tilde{A}_j(\omega)=\mathrm{I}+\mathrm{II}$, where
    \begin{equation}
        \mathrm{I} = \frac{1}{\sqrt{2\pi}}\int_{-T}^{T}\check{F}_1(t,\omega)e^{iHt}\gamma_j e^{-iHt}\dd t,\quad \mathrm{II} = \frac{1}{\sqrt{2\pi}}\int_{|t|>T}\check{F}_1(t,\omega)e^{iHt}\gamma_j e^{-iHt}\dd t.
    \end{equation}
    In the above $T>0$ can be arbitrarily chosen.  
    From \eqref{eq:F1_inverse_fourier_transform}, we can see that
    \begin{equation}
        |\check{F}_1(t,\omega)|\leq C_1 e^{-t^2/\beta^2}e^{-\beta\omega /4}.
    \end{equation}
    Therefore we have
    \begin{equation}
        \label{eq:locality_jump_II}
        \|\mathrm{II}\|\leq \frac{C_2}{\sqrt{\beta}} e^{-T^2/\beta^2}e^{-\beta\omega /4}.
    \end{equation}
    We then approximate $e^{iHt}\gamma_j e^{-iHt}$ with a local operator. Let $H'$ be the Hamiltonian consisting of terms of $H$ that are supported on the ball around $j$ with radius $r$, then using the Lieb-Robinson bound \cite[Eq.~(A1)]{rouze2024efficient}, we have
    \begin{equation}
        \|e^{iHt}\gamma_j e^{-iHt}-e^{iH't}\gamma_j e^{-iH't}\|\leq \frac{(2J|t|)^r}{r!}.
    \end{equation}
    Here $e^{iH't}\gamma_j e^{-iH't}$ is supported only on this ball with radius $r$. We then let
    \begin{equation}
        A^{\mathrm{loc}} = \frac{1}{\sqrt{2\pi}}\int_{-T}^{T}\check{F}_1(t,\omega)e^{iH't}\gamma_j e^{-iH't}\dd t.
    \end{equation}
    This is a good approximate of $\mathrm{I}$:
    \begin{equation}
        \|\mathrm{I}-A^{\mathrm{loc}}\|\leq \frac{1}{\sqrt{2\pi}}\int_{-T}^T |\check{F}_1(t,\omega)|\frac{(2J|t|)^r}{r!}\dd t\leq C_3\sqrt{\beta} \frac{(2JT)^r}{r!}e^{-\beta\omega/4}.
    \end{equation}
    Therefore
    \begin{equation}
        \|\tilde{A}_j(\omega)-A^{\mathrm{loc}}\|\leq \|\mathrm{I}-A^{\mathrm{loc}}\|+\|\mathrm{II}\|\leq C_3\sqrt{\beta} \frac{(2JT)^r}{r!}e^{-\beta\omega/4}+\frac{C_2}{\sqrt{\beta}} e^{-T^2/\beta^2}e^{-\beta\omega /4}.
    \end{equation}
    Using Stirling's formula we have $r!\geq \sqrt{2\pi r}(r/e)^r$ for $r\geq 1$, and therefore we further have
    \begin{equation}
        \|\tilde{A}_j(\omega)-A^{\mathrm{loc}}\|\leq C\left(\sqrt{\frac{\beta}{r}}\left(\frac{2eJT}{r}\right)^r+\frac{1}{\sqrt{\beta}}e^{-T^2/\beta^2}\right)e^{-\beta\omega /4}.
    \end{equation}
    Since we can arbitrarily choose $T$, we let $T=r/(2e^2 J)$, and then we have \eqref{eq:jump_op_local_approx_err} with $\mu_1=1$ and $\mu_2=1/4e^4$.
\end{proof}

\subsection{Quasi-locality of the free part}
\label{sec:quasi_locality_free_part}

By the exact same way as for Lemma~\ref{lem:quasi_locality_jump} we can prove the following:
\begin{lem}
    \label{lem:quasi_locality_jump_free}
    There exists universal constants $C,\mu_1,\mu_2>0$ such that for any $r\geq 1$, there exists an operator $A^{\mathrm{loc}}(\omega)$ supported on the ball with radius $r$ centered at $j$ such that
    \begin{equation}
    \label{eq:jump_op_local_approx_err_free}
        \|\tilde{A}^{\mathrm{free}}_j(\omega)-A^{\mathrm{loc}}(\omega)\|\leq C\left(\sqrt{\frac{\beta}{r}}e^{-\mu_1 r}+\frac{1}{\sqrt{\beta}}e^{-\mu_2 r^2/(J_0\beta)^2}\right)e^{-\beta\omega/4},
    \end{equation}
    where $J_0$ is the Lieb-Robinson velocity of $H_0$ on $\Lambda$ for which an upper bound is given in \eqref{eq:LR_velocity_non_interacting}. 
    This implies that there exists constants $C',\mu>0$ that depend only on $\beta$ and $J_0$ such that
    \begin{equation}
    \label{eq:jump_op_local_approx_err_simple_free}
        \|\tilde{A}^{\mathrm{free}}_j(\omega)-A^{\mathrm{loc}}(\omega)\|\leq C'e^{-\mu r}e^{-\beta\omega/4}.
    \end{equation}
    Moreover, both $\tilde{A}^{\mathrm{free}}_j(\omega)$ and $A^{\mathrm{loc}}(\omega)$ are both linear combinations of Majorana operators.
\end{lem}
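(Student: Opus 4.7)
The plan is to follow the proof of Lemma~\ref{lem:quasi_locality_jump} essentially verbatim, with $H$ replaced by $H_0$ throughout, and then verify the additional claim that both $\tilde{A}^{\mathrm{free}}_j(\omega)$ and its local approximation $A^{\mathrm{loc}}(\omega)$ are linear combinations of Majorana operators.

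First I would start from the integral representation \eqref{eq:A_tilde_LC_time_evolution_free}, namely
\[
\tilde{A}^{\mathrm{free}}_j(\omega) = \frac{1}{\sqrt{2\pi}}\int \check{F}_1(t,\omega)\, e^{iH_0 t}\gamma_j e^{-iH_0 t}\,\dd t,
\]
and split the integration domain into $|t|\le T$ and $|t|>T$ for a parameter $T>0$ to be chosen later. The tail $|t|>T$ contributes $\mathcal{O}(\beta^{-1/2}e^{-T^2/\beta^2}e^{-\beta\omega/4})$ using the Gaussian bound on $|\check{F}_1(t,\omega)|$ coming from \eqref{eq:F1_inverse_fourier_transform}, exactly as in Lemma~\ref{lem:quasi_locality_jump}.

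For the main part, I would apply the Lieb-Robinson bound (Lemma~\ref{lem:LR_bound}) to $H_0$, whose Lieb-Robinson velocity is $J_0$ as bounded in \eqref{eq:LR_velocity_non_interacting}, to approximate $e^{iH_0 t}\gamma_j e^{-iH_0 t}$ by $e^{iH_0' t}\gamma_j e^{-iH_0' t}$, where $H_0'$ is the restriction of $H_0$ to terms supported in the ball of radius $r$ around $j$. This incurs an error of order $(2J_0|t|)^r/r!$. Defining
\[
A^{\mathrm{loc}}(\omega) = \frac{1}{\sqrt{2\pi}}\int_{-T}^{T}\check{F}_1(t,\omega)\, e^{iH_0' t}\gamma_j e^{-iH_0' t}\,\dd t,
\]
and combining both error contributions gives a bound of the form $C(\sqrt{\beta}(2eJ_0T/r)^r + \beta^{-1/2}e^{-T^2/\beta^2})e^{-\beta\omega/4}$ via Stirling, and choosing $T=r/(2e^2 J_0)$ yields \eqref{eq:jump_op_local_approx_err_free}. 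The simpler form \eqref{eq:jump_op_local_approx_err_simple_free} follows by absorbing $\beta$-dependent prefactors into the exponent, as in Lemma~\ref{lem:quasi_locality_jump}.

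The one genuinely new ingredient, which I expect to be the only non-routine point, is the ``Moreover'' statement. Here I would use the fact that $H_0$ is quadratic in Majorana operators, so by Lemma~\ref{lem:fermion_basis_transform} (the standard free-fermion transformation used already in Section~\ref{sec:spectral_gap_non_interacting}), $e^{iH_0 t}\gamma_j e^{-iH_0 t} = \sum_k (e^{2iht})_{jk}\gamma_k$ remains a linear combination of single Majorana operators for every $t$. Since $A^{\mathrm{loc}}(\omega)$ is defined using the analogous quadratic Hamiltonian $H_0'$, the same argument applies with $H_0'$ in place of $H_0$, so $e^{iH_0' t}\gamma_j e^{-iH_0' t}$ is also a linear combination of Majoranas. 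Both $\tilde{A}^{\mathrm{free}}_j(\omega)$ and $A^{\mathrm{loc}}(\omega)$ are then integrals of such linear combinations against a scalar kernel $\check{F}_1(t,\omega)$, and hence themselves linear combinations of Majorana operators, which concludes the proof.
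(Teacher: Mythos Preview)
Your proposal is correct and matches the paper's own proof essentially verbatim: the paper also states that the quantitative bound is identical to Lemma~\ref{lem:quasi_locality_jump} with $H_0$ in place of $H$ and $J_0$ in place of $J$, and for the ``Moreover'' claim it uses exactly your argument that $H_0$ (and its local restriction $H_0'$) are quadratic, so conjugation by $e^{iH_0 t}$ keeps $\gamma_j$ a linear combination of Majoranas. The only minor slip is the coefficient in the free evolution formula (Lemma~\ref{lem:fermion_basis_transform} gives $e^{4iht}$ rather than $e^{2iht}$), but this is irrelevant to the argument.
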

\begin{proof}
    We only need to prove the statement in the last sentence as the rest is the same as for Lemma~\ref{lem:quasi_locality_jump}.
    We note that because $H_0$ is quadratic and $\gamma_j$ is a Majorana operator, $e^{iH_0 t}\gamma_j e^{-iH_0 t}$ is a linear combination of Majorana operators. This ensures that $\tilde{A}^{\mathrm{free}}_j(\omega)$ is also a linear combination of Majorana operators.
    In the construction of $A^{\mathrm{loc}}(\omega)$, we restrict the Hamiltonian $H_0$ to a local region, but it is still quadratic. Therefore the same also holds for $A^{\mathrm{loc}}(\omega)$.
\end{proof}
 We therefore have the following corollary for the coefficients when expanding $\tilde{A}^{\mathrm{free}}_j(\omega)$ as a linear combination of Majorana operators:
\begin{cor}
    \label{cor:coef_jump_op_free}
    The free part of the jump operator as defined in \eqref{eq:A_tilde_free} satisfies
    \[
    \tilde{A}^{\mathrm{free}}_j(\omega) = \sum_{k} c_k \gamma_k,
    \]
    where $c_k\in\CC$, and for any $r>0$,
    \[
    \sum_{k:d(j,k)>r}|c_k|^2\leq C'^2 e^{-2\mu r}e^{-\beta\omega/2}.
    \]
    Here $C'$ and $\mu$ are the same as in Lemma~\ref{lem:quasi_locality_jump_free}.\footnote{\label{fn:distance_notation} With a slight abuse of notation, here we use $d(j,k)$ to mean $d(\mathfrak{s}(j),\mathfrak{s}(k))$ where $d(\cdot,\cdot)$ is the Euclidean distance on lattice $\Lambda$ and $\mathfrak{s}(\cdot)$ maps a Majorana mode to the lattice site it is on.}
\end{cor}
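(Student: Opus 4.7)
The plan is to reduce the statement to Lemma~\ref{lem:quasi_locality_jump_free} together with a basic norm inequality for linear combinations of single Majorana operators. By Lemma~\ref{lem:quasi_locality_jump_free}, $\tilde{A}^{\mathrm{free}}_j(\omega)$ is itself a linear combination of Majorana operators, so writing $\tilde{A}^{\mathrm{free}}_j(\omega) = \sum_k c_k \gamma_k$ is justified. For any $r>0$, the same lemma produces a local approximant $A^{\mathrm{loc}}(\omega)$ supported on the ball of radius $r$ around $j$ that is also a linear combination of single Majoranas, i.e.\ $A^{\mathrm{loc}}(\omega) = \sum_{k:\,d(j,k)\leq r} c^{\mathrm{loc}}_k \gamma_k$, and with $\|\tilde{A}^{\mathrm{free}}_j(\omega) - A^{\mathrm{loc}}(\omega)\| \leq C' e^{-\mu r} e^{-\beta\omega/4}$.

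The next step is the key inequality: for any linear combination $A = \sum_k d_k \gamma_k$ of distinct Majorana operators, one has $\sum_k |d_k|^2 \leq \|A\|^2$. This follows from orthogonality $\Tr[\gamma_j \gamma_k] = 2^n \delta_{jk}$, which gives $\|A\|_F^2 = 2^n \sum_k |d_k|^2$, combined with the general bound $\|A\|_F^2 \leq \dim(\mathcal{H})\,\|A\|^2 = 2^n\|A\|^2$ valid for any operator on $\mathcal{H}$. Cancelling the factor of $2^n$ yields the claim.

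Applying this inequality to the difference $\tilde{A}^{\mathrm{free}}_j(\omega) - A^{\mathrm{loc}}(\omega) = \sum_{k:\,d(j,k)\leq r}(c_k - c^{\mathrm{loc}}_k)\gamma_k + \sum_{k:\,d(j,k)> r} c_k \gamma_k$, whose expansion includes all of the $c_k$ for $k$ outside the ball (since $A^{\mathrm{loc}}(\omega)$ has no Majorana components there), gives
\[
\sum_{k:\,d(j,k)>r} |c_k|^2 \;\leq\; \|\tilde{A}^{\mathrm{free}}_j(\omega) - A^{\mathrm{loc}}(\omega)\|^2 \;\leq\; C'^2 e^{-2\mu r} e^{-\beta\omega/2},
\]
which is exactly the desired conclusion. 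There is no substantive obstacle here; the proof is essentially a one-line deduction from Lemma~\ref{lem:quasi_locality_jump_free} once the norm-vs-coefficients inequality for Majorana expansions is recorded. The only point that requires care is emphasizing that $A^{\mathrm{loc}}(\omega)$ from Lemma~\ref{lem:quasi_locality_jump_free} is in fact a linear combination of single Majorana operators (not a more general local operator), so that the coefficients for Majorana modes outside the ball are exactly zero rather than merely small.
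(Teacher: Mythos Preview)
Your proposal is correct and follows essentially the same route as the paper: invoke Lemma~\ref{lem:quasi_locality_jump_free} to get the linear-in-Majorana local approximant, then bound the tail coefficients by the operator norm of the difference. The only cosmetic difference is that you justify $\sum_k |d_k|^2 \leq \|A\|^2$ via the Frobenius-norm inequality $\|A\|_F^2 \leq \dim(\mathcal{H})\|A\|^2$, whereas the paper cites its explicit norm formula for linear combinations of Majoranas (Lemma~\ref{lem:norm_linear_comb_majorana}); both yield the same inequality.
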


\begin{proof}
    By Lemma~\ref{lem:quasi_locality_jump_free}, we can write $\tilde{A}^{\mathrm{free}}_j(\omega) = \sum_{k} c_k \gamma_k$, and for any $r>0$, we have
    \[
    A^{\mathrm{loc}}(\omega) = \sum_{k:d(j,k)\leq r} c'_k \gamma_k,
    \]
    such that $\|\tilde{A}^{\mathrm{free}}_j(\omega)-A^{\mathrm{loc}}(\omega)\|\leq C'e^{-\mu r}e^{-\beta\omega/4}$. By Lemma~\ref{lem:norm_linear_comb_majorana}, we have
    \[
    \sum_{k:d(j,k)> r} |c_k|^2\leq \sum_{k:d(j,k)> r} |c_k|^2 + \sum_{k:d(j,k)\leq r} |c_k-c'_k|^2\leq \|\tilde{A}^{\mathrm{free}}_j(\omega)-A^{\mathrm{loc}}(\omega)\|^2\leq C'^2 e^{-2\mu r}e^{-\beta\omega/2}.
    \]
\end{proof}

This structure for $\tilde{A}^{\mathrm{free}}_j(\omega)$ gives us a structural result for $H_{\mathrm{D,free}}^{\mathrm{parent}}$, i.e., the part of the parent Hamiltonian corresponding to the free part of the dissipation terms:
\begin{lem}
\label{lem:quasi_locality_free_dissipative}
    For $H_{\mathrm{D,free}}^{\mathrm{parent}}$ defined in \eqref{eq:parent_ham_dissipative_free}, there exists a constant $C$ that depends only on $J_0$ in Lemma~\ref{lem:quasi_locality_jump_free} and $\beta$ such that
    \[
    H_{\mathrm{D,free}}^{\mathrm{parent}} = \sum_{l,l'=1}^{4n} m_{l,l'}\hat{\gamma}_{l}\hat{\gamma}_{l'},
    \]
    where
    \[
    |m_{l,l'}|\leq C e^{-\mu d(\mathfrak{s}'(l),\mathfrak{s}'(l'))}.
    \]
    Here $\mathfrak{s}'(l)$ indicate the lattice site in $\Lambda$ on which the $a$-Majorana mode $l$ resides.
    In the above the constant $\mu>0$ is the same as in Lemma~\ref{lem:quasi_locality_jump_free}.
\end{lem}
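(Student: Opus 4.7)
The plan is to expand each jump operator $\tilde{A}^{\mathrm{free}}_j(\omega)$ as a linear combination of Majorana operators, substitute into the three super-operator terms in \eqref{eq:parent_ham_dissipative_free}, apply the explicit formulas for $\Phi$ to convert each super-operator into a quadratic monomial in the $\hat{\gamma}$'s, and then propagate the exponential decay from Corollary~\ref{cor:coef_jump_op_free} through the sum over $j$ and the integral over $\omega$.

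Concretely, I would write $\tilde{A}^{\mathrm{free}}_j(\omega) = \sum_k c_{j,k}(\omega)\gamma_k$. The $\ell^2$-tail estimate of Corollary~\ref{cor:coef_jump_op_free} yields a pointwise bound $|c_{j,k}(\omega)| \leq C e^{-\mu d(\mathfrak{s}(j),\mathfrak{s}(k))} e^{-\beta\omega/4}$ (by applying the tail estimate at $r = d(\mathfrak{s}(j),\mathfrak{s}(k)) - 1$ so that $c_{j,k}$ itself appears in the left-hand sum). Substituting into \eqref{eq:parent_ham_dissipative_free} turns each of the three terms into a double sum over Majorana indices $k_1,k_2$ weighted by products of $c_{j,k_i}(\pm\omega)$ and its complex conjugate. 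Applying $\Phi$ via Definition~\ref{defn:Phi_thrid_quantization} together with \eqref{eq:repr_left_right_multiplication_more_majorana_ops} then converts each super-operator $L_{\gamma_{k_1}}R_{\gamma_{k_2}}$, $L_{\gamma_{k_1}\gamma_{k_2}}$, $R_{\gamma_{k_1}\gamma_{k_2}}$ into a scalar multiple of a product $\hat{\gamma}_l\hat{\gamma}_{l'}$ of two $a$-Majorana operators (plus a scalar when $k_1=k_2$ from $\gamma_k^2=1$, which contributes only to an overall additive constant or to the identity). By Remark~\ref{rem:a_fermion_geometry} the resulting indices satisfy $\mathfrak{s}'(l) = \mathfrak{s}(k_1)$ and $\mathfrak{s}'(l') = \mathfrak{s}(k_2)$, so each coefficient $m_{l,l'}$ reduces to a finite combination of expressions of the form $\int \dd\omega\, \eta(\omega)\sum_j c_{j,k_1}(\pm\omega) \overline{c_{j,k_2}(\pm\omega)}$.

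To bound $|m_{l,l'}|$, the plan is to apply the pointwise bound and the triangle inequality to obtain
\[
|m_{l,l'}| \leq C' \left(\int \dd\omega\, \eta(\omega)\, e^{\beta\omega/2}\right)\sum_j e^{-\mu d(\mathfrak{s}(j),\mathfrak{s}(k_1))} e^{-\mu d(\mathfrak{s}(j),\mathfrak{s}(k_2))}.
\]
The $\omega$-integral is a finite $\beta$-dependent constant since $\eta(\omega) e^{\beta\omega/2}$ is a shifted Gaussian. For the lattice sum, I would split the exponent as $\mu = \mu/2 + \mu/2$ and use the triangle inequality $d(\mathfrak{s}(k_1),\mathfrak{s}(k_2)) \leq d(\mathfrak{s}(j),\mathfrak{s}(k_1)) + d(\mathfrak{s}(j),\mathfrak{s}(k_2))$ on one half, pulling out $e^{-(\mu/2) d(\mathfrak{s}(k_1),\mathfrak{s}(k_2))}$ and leaving $\sum_j e^{-(\mu/2)d(\mathfrak{s}(j),\mathfrak{s}(k_1))}$, which is bounded by a constant depending only on $\mu$ and $D$ by a standard geometric-series argument on a cubic lattice. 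This gives $|m_{l,l'}| \leq C e^{-(\mu/2) d(\mathfrak{s}'(l),\mathfrak{s}'(l'))}$, establishing the claim after renaming the decay rate. The only step requiring any real care is the bookkeeping of the $L$/$R$ index patterns and the diagonal $k_1=k_2$ contributions; no genuine mathematical obstacle arises, since the exponential decay of the jump operator and the Gaussian weight $\eta(\omega)$ do all of the substantive work.
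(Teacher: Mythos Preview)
Your proposal is correct and follows essentially the same approach as the paper: split into the three super-operator pieces, expand each $\tilde A_j^{\mathrm{free}}(\omega)$ in Majoranas via Corollary~\ref{cor:coef_jump_op_free}, apply $\Phi$ using the explicit formulas in Remark~\ref{rem:Phi_preserves_locality}, and bound the resulting coefficients by integrating the Gaussian weight and summing over $j$. Your handling of the lattice sum $\sum_j e^{-\mu(d(j,k_1)+d(j,k_2))}$ by splitting $\mu=\mu/2+\mu/2$ and using the triangle inequality is in fact cleaner than the paper's midpoint step and yields decay rate $\mu/2$, which suffices for every downstream use even though the lemma as stated claims the full $\mu$.
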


\begin{proof}
    From \eqref{eq:parent_ham_dissipative_free} we can write 
    \[
    H_{\mathrm{D,free}}^{\mathrm{parent}} = \mathrm{I}+\mathrm{II}+\mathrm{III},
    \]
    where
    \[
    \begin{aligned}
        \mathrm{I} &= \Phi\Bigg(\sum_j\int\dd\omega \eta(\omega)  L_{\tilde{A}^{\mathrm{free}}_j(-\omega)}R_{\tilde{A}^{\mathrm{free}}_j(-\omega)^\dag} \Bigg), \\
        \mathrm{II} &= -\frac{1}{2}\Phi\Bigg(\sum_j\int\dd\omega \eta(\omega)  L_{\tilde{A}^{\mathrm{free}}_j(-\omega) \tilde{A}^{\mathrm{free}}_j(\omega)} \Bigg),\\
         \mathrm{III} &=-\frac{1}{2}\Phi\Bigg(\sum_j\int\dd\omega \eta(\omega)R_{\tilde{A}^{\mathrm{free}}_j(\omega)^\dag \tilde{A}^{\mathrm{free}}_j(-\omega)^\dag}\Bigg).
    \end{aligned}
    \]
    We can simply prove that each of the above three terms has the structure as stated in the lemma. Because the proof for each of them is essentially the same, we only prove for the term $\mathrm{I}$. By Definition~\ref{defn:Phi_thrid_quantization} we have
    \[
    \mathrm{I} = \sum_j\int\dd\omega \eta(\omega)  \Phi(L_{\tilde{A}^{\mathrm{free}}_j(-\omega)})\Phi(R_{\tilde{A}^{\mathrm{free}}_j(-\omega)^\dag}). 
    \]
    By Lemma~\ref{cor:coef_jump_op_free}, we have
    \[
    \tilde{A}^{\mathrm{free}}_j(\omega) = \sum_k c_{j,k}(\omega)\gamma_k,
    \]
    where $\sum_{k:d(j,k)>r}|c_{j,k}(\omega)|^2\leq C'^2 e^{-2\mu r}e^{-\beta\omega/2}$. This means $\Phi(L_{\tilde{A}^{\mathrm{free}}_j(\omega)^\dag})$ and $\Phi(R_{\tilde{A}^{\mathrm{free}}_j(\omega)})$ admit similar decompositions by \eqref{eq:repr_left_right_multiplication_single_majorana}:
    \begin{equation}
        \Phi(L_{\tilde{A}^{\mathrm{free}}_j(-\omega)}) = \sum_k c_{j,k}(-\omega)\hat{\gamma}_{2k-1},\quad \Phi(R_{\tilde{A}^{\mathrm{free}}_j(-\omega)^\dag}) = -i\sum_k c_{j,k}(-\omega)^*\hat{\gamma}_{2k},
    \end{equation}
    Therefore
    \begin{equation}
        \mathrm{I} = -i\sum_{k,k'}\sum_j\int\dd\omega \eta(\omega)c_{j,k}(-\omega)c_{j,k'}(-\omega)^*\hat{\gamma}_{2k-1}\hat{\gamma}_{2k}.
    \end{equation}
    Let
    \begin{equation}
        \mathrm{I} = \sum_{k,k'}m^{(\mathrm{I})}_{k,k'} \hat{\gamma}_{2k-1}\hat{\gamma}_{2k},
    \end{equation}
    then 
    \begin{equation}
    \begin{aligned}
        |m^{(\mathrm{I})}_{k,k'}| &\leq  \sum_j\int\dd\omega \eta(\omega)|c_{j,k}(-\omega)c_{j,k'}(-\omega)^*| \\
        &\leq C'^2 \int\dd\omega \eta(\omega)e^{-\beta\omega/2}\sum_j e^{-\mu (d(j,k)+d(j,k'))} \\
        &\leq C''\sum_j e^{-\mu (d(k,k')+d(j,(k+k')/2))} \\
        &\leq Ce^{-\mu d(k,k')}.
    \end{aligned}
    \end{equation}
    Since $\mathrm{II}$ and $\mathrm{III}$ have the same structure as can be proved in the same way, we have proved the statement in the lemma.
\end{proof}

We can now prove Proposition~\ref{prop:non_interacting_part_decay}:
\begin{proof}[Proof of Proposition~\ref{prop:non_interacting_part_decay}]
    By Lemma~\ref{lem:quasi_locality_free_dissipative}, and the bound for the Lieb-Robinson velocity $J_0$ in \eqref{eq:LR_velocity_non_interacting}, we can see that $H^{\mathrm{parent}}_{\mathrm{D,free}}$ has $[K,\mu]$-decay for some constants $K,\mu>0$ that depend only on $\beta,r_0,D$. By Lemma~\ref{lem:free_parent_zero} we also know that $H^{\mathrm{parent}}_{\mathrm{C,free}}=0$. Therefore $H^{\mathrm{parent}}_{\mathrm{free}}=H^{\mathrm{parent}}_{\mathrm{D,free}}+H^{\mathrm{parent}}_{\mathrm{C,free}}=H^{\mathrm{parent}}_{\mathrm{D,free}}$ has $[K,\mu]$-decay.
\end{proof}

\subsection{The perturbation in the jump operator}
\label{sec:quasi_locality_perturbation}

In this section we show that the perturbation is also quasi-local and is small in $U$. We start from the time evolve operators $\gamma_j(t)=e^{iHt}\gamma_j e^{-iHt}$ as compared to $\gamma^{\mathrm{free}}_j(t)=e^{iH_0 t}\gamma_j e^{-iH_0 t}$:
\begin{equation}
    \gamma_j(t) = \gamma^{\mathrm{free}}_j(t) + \gamma^{\mathrm{int}}_j(t), 
\end{equation}
where $\gamma^{\mathrm{free}}_j(t)$ is given in \eqref{eq:defn_gamma_free}, and 
\begin{equation}
    \label{eq:gamma_int}
    \gamma^{\mathrm{int}}_j(t)=i\int_0^t e^{iH(t-s)}[V,e^{iH_0 s}\gamma_je^{-iH_0 s}]e^{-iH(t-s)}\dd s.
\end{equation}
By \eqref{eq:A_tilde_LC_time_evolution} and \eqref{eq:A_tilde_LC_time_evolution_free}, we then have
\begin{equation}
\label{eq:jump_op_difference}
    \tilde{A}_j^{\mathrm{int}}(\omega):=\tilde{A}_j(\omega)- \tilde{A}_j^{\mathrm{free}}(\omega)= \frac{1}{\sqrt{2\pi}}\int \dd t \check{F}_1(t,\omega)\gamma^{\mathrm{int}}_j(t).
\end{equation}
Because $V$ is an ``extensive'' operator in the sense that it acts on the entire system, we need to make sure that it does not blow up the approximation errors by a factor that depends on the system size. Therefore below we will first focus on $[V,e^{iH_0 s}\gamma_je^{-iH_0 s}]$.

\begin{lem}
    \label{lem:commutator_approx_V_gamma_free}
    Let $\gamma_j^{\mathrm{free}}(s)=\sum_k c_k(s)\gamma_k$, and for some $r>0$ let
    \[
    X_1(s) = \sum_{k:d(j,k)\leq r} c_k(s)\gamma_k.
    \]
    Then if $r\geq 2e^2 J_0|s|$ where $J_0$ is the same as in Lemma~\ref{lem:quasi_locality_jump_free}, we have
    \[
    \|[V,\gamma_j^{\mathrm{free}}(s)-X_1(s)]\|\leq C U (r+1)^D e^{-r},
    \]
    for some constant $C$ that only depends on the lattice dimension $D$ and $J_0$.
\end{lem}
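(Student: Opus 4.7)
The plan is to decompose the tail operator $W := \gamma_j^{\mathrm{free}}(s) - X_1(s) = \sum_{k: d(j,k) > r} c_k(s)\gamma_k$ mode by mode and exploit two separate forms of locality: the geometric locality of $V$, which bounds each individual commutator $[V,\gamma_k]$ by a constant times $U$, and the Lieb-Robinson decay of the coefficients $c_k(s)$, which controls the tail sum. The naive bound $\|[V,W]\| \leq 2\|V\|\|W\|$ is useless here because $\|V\|$ can scale like $nU$; instead one must extract only the small local pieces of $V$ that actually act near each $\gamma_k$ appearing in $W$.

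First, because $V$ is $(U,r_0)$-geometrically local and parity-preserving, any term $V_B$ with $\mathfrak{s}(k)\notin B$ is an even product of Majorana modes disjoint from $\gamma_k$ and hence commutes with $\gamma_k$. Thus $[V,\gamma_k] = \sum_{B\ni \mathfrak{s}(k)}[V_B,\gamma_k]$, and $\|[V,\gamma_k]\|\leq 2U\,|\{B\in\mathcal{B}(r_0):\mathfrak{s}(k)\in B\}|\leq C_1 U$ with $C_1$ depending only on $r_0$ and $D$.

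Second, I apply Lemma~\ref{lem:LR_bound} to $H_0$ to obtain, for each integer $R\geq 1$, a quadratic truncated Hamiltonian $H_0'$ supported in the ball of radius $R$ around $j$ such that $Y_R := e^{iH_0' s}\gamma_j e^{-iH_0' s}$ satisfies $\|\gamma_j^{\mathrm{free}}(s)-Y_R\|\leq (2J_0|s|)^R/R!$. Since $H_0'$ is quadratic, $Y_R = \sum_{k: d(j,k)\leq R}c_k'\gamma_k$ is again a linear combination of Majoranas. The orthogonality inequality used in the proof of Corollary~\ref{cor:coef_jump_op_free} (the squared operator norm of a linear combination of Majoranas dominates the sum of squared coefficients) then gives $\sum_{k: d(j,k)>R}|c_k(s)|^2 \leq \|\gamma_j^{\mathrm{free}}(s)-Y_R\|^2$, and Stirling's formula combined with the hypothesis $R\geq r\geq 2e^2 J_0|s|$ yields $\sum_{k: d(j,k)>R}|c_k(s)|^2 \leq e^{-2R}/(2\pi R)$.

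Finally, I would upgrade this $\ell^2$ tail estimate to the $\ell^1$ estimate needed for the commutator sum via shell-by-shell Cauchy-Schwarz: each annulus $\{x\in\Lambda : R < d(j,x)\leq R+1\}$ contains $O(R^{D-1})$ Majorana modes, so
\begin{equation*}
\sum_{k: d(j,k)>r}|c_k(s)| \leq \sum_{R\geq r} C_2 R^{(D-1)/2}\Bigl(\sum_{d(j,k)>R}|c_k(s)|^2\Bigr)^{1/2} \leq C_3 \sum_{R\geq r} R^{(D-2)/2} e^{-R}\leq C_4 (r+1)^{(D-2)/2} e^{-r},
\end{equation*}
the geometric series being dominated by its leading term. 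Multiplying by the per-Majorana commutator bound $C_1 U$ yields $\|[V,W]\| \leq CU(r+1)^D e^{-r}$ as claimed, with $C$ depending only on $D$, $r_0$, and $J_0$. The main obstacle throughout is the $n$-independence: both the loss of the trivial $\|V\|$ bound and the potential blow-up of the annular sum must be controlled by the Lieb-Robinson exponential, and the threshold $r\geq 2e^2 J_0|s|$ is precisely what converts, via Stirling, the factorial in the Lieb-Robinson bound into a clean $e^{-R}$ that beats the volume growth $R^{D-1}$ of each shell.
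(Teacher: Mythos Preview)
Your proof is correct and follows the same strategy as the paper: bound $\|[V,\gamma_k]\|\leq C_1 U$ from the geometric locality of $V$, control the tail coefficients $c_k(s)$ via Lieb--Robinson plus Stirling, and combine. The only difference is in converting the $\ell^2$ tail estimate to the required $\ell^1$ sum: the paper extracts the pointwise bound $|c_k(s)|\leq C_1 e^{-d(j,k)}$ directly (each $|c_k(s)|^2$ is dominated by the $\ell^2$ tail beyond radius $d(j,k)-1$) and then sums over the lattice, whereas your shell-by-shell Cauchy--Schwarz is equally valid and in fact gives a slightly sharper polynomial prefactor.
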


\begin{proof}
    By the Lieb-Robinson bound and a derivation similar to Lemma~\ref{cor:coef_jump_op_free} we have 
    \[
    \left\|\sum_{k:d(j,k)> r'} c_k(s)\gamma_k\right\|\leq \frac{(2J_0 |s|)^{r'}}{r'!}.
    \]
    By Lemma~\ref{lem:norm_linear_comb_majorana} we then have
    \[
    \sum_{k:d(j,k)> r'} |c_k(s)|^2\leq \frac{(2J_0 |s|)^{2r'}}{(r'!)^2}.
    \]
    This further implies
    \begin{equation}
        |c_k(s)|\leq \inf_{r'\in \ZZ_+: d(j,k)> r'} \frac{(2J_0 |s|)^{r'}}{r'!} \leq \frac{(2J_0 |s|)^{d(j,k)-1}}{(d(j,k)-1)!}\leq \frac{1}{\sqrt{2\pi (d(j,k)-1)}}\left(\frac{2\pi J_0 |s|}{d(j,k)-1}\right)^{d(j,k)-1}.
    \end{equation}
    When $d(j,k)\geq 2e^2 J_0|s|+1$, we have
    \begin{equation}
        |c_k(s)|\leq C_1e^{-d(j,k)},
    \end{equation}
    for some constant $C_1>0$ that only depends on $J_0$. Therefore
    \[
    \sum_{k:d(j,k)\geq r}|c_k(s)|\leq C_1e^{-d(j,k)}\leq C_2\int_r^{\infty}\dd r' r'^D e^{-r'} \leq C_3(r+1)^D e^{-r},
    \]
    where $C_2>0$ and $C_3>0$ are both constants that only depend on $J_0$ and $D$.

    Because $V$ is $(U,r_0)$-geometrically local, $\|[V,\gamma_k]\|\leq C_4 U$ for some constant $C_4>0$ that depends only on $r_0$. Therefore
    \[
    \|[V,\gamma_j^{\mathrm{free}}(s)-X_1(s)]\|\leq C_3C_4 U(r+1)^D e^{-r}.
    \]
    We can therefore set $C=C_3C_4$.
\end{proof}

This directly allows us to bound the norm of $[V,e^{iH_0 s}\gamma_je^{-iH_0 s}]$.
\begin{cor}
    \label{cor:commutator_norm_growth}
    There exists constant $C',\mu'>0$ that only depends on $r_0,J_0,D$ such that
    \[
    \|[V,\gamma_j^{\mathrm{free}}(s)]\|\leq C'U(|s|+1)^D (1+e^{-\mu' |s|}).
    \]
\end{cor}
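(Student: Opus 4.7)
The plan is to split $\gamma_j^{\mathrm{free}}(s)$ into a local piece plus a small tail, bound the commutator with each piece separately, and then optimize the cutoff radius. Concretely, fix $r \geq 2e^2 J_0|s|$ (to be chosen momentarily), write $\gamma_j^{\mathrm{free}}(s) = \sum_k c_k(s)\gamma_k$ with $X_1(s) = \sum_{k:d(j,k)\leq r} c_k(s)\gamma_k$ as in Lemma~\ref{lem:commutator_approx_V_gamma_free}, and apply the triangle inequality
\[
\|[V,\gamma_j^{\mathrm{free}}(s)]\| \leq \|[V,X_1(s)]\| + \|[V,\gamma_j^{\mathrm{free}}(s)-X_1(s)]\|.
\]
The second term is already controlled by Lemma~\ref{lem:commutator_approx_V_gamma_free} by $CU(r+1)^D e^{-r}$.

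For the first term, I would bound it term by term: $[V,X_1(s)] = \sum_{k:d(j,k)\leq r} c_k(s)[V,\gamma_k]$. Since $\gamma_j^{\mathrm{free}}(s)$ is unitary (it is a single Majorana in the Heisenberg picture of the quadratic $H_0$) and the Majorana basis is orthonormal under the normalized Hilbert--Schmidt inner product, $\sum_k |c_k(s)|^2 = 1$, so in particular $|c_k(s)|\leq 1$. Because $V$ is $(U,r_0)$-geometrically local, only those balls $B\in\mathcal{B}(r_0)$ containing (or within distance $r_0$ of) the site $\mathfrak{s}(k)$ contribute to $[V,\gamma_k]$, yielding $\|[V,\gamma_k]\|\leq C_1 U r_0^D$ for a constant $C_1$ depending only on $D$. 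The number of Majorana modes $k$ inside the ball of radius $r$ around $j$ is at most $C_2(r+1)^D$, so
\[
\|[V,X_1(s)]\| \leq \sum_{k:d(j,k)\leq r} |c_k(s)|\,\|[V,\gamma_k]\| \leq C_3 U (r+1)^D,
\]
with $C_3$ depending only on $r_0$ and $D$.

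Combining, $\|[V,\gamma_j^{\mathrm{free}}(s)]\| \leq C_3 U (r+1)^D + CU(r+1)^D e^{-r}$. Finally, choose $r = \lceil 2e^2 J_0|s|\rceil$, so that $(r+1)^D \leq C_4(|s|+1)^D$ and $e^{-r} \leq e^{-\mu'|s|}$ with $\mu' = 2e^2 J_0$. This gives
\[
\|[V,\gamma_j^{\mathrm{free}}(s)]\| \leq C' U (|s|+1)^D \bigl(1 + e^{-\mu'|s|}\bigr),
\]
with $C',\mu'$ depending only on $r_0, J_0, D$, as claimed.

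The plan has no real obstacle beyond bookkeeping: the only subtle point is justifying $|c_k(s)|\leq 1$ (which relies on $H_0$ being quadratic so that $\gamma_j^{\mathrm{free}}(s)$ remains a linear combination of single Majoranas with unit total $\ell^2$-weight, via Lemma~\ref{lem:norm_linear_comb_majorana}), and ensuring that all counting constants depend only on the allowed parameters. The power $(|s|+1)^D$ is unavoidable here because it reflects the volume of the light cone of $\gamma_j^{\mathrm{free}}(s)$, which determines how many local terms of $V$ can fail to commute with it.
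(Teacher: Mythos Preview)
Your proof is correct and follows essentially the same approach as the paper: split $\gamma_j^{\mathrm{free}}(s)$ into a local piece $X_1(s)$ and a tail, invoke Lemma~\ref{lem:commutator_approx_V_gamma_free} for the tail, bound $\|[V,X_1(s)]\|$ by a volume count times $U$, and then set $r\sim 2e^2 J_0|s|$. The only cosmetic difference is that the paper bounds $\|[V,X_1(s)]\|$ by counting the terms of $V$ that fail to commute with the whole operator $X_1(s)$ (using $\|X_1(s)\|\leq 1$), whereas you expand $[V,X_1(s)]=\sum_k c_k(s)[V,\gamma_k]$ and count Majorana modes inside the ball; both give the same $U(r+1)^D$ bound.
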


\begin{proof}
    In Lemma~\ref{lem:commutator_approx_V_gamma_free}, we can set $r=2e^2 J_0 |s|+1$, and then we have
    \begin{equation}
        \left\|\left[V,\sum_{k:d(j,k)\geq r} c_k(s)\gamma_k\right]\right\|=\left\|\left[V,\gamma_j^{\mathrm{free}}(s)-X_1(s)\right]\right\|\leq C'U(|s|+1)^D e^{-\mu' |s|}.
    \end{equation}
    Then we note that $\sum_{k:d(j,k)< r} c_k(s)\gamma_k$ commute with all but $C_1 (r+1)^D$ terms in $V$ for some constant $C_1>0$ that depends only on $r_0$, and therefore
    \begin{equation}
    \label{eq:X1_V_commutator_norm_bound}
        \left\|\left[V,\sum_{k:d(j,k)< r} c_k\gamma_k\right]\right\|=\left\|\left[V,X_1(s)\right]\right\|\leq C'U(|s|+1)^D\max_k|c_k(s)|\leq C'U(|s|+1)^D.
    \end{equation}
    Here we have used Lemma~\ref{lem:norm_linear_comb_majorana} to obtain $\max_k|c_k(s)|\leq \|\gamma_j^{\mathrm{free}}(s)\|=1$.
    Adding up the two parts yields the desired bound by the triangle inequality.
\end{proof}
This bound is useful because it is independent of the system size.

\begin{lem}
    \label{lem:X2(s,t)}
    We define $X_1(s)$ and $X_2(s,t)$ as follows:
    \[
    X_1(s) = \sum_{k:d(j,k)\leq r/2} c_k(s)\gamma_k,
    \]
    \[
    X_2(s,t) = ie^{iH'(t-s)}[V,X_1(s)]e^{-iH'(t-s)},
    \]
    where $H'$ consists of the Hamiltonian terms in $H$ that are within $r$ distance from $j$, and $c_k(s)$ is the same as in Lemma~\ref{lem:commutator_approx_V_gamma_free}. Then $X_s(s,t)$ \YZ{two $s$ here?} is supported on the ball with radius $r$ centered at $j$, and there exists constant $C$ that only depends on $r_0,D$ such that 
    \begin{equation}
    \label{eq:X2(s,t)_approx_err_bound}
        \|X_2(s,t)-e^{iH(t-s)}[V,\gamma_j^{\mathrm{free}}(s)]e^{-iH(t-s)}\|\leq CU(|s|+1)^D e^{-r/2},
    \end{equation}
    if $r\geq \max\{4e^2J|t|+2r_0,2e^2 J_0|s|\}$.
    In the above $J_0,J>0$ are the Lieb-Robinson velocities defined in \eqref{eq:LR_velocity_full_ham} and \eqref{eq:LR_velocity_non_interacting} respectively.
\end{lem}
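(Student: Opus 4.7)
The plan is to prove both conclusions by a two-step triangulation. For the support claim, I would simply track containments: $X_1(s)$ is supported on the ball $B_{r/2}(j)$; since $V$ is $(U,r_0)$-geometrically local, $[V,X_1(s)]$ is supported on $B_{r/2+r_0}(j)$; and the hypothesis $r\geq 2r_0$ (which follows from $r\geq 4e^2 J|t|+2r_0$) gives $B_{r/2+r_0}(j)\subset B_r(j)$. Since $H'$ is also supported in $B_r(j)$, conjugation by $e^{iH'(t-s)}$ keeps the support inside $B_r(j)$, yielding the stated support of $X_2(s,t)$.

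For the norm bound, I would add and subtract $ie^{iH(t-s)}[V,X_1(s)]e^{-iH(t-s)}$ so that the target difference splits, by the triangle inequality, into
\[
\bigl\|e^{iH'(t-s)}[V,X_1(s)]e^{-iH'(t-s)}-e^{iH(t-s)}[V,X_1(s)]e^{-iH(t-s)}\bigr\| + \bigl\|[V,\,X_1(s)-\gamma_j^{\mathrm{free}}(s)]\bigr\|.
\]
I would control the first term via Lemma~\ref{lem:LR_bound}: since $[V,X_1(s)]$ is supported on $B_{r/2+r_0}(j)$ and $H'$ on $B_r(j)$, the Lieb--Robinson parameter satisfies $\ell\geq r/2-r_0$, producing the factor $(2J|t-s|)^\ell/\ell!$. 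Combined with Stirling's inequality and the hypothesis $r\geq 4e^2 J|t|+2r_0$ (using $|t-s|\leq|t|$ for $s$ in the range $[0,t]$ where the lemma is ultimately applied), this decays like $e^{-r/2}$ up to constants, while the prefactor $|B_{r/2+r_0}(j)|\cdot\|[V,X_1(s)]\|$ contributes $CU(|s|+1)^D$ times a polynomial in $r$; the norm bound on the commutator is obtained exactly as in the derivation of \eqref{eq:X1_V_commutator_norm_bound} in the proof of Corollary~\ref{cor:commutator_norm_growth}. The polynomial factor is absorbed by slightly loosening the exponential rate. The second term is directly controlled by Lemma~\ref{lem:commutator_approx_V_gamma_free} applied with its radius parameter set to $r/2$, since $X_1(s)-\gamma_j^{\mathrm{free}}(s)=-\sum_{k:d(j,k)>r/2}c_k(s)\gamma_k$; this yields $CU(r/2+1)^D e^{-r/2}$, which again fits the stated form after absorbing the polynomial into the constant.

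The main obstacle is the careful bookkeeping of polynomial prefactors and ensuring that the Lieb--Robinson factor indeed decays as $e^{-r/2}$ rather than something weaker. A subtle constant issue is that invoking Lemma~\ref{lem:commutator_approx_V_gamma_free} with radius $r/2$ nominally requires $r/2\geq 2e^2 J_0|s|$, i.e.\ $r\geq 4e^2J_0|s|$, whereas the stated condition only supplies $r\geq 2e^2 J_0|s|$. This is resolvable either by a minor tightening of the constant in the hypothesis or, more naturally, by appealing to the intended range $s\in[0,t]$ together with $J_0\leq J$ (see \eqref{eq:LR_velocity_full_ham} and \eqref{eq:LR_velocity_non_interacting}), so that $r\geq 4e^2 J|t|\geq 4e^2 J_0|s|$ already covers both applications. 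Beyond this, the argument reduces to routine triangle inequality plus Stirling.
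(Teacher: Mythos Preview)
Your proposal is correct and follows essentially the same route as the paper: the same add-and-subtract of $ie^{iH(t-s)}[V,X_1(s)]e^{-iH(t-s)}$, with term~I handled by the Lieb--Robinson bound (Lemma~\ref{lem:LR_bound}) together with \eqref{eq:X1_V_commutator_norm_bound}, and term~II by Lemma~\ref{lem:commutator_approx_V_gamma_free} at radius $r/2$. Your treatment is in fact slightly more careful than the paper's: you spell out the support argument explicitly, and you flag the factor-of-two issue in the hypothesis needed to invoke Lemma~\ref{lem:commutator_approx_V_gamma_free} at radius $r/2$, which the paper's proof glosses over; your resolution via $|s|\leq|t|$ and $J_0\leq J$ is the natural one in the intended application (Corollary~\ref{cor:approx_gamma_int}).
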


\begin{proof}
    By the triangle inequality, we have
    \begin{equation}
    \begin{aligned}
        &\|X_2(s,t)-ie^{iH(t-s)}[V,\gamma_j^{\mathrm{free}}(s)]e^{-iH(t-s)}\|\\
        &\leq \underbrace{\|X_2(s,t)-ie^{iH(t-s)}[V,X_1(s)]e^{-iH(t-s)}\|}_{\mathrm{I}} \\
        &+\underbrace{\|e^{iH(t-s)} [V,X_1(s)]e^{-iH(t-s)}-e^{iH(t-s)}[V,\gamma_j^{\mathrm{free}}(s)]e^{-iH(t-s)}\|}_{\mathrm{II}}.
    \end{aligned}
    \end{equation}
    For $\mathrm{I}$, by the Lieb-Robinson bound, since $[V,X_1(s)]$ is supported on the ball of radius $r/2+r_0$ centered at $j$, when $r\geq r/2+r_0+2e^2J|t|$, we have
    \[
    \mathrm{I}\leq C_1 e^{-r/2} \|[V,X_1(s)]\|\leq C_2 e^{-r/2} U(|s|+1)^D,
    \]
    where the second inequality comes from \eqref{eq:X1_V_commutator_norm_bound}. 
    For $\mathrm{II}$, by Lemma~\ref{lem:commutator_approx_V_gamma_free} we have
    \[
    \mathrm{II}=\|[V,X_1(s)]-[V,\gamma_j^{\mathrm{free}}(s)]\|\leq C_3 e^{-r/2} U(|s|+1)^D.
    \]
    We thus have \eqref{eq:X2(s,t)_approx_err_bound} by combining the two inequalities for $\mathrm{I}$ and $\mathrm{II}$ above.
\end{proof}

\begin{cor}
    \label{cor:approx_gamma_int}
    Under the same assumptions as in Lemma~\ref{lem:X2(s,t)}, we have
    \[
    \left\|\gamma_j^{\mathrm{int}}(t)-\int_0^t X_2(s,t)\dd s\right\|\leq C e^{-r/2} U(|t|+1)^{D+1},
    \]
    Here $\int_0^t X_2(s,t)\dd s$ is supported on the ball with radius $r$ centered at $j$, satisfying
    \[
    \left\|\int_0^t X_2(s,t)\dd s\right\|\leq C' U(|t|+1)^{D+1}.
    \]
    $C$ and $C'$ are constants that depend only on $r_0,D$. In the above $J_0,J>0$ are the Lieb-Robinson velocities defined in \eqref{eq:LR_velocity_full_ham} and \eqref{eq:LR_velocity_non_interacting} respectively.
\end{cor}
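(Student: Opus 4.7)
\medskip

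\noindent\textbf{Proof proposal for Corollary~\ref{cor:approx_gamma_int}.}
The plan is to integrate the pointwise bound of Lemma~\ref{lem:X2(s,t)} in the time variable $s$ and then combine it with the norm growth estimate of Corollary~\ref{cor:commutator_norm_growth}. Starting from the definition in \eqref{eq:gamma_int},
\[
\gamma_j^{\mathrm{int}}(t) \;=\; i\int_0^t e^{iH(t-s)}\bigl[V,\gamma_j^{\mathrm{free}}(s)\bigr] e^{-iH(t-s)}\,\dd s,
\]
I would pull the difference $\gamma_j^{\mathrm{int}}(t) - \int_0^t X_2(s,t)\,\dd s$ inside the integral and apply the triangle inequality to get
\[
\left\|\gamma_j^{\mathrm{int}}(t)-\int_0^t X_2(s,t)\,\dd s\right\| \;\leq\; \int_0^{|t|} \bigl\|ie^{iH(t-s)}[V,\gamma_j^{\mathrm{free}}(s)]e^{-iH(t-s)} - X_2(s,t)\bigr\|\,\dd s.
\]
Lemma~\ref{lem:X2(s,t)} bounds the integrand by $CU(|s|+1)^D e^{-r/2}$ uniformly in $s\in[0,t]$, so integrating in $s$ yields the desired bound $CU(|t|+1)^{D+1}e^{-r/2}$. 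The hypothesis $r\geq \max\{4e^2 J|t|+2r_0,\,2e^2 J_0|s|\}$ is satisfied for every $s\in[0,t]$ because $|s|\leq |t|$ and $J_0\leq J$.

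For the support claim, by construction $X_2(s,t) = ie^{iH'(t-s)}[V,X_1(s)]e^{-iH'(t-s)}$ where $H'$ is supported on the ball of radius $r$ around $j$ and $X_1(s)$ is supported on the ball of radius $r/2$ around $j$. Consequently $X_2(s,t)$ is supported on the ball of radius $r$ around $j$ for each $s$, and this containment is preserved by integration in $s$.

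Finally, for the norm bound on $\int_0^t X_2(s,t)\,\dd s$, I would use the triangle inequality
\[
\left\|\int_0^t X_2(s,t)\,\dd s\right\| \;\leq\; \left\|\gamma_j^{\mathrm{int}}(t)\right\| + \left\|\gamma_j^{\mathrm{int}}(t)-\int_0^t X_2(s,t)\,\dd s\right\|,
\]
combined with the bound $\|\gamma_j^{\mathrm{int}}(t)\| \leq \int_0^{|t|} \|[V,\gamma_j^{\mathrm{free}}(s)]\|\,\dd s$ that follows from the definition of $\gamma_j^{\mathrm{int}}(t)$ and the unitary invariance of the norm. Applying Corollary~\ref{cor:commutator_norm_growth} to the integrand gives $\|\gamma_j^{\mathrm{int}}(t)\|\leq C' U(|t|+1)^{D+1}$ after absorbing the $1+e^{-\mu'|s|}$ factor into constants, and the already-established approximation bound is $O(U(|t|+1)^{D+1}e^{-r/2})\leq O(U(|t|+1)^{D+1})$. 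Adding the two produces the stated bound with a constant depending only on $r_0$ and $D$.

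There is no real obstacle here: every nontrivial estimate has been done in Lemma~\ref{lem:X2(s,t)} and Corollary~\ref{cor:commutator_norm_growth}. The only thing to be slightly careful about is that the prefactor in the norm bound must be independent of system size; this is automatic because Corollary~\ref{cor:commutator_norm_growth} is already system-size independent, and the approximation error in Lemma~\ref{lem:X2(s,t)} likewise depends only on local quantities $(U,|s|,r,D,r_0)$.
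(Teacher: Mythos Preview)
Your proof is correct and follows essentially the same approach as the paper: integrate the pointwise bound of Lemma~\ref{lem:X2(s,t)} over $s$ to obtain the approximation estimate, and the support claim is immediate from the construction. The only difference is in the norm bound: the paper bounds $\bigl\|\int_0^t X_2(s,t)\,\dd s\bigr\|$ directly by observing $\|X_2(s,t)\|=\|[V,X_1(s)]\|\leq C'U(|s|+1)^D$ from \eqref{eq:X1_V_commutator_norm_bound} and integrating, whereas you go through the triangle inequality with $\gamma_j^{\mathrm{int}}(t)$ and invoke Corollary~\ref{cor:commutator_norm_growth}; both are equally valid and equally short.
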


\begin{proof}
    The first statement is a direct consequence of Lemma~\ref{lem:X2(s,t)}: one only needs to integrate over $s$ on both sides of the inequality. The second statement is a direct consequence of \eqref{eq:X1_V_commutator_norm_bound} likewise after integrating over $s$.
\end{proof}

\begin{lem}
    \label{lem:approx_A_tilde_j_int}
    Let
    \[
    X_3(\omega) = \frac{1}{\sqrt{2\pi}}\int_{-T}^T \check{F}_1(t,\omega)\int_0^t X_2(s,t)\dd s \dd t,
    \]
    where $X_2(s,t)$ is defined in Lemma~\ref{lem:X2(s,t)}. Then $X_3(\omega)$ is supported on the ball with radius $r$ centered at $j$, and with $T=(r-2r_0)/(2e^2(2J+J_0))$, 
    we have
    \[
    \|X_3(\omega)-\tilde{A}_j^{\mathrm{int}}(\omega)\|\leq CUe^{-\mu r}e^{-\beta\omega/4},\quad \|X_3(\omega)\|,\|\tilde{A}_j^{\mathrm{int}}(\omega)\|\leq C'Ue^{-\beta\omega/4},
    \]
    for all $\omega\in \RR$, and
    $C,C',\mu>0$ are constants that depend only on $r_0,D,\beta$. In the above $J_0,J>0$ are the Lieb-Robinson velocities defined in \eqref{eq:LR_velocity_full_ham} and \eqref{eq:LR_velocity_non_interacting} respectively.
\end{lem}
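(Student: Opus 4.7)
The plan is to decompose $\tilde{A}_j^{\mathrm{int}}(\omega) - X_3(\omega)$ into a bulk piece on $|t|\leq T$ and a Gaussian tail on $|t|>T$, estimate each piece, and then read off the norm bound as a corollary of the same estimates. Using \eqref{eq:jump_op_difference},
\[
\tilde{A}_j^{\mathrm{int}}(\omega) - X_3(\omega) = \tfrac{1}{\sqrt{2\pi}}\int_{|t|>T}\check{F}_1(t,\omega)\,\gamma_j^{\mathrm{int}}(t)\,\dd t + \tfrac{1}{\sqrt{2\pi}}\int_{-T}^{T}\check{F}_1(t,\omega)\Bigl(\gamma_j^{\mathrm{int}}(t)-\int_0^t X_2(s,t)\,\dd s\Bigr)\dd t.
\]
Integrating the commutator bound of Corollary~\ref{cor:commutator_norm_growth} over $s\in[0,t]$ via \eqref{eq:gamma_int} yields $\|\gamma_j^{\mathrm{int}}(t)\|\leq C_1 U(|t|+1)^{D+1}$. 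Combined with the Gaussian pointwise bound $|\check{F}_1(t,\omega)|\leq C_2 e^{-t^2/\beta^2}e^{-\beta\omega/4}$ extracted from \eqref{eq:F1_inverse_fourier_transform}, this is all that is needed.

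For the bulk term, the stated choice $T=(r-2r_0)/(2e^2(2J+J_0))$ is engineered precisely so that for every $|s|\leq|t|\leq T$ we have $4e^2J|t|+2r_0\leq r$ and $2e^2J_0|s|\leq r$, which is the hypothesis of Lemma~\ref{lem:X2(s,t)} and Corollary~\ref{cor:approx_gamma_int}. So Corollary~\ref{cor:approx_gamma_int} applies and gives the pointwise bound $\|\gamma_j^{\mathrm{int}}(t)-\int_0^t X_2(s,t)\,\dd s\|\leq C_3 U e^{-r/2}(|t|+1)^{D+1}$. Multiplying by the Gaussian weight and integrating, $\int_{\RR} e^{-t^2/\beta^2}(|t|+1)^{D+1}\dd t$ is a $\beta$-dependent constant, so the bulk contribution is at most $C_4 U e^{-r/2}e^{-\beta\omega/4}$. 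For the tail term, the standard Gaussian tail estimate $\int_{|t|>T}e^{-t^2/\beta^2}(|t|+1)^{D+1}\dd t\leq C_5 (T+1)^{D+1}e^{-T^2/(2\beta^2)}$ combined with $\|\gamma_j^{\mathrm{int}}(t)\|\leq C_1 U(|t|+1)^{D+1}$ yields a bound of the form $C_6 U e^{-T^2/(2\beta^2)}e^{-\beta\omega/4}$. Since $T$ is linear in $r$, this Gaussian-in-$r$ decay is strictly stronger than $e^{-r/2}$ once $r$ is larger than an absolute constant, and so can be absorbed. Putting the two pieces together yields the first inequality with some $\mu>0$ depending only on $r_0, D, \beta$ (through $J$ and $J_0$, which are determined by $r_0, D$).

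Support of $X_3(\omega)$ on the ball of radius $r$ around $j$ is immediate from Lemma~\ref{lem:X2(s,t)}, which asserts it for each integrand $X_2(s,t)$. For the norm bounds, apply exactly the same Gaussian-times-polynomial estimate but now on the whole real line: $\|\tilde{A}_j^{\mathrm{int}}(\omega)\|\leq \tfrac{1}{\sqrt{2\pi}}\int |\check{F}_1(t,\omega)|\cdot C_1 U(|t|+1)^{D+1}\dd t\leq C'U e^{-\beta\omega/4}$, and likewise for $X_3(\omega)$ using the second bound in Corollary~\ref{cor:approx_gamma_int}. The main bookkeeping obstacle is ensuring the polynomial prefactors $(|t|+1)^{D+1}$ coming from the $V$-commutator growth combine cleanly with the Gaussian weight so that all constants depend only on $\beta, r_0, D$ and the final rate $\mu$ is genuinely proportional to the Lieb-Robinson approximation rate $1/2$; the Gaussian tail is never the binding constraint because $T$ scales linearly in $r$.
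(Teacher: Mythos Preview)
Your proposal is correct and follows essentially the same approach as the paper: the same bulk/tail decomposition of the $t$-integral, the same use of Corollary~\ref{cor:approx_gamma_int} for the bulk and of the Gaussian decay of $\check{F}_1$ against the polynomial growth of $\|\gamma_j^{\mathrm{int}}(t)\|$ for the tail, and the same observation that the linear relation between $T$ and $r$ makes the Gaussian tail subdominant. The only cosmetic difference is that the paper obtains the norm bound on $\tilde{A}_j^{\mathrm{int}}(\omega)$ via the triangle inequality from $X_3(\omega)$, whereas you estimate it directly from $\|\gamma_j^{\mathrm{int}}(t)\|$; both are immediate.
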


\begin{proof}
    By the triangle inequality,
    \[
    \begin{aligned}
        &\|X_3(\omega)-\tilde{A}_j^{\mathrm{int}}(\omega)\| \\
        &\leq \underbrace{\left\|\frac{1}{\sqrt{2\pi}}\int_{|t|\geq T}\check{F}_1(t,\omega
        )\gamma_j^{\mathrm{int}}(t)\dd t\right\|}_{\mathrm{I}} \\
        &+\underbrace{\left\|\frac{1}{\sqrt{2\pi}}\int_{-T}^{T}\check{F}_1(t,\omega
        )\left(\int_0^t X_2(s,t)\dd s-\gamma_j^{\mathrm{int}}(t)\right)\dd t\right\|}_{\mathrm{II}}.
    \end{aligned}
    \]
    For $\mathrm{I}$, we have $\mathrm{I}\leq C_1 U e^{-T^2/\beta^2}e^{-\beta\omega/4}$ for some constant $C_1$ that depends only on $r_0,J_0,J,D$ because Corollary~\ref{cor:approx_gamma_int} tells us that
    \[
    \|\gamma_j^{\mathrm{int}}(t)\|\leq C_1' U(|t|+1)^{D+1}.
    \]
    for some constant $C_1'$.
    For $\mathrm{II}$, when $T=(r-2r_0)/(2e^2(2J+J_0))$, we know that $r\geq \max\{4e^2J|t|+2r_0,2e^2 J_0|s|\}$, and therefore by Corollary~\ref{cor:approx_gamma_int}, we have
    \[
    \left\|\gamma_j^{\mathrm{int}}(t)-\int_0^t X_2(s,t)\dd s\right\|\leq C_2' U(|t|+1)^D e^{-r/2}e^{-\beta\omega/4}.
    \]
    This the implies
    \[
    \mathrm{II}\leq C_2 Ue^{-r/2}e^{-\beta\omega/4},
    \]
    where $C_2$ is a constant that depends only on $r_0,J_0,J,D$.
    Combining the two bounds we have
    \[
    \|X_3(\omega)-\tilde{A}_j^{\mathrm{int}}(\omega)\|\leq C_1 U e^{-T^2/\beta^2}e^{-\beta\omega/4} + C_2 U e^{-r/2}e^{-\beta\omega/4}.
    \]
    By the fact that $T=(r-2r_0)/(2e^2(2J+J_0))$, there exists constants $C,\mu>0$ that only depends on $\beta,r_0,J_0,J,D$ such that
    \[
    \|X_3(\omega)-\tilde{A}_j^{\mathrm{int}}(\omega)\|\leq CUe^{-\mu r}e^{-\beta\omega/4}.
    \]
    The bound for $\|X_3(\omega)\|$ follows directly from the definition of $X_3(\omega)$ and the norm bound for $\int_0^t X_2(s,t)\dd s$ in Corollary~\ref{cor:approx_gamma_int}. The bound for $\|\tilde{A}_j^{\mathrm{int}}(\omega)\|$ then follows from the triangle inequality.
\end{proof}

\subsection{The perturbation in the dissipative part}
\label{sec:perturbation_dissipative}

The above result shows that $\tilde{A}_j^{\mathrm{int}}(\omega)$ can be well-approximated by local operators and can be bounded linearly in $U$. We will then use it to study the perturbation $H^{\mathrm{parent}}_{\mathrm{D,int}}$.

We first decompose $H^{\mathrm{parent}}_{\mathrm{D,int}}$, as defined in \eqref{eq:parent_ham_dissipative_int}, into three parts:
\begin{equation}
    \label{eq:H_D_int_decompose}
    H^{\mathrm{parent}}_{\mathrm{D,int}} = H_{LR} + H_{LL} + H_{RR},
\end{equation}
where
\[
\begin{aligned}
    H_{LR} &= \sum_j \int \dd\omega \eta(\omega)\Phi(L_{\tilde{A}_j(-\omega)}R_{\tilde{A}_j(-\omega)^\dag}-L_{\tilde{A}^{\mathrm{free}}_j(-\omega)}R_{\tilde{A}^{\mathrm{free}}_j(-\omega)^\dag}), \\
    H_{LL} &= -\frac{1}{2}\sum_j\int \dd\omega \eta(\omega)\Phi( L_{\tilde{A}_j(-\omega)\tilde{A}_j(\omega)-\tilde{A}^{\mathrm{free}}_j(-\omega)\tilde{A}^{\mathrm{free}}_j(\omega)} ), \\
    H_{RR} &= -\frac{1}{2}\sum_j\int \dd\omega \eta(\omega)\Phi( R_{\tilde{A}_j(\omega)^\dag\tilde{A}_j(-\omega)^\dag - \tilde{A}^{\mathrm{free}}_j(\omega)^\dag\tilde{A}^{\mathrm{free}}_j(-\omega)^\dag} ).
\end{aligned}
\]
We will focus on $H_{LR}$ because the other two terms can be dealt with in exactly the same way. For this term we can further decompose it into three parts
\begin{equation}
    \label{eq:H_LR_decompose}
    H_{LR} = H_{LR,1} + H_{LR,2} + H_{LR,3},
\end{equation}
where
\[
\begin{aligned}
    H_{LR,1} &= \sum_j \int \dd\omega \eta(\omega)\Phi(L_{\tilde{A}^{\mathrm{int}}_j(-\omega)}R_{\tilde{A}^{\mathrm{free}}_j(-\omega)^\dag}), \\
    H_{LR,2} &= \sum_j \int \dd\omega \eta(\omega)\Phi(L_{\tilde{A}^{\mathrm{free}}_j(-\omega)}R_{\tilde{A}^{\mathrm{int}}_j(-\omega)^\dag}), \\
    H_{LR,3} &= \sum_j \int \dd\omega \eta(\omega)\Phi(L_{\tilde{A}^{\mathrm{int}}_j(-\omega)}R_{\tilde{A}^{\mathrm{int}}_j(-\omega)^\dag}).
\end{aligned}
\]
In the above we have used $\tilde{A}_j(\omega)=\tilde{A}^{\mathrm{int}}_j(\omega)+\tilde{A}^{\mathrm{free}}_j(\omega)$ and the linearity of the mappings. We will again only focus on $H_{LR,1}$ because the other terms can be dealt with in exactly the same way.

\begin{lem}
    \label{lem:H_LR_1_term_approx}
    Let 
    \[
    Y_j(\omega) = \Phi(L_{\tilde{A}^{\mathrm{int}}_j(-\omega)}R_{\tilde{A}^{\mathrm{free}}_j(-\omega)^\dag}),
    \]
    and
    \[
    Y_{j,r}(\omega) = \Phi(L_{X_3(-\omega)}R_{A^{\mathrm{loc}}(-\omega)^\dag}),
    \]
    where $X_3(\omega)$ and $A^{\mathrm{loc}}(\omega)$ are from Lemma~\ref{lem:approx_A_tilde_j_int} and Lemma~\ref{lem:quasi_locality_jump_free} respectively, with $r$ chosen to be the same value and $\omega$ as given. Then there exists constants $C,C',\mu>0$ that depend only on $\beta,r_0,D$ such that
    \[
    \|Y_j(\omega)-Y_{j,r}(\omega)\|\leq CU e^{-\mu r} e^{\beta\omega/2},\quad \|Y_{j}(\omega)\|\leq C'U  e^{\beta\omega/2},
    \]
    and $Y_{j,r}(\omega)$ is supported on the ball of radius $r$ centered at $j$.\footnote{More precisely, because $Y_{j,r}(\omega)\in B(\mathcal
    {H}')$ is an operator of $a$-fermions, it acts non-trivially only on the $a$-fermion modes within distance $r$ from Majorana mode $j$.}
\end{lem}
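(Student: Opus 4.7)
The plan is to split the difference $Y_j(\omega)-Y_{j,r}(\omega)$ telescopically using bilinearity of $(A,B)\mapsto\Phi(L_A R_B)$, and then bound each piece by the product of an $O(Ue^{-\mu r})$ factor and an $O(e^{\beta\omega/4})$ factor. Concretely, I would first write
\[
Y_j(\omega)-Y_{j,r}(\omega) = \Phi\bigl(L_{\tilde{A}^{\mathrm{int}}_j(-\omega)-X_3(-\omega)}\,R_{\tilde{A}^{\mathrm{free}}_j(-\omega)^\dag}\bigr) + \Phi\bigl(L_{X_3(-\omega)}\,R_{(\tilde{A}^{\mathrm{free}}_j(-\omega)-A^{\mathrm{loc}}(-\omega))^\dag}\bigr),
\]
so that each summand isolates exactly one of the two approximation errors controlled by Lemmas~\ref{lem:quasi_locality_jump_free} and~\ref{lem:approx_A_tilde_j_int}. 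By Definition~\ref{defn:Phi_thrid_quantization}, $\Phi(L_A R_B)=\Phi(L_A)\Phi(R_B)$, so the operator norm of each summand is at most the product of the norms of its two factors.

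The second step is to verify that $\|\Phi(L_W)\|=\|W\|$ and $\|\Phi(R_W)\|=\|W\|$ for the odd operators $W$ appearing here. Every relevant operator ($\tilde{A}^{\mathrm{free}}_j$, $\tilde{A}^{\mathrm{int}}_j$, $X_3$, $A^{\mathrm{loc}}$) is odd, inherited from $\gamma_j$ through even-parity conjugation by $\sigma^{\pm 1/4}$, $e^{\pm iH_0 t}$, $e^{\pm iHt}$, and commutators with the parity-preserving $V$. For odd $W$, Definition~\ref{defn:Phi_thrid_quantization} gives $\Phi(R_W)=(\varphi\circ R_W\circ\varphi^{-1})(-1)^{\hat{n}}$; since $(-1)^{\hat{n}}$ is unitary and $\varphi$ is a scaled isometry between $B(\mathcal{H})$ (with normalized Frobenius inner product) and $\mathcal{H}'$, the operator norm of $\Phi(R_W)$ equals $\|R_W\|_{\mathrm{op}}=\|W\|$, and similarly for $\Phi(L_W)$. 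This is essentially the content of Lemma~\ref{lem:norm_preservation} that is cited in Remark~\ref{rem:observable_mapping_homomorphism_preserve_norm}.

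Combining these two steps gives
\[
\|Y_j(\omega)-Y_{j,r}(\omega)\| \le \|\tilde{A}^{\mathrm{int}}_j(-\omega)-X_3(-\omega)\|\,\|\tilde{A}^{\mathrm{free}}_j(-\omega)\| + \|X_3(-\omega)\|\,\|\tilde{A}^{\mathrm{free}}_j(-\omega)-A^{\mathrm{loc}}(-\omega)\|.
\]
The first factor of each summand is bounded by $CUe^{-\mu r}e^{\beta\omega/4}$ via Lemma~\ref{lem:approx_A_tilde_j_int} and Lemma~\ref{lem:quasi_locality_jump_free} respectively. For the second factor I would use $\|X_3(-\omega)\|\le C'Ue^{\beta\omega/4}$ from Lemma~\ref{lem:approx_A_tilde_j_int}, together with a matching bound $\|\tilde{A}^{\mathrm{free}}_j(-\omega)\|\le C\sqrt{\beta}\,e^{\beta\omega/4}$ derived directly from \eqref{eq:A_tilde_LC_time_evolution_free} using $|\check{F}_1(t,-\omega)|\le Ce^{-t^2/\beta^2}e^{\beta\omega/4}$ and $\|\gamma_j^{\mathrm{free}}(t)\|=1$. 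Multiplying two $e^{\beta\omega/4}$ factors yields the advertised $e^{\beta\omega/2}$ prefactor. The norm bound $\|Y_j(\omega)\|\le C'Ue^{\beta\omega/2}$ follows by the same reasoning without any subtraction, using $\|Y_j(\omega)\|\le\|\tilde{A}^{\mathrm{int}}_j(-\omega)\|\,\|\tilde{A}^{\mathrm{free}}_j(-\omega)\|$.

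The locality assertion is then immediate: $X_3(-\omega)$ is supported on the ball of radius $r$ centered at $j$ by Lemma~\ref{lem:approx_A_tilde_j_int}, and so is $A^{\mathrm{loc}}(-\omega)$ by Lemma~\ref{lem:quasi_locality_jump_free}; by Remark~\ref{rem:Phi_preserves_locality}, $\Phi(L_W)$ and $\Phi(R_W)$ act non-trivially only on $a$-Majorana modes attached to sites supporting $W$, so $Y_{j,r}(\omega)$ lives on the same ball. The step I expect to be the main obstacle, and which the proof will actually lean on, is the norm-preservation identity $\|\Phi(R_W)\|=\|W\|$ for odd $W$: this is what converts a purely algebraic bilinear decomposition into a sharp operator-norm estimate, and it is the only place where the $(-1)^{\hat{n}}$ factor in Definition~\ref{defn:Phi_thrid_quantization} has to be handled carefully. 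Once the appendix lemma on norm preservation is invoked, the remainder is a routine bilinear estimation that repackages the earlier quasi-locality bounds.
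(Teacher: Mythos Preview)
Your proposal is correct and follows essentially the same route as the paper: the same telescopic split into two summands, the factorization $\Phi(L_A R_B)=\Phi(L_A)\Phi(R_B)$, the invocation of Lemma~\ref{lem:norm_preservation} to pass from $\|\Phi(L_W)\|,\|\Phi(R_W)\|$ to $\|W\|$, and then the direct application of Lemmas~\ref{lem:quasi_locality_jump_free} and~\ref{lem:approx_A_tilde_j_int} together with the elementary bound on $\|\tilde{A}^{\mathrm{free}}_j(-\omega)\|$. The only cosmetic difference is that you argue for the equality $\|\Phi(R_W)\|=\|W\|$ via the isometry property of $\varphi$, whereas the paper's Lemma~\ref{lem:norm_preservation} proves only the inequality $\|\Phi(R_A)\|\le\|A\|$ (via a positive-semidefiniteness argument); since only the inequality is needed here, both work.
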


\begin{proof}
    We first have
    \[
    \begin{aligned}
        \|Y_j(\omega)-Y_{j,r}(\omega)\| &\leq \|\Phi(L_{\tilde{A}^{\mathrm{int}}_j(-\omega)-X_3(-\omega)}R_{\tilde{A}^{\mathrm{free}}_j(-\omega)})\| \\
        &+ \|\Phi(L_{X_3(-\omega)}R_{\tilde{A}^{\mathrm{free}}_j(-\omega)^\dag-A^{\mathrm{loc}}(-\omega)^\dag})\| \\
        &\leq \|\Phi(L_{\tilde{A}^{\mathrm{int}}_j(-\omega)^\dag-X_3(-\omega)})\|\|\Phi(R_{\tilde{A}^{\mathrm{free}}_j(-\omega)^\dag})\| \\
        &+\|\Phi(L_{X_3(-\omega)})\|\|\Phi(R_{\tilde{A}^{\mathrm{free}}_j(-\omega)^\dag-A^{\mathrm{loc}}(-\omega)^\dag})\| \\
        &\leq \underbrace{\|\tilde{A}^{\mathrm{int}}_j(-\omega)-X_3(-\omega)\| \|\tilde{A}^{\mathrm{free}}_j(-\omega)\|}_{\mathrm{I}} \\
        &+\underbrace{\|X_3(-\omega)\|\|\tilde{A}^{\mathrm{free}}_j(-\omega)-A^{\mathrm{loc}}(-\omega)\|}_{\mathrm{II}},
    \end{aligned}
    \]
    where in the third inequality we have used $\|\Phi(L_A)\|,\|\Phi(R_A)\|\leq \|A\|$ as stated in Lemma~\ref{lem:norm_preservation}.

    By Lemma~\ref{lem:approx_A_tilde_j_int}, we have $\|\tilde{A}^{\mathrm{int}}_j(\omega)-X_3(\omega)\|\leq C_1 U e^{-\mu_1 r} e^{-\beta\omega/4}$ and $\|X_3(\omega)\|\leq C_1'Ue^{-\beta\omega/4}$. By Lemma~\ref{lem:quasi_locality_jump_free}, we have $\|\tilde{A}^{\mathrm{free}}_j(\omega)-A^{\mathrm{loc}}(\omega)\|\leq C_2 e^{-\mu_2 r}e^{-\beta\omega/4}$. By \eqref{eq:A_tilde_free} we also have $\|\tilde{A}^{\mathrm{free}}_j(\omega)\|\leq C_2' e^{-\beta\omega/4}$. Therefore
    \[
    \mathrm{I}\leq C_1C_2' U e^{-\mu_1 r} e^{\beta\omega/2},\quad \mathrm{II}\leq C_1'C_2 U e^{-\mu_2 r} e^{\beta\omega/2}.
    \]
    Adding up the two terms we have
    \[
    \|Y_j(\omega)-Y_{j,r}(\omega)\|\leq CU e^{-\mu r} e^{\beta\omega/2}
    \]
    for some constants $C,\mu>0$. From \eqref{eq:jump_operators} we also have $\|\tilde{A}_j(\omega)\| C_3 e^{\beta\omega/4}$ for some constant $C_3$. Therefore
    \[
    \|Y_j(\omega)\|\leq C'e^{\beta\omega/2}
    \]
    for some constant $C'>0$ by Lemma~\ref{lem:norm_preservation}. In the above all constants depend only on $\beta,J_0,J,r_0,D$, and  $J_0,J>0$  defined in \eqref{eq:LR_velocity_full_ham} and \eqref{eq:LR_velocity_non_interacting} respectively depend only on $r_0,D$.
\end{proof}

\begin{lem}
    \label{lem:H_LR_1_decay}
    $H_{LR,1}$ as defined in \eqref{eq:H_LR_decompose} has $(CU,\mu)$-decay as defined in Definition~\ref{defn:decay_operator}, for some constants $C,\mu>0$ that depend only on $\beta,r_0,D$.
\end{lem}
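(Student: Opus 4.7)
The plan is to use Lemma~\ref{lem:H_LR_1_term_approx} to build a telescoping quasi-local expansion of $H_{LR,1}$ around each Majorana mode, and then invoke Lemma~\ref{lem:quasi_local_to_decay} to turn a sum of quasi-local terms into an operator with the required decay structure.

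First, I would set $Y_{j,0}(\omega) = 0$ and, for each $r \geq 1$, write the telescoping identity $Y_j(\omega) = \sum_{r \geq 1} (Y_{j,r}(\omega) - Y_{j,r-1}(\omega))$, which converges in operator norm by Lemma~\ref{lem:H_LR_1_term_approx}. Each difference $Y_{j,r}(\omega) - Y_{j,r-1}(\omega)$ is supported on the ball of radius $r$ centered at $j$, and by the triangle inequality combined with Lemma~\ref{lem:H_LR_1_term_approx},
\[
\|Y_{j,r}(\omega) - Y_{j,r-1}(\omega)\| \leq 2CU e^{-\mu(r-1)} e^{\beta\omega/2}.
\]
Define $W_{j,r} = \int d\omega\, \eta(\omega) (Y_{j,r}(\omega) - Y_{j,r-1}(\omega))$, so that $H_{LR,1} = \sum_j \sum_{r \geq 1} W_{j,r}$, with each $W_{j,r}$ still supported on the ball of radius $r$ centered at $\mathfrak{s}(j)$.

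Next, I would bound $\|W_{j,r}\|$ using the explicit form $\eta(\omega) = e^{-(\beta\omega+1)^2/2}$. The key point is that the factor $\eta(\omega) e^{\beta\omega/2}$ is an integrable Gaussian in $\omega$ with a $\beta$-dependent constant:
\[
\int_{-\infty}^\infty \eta(\omega) e^{\beta\omega/2}\, d\omega = C_\beta < \infty,
\]
where $C_\beta$ depends only on $\beta$. This yields $\|W_{j,r}\| \leq 2CC_\beta U e^{\mu} e^{-\mu r}$, which is exactly the bound needed to make $\sum_{r \geq 1} W_{j,r}$ into a $(C'U, \mu)$-quasi-local operator around $\mathfrak{s}(j)$, in the sense of Definition~\ref{defn:quasi_local} (one can take $W_{x,R} = \sum_{r \leq R} W_{j,r}$ for the local approximation, with error bounded by the tail of the geometric series).

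Finally, I would group the modes $j$ by lattice site: for each $x \in \Lambda$, set $W_x = \sum_{j:\mathfrak{s}(j)=x} \sum_{r \geq 1} W_{j,r}$, which is $(C''U, \mu)$-quasi-local around $x$ (only a constant number of modes $j$ sit at each site). Applying Lemma~\ref{lem:quasi_local_to_decay} to $H_{LR,1} = \sum_{x \in \Lambda} W_x$ produces the desired $(CU, \mu)$-decay. The step I expect to be the main obstacle is verifying cleanly that the $\omega$-integration does not blow up the quasi-locality constants: the approximation error in Lemma~\ref{lem:H_LR_1_term_approx} carries an unwanted $e^{\beta\omega/2}$ growth, and one has to check that $\eta(\omega)$ decays fast enough to absorb it; once this Gaussian integrability is established, the rest is a bookkeeping exercise in combining the quasi-local decomposition with Lemma~\ref{lem:quasi_local_to_decay}.
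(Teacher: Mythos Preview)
Your proposal is correct and follows essentially the same approach as the paper: both integrate the pointwise-in-$\omega$ quasi-locality estimate from Lemma~\ref{lem:H_LR_1_term_approx} against $\eta(\omega)$, use the Gaussian integrability of $\eta(\omega)e^{\beta\omega/2}$ to get a uniform bound, and then invoke Lemma~\ref{lem:quasi_local_to_decay}. The only cosmetic difference is that the paper defines $W_j=\int\eta(\omega)Y_j(\omega)\,d\omega$ and its truncations $W_{j,r}=\int\eta(\omega)Y_{j,r}(\omega)\,d\omega$ directly (leaving the telescoping to the proof of Lemma~\ref{lem:quasi_local_to_decay}), whereas you telescope first and then integrate; your extra step of grouping Majorana modes by lattice site is a detail the paper leaves implicit.
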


\begin{proof}
    We can write
    \[
    H_{LR,1} = \sum_j W_j,
    \]
    where 
    \[
    W_j = \int \dd\omega\eta(\omega) Y_j(\omega),
    \]
    in which $Y_j(\omega)$ is defined in Lemma~\ref{lem:H_LR_1_term_approx}. We then define
    \[
    W_{j,r} = \int \dd\omega\eta(\omega) Y_{j,r}(\omega),
    \]
    where $Y_{j,r}(\omega)$ is also defined in Lemma~\ref{lem:H_LR_1_term_approx}. Note that $W_{j,r}$ is supported on the ball with radius $r$ centered at $j$.
    By Lemma~\ref{lem:H_LR_1_term_approx}, we have
    \[
    \|W_j-W_{j,r}\|\leq \int \dd\omega\eta(\omega)\|Y_j(\omega)-Y_{j,r}(\omega)\|\leq C_1U e^{-\mu r},
    \]
    and $\|W_j\|\leq C_2 U$. Therefore $W_j$ is $(CU/2,\mu)$-quasi-local around $j$ by Definition~\ref{defn:quasi_local}. By Lemma~\ref{lem:quasi_local_to_decay}, we have that $H_{LR,1} = \sum_j W_j$ has $(CU,\mu)$-decay.
    In the above, $C,C_1,C_2,\mu>0$ are all constants that depend only on $\beta,J_0,J,r_0,D$. Since  $J_0,J>0$ in \eqref{eq:LR_velocity_full_ham} and \eqref{eq:LR_velocity_non_interacting} respectively depend only on $r_0,D$, all constants only depend on $\beta,r_0,D$.
\end{proof}

Repeating this procedure for $H_{LR,2}$ and $H_{LR,3}$ allows us to show that $H_{LR}$ has $(3C,\mu)$-decay. We then repeat it form $H_{LL}$ and $H_{RR}$ in \eqref{eq:H_D_int_decompose}, which leads to the following result:
\begin{lem}
    \label{lem:H_dissipative_int_decay}
    $H^{\mathrm{parent}}_{\mathrm{D,int}}$ as defined in \eqref{eq:parent_ham_dissipative_int} has $(CU,\mu)$-decay as defined in Definition~\ref{defn:decay_operator}, for some constants $C,\mu>0$ that depend only on $\beta,r_0,D$.
\end{lem}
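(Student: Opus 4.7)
The plan is to decompose $H^{\mathrm{parent}}_{\mathrm{D,int}}$ into finitely many pieces, each of which has $(C'U,\mu)$-decay by essentially the same telescoping argument already carried out in Lemma~\ref{lem:H_LR_1_decay}, and then use the fact that decay bounds with a common exponential rate add up under finite sums. Concretely, by \eqref{eq:H_D_int_decompose} it suffices to treat $H_{LR}$, $H_{LL}$, and $H_{RR}$ separately, and by \eqref{eq:H_LR_decompose} it suffices to treat $H_{LR,k}$ for $k=1,2,3$ and the analogous splittings of $H_{LL}$, $H_{RR}$.

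For $H_{LR,2}$ the roles of the free and interacting factors are swapped relative to $H_{LR,1}$: defining
\[
Y_j(\omega) = \Phi(L_{\tilde{A}^{\mathrm{free}}_j(-\omega)}R_{\tilde{A}^{\mathrm{int}}_j(-\omega)^\dag}),\quad Y_{j,r}(\omega) = \Phi(L_{A^{\mathrm{loc}}(-\omega)}R_{X_3(-\omega)^\dag}),
\]
the same add-and-subtract used inside Lemma~\ref{lem:H_LR_1_term_approx}, together with the norm-preservation $\|\Phi(L_A)\|,\|\Phi(R_A)\|\leq \|A\|$ from Lemma~\ref{lem:norm_preservation}, the local approximation of $\tilde{A}^{\mathrm{int}}_j$ by $X_3$ from Lemma~\ref{lem:approx_A_tilde_j_int}, and the local approximation of $\tilde{A}^{\mathrm{free}}_j$ by $A^{\mathrm{loc}}$ from Lemma~\ref{lem:quasi_locality_jump_free}, yields $\|Y_j(\omega)-Y_{j,r}(\omega)\|\leq CU e^{-\mu r}e^{\beta\omega/2}$ and $\|Y_j(\omega)\|\leq C'U e^{\beta\omega/2}$. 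For $H_{LR,3}$ both factors are interacting, so the same estimate gives an extra factor of $U$; since $U<U_\beta=\mathcal{O}(1)$ this is absorbed into the constant. Integrating against the Gaussian $\eta(\omega)=e^{-(\beta\omega+1)^2/2}$ keeps everything convergent and produces an operator $W_j$ that is $(CU,\mu)$-quasi-local around site $j$; Lemma~\ref{lem:quasi_local_to_decay} then yields $(C_k U,\mu)$-decay for each $H_{LR,k}$.

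For $H_{LL}$ and $H_{RR}$ the only change is that we are handling products. Writing
\[
\tilde{A}_j(-\omega)\tilde{A}_j(\omega) - \tilde{A}^{\mathrm{free}}_j(-\omega)\tilde{A}^{\mathrm{free}}_j(\omega) = \tilde{A}^{\mathrm{int}}_j(-\omega)\tilde{A}^{\mathrm{free}}_j(\omega) + \tilde{A}^{\mathrm{free}}_j(-\omega)\tilde{A}^{\mathrm{int}}_j(\omega) + \tilde{A}^{\mathrm{int}}_j(-\omega)\tilde{A}^{\mathrm{int}}_j(\omega)
\]
(and similarly for the $R$-side with daggers) splits each into three pieces. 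Using $L_{AB}=L_A L_B$ and $R_{AB}=R_B R_A$ from \eqref{eq:left_right_multiply_repr_rules}, a product norm bound $\|AB-A'B'\|\leq \|A-A'\|\|B\|+\|A'\|\|B-B'\|$ applied to the local approximants $(X_3,A^{\mathrm{loc}})$ gives exactly the same exponential-in-$r$ estimate, and the product of two operators supported within distance $r$ of $j$ remains supported in that ball. Integration against $\eta(\omega)$ and Lemma~\ref{lem:quasi_local_to_decay} again deliver $(C'U,\mu)$-decay for each of the six resulting pieces.

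Summing the finitely many ($3+3+3=9$) bounds with a common $\mu$ and with all constants linear in $U$ yields the stated $(CU,\mu)$-decay for $H^{\mathrm{parent}}_{\mathrm{D,int}}$. The main obstacle is simply bookkeeping: confirming that each splitting fits the template of Lemma~\ref{lem:H_LR_1_term_approx} and that the $\omega$-integrals remain uniformly controlled by the Gaussian $\eta(\omega)$ against factors of $e^{\beta\omega/2}$. No new analytic input is required beyond what was already established in Lemmas~\ref{lem:quasi_locality_jump_free}, \ref{lem:approx_A_tilde_j_int}, and \ref{lem:norm_preservation}.
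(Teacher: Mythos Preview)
Your proposal is correct and follows essentially the same approach as the paper: the paper's own argument is simply to repeat the procedure of Lemma~\ref{lem:H_LR_1_decay} for $H_{LR,2}$, $H_{LR,3}$, and then for $H_{LL}$ and $H_{RR}$, exactly as you outline. Your write-up in fact supplies more detail than the paper does (the paper leaves the $H_{LL}$ and $H_{RR}$ cases implicit), and the only minor inaccuracy is that for $H_{LL}$ and $H_{RR}$ the $e^{\pm\beta\omega/4}$ factors from the two jump operators cancel rather than combine to $e^{\beta\omega/2}$, making those $\omega$-integrals even easier than you suggest.
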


\section{Structure of the coherent terms}
\label{sec:coherent_term}
In this section we show the coherent terms are quasi-local. The coherent term is given by
\begin{equation}
B=\sum_{j \in A} B_j=\sum_{j \in A} \int_{-\infty}^{\infty} b_1(t) \int_{-\infty}^{\infty} b_2\left(t^{\prime}\right) \mathrm{e}^{\mathrm{i} \beta H (t^{\prime}-t)} \gamma_j \mathrm{e}^{-2 \mathrm{i} \beta H t^{\prime}} \gamma_j \mathrm{e}^{\mathrm{i} \beta H (t^{\prime}+t)} \mathrm{d} t^{\prime} \mathrm{~d} t
\end{equation}
where the functions $b_1(t)$ and $b_2(t')$ are given in \eqref{eq:f(t),b1(t),b2(t)}. We can decompose any operator $A\in B(\mathcal{H})$ in the energy basis and regroup in Bohr frequency:
\begin{equation}
A=\sum_{E_1, E_2 \in \operatorname{spec}(H)} P_{E_2} A P_{E_1}=\sum_{\nu \in \mathrm{Bohr}(H)} \sum_{E_2-E_1=\nu} P_{E_2} A P_{E_1}=: \sum_{\nu \in \mathrm{Bohr}(H)} A_\nu
\end{equation}
where $P_{E_1}$ and $P_{E_2}$ are the projectors on the eigenstates with eigenvalues $E_1$ and $E_2$ of $H$. The Heisenberg evolution operator $A(t)$ can be expressed as
\begin{equation}
    A(t)=\mathrm{e}^{i H t}A\mathrm{e}^{-i H t}=\sum_\nu \mathrm{e}^{i \nu t}A_\nu
\end{equation}
Using the decomposition, the coherent term can be expressed as 
\begin{equation}
\label{eq:dec}
\begin{aligned}
B_j&= \int_{-\infty}^{\infty} b_1(t)\mathrm{~d} t \int_{-\infty}^{\infty} b_2\left(t^{\prime}\right)\mathrm{d} t^{\prime}  \gamma_j(\beta(t'-t)) \gamma_j(-\beta(t+t'))   \\
&=\int_{-\infty}^{\infty} b_1(t)\mathrm{~d} t \int_{-\infty}^{\infty} b_2\left(t^{\prime}\right)\mathrm{d} t^{\prime}  \sum_\nu \mathrm{e}^{i \beta\nu (t'-t)} (\gamma_j)_\nu \sum_{\nu'} \mathrm{e}^{-i \beta\nu' (t+t')} (\gamma_j)_{\nu'}\\
&=\sum_{\nu,\nu'} (\gamma_j)_\nu (\gamma_j)_{\nu'}\int_{-\infty}^{\infty} b_1(t)\mathrm{~d} t \int_{-\infty}^{\infty} b_2\left(t^{\prime}\right)\mathrm{d} t^{\prime}\mathrm{e}^{i \beta\nu (t'-t)}\mathrm{e}^{-i \beta\nu' (t+t')}
\end{aligned}
\end{equation}
As we show in \eqref{eq:A_tilde_and_B_tilde}, the parent Hamiltonian has as its component $\tilde{B}_j=\sigma^{1/4}B_j\sigma^{-1/4}$. Following the decomposition in \eqref{eq:dec}, we have
\begin{equation}
    \tilde{B}_j=\sum_{\nu,\nu'} (\gamma_j)_\nu (\gamma_j)_{\nu'} \mathrm{e}^{-\beta \nu/4} \mathrm{e}^{\beta \nu'/4}\int_{-\infty}^{\infty} b_1(t)\mathrm{~d} t \int_{-\infty}^{\infty} b_2\left(t^{\prime}\right)\mathrm{d} t^{\prime}\mathrm{e}^{i \beta\nu (t'-t)}\mathrm{e}^{-i \beta\nu' (t+t')}
\end{equation}
We define $F_2(\nu,\nu')$ such that $\tilde{B}_j=\sum_{\nu,\nu'}F_2(\beta\nu,\beta\nu')(\gamma_j)_\nu (\gamma_j)_{\nu'}$:
\begin{equation}
\begin{aligned}
F_2(\nu,\nu')&= \mathrm{e}^{ \nu/4} \mathrm{e}^{- \nu'/4}\int_{-\infty}^{\infty} b_1(t)\mathrm{~d} t \int_{-\infty}^{\infty} b_2\left(t^{\prime}\right)\mathrm{d} t^{\prime}\mathrm{e}^{i \nu (t'-t)}\mathrm{e}^{-i \nu' (t+t')}   \\
&=\mathrm{e}^{ (\nu-\nu')/4} \int_{-\infty}^{\infty} b_1(t) \mathrm{e}^{i (\nu'-\nu) t}\mathrm{~d} t \int_{-\infty}^{\infty} b_2\left(t^{\prime}\right)\mathrm{e}^{-i (\nu+\nu') t}\mathrm{d} t^{\prime}\\
&=2\pi \mathrm{e}^{ (\nu-\nu')/4} \hat{b}_1\left( \nu-\nu'\right) \hat{b}_2\left( \nu'+\nu\right)
\end{aligned}
\end{equation}
We define the function $\check{F}_2(t,t')$, which we call the \emph{kernel function}, to be the inverse Fourier transformation of $F_2(\nu,\nu')$:
\begin{equation}
\label{eq:checkF_2(t,t')_defn}
    \check{F}_2(t,t')=\frac{1}{2\pi}\int_{-\infty}^{\infty}\mathrm{d} t\int_{-\infty}^{\infty}\mathrm{d} t^{\prime}F_2(\nu,\nu') \mathrm{e}^{i \nu t}\mathrm{e}^{i \nu' t}
\end{equation}
so we have
\begin{equation}
\begin{aligned}
\tilde{B}_j&=\frac{1}{2\pi}\sum_{\nu,\nu'}(\gamma_j)_\nu (\gamma_j)_{\nu'}\int_{-\infty}^{\infty}\mathrm{d} t\int_{-\infty}^{\infty}\mathrm{d} t^{\prime}\check{F}_2(t,t') \mathrm{e}^{-i \beta\nu t}\mathrm{e}^{-i \beta\nu' t}\\
&=\frac{1}{2\pi}\int_{-\infty}^{\infty}\mathrm{d} t\int_{-\infty}^{\infty}\mathrm{d} t^{\prime}\check{F}_2(t,t') \gamma_j(\beta t)\gamma_j(-\beta t')
\end{aligned}
\end{equation}

Below we will show that the kernel function $\check{F}_2(t,t')$ decays rapidly as $|t|,|t'|$ increases.
\begin{lem}
    \label{lem:decay_of_checkF2(t,t')}
    For the kernel function $\check{F}_2(t,t')$ defined in \eqref{eq:checkF_2(t,t')_defn}, we have
    \[
    |\check{F}_2(t,t')|\leq Ce^{-\lambda (t^2+t'^2)^{1/2}},
    \]
    for some universal constants $C,\lambda>0$.
\end{lem}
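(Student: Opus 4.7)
The plan is to exploit the separable structure of $F_2(\nu,\nu')=2\pi e^{(\nu-\nu')/4}\hat{b}_1(\nu-\nu')\hat{b}_2(\nu+\nu')$ by making the linear change of variables $u=\nu-\nu'$, $v=\nu+\nu'$ in the definition \eqref{eq:checkF_2(t,t')_defn}. The Jacobian is $1/2$ and $\nu t + \nu' t' = \tfrac{1}{2}(u(t-t') + v(t+t'))$, producing the factorization
\[
\check{F}_2(t,t') \;=\; \tfrac{1}{2}\,G(t-t')\,H(t+t'),
\]
where $G(s)=\int e^{u/4}\hat{b}_1(u)e^{ius/2}\,du$ and $H(s)=\int \hat{b}_2(v)e^{ivs/2}\,dv$. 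It then suffices to bound each factor separately: exponential (or faster) decay of both factors, combined with the elementary inequality $\max(|t-t'|,|t+t'|)\ge\sqrt{t^2+t'^2}$, yields the claimed bound on $|\check{F}_2(t,t')|$.

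The $H$ factor is immediate: Fourier inversion gives $H(s)=\sqrt{2\pi}\,b_2(s/2)$, and the explicit Gaussian form of $b_2$ in \eqref{eq:f(t),b1(t),b2(t)} yields $|H(s)|\le C e^{-s^2}$. For the $G$ factor, I would first compute $\hat{b}_1$ in closed form via the convolution theorem: the standard Fourier transform of $1/\cosh(2\pi t)$ is proportional to $1/\cosh(\omega/4)$, and $\sin(-t)e^{-2t^2}$ Fourier-transforms to a difference of shifted Gaussians proportional to $\sinh(\omega/4)\,e^{-\omega^2/8}$. Their product yields
\[
\hat{b}_1(\omega) \;=\; c\,\tanh(\omega/4)\,e^{-\omega^2/8}
\]
for an explicit constant $c$. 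Completing the square $e^{u/4-u^2/8}=e^{1/8}e^{-(u-1)^2/8}$ then rewrites $G(s)$, up to a constant, as the Fourier transform at frequency $s/2$ of a shifted Gaussian multiplied by $\tanh(u/4)$.

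To extract exponential decay of $G$, I would use that $\tanh(z/4)$ is analytic in the strip $|\operatorname{Im} z|<2\pi$, its nearest poles being at $z=\pm 2\pi i$. Shifting the integration contour from $\mathbb{R}$ to $\mathbb{R}+iy_0$ with $y_0=\pi\operatorname{sgn}(s)$ produces an overall factor of $e^{-y_0 s/2}$, while the Gaussian $e^{-(u-1)^2/8}$ dominates the bounded $\tanh$ factor together with the $e^{y_0^2/8}$ arising from the shift. Justifying the contour shift reduces to verifying Gaussian decay of the integrand as $\operatorname{Re} u\to\pm\infty$ and boundedness of $\tanh((u+iy_0)/4)$ for real $u$, both of which are elementary. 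This yields $|G(s)|\le C e^{-\lambda_1|s|}$ for some constant $\lambda_1>0$, and combining with the bound on $H$ gives $|\check{F}_2(t,t')|\le C e^{-\lambda_1|t-t'|}e^{-(t+t')^2}$. Splitting according to whether $|t-t'|$ or $|t+t'|$ realizes $\max(|t-t'|,|t+t'|)\ge\sqrt{t^2+t'^2}$: in the first case the exponential factor directly gives $e^{-\lambda_1\sqrt{t^2+t'^2}}$, and in the second case the Gaussian factor $e^{-(t+t')^2}$ dominates $e^{-\lambda\sqrt{t^2+t'^2}}$ for large $\sqrt{t^2+t'^2}$ while being absorbed into the constant for bounded argument.

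The main obstacle I anticipate is this contour-shift step. A tempting shortcut via $G(s)=\sqrt{2\pi}\,b_1(s/2-i/4)$ is hazardous because $b_1$ is built from $1/\cosh(2\pi t)$, which has poles at $t=\pm i/4$ sitting exactly on the candidate contour. Working on the Fourier side with the product $\tanh(u/4)e^{-(u-1)^2/8}$ sidesteps this: the poles of $\tanh(u/4)$ lie at imaginary height $2\pi$, comfortably far from the real axis, so a generous contour deformation is available and the bound on $G$ is uniform in $s$.
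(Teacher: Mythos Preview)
Your change of variables and factorization $\check{F}_2(t,t')=\tfrac{1}{2}G(t-t')H(t+t')$ is exactly what the paper does (up to a $\sqrt{2}$ normalization), and the treatment of the $H$ factor via the explicit Gaussian $b_2$ is identical. The only difference is in how you bound $G$: the paper \emph{does} take the shortcut you call hazardous, identifying $G(s)$ with (a constant times) $b_1(s/2-i/4)$, and then bounds this directly from the convolution representation
\[
b_1(t-i\delta)=\int_{\mathbb{R}} h_1(s)\,h_2(t-i\delta-s)\,ds,\qquad h_1(t)=\frac{2\sqrt{\pi}e^{1/8}}{\cosh(2\pi t)},\quad h_2(t)=\sin(-t)e^{-2t^2}.
\]
The key observation you missed is that this integral keeps $s$ on the real axis; only the \emph{entire} factor $h_2$ is evaluated at a complex argument, so the poles of $1/\cosh(2\pi s)$ at $s=\pm i/4$ are never encountered. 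One then applies an elementary lemma on the exponential decay of a convolution of two exponentially decaying functions. Your alternative---computing $\hat{b}_1(\omega)\propto\tanh(\omega/4)e^{-\omega^2/8}$ and shifting the contour by $|y_0|<2\pi$---is also correct and arguably more transparent about \emph{why} the decay rate is what it is, but it is more work than the paper's direct route. Either way the final bound follows.
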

For the proof see Section~\ref{sec:decay_kernel_function_coherent_part} in the appendix.
Recall that in \eqref{eq:parent_ham_decompose_4_parts} we decomposed the coherent part of the Hamiltonian into $H^{\mathrm{parent}}_{\mathrm{C}}=H^{\mathrm{parent}}_{\mathrm{C,free}}+H^{\mathrm{parent}}_{\mathrm{C,int}}$. In Lemma~\ref{lem:free_parent_zero}. 
Therefore we only need to show that $H^{\mathrm{parent}}_{\mathrm{C,int}}$ has $(CU,\mu)$-decay for some $C,\mu>0$. 

For each $j$, we decompose
\begin{equation}
    \label{eq:defn_B_tilde_int}
    \tilde{B}^{\mathrm{int}}_j = \tilde{B}_j-\tilde{B}^{\mathrm{free}}_j,
\end{equation}
where
\begin{equation}
    \label{eq:defn_B_tilde_free}
    \tilde{B}^{\mathrm{free}}_j= \frac{1}{2\pi}\int_{-\infty}^{\infty}\mathrm{d} t\int_{-\infty}^{\infty}\mathrm{d} t^{\prime}\check{F}_2(t,t') \gamma^{\mathrm{free}}_j(\beta t)\gamma^{\mathrm{free}}_j(-\beta t')
\end{equation}
Therefore we have
\begin{equation}
\tilde{B}^{\mathrm{int}}_j=\frac{1}{2\pi}\int_{-\infty}^{\infty}\mathrm{d} t\int_{-\infty}^{\infty}\mathrm{d} t^{\prime}\check{F}(t,t') \left(\gamma_j(\beta t)\gamma_j(-\beta t')-\gamma_j^{\mathrm{free}}(\beta t)\gamma_j^{\mathrm{free}}(-\beta t')\right).
\end{equation}

We will first show that $\tilde{B}^{\mathrm{int}}_j$ can be well-approximated by an integral on the ball with radius $T$.
\begin{lem}
\label{lem:approximate_B_tilde_int_finite_T}
    For any $T>0$, let
    \[
    Z(T) = \frac{1}{2\pi}\int_{t^2+t'^2\leq T^2}\mathrm{d} t\mathrm{d} t^{\prime}\check{F}(t,t') \left(\gamma_j(\beta t)\gamma_j(-\beta t')-\gamma_j^{\mathrm{free}}(\beta t)\gamma_j^{\mathrm{free}}(-\beta t')\right).
    \]
    Then
    \[
    \|\tilde{B}^{\mathrm{int}}_j-Z(T)\|\leq C'Ue^{-\lambda'T},
    \]
    for some constants $C',\lambda'>0$ that depend only on $r_0,D,\beta$.
\end{lem}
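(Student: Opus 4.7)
My plan is to observe that $\tilde B^{\mathrm{int}}_j - Z(T)$ is the tail of the double integral, i.e.\ the integral over $\{(t,t'):t^2+t'^2>T^2\}$, and bound it by combining the rapid decay of the kernel $\check F_2$ (Lemma~\ref{lem:decay_of_checkF2(t,t')}) with a norm bound on the ``interacting correction'' to the product of time-evolved Majorana operators that is linear in $U$ and grows only polynomially in $|t|,|t'|$.

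First, using $\gamma_j(s)=\gamma_j^{\mathrm{free}}(s)+\gamma_j^{\mathrm{int}}(s)$ I would decompose
\[
\gamma_j(\beta t)\gamma_j(-\beta t')-\gamma_j^{\mathrm{free}}(\beta t)\gamma_j^{\mathrm{free}}(-\beta t') \;=\; \gamma_j^{\mathrm{int}}(\beta t)\gamma_j(-\beta t') + \gamma_j^{\mathrm{free}}(\beta t)\gamma_j^{\mathrm{int}}(-\beta t'),
\]
and since $\|\gamma_j^{\mathrm{free}}(\cdot)\|=\|\gamma_j(\cdot)\|=1$, the norm of this difference is at most $\|\gamma_j^{\mathrm{int}}(\beta t)\|+\|\gamma_j^{\mathrm{int}}(-\beta t')\|$. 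Next I would bound $\|\gamma_j^{\mathrm{int}}(s)\|$ directly from its definition \eqref{eq:gamma_int} by integrating the commutator bound of Corollary~\ref{cor:commutator_norm_growth} from $0$ to $|s|$, which yields $\|\gamma_j^{\mathrm{int}}(s)\|\le C_1U(|s|+1)^{D+1}$ for a constant $C_1$ depending only on $r_0,D$. This is the crucial system-size-independent step; it is linear in $U$ and only polynomial in $|s|$.

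Third, plugging these estimates in and applying Lemma~\ref{lem:decay_of_checkF2(t,t')} gives
\[
\|\tilde B^{\mathrm{int}}_j - Z(T)\| \;\le\; \frac{C_1 U}{2\pi}\int_{t^2+t'^2>T^2} Ce^{-\lambda(t^2+t'^2)^{1/2}}\bigl((\beta|t|+1)^{D+1}+(\beta|t'|+1)^{D+1}\bigr)\,dt\,dt'.
\]
Switching to polar coordinates $r=(t^2+t'^2)^{1/2}$ and using $|t|,|t'|\le r$, the integrand is dominated by $Ce^{-\lambda r}\cdot 2(\beta r+1)^{D+1}$, so the remaining one-dimensional tail integral $\int_T^\infty r\,(\beta r+1)^{D+1}e^{-\lambda r}\,dr$ is bounded, for any $\lambda'\in(0,\lambda)$, by a constant (depending on $\beta,D,\lambda,\lambda'$) times $e^{-\lambda' T}$, since the polynomial prefactor can be absorbed into half of the exponential. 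Setting $\lambda'=\lambda/2$ for concreteness, this yields the desired bound $\|\tilde B^{\mathrm{int}}_j-Z(T)\|\le C'Ue^{-\lambda' T}$ with constants depending only on $\beta,r_0,D$.

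The main obstacle is not analytical subtlety but bookkeeping: one must make sure that the norm bound on $\gamma_j^{\mathrm{int}}(s)$ stays linear in $U$ and independent of the system size $n$ even though $V$ is an extensive operator, which is precisely what Corollary~\ref{cor:commutator_norm_growth} was set up to give. Once that is in hand, the rest is a routine tail estimate of a polynomial against an exponentially decaying kernel in two variables.
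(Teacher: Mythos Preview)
Your proposal is correct and follows essentially the same approach as the paper: bound the tail integrand via the decay of $\check F_2$ (Lemma~\ref{lem:decay_of_checkF2(t,t')}) together with the system-size-independent, $U$-linear, polynomially-growing bound on $\|\gamma_j^{\mathrm{int}}(s)\|$, then integrate the resulting polynomial-times-exponential over $\{t^2+t'^2>T^2\}$. Your two-term telescoping $AB-A'B'=(A-A')B+A'(B-B')$ is slightly cleaner than the paper's three-term split (it avoids the $U^2$ cross term), and your direct integration of Corollary~\ref{cor:commutator_norm_growth} is a bit more streamlined than invoking Corollary~\ref{cor:approx_gamma_int}, but these are cosmetic differences within the same argument.
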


\begin{proof}
    We first note that 
    \[
    \begin{aligned}
        &\gamma_j(\beta t)\gamma_j(-\beta t')-\gamma_j^{\mathrm{free}}(\beta t)\gamma_j^{\mathrm{free}}(-\beta t') \\
        &= \gamma_j^{\mathrm{int}}(\beta t)\gamma_j^{\mathrm{free}}(-\beta t')+\gamma_j^{\mathrm{free}}(\beta t)\gamma_j^{\mathrm{int}}(-\beta t')+\gamma_j^{\mathrm{int}}(\beta t)\gamma_j^{\mathrm{int}}(-\beta t').
    \end{aligned}
    \]
    Therefore because $\|\gamma_j^{\mathrm{free}}(s)\|\leq 1$, we have
    \[
    \|\gamma_j(\beta t)\gamma_j(-\beta t')-\gamma_j^{\mathrm{free}}(\beta t)\gamma_j^{\mathrm{free}}(-\beta t')\|\leq \|\gamma_j^{\mathrm{int}}(\beta t)\| +\|\gamma_j^{\mathrm{int}}(-\beta t')\| + \|\gamma_j^{\mathrm{int}}(\beta t)\|\|\gamma_j^{\mathrm{int}}(-\beta t')\|.
    \]
    By Corollary~\ref{cor:approx_gamma_int} we have
    \[
    \|\gamma_j^{\mathrm{int}}(\beta t)\|, \|\gamma_j^{\mathrm{int}}(-\beta t')\|\leq C_1 U(|t|+1)^{D+1}.
    \]
    Therefore
    \[
    \begin{aligned}
        &\|\gamma_j(\beta t)\gamma_j(-\beta t')-\gamma_j^{\mathrm{free}}(\beta t)\gamma_j^{\mathrm{free}}(-\beta t')\| \\
        &\leq C_1 U(|t|+1)^{D+1}+C_1 U(|t'|+1)^{D+1} + C_1^2 U^2(|t|+1)^{D+1}(|t'|+1)^{D+1}.
    \end{aligned}
    \]

    By Lemma~\ref{lem:decay_of_checkF2(t,t')} we have $|\check{F}(t,t')|\leq Ce^{-\lambda (t^2+t'^2)^{1/2}}$. Therefore
    \begin{equation}
        \begin{aligned}
            &\|\tilde{B}^{\mathrm{int}}_j-Z(T)\| \\
            &\leq \int_{t^2+t'^2> T^2}\mathrm{d} t\mathrm{d} t^{\prime} |\check{F}(t,t')| \big(C_1 U(|t|+1)^{D+1}+C_1 U(|t'|+1)^{D+1} \\
            &\quad\quad\quad\quad\quad\quad  +C_1^2 U^2(|t|+1)^{D+1}(|t'|+1)^{D+1}\big) \\
            &\leq C\int_{t^2+t'^2> T^2}\mathrm{d} t\mathrm{d} t^{\prime} e^{-\lambda (t^2+t'^2)^{1/2}} \big(C_1 U(|t|+1)^{D+1}+C_1 U(|t'|+1)^{D+1} \\
            &\quad\quad\quad\quad\quad\quad+ C_1^2 U^2(|t|+1)^{D+1}(|t'|+1)^{D+1}\big) \\
            &\leq C'U e^{-\lambda' T},
        \end{aligned}
    \end{equation}
    for $U>0$ below any universal constant. $\lambda'>0$ is a constant that depends only on $D$, and $C'>0$ is a constant that depends only on $r_0,D,\beta$.
\end{proof}

We decompose further the $Z(t)$ introduced in Lemma~\ref{lem:approximate_B_tilde_int_finite_T} into three parts, similar to \eqref{eq:H_LR_decompose},
\begin{equation}
    \label{eq:decompose_Z(T)_three_parts}
    Z(T) = Z_1(T)+Z_2(T)+Z_3(T),
\end{equation}
where,
\begin{equation}
\label{eq:three_parts_of_Z(T)}
    \begin{aligned}
        Z_1(T)&=\frac{1}{2\pi}\int_{t^2+t'^2\leq T^2}\mathrm{d} t\mathrm{d} t^{\prime}\check{F}(t,t') \gamma_j^{\mathrm{int}}(\beta t)\gamma_j^{\mathrm{int}}(-\beta t')\\
        Z_2(T)&=\frac{1}{2\pi}\int_{t^2+t'^2\leq T^2}\mathrm{d} t\mathrm{d} t^{\prime}\check{F}(t,t') \gamma_j^{\mathrm{int}}(\beta t)\gamma_j^{\mathrm{free}}(-\beta t')\\
        Z_3(T)&=\frac{1}{2\pi}\int_{t^2+t'^2\leq T^2}\mathrm{d} t\mathrm{d} t^{\prime}\check{F}(t,t') \gamma_j^{\mathrm{free}}(\beta t)\gamma_j^{\mathrm{int}}(-\beta t')
    \end{aligned}
\end{equation}

We will show that each of the terms can be well-approximated by local operators. We will start with $Z_1(T)$.
\begin{lem}
    \label{lem:approx_Z_1(T)}
    Let $X_1(t)$ and $X_2(s,t)$ be as defined in Lemma~\ref{lem:commutator_approx_V_gamma_free} and Lemma~\ref{lem:X2(s,t)} respectively. We define
    \[
    Z^{\mathrm{loc}}_{1}(T) = \frac{1}{2\pi}\int_{t^2+t'^2\leq T^2}\mathrm{d} t\mathrm{d} t^{\prime}\check{F}(t,t') \int_0^{\beta t}X_2(s,t)\dd s X_1(-\beta t'),
    \]
    which is supported on the ball with radius $r$ centered at $j$.
    Then
    \begin{equation}
    \label{eq:approx_Z_1(T)}
        \|Z^{\mathrm{loc}}_{1}(T)-Z_1(T)\|\leq CU (T+1)^{D+1}(r+1)^D e^{-r/2},
    \end{equation}
    and
    \begin{equation}
        \label{eq:norm_bound_Z_1(T)}
        \|Z_1(T)\|\leq C'U,
    \end{equation}
    when $r\geq C_1(T+1)$. Moreover, when setting $T=r/C_1-1$, we have
    \begin{equation}
    \label{eq:approx_Z_1(T)_simplified}
        \|Z^{\mathrm{loc}}_{1}(T)-Z_1(T)\|\leq C_1'U e^{-\mu r}.
    \end{equation}
    Here $C,C_1,C',C_1',\mu>0$ are all constants that only depend on $r_0,D,\beta$.
\end{lem}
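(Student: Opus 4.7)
The plan is to approximate each interacting factor $\gamma^{\mathrm{int}}_j(\pm\beta t)$ in the integrand of $Z_1(T)$ by its quasi-local surrogate from Corollary~\ref{cor:approx_gamma_int}, then integrate against the Gaussian-decaying kernel $\check{F}(t,t')$ from Lemma~\ref{lem:decay_of_checkF2(t,t')}. The ingredients I will invoke are: (a) the approximation error $\|\gamma^{\mathrm{int}}_j(\beta t)-\int_0^{\beta t}X_2(s,t)\dd s\|\leq CUe^{-r/2}(|t|+1)^{D+1}$, which requires $r$ to exceed a constant times $|t|+1$; (b) the norm bound $\|\int_0^{\beta t}X_2(s,t)\dd s\|\leq C'U(|t|+1)^{D+1}$; and by triangle inequality between (a) and (b), (c) $\|\gamma^{\mathrm{int}}_j(\beta t)\|\leq C''U(|t|+1)^{D+1}$.

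The support claim is immediate since $X_1(s)$ and $\int_0^{\beta t}X_2(s,t)\dd s$ are supported on the ball of radius $r$ around $j$ by Lemma~\ref{lem:commutator_approx_V_gamma_free} and Lemma~\ref{lem:X2(s,t)}. For the norm bound \eqref{eq:norm_bound_Z_1(T)}, I would apply (c) to both interacting factors in $Z_1(T)$'s integrand to get the pointwise bound $CU^2(|t|+1)^{D+1}(|t'|+1)^{D+1}$, integrate against $|\check{F}(t,t')|$ on the disk $t^2+t'^2\leq T^2$, and use the Gaussian decay from Lemma~\ref{lem:decay_of_checkF2(t,t')} to absorb the polynomial prefactors into a $T$-independent constant; absorbing one factor of $U$ using $U<U_0=O(1)$ then yields $\|Z_1(T)\|\leq C'U$.

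For \eqref{eq:approx_Z_1(T)} the core step is the product telescoping
\[
A_1A_2-\tilde A_1\tilde A_2=(A_1-\tilde A_1)A_2+\tilde A_1(A_2-\tilde A_2),
\]
with $A_k$ an interacting factor and $\tilde A_k$ its local surrogate. Under $r\geq C_1(T+1)$, Corollary~\ref{cor:approx_gamma_int} applies uniformly for $t^2+t'^2\leq T^2$, so each piece of the telescoping is pointwise $O(U^2(|t|+1)^{D+1}(|t'|+1)^{D+1}e^{-r/2})$; integrating against $|\check{F}(t,t')|$ produces the stated prefactor, with $(T+1)^{D+1}$ from crude polynomial bounds on the disk and a volume factor $(r+1)^D$ from the approximation ball. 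The simplified bound \eqref{eq:approx_Z_1(T)_simplified} then follows by substituting $T=r/C_1-1$ and absorbing polynomial prefactors into any exponent $\mu<1/2$.

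The main obstacle I anticipate is reconciling the exact form of $Z_1^{\mathrm{loc}}(T)$ in the statement, which pairs $\int_0^{\beta t}X_2(s,t)\dd s$ (a local proxy for $\gamma^{\mathrm{int}}_j(\beta t)$) with $X_1(-\beta t')$ (a local truncation of $\gamma^{\mathrm{free}}_j(-\beta t')$, not of $\gamma^{\mathrm{int}}_j(-\beta t')$), whereas $Z_1(T)$ contains two interacting factors. A naive triangle inequality would fail since $\|\gamma^{\mathrm{int}}_j-X_1\|$ is $O(1)$, not $O(U)$. To handle this, I would unfold the Duhamel identity \eqref{eq:gamma_int} for the second factor, writing $\gamma^{\mathrm{int}}_j(-\beta t')$ as a time-integral of $[V,\gamma^{\mathrm{free}}_j(s)]$, then substitute the truncation $X_1(s)$ inside the commutator via Lemma~\ref{lem:commutator_approx_V_gamma_free}; the crucial $O(U)$ factor then comes from the commutator with $V$ and Corollary~\ref{cor:commutator_norm_growth}, matching the scaling required by the lemma.
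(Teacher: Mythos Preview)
Your overall strategy---replace each factor by its quasi-local surrogate from Corollary~\ref{cor:approx_gamma_int} and Lemma~\ref{lem:commutator_approx_V_gamma_free}, telescope the product, and integrate against the exponentially decaying kernel from Lemma~\ref{lem:decay_of_checkF2(t,t')}---is exactly what the paper does. The support claim, the norm bound \eqref{eq:norm_bound_Z_1(T)}, and the simplified bound \eqref{eq:approx_Z_1(T)_simplified} by substitution all go through as you describe. One small correction: the $(r+1)^D$ prefactor in \eqref{eq:approx_Z_1(T)} arises from the Lieb--Robinson tail estimate $\|\gamma^{\mathrm{free}}_j(-\beta t')-X_1(-\beta t')\|\leq C(r+1)^D e^{-r}$ (see the proof of Lemma~\ref{lem:commutator_approx_V_gamma_free}), not from a ball-volume factor.

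The inconsistency you flag is real and is present in the paper's own proof. The paper's telescoping reads
\[
\left\|\gamma^{\mathrm{int}}_j(\beta t)-\textstyle\int_0^{\beta t}X_2\,\dd s\right\|\,\|\gamma^{\mathrm{free}}_j(-\beta t')\|
+\left\|\textstyle\int_0^{\beta t}X_2\,\dd s\right\|\,\|\gamma^{\mathrm{free}}_j(-\beta t')-X_1(-\beta t')\|,
\]
which is the correct decomposition for $\gamma^{\mathrm{int}}_j(\beta t)\,\gamma^{\mathrm{free}}_j(-\beta t')$, i.e.\ for the integrand of $Z_2(T)$ as labeled in \eqref{eq:three_parts_of_Z(T)}, not for $Z_1(T)$. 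The subsequent text, where the $U^2$ bound is attached to $Z_3$, confirms that the labels in \eqref{eq:three_parts_of_Z(T)} have effectively been permuted relative to the lemma and its proof. Read this way, $X_1(-\beta t')$ in $Z_1^{\mathrm{loc}}$ is indeed the intended local surrogate for $\gamma^{\mathrm{free}}_j(-\beta t')$, and the straightforward telescoping you wrote down suffices; your Duhamel detour is unnecessary. For the genuine int--int term (the one carrying the $U^2$ bound), your first idea---use two $\int_0^{\beta t}X_2\,\dd s$ surrogates and telescope---is the correct route.
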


\begin{proof}
    We first recall that by Corollary~\ref{cor:approx_gamma_int},
    \[
    \left\|\gamma^{\mathrm{int}}_j(\beta t)-\int_0^{\beta t}X_2(s,t)\dd s\right\|\leq C_2 e^{-r/2}U(|t|+1)^{D+1},
    \]
    and
    \[
    \left\|\int_0^{\beta t}X_2(s,t)\dd s\right\|\leq C_3 U(|t|+1)^{D+1}.
    \]
    By the proof of Lemma~\ref{lem:commutator_approx_V_gamma_free}, we also have
    \[
    \|\gamma^{\mathrm{free}}_j(-\beta t')-X_1(-\beta t')\|\leq C_4 (r+1)^D e^{-r}.
    \]
    All of the above require $r\geq J_1(|t|+1)$. These results allow us to approximate $\gamma_j^{\mathrm{int}}(\beta t)\gamma_j^{\mathrm{int}}(-\beta t')$ with $\int_0^{\beta t}X_2(s,t)\dd s X_1(-\beta t')$. We have
    \[
    \begin{aligned}
        &\left\|\gamma_j^{\mathrm{int}}(\beta t)\gamma_j^{\mathrm{int}}(-\beta t')-\int_0^{\beta t}X_2(s,t)\dd s X_1(-\beta t)\right\| \\
        &\leq \left\|\gamma^{\mathrm{int}}_j(\beta t)-\int_0^{\beta t}X_2(s,t)\dd s\right\|\|\gamma^{\mathrm{free}}_j(-\beta t')\| 
        +\left\|\int_0^{\beta t}X_2(s,t)\dd s\right\|\|\gamma^{\mathrm{free}}_j(-\beta t')-X_1(-\beta t')\|\\
        &\leq C_5 U (T+1)^{D+1}(r+1)^D e^{-r/2},
    \end{aligned}
    \]
    when $t^2+t'^2\leq T^2$. Using the fact that $\int\dd t\dd t' |\check{F}_2(t,t')|$ is upper bounded by a constant, we can integrate the two sides of the above inequality to get \eqref{eq:approx_Z_1(T)}.  \eqref{eq:approx_Z_1(T)_simplified} follows by setting $T=r/J_1-1$.

    An upper bound for the norm $\|Z_1(T)\|$ can be obtained from the norm bound for $\gamma^{\mathrm{int}}_j(t)$ (available from Corollary~\ref{cor:approx_gamma_int} by the triangle inequality) and $\|\gamma^{\mathrm{int}}_j(-\beta t')\|\leq 1$, which after multiplying to an exponentially decaying kernel function $\check{F}_2(t,t')$ and integrating over $t,t'$ yield \eqref{eq:norm_bound_Z_1(T)}.
\end{proof}

This procedure can be repeated for $Z_2(T)$ and $Z_3(T)$, yielding
\begin{equation}
\label{eq:approx_Z2_Z3}
    \|Z_2(T)-Z^{\mathrm{loc}}_2(T)\|\leq C_2 U e^{-\mu_2 r},\quad \|Z_3(T)-Z^{\mathrm{loc}}_3(T)\|\leq C_3 U^2 e^{-\mu_3 r},
\end{equation}
and
\begin{equation}
    \label{eq:norm_Z2_Z3}
    \|Z_2(T)\|\leq C_2'U, \quad\|Z_3(T)\|\leq C_3' U_2.
\end{equation}
Here $Z^{\mathrm{loc}}_2(T)$ and $Z^{\mathrm{loc}}_3(T)$ are both supported on the ball with radius $r$ centered at $j$. By adding up the three approximations to get
\begin{equation}
\label{eq:defn_Z_loc_T}
    Z^{\mathrm{loc}}(T) = Z^{\mathrm{loc}}_1(T)+Z^{\mathrm{loc}}_2(T)+Z^{\mathrm{loc}}_3(T),
\end{equation}
we then have
\begin{equation}
\label{eq:approx_Z(T)_and_Z(T)_norm}
    \|Z(T)-Z^{\mathrm{loc}}(T)\|\leq CUe^{-\mu r},\quad \|Z(T)\|\leq C'U.
\end{equation}
In the above $C,C',C_2,C_2',C_3,C_3'$ are all constants that depend only on $r_0,D$.
We are now ready to approximate $\tilde{B}^{\mathrm{int}}_j$.

\begin{lem}
    \label{lem:approx_coherent}
    The term $\tilde{B}^{\mathrm{int}}_j$ can be approximated with a local operator $\tilde{B}^{\mathrm{loc}}_j(r)$. More precisely, we have
    \[
    \left\|\tilde{B}^{\mathrm{int}}_j-\tilde{B}^{\mathrm{loc}}_j(r)\right\|\leq C e^{-\mu r} U,
    \]
    Here $\tilde{B}^{\mathrm{loc}}_j(r)$ is supported on the ball with radius $r$ centered at $j$. Moreover,
    \[
    \left\| \tilde{B}_j \right\|\leq C' U.
    \]
    $C,C',\mu$ are positive constants that depend only on $r_0,D,\beta$.
\end{lem}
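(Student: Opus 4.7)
The plan is to assemble the local approximation by stitching together the two exponential decay estimates already developed in this section: the integral truncation in Lemma~\ref{lem:approximate_B_tilde_int_finite_T} and the piecewise local approximation of $Z(T)$ summarized in \eqref{eq:approx_Z(T)_and_Z(T)_norm}. No new analytic estimate is required; the work is in tying the two length scales together correctly.

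First I would fix a radius $T$ and truncate the $(t,t')$ integration defining $\tilde{B}^{\mathrm{int}}_j$ to the disc $t^2+t'^2\leq T^2$, replacing it by $Z(T)$. Lemma~\ref{lem:approximate_B_tilde_int_finite_T} then delivers the Gaussian-kernel tail bound $\|\tilde{B}^{\mathrm{int}}_j - Z(T)\|\leq C'U e^{-\lambda' T}$. Next I would approximate $Z(T) = Z_1(T)+Z_2(T)+Z_3(T)$ by its local counterpart $Z^{\mathrm{loc}}(T) = Z^{\mathrm{loc}}_1(T) + Z^{\mathrm{loc}}_2(T) + Z^{\mathrm{loc}}_3(T)$ introduced in \eqref{eq:defn_Z_loc_T}, whose three pieces (via Lemma~\ref{lem:approx_Z_1(T)} and the parallel estimates in \eqref{eq:approx_Z2_Z3}) are each supported on the ball of radius $r$ around site $\mathfrak{s}(j)$ and together obey $\|Z(T)-Z^{\mathrm{loc}}(T)\|\leq CUe^{-\mu r}$, provided $r\geq C_1(T+1)$.

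I would then tie the two parameters together by the choice $T = r/C_1 - 1$, which both enforces the compatibility condition required by Lemma~\ref{lem:approx_Z_1(T)} and converts the truncation error $e^{-\lambda' T}$ into an exponential decay in $r$. Setting $\tilde{B}^{\mathrm{loc}}_j(r) := Z^{\mathrm{loc}}(T)$, the triangle inequality gives
\[
\|\tilde{B}^{\mathrm{int}}_j - \tilde{B}^{\mathrm{loc}}_j(r)\| \leq \|\tilde{B}^{\mathrm{int}}_j - Z(T)\| + \|Z(T) - Z^{\mathrm{loc}}(T)\| \leq C U e^{-\mu r},
\]
after replacing $\mu$ by $\min\{\lambda'/C_1,\mu\}$ and enlarging $C$ to absorb both constants. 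For the norm bound I would apply the triangle inequality $\|\tilde{B}^{\mathrm{int}}_j\|\leq \|\tilde{B}^{\mathrm{int}}_j-Z(T)\|+\|Z(T)\|$ and take $T\to\infty$, using the $T$-independent bound $\|Z(T)\|\leq C'U$ from \eqref{eq:approx_Z(T)_and_Z(T)_norm} to conclude $\|\tilde{B}^{\mathrm{int}}_j\|\leq C'U$.

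I do not expect a real obstacle here: Lemmas~\ref{lem:decay_of_checkF2(t,t')}, \ref{lem:approximate_B_tilde_int_finite_T}, and \ref{lem:approx_Z_1(T)} have absorbed all the technical difficulty (decay of the kernel $\check{F}_2$, control of the polynomial-in-$|t|$ growth of $\gamma_j^{\mathrm{int}}(\beta t)$, and the Lieb--Robinson truncation of nested time evolutions). The only place that warrants attention is bookkeeping: the local ball radius $r$ in Lemma~\ref{lem:approx_Z_1(T)} must exceed a constant multiple of $(T+1)$ for the Lieb--Robinson truncation arguments to apply, so $T$ and $r$ cannot be chosen independently. The linear relation $T = r/C_1 - 1$ handles this and yields joint exponential decay, completing the proof.
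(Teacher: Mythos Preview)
Your proposal is correct and matches the paper's proof essentially line for line: define $\tilde{B}^{\mathrm{loc}}_j(r)=Z^{\mathrm{loc}}(T)$ with the linear coupling $T=r/C_1-1$, combine Lemma~\ref{lem:approximate_B_tilde_int_finite_T} with \eqref{eq:approx_Z(T)_and_Z(T)_norm} via the triangle inequality for the approximation bound, and read off the norm bound from $\|Z(T)\|\leq C'U$. Your interpretation of the second inequality as a bound on $\|\tilde{B}^{\mathrm{int}}_j\|$ (rather than $\|\tilde{B}_j\|$ as literally written) is the intended one, since that is what the downstream Corollary~\ref{cor:quasi_locaity_H_int_C} actually uses.
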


\begin{proof}
    We define $\tilde{B}^{\mathrm{loc}}_j(r) = Z^{\mathrm{loc}}(r/J_1-1)$ where $J_1$ is the same as in Lemma~\ref{lem:approx_Z_1(T)} and $Z^{\mathrm{loc}}(T)$ is defined in \eqref{eq:defn_Z_loc_T}.
    Combining \eqref{eq:approx_Z(T)_and_Z(T)_norm} with Lemma~\ref{lem:approximate_B_tilde_int_finite_T} we have the bound for $\left\|\tilde{B}^{\mathrm{int}}_j-\tilde{B}^{\mathrm{loc}}_j(r)\right\|$ by the triangle inequality. The bound for $\left\| \tilde{B}_j \right\|$ can be derived by the first bound and $\|Z(T)\|\leq C_1 U$ in \eqref{eq:approx_Z(T)_and_Z(T)_norm}.
\end{proof}

\begin{cor}
    \label{cor:quasi_locaity_H_int_C}
    The part of the parent Hamiltonian $H^{\mathrm{parent}}_{\mathrm{C,int}}$ defined in \eqref{eq:parent_ham_decompose_4_parts} has $(CU,\mu)$-decay as defined in Definition~\ref{defn:decay_operator}, where $C,\mu>0$ depend only on $r_0,D,\beta$.
\end{cor}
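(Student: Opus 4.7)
The plan is to mirror the argument used for the dissipative interacting part in Lemma~\ref{lem:H_LR_1_decay}: decompose $H^{\mathrm{parent}}_{\mathrm{C,int}}$ as a sum over sites of per-site terms, show each per-site term is quasi-local around that site with norm $\mathcal{O}(U)$, then invoke Lemma~\ref{lem:quasi_local_to_decay} to promote quasi-locality into $(CU,\mu)$-decay.

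First, from \eqref{eq:parent_ham_coherent_int} combined with \eqref{eq:parent_ham_coherent_part} and \eqref{eq:parent_ham_coherent_free}, I would write
\[
H^{\mathrm{parent}}_{\mathrm{C,int}} = \sum_j W_j,\quad W_j = i\,\Phi\bigl(L_{\tilde{B}^{\mathrm{int}}_j} - R_{(\tilde{B}^{\mathrm{int}}_j)^\dag}\bigr),
\]
where $\tilde{B}^{\mathrm{int}}_j = \tilde{B}_j - \tilde{B}^{\mathrm{free}}_j$ is exactly the object studied in Lemma~\ref{lem:approx_coherent}. Next, define a local approximant
\[
W_{j,r} = i\,\Phi\bigl(L_{\tilde{B}^{\mathrm{loc}}_j(r)} - R_{\tilde{B}^{\mathrm{loc}}_j(r)^\dag}\bigr),
\]
where $\tilde{B}^{\mathrm{loc}}_j(r)$ is the local approximation provided by Lemma~\ref{lem:approx_coherent}, supported on the ball of radius $r$ centered at $j$. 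By Remark~\ref{rem:Phi_preserves_locality}, $\Phi$ preserves locality strictly, so $W_{j,r}$ is supported on the same ball.

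The approximation error is then controlled by the triangle inequality together with the norm-preserving property of $\Phi$ acting on left and right multiplications, i.e., $\|\Phi(L_A)\|,\|\Phi(R_A)\|\leq \|A\|$ (Lemma~\ref{lem:norm_preservation}). This gives
\[
\|W_j - W_{j,r}\| \leq 2\,\bigl\|\tilde{B}^{\mathrm{int}}_j - \tilde{B}^{\mathrm{loc}}_j(r)\bigr\| \leq 2C e^{-\mu r} U,
\]
using Lemma~\ref{lem:approx_coherent}. The same lemma also yields $\|W_j\|\leq 2\|\tilde{B}^{\mathrm{int}}_j\|\leq 4C'U$ (using $\|\tilde{B}_j\|,\|\tilde{B}^{\mathrm{free}}_j\|\leq C'U$ from setting $V=0$ in the same estimate, or simply absorbing into a larger constant). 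Consequently $W_j$ is $(\tilde C U,\mu)$-quasi-local around $j$ in the sense of Definition~\ref{defn:quasi_local}, for a suitable constant $\tilde C$ depending only on $r_0,D,\beta$.

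Finally, applying Lemma~\ref{lem:quasi_local_to_decay} to $H^{\mathrm{parent}}_{\mathrm{C,int}} = \sum_j W_j$ yields $(2\tilde C e^{\mu} U,\mu)$-decay, which is the claimed $(CU,\mu)$-decay. I don't expect any real obstacle here: the only slightly delicate point is ensuring that the parity structure of $\tilde{B}^{\mathrm{int}}_j$ (which is built from products of two Majorana strings and is therefore even) allows us to use $\Phi(L_W) = \varphi\circ L_W\circ\varphi^{-1}$ and $\Phi(R_W) = \varphi\circ R_W\circ \varphi^{-1}$ without the extra $(-1)^{\hat n}$ factor from Definition~\ref{defn:Phi_thrid_quantization}, so that norm preservation holds straightforwardly; since $V$ preserves parity and the time-evolution under $H$ or $H_0$ does as well, both $\tilde{B}_j$ and $\tilde{B}^{\mathrm{free}}_j$ have the right parity and the bookkeeping goes through without issue.
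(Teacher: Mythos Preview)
Your proposal is correct and follows essentially the same approach as the paper: establish that each $\tilde{B}^{\mathrm{int}}_j$ is $(O(U),\mu)$-quasi-local via Lemma~\ref{lem:approx_coherent}, transfer this through $\Phi$ using locality preservation (Remark~\ref{rem:Phi_preserves_locality}) and norm control (Lemma~\ref{lem:norm_preservation}), and then apply Lemma~\ref{lem:quasi_local_to_decay}. The only cosmetic difference is ordering: the paper first argues decay for $\sum_j \tilde{B}^{\mathrm{int}}_j$ and then pushes through $\Phi$, whereas you push each term through $\Phi$ first and then assemble; both are equivalent. One small remark: the bound you need is $\|\tilde{B}^{\mathrm{int}}_j\|\leq C'U$ (the paper's Lemma~\ref{lem:approx_coherent} states $\|\tilde{B}_j\|\leq C'U$, which appears to be a typo for $\tilde{B}^{\mathrm{int}}_j$, as its proof shows), so your parenthetical about bounding $\|\tilde{B}^{\mathrm{free}}_j\|$ by $O(U)$ is unnecessary and indeed would not hold for individual $j$.
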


\begin{proof}
    Since $\tilde{B}=\sum_j \tilde{B}_j$, and that $\tilde{B}^{\mathrm{int}}_j$ is $(C_1 U,\mu)$-quasi-local around $j$ (Definition~\ref{defn:quasi_local}) as proved in Lemma~\ref{lem:approx_coherent}, by Lemma~\ref{lem:quasi_local_to_decay} we know that $\tilde{B}$ has $(2C_1 U,\mu)$-decay. Because $H^{\mathrm{parent}}_{\mathrm{C,int}} = \Phi(L_{\tilde{B}_j}-R_{\tilde{B}_j^\dag})$ and that locality is preserved in the mapping $\Phi$ by Remark~\ref{rem:Phi_preserves_locality}, we have the $(CU,\mu)$-decay of $H^{\mathrm{parent}}_{\mathrm{C,int}}$.
\end{proof}

We are then ready to prove Proposition~\ref{prop:interacting_part_decay}:
\begin{proof}[Proof of Proposition~\ref{prop:interacting_part_decay}]
    By Lemma~\ref{lem:H_dissipative_int_decay} we know that  $H^{\mathrm{parent}}_{\mathrm{D,int}}$ has $(C_1 U,\mu_1)$-decay, and by Corollary~\ref{cor:quasi_locaity_H_int_C} we know that $H^{\mathrm{parent}}_{\mathrm{C,int}}$ has $(C_2 U,\mu_2)$-decay. As a result, by setting $C=C_1+C_2$, $\mu=\min\{\mu_1,\mu_2\}$, we can see that
    $H^{\mathrm{parent}}_{\mathrm{int}}=H^{\mathrm{parent}}_{\mathrm{C,int}}+H^{\mathrm{parent}}_{\mathrm{D,int}}$ has $(CU,\mu)$-decay.
\end{proof}

\section{Discussions}

In this work we proved that, at any temperature, when the interacting strength in a geometrically local fermionic Hamiltonian is below a constant threshold that depends only on the lattice geometry ($r_0,D$) and the inverse temperature ($\beta$), the mixing time of an efficiently implementable Lindbladian that prepares the Gibbs state can be bounded linearly in the system size. This immediately implies efficient preparation of the Gibbs state on a quantum computer. In the proof, we obtained a constant lower bound for the Lindbladian spectral gap restricted to the even parity subspace by mapping the Lindbladian to a parent Hamiltonian, for which the result from \cite{Hastings2019stability} provides a spectral gap lower bound. The mapping from the Lindbladian to the parent Hamiltonian utilizes the third quantization, and we paid special attention to ensure that the mapping preserves locality and the relevant part of the Lindbladian spectrum.

Our result motivates new questions about the competition between classical and quantum algorithms in the setting of weakly interacting fermions. The provably efficient quantum algorithm that comes as a result of our mixing time bound makes it natural to ask whether one can compute Gibbs state properties for weakly interacting fermions on a classical computer to attain arbitrarily high accuracy with provable efficiency. This may require us to use structural properties of the Gibbs states akin to what was done in \cite{BakshiLiuMoitraTang2024high}.

It is also natural to ask whether one can obtain spectral gap results and hence mixing time bounds by directly studying the Lindbladian rather than going through the parent Hamiltonian. If possible, this approach would enable us to deal with Lindbladians that may not satisfy the detailed balance condition. Such an approach has been studied for Lindbladian satisfying a special set of conditions in \cite{Fang2024mixing}, from which the idea of constructing a Lyapunov functional may be more generally useful. 

The quantum algorithm for Gibbs state preparation discussed in Corollary~\ref{cor:efficient_prep_gibbs_state} also motivates further development of Lindbladian simulation algorithms. Note that to prepare the Gibbs state we need to simulate Lindbladian dynamics for a time that is linear in $n$ on a system of size that is also linear in $n$. One would naturally expect the gate complexity to be proportional to $n^2$, but we in the end have a $n^3$ scaling because of an additional $n$ factor from normalizing the Lindbladian. This is in contrast with Hamiltonian simulation algorithms where a gate complexity that is almost linear in the space-time volume can be attained \cite{HaahHastingsKothariLow2021quantum,ChildsSuTranWiebeZhu2021theory}. This problem also arises in other Lindbladian simulation algorithms \cite{DingLiLin2024simulating,CleveWang2016efficient,Pocrnic2023quantum,ChildsLi2016efficient,Kliesch2011dissipative}. Whether one can attain the space-time volume scaling in Lindbladian simulation remains an interesting open question.


\section*{Acknowledgments}

We thank Lin Lin and Zhiyan Ding for helpful discussion.
Y.Z. acknowledges funding from the National Science Foundation (PHY-1733907). The Institute for Quantum Information and Matter is an NSF Physics Frontiers Center. In the process of writing this paper, the authors became aware of an independent work by Mario Berta, Roberto Bondesan, Richard Meister, and {\v S}t{\v e}p{\' a}n {\v S}m{\' i}d on a similar topic that has been carried out in parallel.

\bibliographystyle{ieeetr}
\bibliography{ref}

\begin{thebibliography}{10}

\bibitem{Landau2015polynomial}
Z.~Landau, U.~Vazirani, and T.~Vidick, ``A polynomial time algorithm for the ground state of one-dimensional gapped local {H}amiltonians,'' {\em Nature Physics}, vol.~11, no.~7, pp.~566--569, 2015.

\bibitem{BrandaoKastoryano2019finite}
F.~G. Brandao and M.~J. Kastoryano, ``Finite correlation length implies efficient preparation of quantum thermal states,'' {\em Communications in Mathematical Physics}, vol.~365, pp.~1--16, 2019.

\bibitem{HwangJiang2024gibbs}
Y.~Hwang and J.~Jiang, ``{G}ibbs state preparation for commuting {H}amiltonian: Mapping to classical {G}ibbs sampling,'' {\em arXiv preprint arXiv:2410.04909}, 2024.

\bibitem{KastoryanoBrandao2016quantum}
M.~J. Kastoryano and F.~G. Brandao, ``Quantum {G}ibbs samplers: The commuting case,'' {\em Communications in Mathematical Physics}, vol.~344, pp.~915--957, 2016.

\bibitem{KochanowskiAlhambraCapelRouze2024rapid}
J.~Kochanowski, A.~M. Alhambra, A.~Capel, and C.~Rouz{\'e}, ``Rapid thermalization of dissipative many-body dynamics of commuting {H}amiltonians,'' {\em arXiv preprint arXiv:2404.16780}, 2024.

\bibitem{BardetCapelLiEtAl2023rapid}
I.~Bardet, {\'A}.~Capel, L.~Gao, A.~Lucia, D.~P{\'e}rez-Garc{\'\i}a, and C.~Rouz{\'e}, ``Rapid thermalization of spin chain commuting {H}amiltonians,'' {\em Physical Review Letters}, vol.~130, no.~6, p.~060401, 2023.

\bibitem{BravyiGosset2017complexity}
S.~Bravyi and D.~Gosset, ``Complexity of quantum impurity problems,'' {\em Communications in Mathematical Physics}, vol.~356, pp.~451--500, 2017.

\bibitem{EhrenbergDeshpandeEtAl2022simulation}
A.~Ehrenberg, A.~Deshpande, C.~L. Baldwin, D.~A. Abanin, and A.~V. Gorshkov, ``Simulation complexity of many-body localized systems,'' {\em arXiv preprint arXiv:2205.12967}, 2022.

\bibitem{ChenKastoryanoBrandaoGilyen2023quantum}
C.-F. Chen, M.~J. Kastoryano, F.~G. Brand{\~a}o, and A.~Gily{\'e}n, ``Quantum thermal state preparation,'' {\em arXiv preprint arXiv:2303.18224}, 2023.

\bibitem{chen2023efficient}
C.-F. Chen, M.~J. Kastoryano, and A.~Gily{\'e}n, ``An efficient and exact noncommutative quantum {Gibbs} sampler,'' {\em arXiv preprint arXiv:2311.09207}, 2023.

\bibitem{temme2014hypercontractivity}
K.~Temme, F.~Pastawski, and M.~J. Kastoryano, ``Hypercontractivity of quasi-free quantum semigroups,'' {\em Journal of Physics A: Mathematical and Theoretical}, vol.~47, no.~40, p.~405303, 2014.

\bibitem{ding2024polynomial}
Z.~Ding, B.~Li, L.~Lin, and R.~Zhang, ``Polynomial-time preparation of low-temperature {G}ibbs states for 2d toric code,'' {\em arXiv preprint arXiv:2410.01206}, 2024.

\bibitem{bardet2024entropy}
I.~Bardet, {\'A}.~Capel, L.~Gao, A.~Lucia, D.~P{\'e}rez-Garc{\'\i}a, and C.~Rouz{\'e}, ``Entropy decay for davies semigroups of a one dimensional quantum lattice,'' {\em Communications in Mathematical Physics}, vol.~405, no.~2, p.~42, 2024.

\bibitem{alicki2009thermalization}
R.~Alicki, M.~Fannes, and M.~Horodecki, ``On thermalization in {K}itaev's 2d model,'' {\em Journal of Physics A: Mathematical and Theoretical}, vol.~42, no.~6, p.~065303, 2009.

\bibitem{ramkumar2024mixing}
A.~Ramkumar and M.~Soleimanifar, ``Mixing time of quantum {G}ibbs sampling for random sparse {H}amiltonians,'' {\em arXiv preprint arXiv:2411.04454}, 2024.

\bibitem{RouzeFrancaAlhambra2024optimal}
C.~Rouz{\'e}, D.~S. Fran{\c{c}}a, and {\'A}.~M. Alhambra, ``Optimal quantum algorithm for {G}ibbs state preparation,'' {\em arXiv preprint arXiv:2411.04885}, 2024.

\bibitem{capel2024quasi}
{\'A}.~Capel, P.~Gondolf, J.~Kochanowski, and C.~Rouz{\'e}, ``Quasi-optimal sampling from {G}ibbs states via non-commutative optimal transport metrics,'' {\em arXiv preprint arXiv:2412.01732}, 2024.

\bibitem{kastoryano2013quantum}
M.~J. Kastoryano and K.~Temme, ``Quantum logarithmic sobolev inequalities and rapid mixing,'' {\em Journal of Mathematical Physics}, vol.~54, no.~5, 2013.

\bibitem{temme2015fast}
K.~Temme and M.~J. Kastoryano, ``How fast do stabilizer {H}amiltonians thermalize?,'' {\em arXiv preprint arXiv:1505.07811}, 2015.

\bibitem{ChenLiLuYing2024randomized}
H.~Chen, B.~Li, J.~Lu, and L.~Ying, ``A randomized method for simulating {L}indblad equations and thermal state preparation,'' {\em arXiv preprint arXiv:2407.06594}, 2024.

\bibitem{Rakovszky2024bottlenecks}
T.~Rakovszky, B.~Placke, N.~P. Breuckmann, and V.~Khemani, ``Bottlenecks in quantum channels and finite temperature phases of matter,'' {\em arXiv preprint arXiv:2412.09598}, 2024.

\bibitem{LiLu2024quantum}
B.~Li and J.~Lu, ``Quantum space-time {P}oincare inequality for {L}indblad dynamics,'' {\em arXiv preprint arXiv:2406.09115}, 2024.

\bibitem{Barthel2022solving}
T.~Barthel and Y.~Zhang, ``Solving quasi-free and quadratic {L}indblad master equations for open fermionic and bosonic systems,'' {\em Journal of Statistical Mechanics: Theory and Experiment}, vol.~2022, no.~11, p.~113101, 2022.

\bibitem{Fang2024mixing}
D.~Fang, J.~Lu, and Y.~Tong, ``Mixing time of open quantum systems via hypocoercivity,'' {\em arXiv preprint arXiv:2404.11503}, 2024.

\bibitem{DingLiLin2024efficient}
Z.~Ding, B.~Li, and L.~Lin, ``Efficient quantum {G}ibbs samplers with {Kubo--Martin--Schwinger} detailed balance condition,'' {\em arXiv preprint arXiv:2404.05998}, 2024.

\bibitem{DingChenLin2024single}
Z.~Ding, C.-F. Chen, and L.~Lin, ``Single-ancilla ground state preparation via lindbladians,'' {\em Physical Review Research}, vol.~6, no.~3, p.~033147, 2024.

\bibitem{rouze2024efficient}
C.~Rouz{\'e}, D.~S. Fran{\c{c}}a, and {\'A}.~M. Alhambra, ``Efficient thermalization and universal quantum computing with quantum {Gibbs} samplers,'' {\em arXiv preprint arXiv:2403.12691}, 2024.

\bibitem{MichalakisZwolak2013stability}
S.~Michalakis and J.~P. Zwolak, ``Stability of frustration-free {H}amiltonians,'' {\em Communications in Mathematical Physics}, vol.~322, pp.~277--302, 2013.

\bibitem{BakshiLiuMoitraTang2024high}
A.~Bakshi, A.~Liu, A.~Moitra, and E.~Tang, ``High-temperature {G}ibbs states are unentangled and efficiently preparable,'' {\em arXiv preprint arXiv:2403.16850}, 2024.

\bibitem{Hastings2019stability}
M.~Hastings, ``The stability of free fermi {H}amiltonians,'' {\em Journal of Mathematical Physics}, vol.~60, no.~4, 2019.

\bibitem{prosen2008third}
T.~Prosen, ``Third quantization: a general method to solve master equations for quadratic open fermi systems,'' {\em New Journal of Physics}, vol.~10, no.~4, p.~043026, 2008.

\bibitem{LiZhanLin2024dissipative}
H.-E. Li, Y.~Zhan, and L.~Lin, ``Dissipative ground state preparation in ab initio electronic structure theory,'' {\em arXiv preprint arXiv:2411.01470}, 2024.

\bibitem{LiebRobinson1972finite}
E.~H. Lieb and D.~W. Robinson, ``The finite group velocity of quantum spin systems,'' {\em Communications in Mathematical Physics}, vol.~28, no.~3, pp.~251--257, 1972.

\bibitem{Hastings2004decay}
M.~B. Hastings, ``Decay of correlations in fermi systems at nonzero temperature,'' {\em Physical review letters}, vol.~93, no.~12, p.~126402, 2004.

\bibitem{HaahHastingsKothariLow2021quantum}
J.~Haah, M.~B. Hastings, R.~Kothari, and G.~H. Low, ``Quantum algorithm for simulating real time evolution of lattice {H}amiltonians,'' {\em SIAM Journal on Computing}, vol.~52, no.~6, pp.~FOCS18--250, 2021.

\bibitem{CapelMoscolariTeufelWessel2023decay}
{\'A}.~Capel, M.~Moscolari, S.~Teufel, and T.~Wessel, ``From decay of correlations to locality and stability of the gibbs state,'' {\em arXiv preprint arXiv:2310.09182}, 2023.

\bibitem{BluhmCapelEtAl2022exponential}
A.~Bluhm, {\'A}.~Capel, and A.~P{\'e}rez-Hern{\'a}ndez, ``Exponential decay of mutual information for gibbs states of local hamiltonians,'' {\em Quantum}, vol.~6, p.~650, 2022.

\bibitem{KuwaharaKatoBrandao2020clustering}
T.~Kuwahara, K.~Kato, and F.~G. Brand{\~a}o, ``Clustering of conditional mutual information for quantum gibbs states above a threshold temperature,'' {\em Physical Review Letters}, vol.~124, no.~22, p.~220601, 2020.

\bibitem{Hastings2004lieb}
M.~B. Hastings, ``{Lieb-Schultz-Mattis} in higher dimensions,'' {\em Physical Review B}, vol.~69, no.~10, p.~104431, 2004.

\bibitem{nachtergaele2006lieb}
B.~Nachtergaele and R.~Sims, ``{Lieb-Robinson} bounds and the exponential clustering theorem,'' {\em Communications in Mathematical Physics}, vol.~265, pp.~119--130, 2006.

\bibitem{Hastings2006spectral}
M.~B. Hastings and T.~Koma, ``Spectral gap and exponential decay of correlations,'' {\em Communications in Mathematical Physics}, vol.~265, pp.~781--804, 2006.

\bibitem{Hastings2010locality}
M.~B. Hastings, ``Locality in quantum systems,'' {\em Quantum Theory from Small to Large Scales}, vol.~95, pp.~171--212, 2010.

\bibitem{ChildsSuTranWiebeZhu2021theory}
A.~M. Childs, Y.~Su, M.~C. Tran, N.~Wiebe, and S.~Zhu, ``Theory of trotter error with commutator scaling,'' {\em Physical Review X}, vol.~11, no.~1, p.~011020, 2021.

\bibitem{DingLiLin2024simulating}
Z.~Ding, X.~Li, and L.~Lin, ``Simulating open quantum systems using {H}amiltonian simulations,'' {\em PRX Quantum}, vol.~5, no.~2, p.~020332, 2024.

\bibitem{CleveWang2016efficient}
R.~Cleve and C.~Wang, ``Efficient quantum algorithms for simulating {L}indblad evolution,'' {\em arXiv preprint arXiv:1612.09512}, 2016.

\bibitem{Pocrnic2023quantum}
M.~Pocrnic, D.~Segal, and N.~Wiebe, ``Quantum simulation of {L}indbladian dynamics via repeated interactions,'' {\em arXiv preprint arXiv:2312.05371}, 2023.

\bibitem{ChildsLi2016efficient}
A.~M. Childs and T.~Li, ``Efficient simulation of sparse {M}arkovian quantum dynamics,'' {\em arXiv preprint arXiv:1611.05543}, 2016.

\bibitem{Kliesch2011dissipative}
M.~Kliesch, T.~Barthel, C.~Gogolin, M.~Kastoryano, and J.~Eisert, ``Dissipative quantum {C}hurch-{T}uring theorem,'' {\em Physical Review Letters}, vol.~107, no.~12, p.~120501, 2011.

\end{thebibliography}

\clearpage
\appendix

\section{Table of symbols}
\label{sec:table_of_symbols}
\begin{center}
  \begin{threeparttable}
  \begin{tabular}{cl}
    \toprule
     Symbol &  Meaning \\
    \midrule
    $\Lambda$ & The physical lattice whose sites each contain a Dirac fermion \\
    \midrule
    $D$ & The physical lattice dimension \\
    \midrule
    $d(x,y)$ & The Euclidean distance between lattice sites $x$ and $y$ \\
    \midrule
    $\gamma_j$ & The $j$th Majorana operator \\
    \midrule
    $\mathfrak{s}(j)$ & The lattice site that contains the $j$th Majorana mode \\
    \midrule
    $\hat{\gamma}_j$ & The $j$th third quantized $a$-Majorana operator \\
    \midrule
    $\mathfrak{s}'(j)$ & The lattice site that contains the $j$th $a$-Majorana mode \\
    \midrule
    $\mathcal{H}$ & The Hilbert space of $n$ Dirac fermions \\
    \midrule
    $B(\mathcal{H})$ & The operator algebra on $\mathcal{H}$ \\
    \midrule
    $\mathcal{H}'$ & The enlarged Hilbert space of $2n$ Dirac fermions \\
    \midrule
    $B(B(\mathcal{H}))$ & The operator algebra on $B(\mathcal{H})$ (or the algebra of super-operators) \\
    \midrule
    $\mathcal{H}_{\mathrm{even}/\mathrm{odd}}$ & The even/odd-parity sector of $\mathcal{H}$ (Definition~\ref{defn:parity_of_states})\\
    \midrule
    $B(\mathcal{H})_{\mathrm{even}/\mathrm{odd}}$ & The subspace space spanned by Majorana monomials of even/odd degree (Definition~\ref{defn:parity_of_operators})\\
    \midrule
    $H$ & The fermionic Hamiltonian \eqref{eq:general_hamiltonian}\\
    \midrule
    $\beta$ & The inverse temperature \\
    \midrule
    $\sigma$ & The Gibbs state $\sigma\propto e^{-\beta H}$ \\
    \midrule
    $H_0$ & The non-interacting (quadratic or free) part of $H$ \\
    \midrule
    $V$ & The interacting part of $H$ \\
    \midrule
    $r_0$ & The maximal range at which fermions can interact \\
    \midrule
    $U$ & The interaction strength \\
    \midrule
    $J$ & The Lieb-Robinson velocity for $H$ \eqref{eq:LR_velocity_full_ham}\\
    \midrule
    $J_0$ & The Lieb-Robinson velocity for $H_0$ \eqref{eq:LR_velocity_non_interacting} \\
    \midrule
    $\mathcal{L}$ & The Lindbladian we study \eqref{eq:QMC_lindbladian} \\
    \midrule
    $\varphi$ & The isometry from $B(\mathcal{H})$ to $\mathcal{H}'$ used for third quantization \eqref{eq:third_quantization_map} \\
    \midrule
    $\Phi$ & The third quantization map that preserves locality (Definition~\ref{defn:Phi_thrid_quantization}) \\
    \midrule
    $\tilde{\Phi}$ & The map $B(B(\mathcal{H}))\to B(\mathcal{H}')$ that incorporates $\sigma$ \eqref{eq:defn_Phi_tilde_parent_ham_map} \\
    \midrule
    $L_A$/$R_A$ & Left/right multiplication with $A$ \eqref{eq:left_and_right_multiplication}\\
    \midrule
    $H^{\mathrm{parent}}$ & The parent Hamiltonian \eqref{eq:parent_hamiltonian_defn} (explicit form given in \eqref{eq:parent_hamiltonian_explicit}) \\
    \midrule
    $H^{\mathrm{parent}}_{0}$ & The non-interacting part of the parent Hamiltonian \eqref{eq:H_0_parent_and_V_defn}\\
    \midrule
    $V^{\mathrm{parent}}$ & The interacting part of the parent Hamiltonian \eqref{eq:H_0_parent_and_V_defn}\\
    \bottomrule
  \end{tabular}
  \end{threeparttable}
\end{center}

\section{How \emph{not} to vectorize fermionic super-operators}
\label{sec:how_not_to_vectorize}

A crucial step used in \cite{chen2023efficient} is to ``vectorize'' a super-operator $\mathcal{C}:X\mapsto AXB$ into an operator $\mathrm{vec}(\mathcal{C})=A\otimes B^\top$. 
Now we apply it to  fermionic super-operators
\[
\mathcal{C}_1:X\mapsto \gamma_1 X, \quad \mathcal{C}_2:X\mapsto X\gamma_2,
\]
They are then vectorized to be
\[
\mathrm{vec}(\mathcal{C}_1) = \gamma_1\otimes I,\quad \mathrm{vec}(\mathcal{C}_2) = I\otimes \gamma_2.
\]
First for the operation $\top$, we can interpret it as the linear involution in the Clifford algebra (as opposed to the anti-linear involution $\dag$).

We then consider how one should interpret the tensor product $\otimes$. One may be tempted to equate $\gamma_1\otimes I$ with $\gamma_1$, and $I\otimes \gamma_2$ with $\gamma_{n+2}$, which is the ``image'' of $\gamma_2$ in an ancillary system. Adopting this approach, we have
\[
\mathrm{vec}(\mathcal{C}_1) = \gamma_1,\quad \mathrm{vec}(\mathcal{C}_2) = \gamma_{n+2}.
\]
Now we have a problem: while $\mathcal{C}_1$ and $\mathcal{C}_2$ commute with each other, $\mathrm{vec}(\mathcal{C}_1)$ and $\mathrm{vec}(\mathcal{C}_2)$ do not; in fact they anti-commute with each other. Therefore this naive vectorization approach fails to preserve the algebraic relations between fermionic super-operators.

With a small modification one can make $\mathrm{vec}(\mathcal{C}_1)$ and $\mathrm{vec}(\mathcal{C}_2)$ commute with each other. Instead of the what was done above, we define
\[
\mathrm{vec}(\mathcal{C}_1) = \gamma_1,\quad \mathrm{vec}(\mathcal{C}_2) = \gamma_{n+2}(-1)^N.
\]
where $N$ is the particle number operator and $(-1)^N$ is the parity operator. This is in fact one of the insights of the third quantization approach \cite{prosen2008third}. However, we can see that this mapping maps an operator that is very local ($\mathcal{C}_2$) to something extremely non-local ($\gamma_{n+2}(-1)^N$ acts non-trivially on every mode except for mode $n+2$). This will prevent us from applying the result in \cite{Hastings2019stability}.

The approach used in \cite{prosen2008third} and in this work is based on the observation that $(-1)^N$ yields a factor of $1$ when restricted to the even-parity sector. However, this does not mean we can treat $(-1)^N$ as if it does not exist: by applying single Majorana operators one will switch between the even- and odd-parity sectors. This is why a great deal of effort of Section~\ref{sec:constructing_the_parent_hamiltonian} was made to prove things that seem obvious (but are not) from \cite{chen2023efficient}.


\section{Transforming the Majorana operators}
\label{sec:transforming_the_majorana_operators}

\begin{lem}
\label{lem:fermion_basis_transform}
Let $H_0 = \sum_{jk} h_{jk} \gamma_j \gamma_k,$ where the matrix $(h_{jk})$ is Hermitian and pure imaginary, then
\begin{equation}
    e^{iH_0t}\gamma_l e^{-iH_0t} = \sum_{j=1}^{n} \left(e^{4iht}\right)_{jl}\gamma_j.
\end{equation}
\end{lem}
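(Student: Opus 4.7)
The plan is to apply Hadamard's formula $e^{iH_0 t}\gamma_l e^{-iH_0 t} = \sum_{k\geq 0}\frac{t^k}{k!}\mathrm{ad}_{iH_0}^k(\gamma_l)$ and reduce the statement to identifying the adjoint action $\mathrm{ad}_{iH_0}$ on the linear span of the Majoranas with left multiplication by the matrix $4ih$.

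First I would record the algebraic consequence of the hypothesis: $(h_{jk})$ being both Hermitian ($h_{jk}=\overline{h_{kj}}$) and pure imaginary forces $h_{jk}=-h_{kj}$, so the coefficient matrix is anti-symmetric. Next I would compute a single commutator using the standard derivation identity $[\gamma_j\gamma_k,\gamma_l]=\gamma_j\{\gamma_k,\gamma_l\}-\{\gamma_j,\gamma_l\}\gamma_k=2\delta_{kl}\gamma_j-2\delta_{jl}\gamma_k$, which together with anti-symmetry of $h$ yields
\[
[iH_0,\gamma_l]=2i\sum_j h_{jl}\gamma_j-2i\sum_j h_{lj}\gamma_j=4i\sum_j h_{jl}\gamma_j=\sum_j (4ih)_{jl}\gamma_j.
\]
Thus, writing $M:=4ih$, the linear span $V=\mathrm{span}\{\gamma_1,\dots,\gamma_{2n}\}$ is invariant under $\mathrm{ad}_{iH_0}$, and on $V$ the operator $\mathrm{ad}_{iH_0}$ is represented by left multiplication by $M$ in the basis $\{\gamma_j\}$.

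A routine induction on $k$ then gives $\mathrm{ad}_{iH_0}^k(\gamma_l)=\sum_j (M^k)_{jl}\gamma_j$, since
\[
\mathrm{ad}_{iH_0}\Big(\sum_j (M^{k})_{jl}\gamma_j\Big)=\sum_{j,p}(M^k)_{jl}M_{pj}\gamma_p=\sum_p (M^{k+1})_{pl}\gamma_p.
\]
Summing over $k$ via Hadamard's formula yields
\[
e^{iH_0 t}\gamma_l e^{-iH_0 t}=\sum_{k\geq 0}\frac{t^k}{k!}\sum_j (M^k)_{jl}\gamma_j=\sum_j (e^{Mt})_{jl}\gamma_j=\sum_j (e^{4iht})_{jl}\gamma_j,
\]
which is the claimed identity. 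There is no real obstacle here; the only small point to keep honest is using anti-symmetry of $h$ to combine the two terms coming from $[\gamma_j\gamma_k,\gamma_l]$ into the clean factor $4i$ instead of $2i$.
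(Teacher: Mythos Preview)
Your proposal is correct and follows essentially the same approach as the paper: compute the single commutator $[\gamma_j\gamma_k,\gamma_l]$, use anti-symmetry of $h$ to combine the two resulting terms into $4\sum_j h_{jl}\gamma_j$, iterate by induction, and sum via the Hadamard/BCH formula. The only cosmetic difference is that you work with $\mathrm{ad}_{iH_0}$ and the matrix $M=4ih$ directly, while the paper tracks $\mathrm{ad}_{H_0}$ and inserts the factor $it$ at the end.
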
 
\begin{proof}
We use the notation $\operatorname{ad}_A(B)=[A,B]$,
\begin{equation}
\begin{aligned}
    \operatorname{ad}_{H_0}(\gamma_l)&=[H_0,\gamma_l]
    =\sum_{jk}h_{jk}[\gamma_j\gamma_k,\gamma_l].
\end{aligned}
\end{equation}
Using the commutation relation, we can calculate the commutator
\begin{equation}
\begin{aligned}
    [\gamma_j\gamma_k,\gamma_l]&=\gamma_j\gamma_k\gamma_l-\gamma_l\gamma_j\gamma_k\\
    &=\gamma_j\gamma_k\gamma_l+\gamma_j\gamma_l\gamma_k-\gamma_j\gamma_l\gamma_k-\gamma_l\gamma_j\gamma_k\\
    &=\gamma_j\{\gamma_k,\gamma_l\}-\{\gamma_j,\gamma_l\}\gamma_k\\
    &=2(\delta_{kl}\gamma_j-\delta_{jl}\gamma_k).
\end{aligned}
\end{equation}
Therefore
\begin{equation}
\begin{aligned}
    \operatorname{ad}_{H_0}(\gamma_l)&=2\left(\sum_j h_{jl}\gamma_j -\sum_k h_{lk} \gamma_k\right)
    =2\left(\sum_j h_{jl}\gamma_j +\sum_k h_{kl} \gamma_k \right)
    =4\sum_{j}h_{jl}\gamma_j. \\
\end{aligned}
\end{equation}
As a result we have
\begin{equation}
    \operatorname{ad}_{H_0}^r(\gamma_l)=\sum_j\left(4h\right)^r_{jl}\gamma_j
\end{equation}
Using the Baker–Campbell–Hausdorff (BCH) formula 
we get
\begin{equation}
    \begin{aligned}
        e^{iH_0t}\gamma_l e^{-iH_0t}&=\sum_r \frac{1}{r!}\operatorname{ad}_{H_0}^r(\gamma_l)
        =\sum_{j=1}^{n} \left(e^{4iht}\right)_{jl}\gamma_j.
    \end{aligned}
\end{equation}
\end{proof}

\section{Linear combination of Majorana operators}

\begin{lem}
    \label{lem:norm_linear_comb_majorana}
    For $x=(x_j), y=(y_j)\in\RR^{2n}$, let
    \begin{equation}
        A = \sum_{j}x_j \gamma_j + i\sum_j y_j\gamma_j.
    \end{equation}
    Then 
    \begin{equation}
    \label{eq:eigenvalues_LC_majorana}
    \|A\|^2 = \|x\|_2^2 + \|y\|_2^2 + 2\sqrt{\|x\|_2^2\|y\|_2^2-(x^\top y)^2}.
\end{equation}
\end{lem}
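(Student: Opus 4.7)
The plan is to compute $A^\dagger A$ explicitly using only the anti-commutation relation and then diagonalize it. Write $A = X + iY$ where $X = \sum_j x_j \gamma_j$ and $Y = \sum_j y_j \gamma_j$. Since $x, y \in \RR^{2n}$ and the $\gamma_j$ are Hermitian, both $X$ and $Y$ are Hermitian, so $A^\dagger = X - iY$ and
\[
A^\dagger A = X^2 + Y^2 + i[X,Y].
\]

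The first step is to reduce $X^2$, $Y^2$, and the anti-commutator $\{X,Y\}$ to scalars via the canonical anti-commutation relation $\{\gamma_j,\gamma_k\} = 2\delta_{jk}$. A one-line computation gives
\[
XY + YX = \sum_{j,k} x_j y_k \{\gamma_j,\gamma_k\} = 2(x^\top y)\, I,
\]
and specializing to $y = x$ or $y$ arbitrary yields $X^2 = \|x\|_2^2\, I$ and $Y^2 = \|y\|_2^2\, I$. Thus $A^\dagger A = (\|x\|_2^2 + \|y\|_2^2)\, I + i[X,Y]$, and the problem reduces to determining the spectrum of the Hermitian operator $i[X,Y]$.

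The key step is to show $(i[X,Y])^2$ is a scalar. Using $YX = 2(x^\top y)\, I - XY$ repeatedly to commute factors through, and substituting $X^2 = \|x\|_2^2 I$, $Y^2 = \|y\|_2^2 I$, expand
\[
[X,Y]^2 = XYXY - XY^2X - YX^2Y + YXYX = XYXY + YXYX - 2\|x\|_2^2\|y\|_2^2\, I,
\]
and then rewrite $XYXY = X(2(x^\top y)I - XY)Y = 2(x^\top y)XY - \|x\|_2^2\|y\|_2^2 I$, and similarly for $YXYX$. Summing gives
\[
[X,Y]^2 = 2(x^\top y)(XY+YX) - 4\|x\|_2^2\|y\|_2^2\, I = -4\bigl(\|x\|_2^2\|y\|_2^2 - (x^\top y)^2\bigr)I,
\]
so $(i[X,Y])^2 = 4(\|x\|_2^2\|y\|_2^2 - (x^\top y)^2)\, I$.

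The conclusion is then immediate: $i[X,Y]$ is Hermitian with eigenvalues $\pm 2\sqrt{\|x\|_2^2\|y\|_2^2 - (x^\top y)^2}$, so $A^\dagger A$ has eigenvalues $\|x\|_2^2 + \|y\|_2^2 \pm 2\sqrt{\|x\|_2^2\|y\|_2^2 - (x^\top y)^2}$, and $\|A\|^2 = \|A^\dagger A\|$ equals the larger of the two, which is the claimed expression. I do not anticipate a real obstacle here; the only slightly delicate point is keeping track of signs in the expansion of $[X,Y]^2$ and verifying that $\|x\|_2^2\|y\|_2^2 - (x^\top y)^2 \geq 0$, which holds by Cauchy--Schwarz so that the square root is real.
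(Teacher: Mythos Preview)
Your proof is correct and takes a different route from the paper. The paper first reduces to a two-mode problem: it picks orthonormal real vectors $a,b$ spanning the plane of $x,y$, writes $(x\ y)=(a\ b)M$ for a $2\times 2$ matrix $M$, defines new Majorana operators $\omega_1=\sum_j a_j\gamma_j$, $\omega_2=\sum_j b_j\gamma_j$, and then computes $A^\dag A$ explicitly as $\|M\|_F^2\,I$ plus an antisymmetric quadratic in $\omega_1,\omega_2$, whose eigenvalues are $\pm 2\det(M)$. Your approach bypasses the basis change entirely and works directly with the Clifford relations to show $(i[X,Y])^2$ is a scalar. This is more elementary and also handles the degenerate case $x\parallel y$ uniformly, whereas the paper's construction implicitly assumes $x,y$ span a two-dimensional subspace. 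The paper's version, on the other hand, makes the underlying two-dimensional structure transparent.

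One small point you should make explicit: from $(i[X,Y])^2=c^2 I$ you conclude the eigenvalues are $\pm c$, but to get the norm you need the $+c$ eigenvalue to actually occur. This follows since $i[X,Y]$ is traceless (it is a commutator), so when $c>0$ it cannot equal $-cI$.
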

\begin{proof}
First introduce vectors $a,b\in\RR^{2n}$ such that
$\|a\|_2=\|b\|_2=1$, $a^\top b=0$, and
\[
\begin{pmatrix}
    x & y
\end{pmatrix} = 
\begin{pmatrix}
    a & b
\end{pmatrix}
\underbrace{
\begin{pmatrix}
    M_{11} & M_{12} \\
    M_{21} & M_{22}
\end{pmatrix}}_{M}.
\]
Then we define $\omega_1$ and $\omega_2$ to be
\[
\omega_1 = \sum_j a_j \gamma_j,\quad \omega_2 = \sum_j b_j \gamma_j.
\]
We can directly verify that $\omega_1$ and $\omega_2$ are Majorana operators satisfying the canonical anti-commutation relation. We can rewrite $A$ as
\[
A = M_{11}\omega_1 + M_{21}\omega_2 + i(M_{12}\omega_1+M_{22}\omega_2).
\]
This allows us to compute
\[
A^\dag A = \|M\|_F^2 + i
\begin{pmatrix}
    \omega_1 & \omega_2
\end{pmatrix}
\begin{pmatrix}
    0 & \det(M) \\
    -\det(M) & 0
\end{pmatrix}
\begin{pmatrix}
    \omega_1 \\
    \omega_2
\end{pmatrix}.
\]
Therefore we can see that $A^\dag A$ has two eigenvalues $\|M\|_F^2 \pm 2\det(M)$. We can further compute that
\[
\|M\|_F^2 = \|x\|_2^2 + \|y\|_2^2,\quad \det(M)^2=\|x\|_2^2\|y\|_2^2-(x^\top y)^2.
\]
Therefore we have \eqref{eq:eigenvalues_LC_majorana}.
\end{proof}


\section{Norm preservation in third quantization}
\label{sec:norm_preserve_third_quantization}

In Section~\ref{sec:third_quantization} we have introduced the super-operators $L_A$ and $R_A$ for an operator $A$, and how these super-operators can be mapped back to operators in an enlarged Hilbert space using $\Phi$ defined in Definition~\ref{defn:Phi_thrid_quantization}. In this section we will show that the operator spectral norm is preserved in this process. More precisely, we have
\begin{lem}
    \label{lem:norm_preservation}
    For $A\in B(\mathcal{H})$ with fixed parity (even or odd as defined in Definition~\ref{defn:parity_of_operators}), we have
    \begin{equation}
    \label{eq:left_norm_preserve}
        \|\Phi(L_A)\| = \|A\|,
    \end{equation}
    and
    \begin{equation}
        \label{eq:right_norm_preserve}
        \|\Phi(R_A)\|\leq \|A\|.
    \end{equation}
\end{lem}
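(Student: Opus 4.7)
The strategy is to exploit the fact that, once rescaled by $2^{n/2}$, the map $\varphi$ defined in \eqref{eq:third_quantization_map} is an isometry between $B(\mathcal{H})$ equipped with the Hilbert--Schmidt inner product and $\mathcal{H}'$ with its standard inner product, by \eqref{eq:third_quantization_preserves_frobenius_norm}. Operator norms are preserved under conjugation by a Hilbert-space isomorphism, so it suffices to compute the operator norms of $L_A$ and $R_A$ viewed as maps on $(B(\mathcal{H}),\|\cdot\|_F)$, and then absorb the extra $(-1)^{\hat{n}}$ factor appearing in the odd-parity case.

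First I would establish the identity $\|L_A\|_{F\to F} = \|R_A\|_{F\to F} = \|A\|$ for every $A\in B(\mathcal{H})$. The upper bound is routine: $\|L_A X\|_F^2 = \Tr[X^\dag A^\dag A X] \leq \|A\|^2 \|X\|_F^2$ by the operator inequality $A^\dag A \preccurlyeq \|A\|^2 I$ and cyclicity of the trace, and likewise for $R_A$. For the matching lower bound, pick unit vectors $u,v\in\mathcal{H}$ with $\|Av\|=\|A\|$, and test on the rank-one operator $X = vu^\dag$: then $\|X\|_F=1$ while $\|L_A X\|_F = \|Av\|\cdot\|u\| = \|A\|$; the analogous choice works for $R_A$.

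Next, for \eqref{eq:left_norm_preserve}, Definition~\ref{defn:Phi_thrid_quantization} gives $\Phi(L_A) = \varphi\circ L_A\circ \varphi^{-1}$ directly for every $A$ (no parity needed). Combining this with the isometry property of $\varphi$ yields $\|\Phi(L_A)\| = \|L_A\|_{F\to F} = \|A\|$. For \eqref{eq:right_norm_preserve}, the argument splits on the parity of $A$. If $A$ is even, then $\Phi(R_A) = \varphi\circ R_A\circ\varphi^{-1}$ and the same isometry argument gives $\|\Phi(R_A)\| = \|R_A\|_{F\to F} = \|A\|$. If $A$ is odd, then $\Phi(R_A) = (\varphi\circ R_A\circ\varphi^{-1})(-1)^{\hat{n}}$, and since $(-1)^{\hat{n}}$ is a self-adjoint unitary on $\mathcal{H}'$, right multiplication by it does not change the spectral norm, so again $\|\Phi(R_A)\| = \|R_A\|_{F\to F} = \|A\|$.

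There is no real obstacle here: the only subtlety is the appearance of $(-1)^{\hat{n}}$ in the definition of $\Phi(R_A)$ for odd $A$, but it is a unitary and therefore norm-preserving. The proof in fact yields equality in both bounds, which is slightly stronger than the statement of the lemma.
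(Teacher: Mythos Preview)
Your proof is correct and takes a genuinely different, more elementary route than the paper.

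The paper handles \eqref{eq:left_norm_preserve} by writing $A$ as a polynomial in the $\gamma_j$'s and observing from Remark~\ref{rem:Phi_preserves_locality} that $\Phi(L_A)$ is literally the same polynomial in the odd-indexed $\hat\gamma_j$'s, hence has the same spectral norm. For \eqref{eq:right_norm_preserve} it develops two auxiliary results: that $\varphi\circ R_{(\cdot)}\circ\varphi^{-1}$ intertwines adjoints (Lemma~\ref{lem:hermitian_preserving}, Corollary~\ref{cor:hermitian_preserving_or_not}) and that $\Phi(R_{(\cdot)})$ preserves positive semidefiniteness on even operators (Lemma~\ref{lem:preserve_psd}). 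Applying the latter to $\|A\|^2 I - A^\dag A$ and unwinding $\Phi(R_{A^\dag A})=\Phi(R_A)\Phi(R_A)^\dag$ yields the inequality.

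Your argument bypasses all of this by using only the isometry identity \eqref{eq:third_quantization_preserves_frobenius_norm}: since $\varphi$ is (up to the scalar $2^{-n/2}$) a unitary between $(B(\mathcal{H}),\|\cdot\|_F)$ and $\mathcal{H}'$, conjugation by $\varphi$ preserves operator norms, and the well-known identity $\|L_A\|_{F\to F}=\|R_A\|_{F\to F}=\|A\|$ finishes the job. The extra $(-1)^{\hat n}$ in the odd case is harmless because it is unitary. This is cleaner, avoids the three supporting lemmas entirely, and in fact gives equality in \eqref{eq:right_norm_preserve} as well. The paper's route has the side benefit of recording how $\Phi(R_{(\cdot)})$ interacts with adjoints and positivity, but those facts are not used anywhere else in the paper.
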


Before we present the proof we will first prove some useful lemmas:
\begin{lem}
    \label{lem:hermitian_preserving}
    For any $A\in B(\mathcal{H})$, $\varphi\circ R_{A^\dag}\circ\varphi^{-1}=(\varphi\circ R_{A}\circ\varphi^{-1})^\dag$.
\end{lem}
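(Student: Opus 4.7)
The plan is to reduce the claimed operator identity to a bilinear-form identity on $\mathcal{H}'$ and then transport it back to $B(\mathcal{H})$ via $\varphi$, using the inner-product preservation stated in \eqref{eq:third_quantization_preserves_frobenius_norm} and the cyclic property of the trace.

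Concretely, I would begin by noting that $\varphi:B(\mathcal{H})\to\mathcal{H}'$ is a bijection (it sends the basis of Majorana monomials to the orthonormal basis $\{v_\alpha\}$), so every vector in $\mathcal{H}'$ has the form $\varphi(X)$ for a unique $X\in B(\mathcal{H})$. By the defining property of the Hermitian adjoint, it therefore suffices to verify the equality
\[
\braket{\varphi(X),(\varphi\circ R_A\circ \varphi^{-1})\,\varphi(Y)} \;=\; \braket{(\varphi\circ R_{A^\dagger}\circ \varphi^{-1})\,\varphi(X),\varphi(Y)}
\]
for all $X,Y\in B(\mathcal{H})$.

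The left-hand side expands, using \eqref{eq:third_quantization_preserves_frobenius_norm}, as
\[
\braket{\varphi(X),\varphi(YA)} \;=\; \tfrac{1}{2^n}\Tr[X^\dagger Y A],
\]
while the right-hand side expands as
\[
\braket{\varphi(XA^\dagger),\varphi(Y)} \;=\; \tfrac{1}{2^n}\Tr[(XA^\dagger)^\dagger Y] \;=\; \tfrac{1}{2^n}\Tr[A X^\dagger Y].
\]
These two traces agree by cyclicity of the trace, which completes the verification. Since $\varphi(X)$ and $\varphi(Y)$ range over all of $\mathcal{H}'$, the identity of bilinear forms upgrades to an operator identity, giving $\varphi\circ R_{A^\dagger}\circ\varphi^{-1}=(\varphi\circ R_A\circ \varphi^{-1})^\dagger$.

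There is no substantive obstacle here: the only ingredients are (i) that $\varphi$ is an isometry up to the global factor $1/2^n$, which is built into its definition, and (ii) the cyclic property of the trace, which handles the transposition of $A$ and $A^\dagger$ across $X^\dagger Y$. The lemma is then purely a bookkeeping statement, and the same argument template (polarization via $\varphi$) will be useful for the companion inequalities \eqref{eq:left_norm_preserve} and \eqref{eq:right_norm_preserve} in Lemma~\ref{lem:norm_preservation}.
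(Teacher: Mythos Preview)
Your proof is correct and considerably more elegant than the paper's. The paper proceeds by brute-force computation: it expands $A$ in the Majorana monomial basis, uses the explicit representation $\varphi\circ R_{\gamma_j}\circ\varphi^{-1}=-i\hat{\gamma}_{2j}(-1)^{\hat{n}}$ from \eqref{eq:repr_left_right_multiplication_single_majorana} together with the rule $R_{AB}=R_BR_A$ to write out $\varphi\circ R_A\circ\varphi^{-1}$ as an explicit sum of products of $\hat{\gamma}_{2j}$ and parity operators, takes the Hermitian conjugate term by term, and then matches it against the analogous expansion of $\varphi\circ R_{A^\dagger}\circ\varphi^{-1}$ using the anticommutation $(-1)^{\hat{n}}\hat{\gamma}_{2j}=-\hat{\gamma}_{2j}(-1)^{\hat{n}}$.

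Your argument bypasses all of this by recognizing that $\varphi$ is (up to the scalar $2^{-n/2}$) a unitary from $(B(\mathcal{H}),\text{Hilbert--Schmidt})$ to $\mathcal{H}'$, so conjugation by $\varphi$ intertwines adjoints; since $R_A$ has Hilbert--Schmidt adjoint $R_{A^\dagger}$ by trace cyclicity, the lemma follows immediately. This is both shorter and more conceptual. The paper's explicit computation does have one downstream benefit: the concrete Majorana-level formulas it produces feed directly into Corollary~\ref{cor:hermitian_preserving_or_not}, where the odd-parity sign flip $\Phi(R_{A^\dagger})=-\Phi(R_A)^\dagger$ requires tracking the extra $(-1)^{\hat{n}}$ factor in $\Phi$ versus $\varphi\circ(\cdot)\circ\varphi^{-1}$. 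But that corollary can also be obtained from your lemma plus one line of parity bookkeeping, so nothing is lost.
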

\begin{proof}
    We can write $A=\sum_{\alpha\in\{0,1\}^{2n}} A_{\alpha}\gamma_1^{\alpha_1}\gamma_2^{\alpha_2}\cdots \gamma_{2n}^{\alpha_{2n}}$.
    By \eqref{eq:repr_left_right_multiplication_single_majorana}, \eqref{eq:left_right_multiply_repr_rules}, and $\Phi(R_I)=I$, we have
    \[
    \begin{aligned}
        \varphi\circ R_{A}\circ\varphi^{-1} &= \sum_{\alpha} A_\alpha \varphi\circ R_{\gamma_{2n}}^{\alpha_{2n}}\cdots R_{\gamma_{2}}^{\alpha_{2}} R_{\gamma_{1}}^{\alpha_{1}}\circ\varphi^{-1} \\
        &= \sum_{\alpha} A_\alpha (-i)^{|\alpha|}\hat{\gamma}_{4n}^{\alpha_{2n}}(-1)^{\alpha_{2n}\hat{n}}\cdots \hat{\gamma}_{4}^{\alpha_{2}}(-1)^{\alpha_2\hat{n}} \hat{\gamma}_{2}^{\alpha_{1}}(-1)^{\alpha_1\hat{n}}.
    \end{aligned}
    \]
    Therefore
    \[
    \begin{aligned}
        (\varphi\circ R_{A}\circ\varphi^{-1})^\dag  &= \sum_{\alpha} A_\alpha^* i^{|\alpha|}  (-1)^{\alpha_1\hat{n}}\hat{\gamma}_{2}^{\alpha_{1}} (-1)^{\alpha_2\hat{n}}\hat{\gamma}_{4}^{\alpha_{2}}\cdots (-1)^{\alpha_{2n}\hat{n}}\hat{\gamma}_{4n}^{\alpha_{2n}} \\
        &=\sum_{\alpha} A_\alpha^* (-i)^{|\alpha|}  \hat{\gamma}_{2}^{\alpha_{1}} (-1)^{\alpha_{1}\hat{n}}\hat{\gamma}_{4}^{\alpha_{2}}(-1)^{\alpha_{2}\hat{n}}\cdots \hat{\gamma}_{4n}^{\alpha_{2n}}(-1)^{\alpha_{2n}\hat{n}},
    \end{aligned}
    \]
    where for the second equality we have used $(-1)^{\hat{n}}\hat{\gamma}_{2j} = - \hat{\gamma}_{2j}(-1)^{\hat{n}}$.

    On the other hand,
    \[
    \begin{aligned}
        \varphi\circ R_{A^\dag}\circ\varphi^{-1}  &= \sum_{\alpha} A_\alpha^* \varphi\circ(-i)^{|\alpha|}   R_{\gamma_{1}}^{\alpha_{1}} R_{\gamma_{2}}^{\alpha_{2}} \cdots R_{\gamma_{2n}}^{\alpha_{2n}} \circ \varphi^{-1}\\
        &=\sum_{\alpha} A_\alpha^* (-i)^{|\alpha|} \hat{\gamma}_{2}^{\alpha_{1}} (-1)^{\alpha_{1}\hat{n}}\hat{\gamma}_{4}^{\alpha_{2}}(-1)^{\alpha_{2}\hat{n}}\cdots \hat{\gamma}_{4n}^{\alpha_{2n}}(-1)^{\alpha_{2n}\hat{n}}.
    \end{aligned}
    \]
    We therefore have $\varphi\circ R_{A^\dag}\circ\varphi^{-1}=(\varphi\circ R_{A}\circ\varphi^{-1})^\dag$.
\end{proof}

\begin{cor}
    \label{cor:hermitian_preserving_or_not}
    For $A\in B(\mathcal{H})$, if $A$ has even parity, then $\Phi(R_{A^\dag})=\Phi(R_A)^\dag$. If $A$ has odd parity, then $\Phi(R_{A^\dag})=-\Phi(R_A)^\dag$.
\end{cor}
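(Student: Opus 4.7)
The plan is to reduce the corollary to Lemma~\ref{lem:hermitian_preserving} together with a parity-commutation fact about $(-1)^{\hat n}$. Observe first that taking adjoints preserves parity: if $A \in B(\mathcal H)$ is even (resp.\ odd), then so is $A^\dag$, because $\dag$ acts entry-wise on the expansion $A = \sum_\alpha A_\alpha \gamma_1^{\alpha_1}\cdots \gamma_{2n}^{\alpha_{2n}}$ and fixes the set of multi-indices $\alpha$ of a given parity.

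For the even case, Definition~\ref{defn:Phi_thrid_quantization} gives $\Phi(R_{A^\dag}) = \varphi\circ R_{A^\dag}\circ \varphi^{-1}$ and $\Phi(R_A) = \varphi\circ R_A\circ \varphi^{-1}$, so Lemma~\ref{lem:hermitian_preserving} immediately yields $\Phi(R_{A^\dag}) = \Phi(R_A)^\dag$.

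For the odd case, the definition gives
\[
\Phi(R_{A^\dag}) = (\varphi\circ R_{A^\dag}\circ \varphi^{-1})(-1)^{\hat n}
= (\varphi\circ R_A\circ \varphi^{-1})^\dag (-1)^{\hat n},
\]
while
\[
\Phi(R_A)^\dag = \bigl((\varphi\circ R_A\circ \varphi^{-1})(-1)^{\hat n}\bigr)^\dag = (-1)^{\hat n}(\varphi\circ R_A\circ \varphi^{-1})^\dag,
\]
using self-adjointness of $(-1)^{\hat n}$. Thus the corollary reduces to showing that, for odd $A$, the operator $M := (\varphi\circ R_A\circ \varphi^{-1})^\dag$ anti-commutes with $(-1)^{\hat n}$. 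The key observation is that right-multiplication by an odd element swaps $B(\mathcal H)_{\mathrm{even}}$ and $B(\mathcal H)_{\mathrm{odd}}$, so via the parity-preserving isometry $\varphi$ (Remark~\ref{rem:third_quantization_preserves_parity}) the operator $\varphi\circ R_A\circ \varphi^{-1}$ swaps $\mathcal H'_{\mathrm{even}}$ and $\mathcal H'_{\mathrm{odd}}$; the same is then true of its adjoint $M$. Any operator exchanging the two $(-1)^{\hat n}$-eigenspaces satisfies $M(-1)^{\hat n} = -(-1)^{\hat n}M$, giving $\Phi(R_{A^\dag}) = M(-1)^{\hat n} = -(-1)^{\hat n} M = -\Phi(R_A)^\dag$.

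No step looks genuinely difficult: Lemma~\ref{lem:hermitian_preserving} does the bookkeeping work, and the only subtle point is the parity-swapping argument in the odd case, which one can alternatively verify by direct computation in the Majorana monomial basis (as was done in the proof of Lemma~\ref{lem:hermitian_preserving}) to track the extra sign coming from moving $(-1)^{\hat n}$ past an odd number of $\hat\gamma_{2j}$ factors.
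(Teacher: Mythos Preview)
Your proof is correct and follows essentially the same route as the paper: the even case is immediate from Lemma~\ref{lem:hermitian_preserving}, and the odd case reduces to the anti-commutation of $(-1)^{\hat n}$ with $\varphi\circ R_A\circ\varphi^{-1}$ (or its adjoint). The paper's proof uses exactly this anti-commutation without comment, whereas you supply the parity-swapping justification for it, so your argument is if anything slightly more complete.
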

\begin{proof}
    The even case follows directly from the definition of $\Phi$ in Definition~\ref{defn:Phi_thrid_quantization} and Lemma~\ref{lem:hermitian_preserving}. For the odd case, we have
    \[
    \Phi(R_A)^\dag = (\varphi\circ R_A \circ\varphi^{-1}(-1)^{\hat{n}})^\dag = (-1)^{\hat{n}} \varphi\circ R_{A^\dag} \circ\varphi^{-1}=-\varphi\circ R_{A^\dag} \circ\varphi^{-1}(-1)^{\hat{n}}=-\Phi(R_{A^\dag}).
    \]
\end{proof}

\begin{lem}
    \label{lem:preserve_psd}
    If $A\in B(\mathcal{H})$ is positive semi-definite with even parity, then $\Phi(R_A)$ is also positive semi-definite.
\end{lem}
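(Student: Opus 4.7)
The plan is to factor $A$ as $A = C^2$ where $C := A^{1/2}$ is the Hermitian positive semi-definite square root, and to show that $C$ inherits even parity from $A$. For the evenness of $C$, the parity operator $P = i^n \gamma_1 \cdots \gamma_{2n}$ is self-adjoint and unitary with $P^2 = I$, and even parity of $A$ is equivalent to $PAP = A$. Since $C$ is a continuous function of $A$ (given by the spectral calculus, where the spectral projections of $A$ are themselves polynomials in $A$ and hence even), we get $PCP = C$, so $C$ is even.

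Next, I would use that $\Phi$ acts as an honest homomorphism on right-multiplications by \emph{even} operators. Explicitly, $R_{BB'} = R_{B'} R_B$, so $R_A = R_C R_C$. Since $C$ is even, Definition~\ref{defn:Phi_thrid_quantization} gives $\Phi(R_C) = \varphi \circ R_C \circ \varphi^{-1}$, and then
\begin{equation*}
\Phi(R_C)\,\Phi(R_C) \;=\; \varphi \circ R_C R_C \circ \varphi^{-1} \;=\; \varphi \circ R_A \circ \varphi^{-1} \;=\; \Phi(R_A),
\end{equation*}
using again that $A$ is even for the last equality.

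Finally I would invoke Corollary~\ref{cor:hermitian_preserving_or_not}: since $C$ is even and $C^\dag = C$, we have $\Phi(R_C)^\dag = \Phi(R_{C^\dag}) = \Phi(R_C)$, so $\Phi(R_C)$ is self-adjoint. Consequently
\begin{equation*}
\Phi(R_A) \;=\; \Phi(R_C)^2 \;=\; \Phi(R_C)^\dag \Phi(R_C) \;\succcurlyeq\; 0,
\end{equation*}
which is the desired conclusion.

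There is no serious obstacle here; the only point requiring a moment of care is the claim that the square root of an even PSD operator is even, which is what makes the even-parity branch of Definition~\ref{defn:Phi_thrid_quantization} applicable to both $C$ and $C^\dag$ (without the awkward $(-1)^{\hat n}$ factor from the odd branch that would spoil multiplicativity and hence positivity).
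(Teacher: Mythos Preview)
Your proof is correct and follows essentially the same approach as the paper's: factor $A$ as a product of an even operator and its adjoint, then use that $\Phi$ restricted to right-multiplications by even operators is just conjugation by $\varphi$ and respects adjoints (the paper cites Lemma~\ref{lem:hermitian_preserving} directly, you go through Corollary~\ref{cor:hermitian_preserving_or_not}). The only difference is cosmetic: the paper writes $A = QQ^\dag$ with $Q$ even without justification, whereas you take $Q = C = A^{1/2}$ and explicitly argue via the parity operator that $C$ is even, which is a nice touch.
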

\begin{proof}
    Since $A$ is positive semi-definite, there exists $Q\in B(\mathcal{H})$ with even parity such that $A=QQ^\dag$. Therefore by Lemma~\ref{lem:hermitian_preserving}
    \[
    \Phi(R_A) = \varphi \circ R_{Q^\dag}R_{Q}\circ\varphi^{-1}=(\varphi \circ R_{Q^\dag}\circ\varphi^{-1})(\varphi \circ R_{Q}\circ\varphi^{-1})=(\varphi \circ R_{Q}\circ\varphi^{-1})^\dag (\varphi \circ R_{Q}\circ\varphi^{-1})
    \]
    and is therefore positive semi-definite.
\end{proof}

\begin{proof}[Proof of Lemma~\ref{lem:norm_preservation}]
    We can write 
    \[
    A=\sum_{\alpha\in\{0,1\}^{2n}} A_{\alpha}\gamma_1^{\alpha_1}\gamma_2^{\alpha_2}\cdots \gamma_{2n}^{\alpha_{2n}}.
    \]
    By \eqref{eq:repr_left_right_multiplication_more_majorana_ops}, we can explicitly write down
    \[
    \Phi(L_A) = \sum_{\alpha\in\{0,1\}^{2n}} A_{\alpha}\hat{\gamma}_1^{\alpha_1}\hat{\gamma}_3^{\alpha_2}\cdots \hat{\gamma}_{4n-1}^{\alpha_{2n}}.
    \]
    We can see that $\Phi(L_A)$ simply comes from replacing the Majorana operators in $A$ by a different set of Majorana operators, and therefore we have \eqref{eq:left_norm_preserve}.

    For $\Phi(R_A)$, first we note that $\|A\|^2 I - A^\dag A$ is positive semi-definite. Therefore by Lemma~\ref{lem:preserve_psd}, we know that
    \[
    \Phi(R_{\|A\|^2 I - A^\dag A}) = \|A\|^2 I - \Phi(R_{A^\dag A}) = \|A\|^2 I - \Phi(R_A)\Phi(R_A)^\dag
    \]
    is also positive semi-definite. We therefore have \eqref{eq:right_norm_preserve}. In the above, for the case where $A$ has odd parity, we have used
    \[
    \Phi(R_A)\Phi(R_A)^\dag = (\varphi\circ R_A\circ \varphi^{-1})(-1)^{\hat{n}}(-1)^{\hat{n}}(\varphi\circ R_A\circ \varphi^{-1})^\dag =\varphi\circ R_A R_{A^\dag}\circ\varphi^{-1}=\Phi(R_{A^\dag A}).
    \]
\end{proof}

\section{Decay of the kernel function in the coherent part}
\label{sec:decay_kernel_function_coherent_part}

We will first prove some elementary properties of $b_1(t)$ defined in \eqref{eq:f(t),b1(t),b2(t)}. We recall that
\begin{equation*}
    \begin{aligned}
        b_1(t) = 2\sqrt{\pi}e^{\frac{1}{8}}\left(\frac{1}{\cosh(2\pi t)}*_t \sin(-t)e^{-2t^2}\right).
    \end{aligned}
\end{equation*}
We will need to use the lemma below:

\begin{lem}
\label{lem:decay_convolution}
    If two real functions $f(t)$ and $g(t)$ satisfy $|f(t)|\leq C_1 e^{-\lambda_1|t|}$ and $|g(t)|\leq C_2 e^{-\lambda_2|t|}$ with positive constants $C_1,C_2,\lambda_1,\lambda_2$, then 
    \begin{equation}
        |(f*g)(t)|\leq 2C_1C_2\left(\frac{e^{-\lambda_2|t|/2}}{\lambda_1 }+\frac{e^{-\lambda_1|t|/2}}{\lambda_2}\right).
    \end{equation}
\end{lem}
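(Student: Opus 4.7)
The plan is to bound the convolution pointwise by splitting the integration variable into the regions where each of $|f|$ and $|g|$ is guaranteed to be small.

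Starting from the definition
\[
(f*g)(t) = \int_{-\infty}^{\infty} f(s)\, g(t-s)\, \dd s,
\]
the triangle inequality and the hypotheses give
\[
|(f*g)(t)| \leq C_1 C_2 \int_{-\infty}^{\infty} e^{-\lambda_1 |s|}\, e^{-\lambda_2 |t-s|}\, \dd s.
\]
I would split this integral as $\int_{|s|\leq |t|/2} + \int_{|s|>|t|/2}$. On the first region, by the reverse triangle inequality, $|t-s| \geq |t|-|s| \geq |t|/2$, which lets me pull out $e^{-\lambda_2 |t|/2}$ and leaves the remaining $\int e^{-\lambda_1 |s|}\dd s \leq 2/\lambda_1$. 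On the second region, $|s| \geq |t|/2$ directly gives a prefactor $e^{-\lambda_1|t|/2}$ and the remaining integral, bounded after a substitution $s \mapsto t-s$, contributes at most $2/\lambda_2$.

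Adding the two bounds yields
\[
|(f*g)(t)| \leq 2 C_1 C_2 \left(\frac{e^{-\lambda_2|t|/2}}{\lambda_1} + \frac{e^{-\lambda_1|t|/2}}{\lambda_2}\right),
\]
which is the claimed inequality. There is no real obstacle here; the only subtlety is being careful with the region split so that in each piece one of the two exponentials is bounded uniformly by $e^{-\lambda_{\cdot}|t|/2}$ while the other is integrated over the whole line to produce the $1/\lambda_{\cdot}$ factor. The factor of $2$ comes from doubling the one-sided integral $\int_0^\infty e^{-\lambda s}\dd s = 1/\lambda$, which matches the form of the stated bound.
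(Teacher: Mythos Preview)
Your proof is correct and follows essentially the same approach as the paper: bound the integrand by the product of exponentials and split the domain so that in each piece one exponential is uniformly at most $e^{-\lambda_\cdot |t|/2}$ while the other integrates to $2/\lambda_\cdot$. The only cosmetic difference is that you split on $|s|\lessgtr |t|/2$ whereas the paper splits on $|s-t|\lessgtr |t|/2$; these are symmetric choices yielding the identical bound.
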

\begin{proof}
The convolution can be bounded by:
    \begin{equation}
    \begin{aligned}
        |(f*g)(t)|&=\left|\int_{-\infty}^{+\infty}\mathrm{d}s f(s)g(t-s)\right|<\int_{-\infty}^{+\infty}\left| f(s)g(t-s)\right|\mathrm{d}s\\
        &=C_1C_2\int_{-\infty}^{+\infty} \mathrm{e}^{-\lambda_1|s|}\mathrm{e}^{-\lambda_2|t-s|}\mathrm{d}s \\
        &=C_1C_2\left(\int_{|s-t|\leq |t|/2}+\int_{|s-t|> |t|/2}\right) \mathrm{e}^{-\lambda_1|s|}\mathrm{e}^{-\lambda_2|t-s|}\mathrm{d}s 
    \end{aligned}
    \end{equation}

    For the integral for $|s-t|\leq |t|/2$, we have
    \[
    \int_{|s-t|\leq |t|/2} \mathrm{e}^{-\lambda_1|s|}\mathrm{e}^{-\lambda_2|t-s|}\mathrm{d}s \leq e^{-\lambda_1 |t|/2}\int \mathrm{e}^{-\lambda_2|t-s|} \dd s \leq \frac{2e^{-\lambda_1 |t|/2}}{\lambda_2}.
    \]
    For the integral for $|s-t|> |t|/2$, we have
    \[
    \int_{|s-t|>|t|/2} \mathrm{e}^{-\lambda_1|s|}\mathrm{e}^{-\lambda_2|t-s|}\mathrm{d}s \leq e^{-\lambda_2 |t|/2}\int \mathrm{e}^{-\lambda_1|s|} \dd s \leq \frac{2e^{-\lambda_2 |t|/2}}{\lambda_1}.
    \]
    Combining the two parts yields the desired inequality.
\end{proof}

\begin{lem}
    \label{lem:decay_b1}
    Let $b_1(t)$ be as defined in \eqref{eq:f(t),b1(t),b2(t)}, then there exist universal constants $C,\lambda>0$ such that
    \[
    b_1(t-i\delta)\leq C\cosh(\delta)e^{2\delta^2}e^{-\lambda t}.
    \]
\end{lem}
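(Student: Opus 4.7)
\textbf{Proof proposal for Lemma~\ref{lem:decay_b1}.}

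The plan is to carry out the shift $t \mapsto t - i\delta$ directly inside the integral representation of the convolution and then bound the resulting complex-valued integrand by a real kernel that decays exponentially. Writing out the convolution explicitly,
\[
b_1(t-i\delta) = 2\sqrt{\pi}\,e^{1/8}\int_{-\infty}^{\infty}\frac{1}{\cosh(2\pi s)}\,\sin\!\bigl(-(t-i\delta-s)\bigr)\,e^{-2(t-i\delta-s)^2}\,ds,
\]
where the analytic continuation is justified because the integrand is entire in the complex variable $t$ and the $s$-integral converges absolutely. Expanding the square gives $|e^{-2(t-s-i\delta)^2}| = e^{-2(t-s)^2}\,e^{2\delta^2}$, which accounts for the $e^{2\delta^2}$ factor. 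For the sine factor I would use the identity $|\sin(x+iy)|^2 = \sin^2 x + \sinh^2 y \leq \cosh^2 y$ to obtain $|\sin(-(t-s)+i\delta)| \leq \cosh\delta$, which accounts for the $\cosh(\delta)$ factor.

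Combining these bounds reduces the estimate to
\[
|b_1(t-i\delta)| \;\leq\; 2\sqrt{\pi}\,e^{1/8}\,\cosh(\delta)\,e^{2\delta^2}\int_{-\infty}^{\infty}\frac{1}{\cosh(2\pi s)}\,e^{-2(t-s)^2}\,ds,
\]
so the problem collapses to establishing an exponential decay bound in $|t|$ for the real convolution on the right-hand side. I would invoke Lemma~\ref{lem:decay_convolution} with $f(s)=1/\cosh(2\pi s)$ and $g(s)=e^{-2s^2}$. For $f$ we have $|f(s)| \leq 2e^{-2\pi|s|}$, and for $g$, completing the square $2s^2 - |s| + 1/8 = \tfrac14(4|s|-1)^2 \geq 0$ gives $|g(s)| \leq e^{1/8}\,e^{-|s|}$. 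Lemma~\ref{lem:decay_convolution} then yields a bound of the form $C'(e^{-|t|/2} + e^{-\pi|t|}) \leq C'' e^{-\lambda|t|}$ for some universal $\lambda>0$, completing the proof.

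The argument is essentially bookkeeping — there is no deep obstacle. The two places that require care are the use of the correct complex-sine bound $|\sin(x+iy)| \leq \cosh y$ (rather than the weaker $e^{|y|}$, which would give the wrong prefactor), and the choice of exponential majorant for the Gaussian so that Lemma~\ref{lem:decay_convolution} applies cleanly. I would also note the mild abuse of notation that the statement ``$b_1(t-i\delta) \leq \cdots e^{-\lambda t}$'' should be read as $|b_1(t-i\delta)| \leq \cdots e^{-\lambda|t|}$, which is what the argument actually delivers and is what is needed in the sequel.
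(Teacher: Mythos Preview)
Your proposal is correct and follows essentially the same route as the paper's proof: both write $b_1$ as the convolution of $1/\cosh(2\pi s)$ with $\sin(-s)e^{-2s^2}$, push the imaginary shift into the Gaussian--sine factor, bound $|\sin(x+iy)|\leq\cosh y$ and $|e^{-2(x-iy)^2}|=e^{2y^2}e^{-2x^2}$, and then invoke Lemma~\ref{lem:decay_convolution} on the resulting real convolution. Your explicit exponential majorants and the remark about the intended reading $|b_1(t-i\delta)|\leq C\cosh(\delta)e^{2\delta^2}e^{-\lambda|t|}$ are both apt (the completing-the-square coefficient should be $\tfrac18(4|s|-1)^2$ rather than $\tfrac14$, but this is immaterial).
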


\begin{proof}
    Note that $b_1(t)$ is the convolution between two parts:
    \[
    h_1(t) = 2\sqrt{\pi}e^{\frac{1}{8}}\frac{1}{\cosh(2\pi t)},\quad h_2(t) \sin(-t)e^{-2t^2}.
    \]
    Therefore
    \[
    b_1(t-i\delta) = \int h_1(s) h_2(t-i\delta-s) \dd s.
    \]
    Let $h_{2,\delta}(t)=h_2(t-i\delta)$, then $b_1(t) = (h_1*h_{2,\delta})(t)$. Therefore we only need to prove the decay properties of $h_1(t)$ and $h_{2,\delta}(t)$ and then apply Lemma~\ref{lem:decay_convolution}.

    For $h_1(t)$, we can readily get universal constants $C_1,\lambda_1>0$ such that $|h_1(t)|\leq C_1 e^{-\lambda_1 |t|}$. Therefore we only need to focus on $h_{2,\delta}(t)$. Note that
    \[
    |h_{2,\delta}(t)| = |\sin(-t+i\delta)e^{-2(t-i\delta)^2}|\leq \cosh(\delta)e^{2\delta^2}e^{-2t^2}.
    \]
    Because $e^{-2t^2}$ decays faster than any exponential function, we also have universal constants $C_2,\lambda_2>0$ such that $|h_{2,\delta}(t)|\leq C_2\cosh(\delta)e^{2\delta^2} e^{-\lambda_2 |t|}$. The result then immediately follows from Lemma~\ref{lem:decay_convolution}.
\end{proof}

We are now ready to prove the decay property of $\check{F}_2(t,t')$ defined in \eqref{eq:checkF_2(t,t')_defn}, which we rewrite here
\[
\check{F}_2(t,t')=\frac{1}{2\pi}\int_{-\infty}^{\infty}\mathrm{d} t\int_{-\infty}^{\infty}\mathrm{d} t^{\prime}F_2(\nu,\nu') \mathrm{e}^{i \nu t}\mathrm{e}^{i \nu' t}.
\]
\begin{proof}[Proof of Lemma~\ref{lem:decay_of_checkF2(t,t')}]
We perform a change-of-variable as follows:
\[
\begin{pmatrix}
    \nu_1 \\ \nu_2
\end{pmatrix}
=\frac{1}{\sqrt{2}}
\begin{pmatrix}
    1 & -1 \\
    1 & 1
\end{pmatrix}
\begin{pmatrix}
    \nu \\ \nu'
\end{pmatrix},\quad 
\begin{pmatrix}
    t_1 \\ t_2
\end{pmatrix}
=\frac{1}{\sqrt{2}}
\begin{pmatrix}
    1 & -1 \\
    1 & 1
\end{pmatrix}
\begin{pmatrix}
    t \\ t'
\end{pmatrix}.
\]
This allows us to write
\[
\begin{aligned}
    \check{F}_2\left(\frac{t_1+t_2}{\sqrt{2}},\frac{t_2-t_1}{\sqrt{2}}\right) &= \frac{1}{2\pi}\int \dd\nu_1\int\dd\nu_2 F_2\left(\frac{\nu_1+\nu_2}{\sqrt{2}},\frac{\nu_2-\nu_1}{\sqrt{2}}\right)e^{i(\nu_1 t_1+\nu_2 t_2)} \\
    &=\int \dd\nu_1 e^{i\nu_1 (t_1-i\sqrt{2}/4)}\hat{b}_1(\sqrt{2}\nu_1)\int\dd\nu_2 e^{i\nu_2 t_2} \hat{b}_2(\sqrt{2}\nu_2) \\
    &=\frac{1}{2}b_1(t_1/\sqrt{2}-i/4)b_2(t_2/\sqrt{2})
\end{aligned}
\]
Since the exponential decay of $b_1(t_1/\sqrt{2}-i/4)$ from Lemma~\ref{lem:decay_b1} and the exponential decay of $b_2(t_2/\sqrt{2})$ from the definition of $b_2(t)$ in \eqref{eq:f(t),b1(t),b2(t)}, we have the exponential decay of $\check{F}_2\left(\frac{t_1+t_2}{\sqrt{2}},\frac{t_2-t_1}{\sqrt{2}}\right)$ in $|t_1|+|t_2|$. Due to norm equivalence we then have the exponential decay of $\check{F}_2(t,t')$ in $(t^2+t'^2)^{1/2}$. We therefore have
\[
|\check{F}_2(t,t')|\leq Ce^{-\lambda (t^2+t'^2)^{1/2}},
\]
for some universal constants $C,\lambda>0$.
\end{proof}

\end{document}